\def\llncs{0}
\def\lipics{0} 
\def\sigconf{0}  
\def\cryptology{0} 
\def\bigfont{0}  
\def\masterthesis{0}
\def\preprint{1}
\def\quantumjournal{0}
\def\draft{0}
\def\anonymous{0}
\def\shownomenclature{1} 
\def\toc{1} 
\newif\ifab 
\newif\iffv 
\newif\ifnv
\preprint=1 1\else\ifnum\cryptology=1 1\else 0\fi\fi
\newcommand{\onote}[1]{\authnote{Or}{#1}{blue}}
\newcommand{\anote}[1]{\authnote{Amit}{#1}{red}}
\newcommand{\cent}{\text{\textcent}}
\newcommand{\PRS}{\textsf{PRS}\xspace}
\newcommand{\prs}{\text{pseudorandom states}\xspace}
\newcommand{\io}{\textsf{iO}\xspace}
\newcommand{\mill}{\includegraphics[scale=.03]{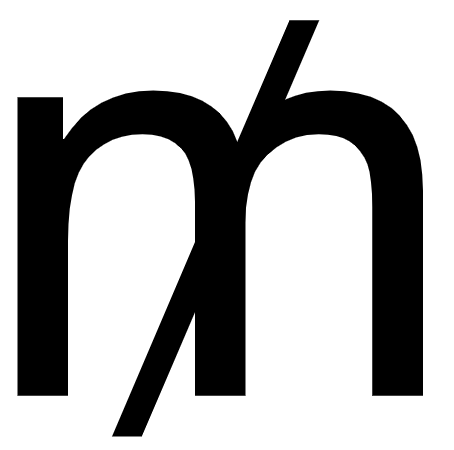}}
\newcommand{\keygen}{\ensuremath{\mathsf{key\textit{-}gen}}}
\newcommand{\prkeygen}{\ensuremath{\mathsf{private\textit{-}key\textit{-}gen}}}
\newcommand{\pkkeygen}{\ensuremath{\mathsf{public\textit{-}key\textit{-}gen}}}
\renewcommand{\verify}{\ensuremath{\mathsf{verify}}} \newcommand{\accept}{\ensuremath{\mathsf{accept}}}
\newcommand{\bank}{\ensuremath{\mathsf{mint}}}
\newcommand{\Count}{\ensuremath{\mathsf{Count}}}
\newcommand{\Ima}{\ensuremath{\text{Im}}}
\newcommand{\EE}{\mathbb{E}}
\newcommand{\pkqc}{\ensuremath{\mathsf{Pk\textit{-}QC}}}
\newcommand{\prqc}{\ensuremath{\mathsf{Pr\textit{-}QC}}}
\newcommand{\nrsp}{\textsf{nonadaptive\--secure\--against\--public\--sabotage}\xspace}
\newcommand{\nrrsp}{\textsf{nonadaptive\--rationally\--secure\--against\--public\--sabotage}\xspace}
\newcommand{\mnrrsp}{\textsf{multiverifier\--nonadaptive\--rationally\--secure\--against\--public\--sabotage}\xspace}
\newcommand{\ut}{\textsf{untraceable}\xspace}
\newcommand{\rnauf}{\textsf{nonadaptive\--rational\-ly\--un\-forgeable}\xspace}
\newcommand*{\arnauf}{\textsf{all\--or\--nothing\--nonadaptive\--rational\-ly\--un\-forgeable}\xspace}
\newcommand*{\frnauf}{\textsf{flexible\--nonadaptive\--rational\-ly\--un\-forgeable}\xspace}
\newcommand*{\frauf}{\textsf{flexible\--adaptive\--rational\-ly\--un\-forgeable}\xspace}
\newcommand*{\ufla}{U_\textsf{flexible}^{\textsf{adapt}}\xspace}
\newcommand*{\uflna}{U_\textsf{flexible}^{\textsf{nonadapt}}\xspace}
\newcommand*{\uana}{U_\textsf{all\--or\--nothing}^{\textsf{adapt}}\xspace}
\newcommand*{\uanna}{U_\textsf{all\--or\--nothing}^{\textsf{nonadapt}}\xspace}
\newcommand*{\uanmna}{U_\textsf{all\--or\--nothing}^{\textsf{multiver\--nonadapt}}\xspace}
\newcommand*{\uflmna}{U_\textsf{flex}^{\textsf{multiver\--nonadapt}}\xspace}
\newcommand*{\lanna}{L_\textsf{public-sabotage}^{\textsf{nonadapt}}\xspace}
\newcommand{\narpf}{\textsf{nonadaptive\--rationally\--secure\--against\--private\--sabotage}\xspace}
\newcommand{\narf}{\textsf{nonadaptive\--rationally\--secure\--against\--sabotage}\xspace}
\newcommand{\nars}{\textsf{nonadaptive\--rationally\--secure}\xspace}
\newcommand{\uf}{\textsf{unforgeable}\xspace}
\newcommand{\suf}{\textsf{standard\--unforgeable}\xspace}
\newcommand{\ruf}{\textsf{rationally\--unforgeable}\xspace}
\newcommand{\napf}{\textsf{nonadaptive\--secure\--against\--private\--sabotage}\xspace}
\newcommand*{\anauf}{\textsf{all\--or\--nothing\--nonadaptive\--un\-forgeable}\xspace}
\newcommand*{\fnauf}{\textsf{flexible\--nonadaptive\--un\-forgeable}\xspace}
\newcommand*{\fauf}{\textsf{flexible\--adaptive\--un\-forgeable}\xspace}
\newcommand{\nauf}{\textsf{nonadaptive\--unforgeable}\xspace}
\newcommand{\auf}{\textsf{adaptive\--unforgeable}\xspace}
\newcommand{\nauuf}{\textsf{nonadaptive\--unconditionally\--unforgeable}\xspace}
\newcommand{\auuf}{\textsf{adaptive\--unconditionally\--unforgeable}\xspace}
\newcommand{\mnauf}{\textsf{multiverifier\--nonadaptive\--unforgeable}\xspace}
\newcommand{\mnaruf}{\textsf{multiverifier\--nonadaptive\--rational\--unforgeable}\xspace}
\newcommand{\mnapf}{\textsf{multiverifier\--nonadaptive\--secure\--against\--private\--sabotage}\xspace}
\newcommand{\mnarpf}{\textsf{multiverifier\--nonadaptive\--rational\--secure\--against\--\--private\--sabotage}\xspace}
\newcommand{\mnars}{\textsf{multiverifier\--nonadaptive\--rational\--secure}\xspace}
\newcommand{\MS}{\ensuremath{\mathcal{M}}}
\newcommand{\Sym}[1]{\ensuremath{\mathbb{Sym}^{#1}}}
\newcommand{\BasisSym}[1]{\ensuremath{Sym}^{#1}}
\newcommand{\SymTilde}[1]{\ensuremath{\mathbb{\widetilde{Sym}}^{#1}}}
\newcommand{\BasisSymTilde}[1]{\ensuremath{\widetilde{Sym}}^{#1}}
\newcommand{\Good}[2]{\ensuremath{\mathbb{Good}^{{#1}, {#2}}}}
\newcommand{\GoodTilde}[2]{\ensuremath{\widetilde{\mathbb{Good}}^{{#1}, {#2}}}}
\newcommand{\Bad}[2]{\ensuremath{\mathbb{Bad}}^{{#1}, {#2}}}
\newcommand{\BadTilde}[2]{\ensuremath{\widetilde{\mathbb{Bad}}^{{#1}, {#2}}}}
\newcommand{\GoodSym}[2]{\ensuremath{\mathbb{GoodSym}^{{#1},{#2}}}}
\newcommand{\BasisGoodSym}[2]{\ensuremath{GoodSym^{{#1},{#2}}}}
\newcommand{\GoodSymTilde}[2]{\ensuremath{\mathbb{\widetilde{GoodSym}}^{{#1},{#2}}}}
\newcommand{\BasisGoodSymTilde}[2]{\ensuremath{\widetilde{GoodSym}^{{#1},{#2}}}}
\newcommand{\HTilde}[1]{\ensuremath{\mathbb{\widetilde{H}}}^{#1}}
\newcommand{\Qoperator}{\ensuremath{Q}}
\newcommand{\Poperator}[2]{\ensuremath{P}_{{#1}, {#2}}}
\begin{document}
\ifnum\masterthesis=0
    \title{Almost Public Quantum Coins
    }
\fi






\ifnum\anonymous=0
    \ifnum\preprint=1
        \author[1]{Amit Behera}
        \author[1]{Or Sattath}
        \affil[1]{Computer Science Department, Ben-Gurion University of the Negev}
    \fi
    \ifnum\cryptology=1
        \author{Amit Behera}
        \affil{Department of Computer Science, Ben Gurion University of the Negev, Beersheba, Israel\\
        behera@post.bgu.ac.il}
        \author{Or Sattath}
        \affil{Department of Computer Science, Ben Gurion University of the Negev, Beersheba, Israel\\
        sattath@bgu.ac.il}
    \fi
    \ifnum\sigconf=1
        \author{Or Sattath}
            \affiliation{%
              \institution{Department of Computer Science, Ben Gurion University of the Negev}
              \city{Beersheba}
              \country{Israel}}
        \email{sattath@bgu.ac.il}

        \author{Amit Behera}
            \affiliation{%
              \institution{Department of Computer Science, Ben Gurion University of the Negev}
              \city{Beersheba}
              \country{Israel}}
        \email{behera@post.bgu.ac.il}
    \fi
    \ifnum\llncs=1
        %
        \author{Amit Behera\inst{1}\orcidID{0009-0008-1462-7222} \and
        Or Sattath\inst{1}\orcidID{0000-0001-7567-3822} }
        \authorrunning{A. Behera, O. Sattath}
        %
        \institute{Department of Computer Science, Ben Gurion University of the Negev, Beersheba, Israel 
        \email{sattath@bgu.ac.il} \and
        Department of Computer Science, Ben Gurion University of the Negev, Beersheba, Israel\\
        \email{behera@post.bgu.ac.il}}
    \fi
    \ifnum\quantumjournal=1
        \author[1]{Amit Behera}
        \orcid[1]{0009-0008-1462-7222}
        \email{behera@post.bgu.ac.il}
        \author[2]{Or Sattath}
        \orcid[2]{0000-0001-7567-3822}
        \email{sattath@bgu.ac.il}

        \affil{Computer Science Department, Ben-Gurion University of the Negev}
        
    \fi
\else
    \ifnum\lipics=0
        \author{}
    \fi
\fi

\ifthenelse{\equal{\masterthesis}{0} \AND \equal{\sigconf}{0}}{
    \maketitle
}


    

\iffv
\ifnum\masterthesis=1
    
    \begin{titlepage}
        \centering
        { Ben-Gurion University of the Negev}
        
        {The Faculty of Natural Sciences}
        
        {\small The Department of Computer Science}
        
        \vspace{2cm}
        
        {\Large \bfseries Almost Public Quantum Coins}
        
        \vspace{2cm}
        
        {\small Thesis submitted in partial fulfillment of the requirements for the Master of Sciences degree}
        
        \vspace{1cm}
        
        {\bfseries Amit Behera}
        
        {Under the supervision of Dr. Or Sattath}
        
        \vspace{2cm}
        
        \today
    \end{titlepage}
    
    \begin{titlepage}
        \centering
        { Ben-Gurion University of the Negev}
        
        {The Faculty of Natural Sciences}
        
        {\small The Department of Computer Science}
        
        \vspace{2cm}
        
        {\Large \bfseries Almost Public Quantum Coins}
        
        \vspace{2cm}
        
        {\small Thesis submitted in partial fulfillment of the requirements for the Master of Sciences degree}
        
        \vspace{1cm}
        
        {\bfseries Amit Behera}
        
        {Under the supervision of Dr. Or Sattath}
        
        \vspace{1cm}
        
        {\small Signature of student: \longunderline Date: \longunderline}
        
        \vspace{0.5cm}
        
        {\small Signature of supervisor: \longunderline Date: \longunderline}
        
        \vspace{0.5cm}
        
        \begin{changemargin}{-1cm}{-1cm}
        \centering
            {\small Signature of the committee for graduate studies: \longunderline Date: \longunderline}
        \end{changemargin}
        \vspace{2cm}
        
        \today
    \end{titlepage}
\fi

\ifnum\masterthesis=1
    \pagenumbering{roman}
    \begin{center}
        {\large \bfseries Almost Public Quantum Coins}
        
        \vspace{0.5cm}
        
        {\bfseries Amit Behera}
        
        \vspace{0.5cm}
        
        Thesis submitted in partial fulfillment of the requirements for the Master of Sciences degree
        
        \vspace{0.25cm}
        
        {Ben-Gurion University of the Negev}
        
        \vspace{0.25cm}
        
        \today
        
        \vspace{2cm}
        
        {\bfseries Abstract}
        
        \vspace{0.5cm}
    \end{center}
\else    
    \begin{abstract}
\fi

In a quantum money scheme, a bank can issue money that users cannot counterfeit. Similar to bills of paper money, most quantum money schemes assign a unique serial number to each money state, thus potentially compromising the privacy of the users of quantum money.
However in a quantum coins scheme, just like the traditional currency coin scheme, all the money states are exact copies of each other, providing a better level of privacy for the users.

A quantum money scheme can be private, i.e., only the bank can verify the money states, or public, meaning anyone can verify. In this work, we propose a way to lift any private quantum coin  scheme---which is known to exist based on Pseudorandom States~\cite{JLS18}
---to a scheme that closely resembles a public quantum coin scheme. Verification of a new coin is done by comparing it to the coins the user already possesses, by using a projector on to the symmetric subspace. No public coin scheme was known prior to this work. It is also the first construction that is close to a public quantum money scheme and is provably secure based on standard assumptions.
Finally, the lifting technique, when instantiated with the private quantum coins scheme~\cite{MS10}, gives rise to the first construction that is close to an inefficient unconditionally secure public quantum money scheme.
\ifnum\masterthesis=0
    \end{abstract}
\fi

\fi
\ifnum\sigconf=1
    \keywords{}
    \maketitle

\fi

\ifnum\masterthesis=1
    \pagebreak
    \pagenumbering{arabic}
\fi

\ifnum\masterthesis=1
    \subsection*{Acknowledgments}
    I would like to express my deepest gratitude to my supervisor Dr. Or Sattath for her continued assistance and guidance throughout the research period and the writing of this thesis. It has been an absolute pleasure learning from and working alongside him, and I hope to continue doing so in the future. I would also like to thank all my friends, teachers and mentors both in Israel and India for their throughout love and support. A special thanks to Himalaya Senapati for guiding me like an elder brother at all times. Finally, I would like to thank my parents: Ajay Kumar Behera and Arati Behera for their unparalleled support, dedication and love towards me. I don't think it is ever possible for me to repay the debt of gratitude I owe them. This thesis, like almost every other thing in my life, owes them a lot and I dedicate it to them.
\fi

\ifnum\masterthesis=1
    \setcounter{tocdepth}{3}
    \tableofcontents
    
    \listoffigures
\fi 
\ifnum\toc=1
    \ifnum\llncs=1
        \setcounter{secnumdepth}{3}
        \setcounter{tocdepth}{3}
    \fi
    \tableofcontents 
\fi

\ifab

\fi


\ifab
\vspace{-20pt}
\fi
 
\iffv

\section{Introduction}

\fi 

\iffv An analog of the traditional monetary system, quantum money comprises of quantum money states that are issued by the bank and that are used for transactions. \fi 
A quantum money scheme can be \emph{private} or \emph{public}. In the private scenario, only the bank, the entity that issued the money, can verify its authenticity, whereas in the public scenario, the bank generates a public key that anyone can use to verify the quantum money. \iffv While public quantum money is suitable for use in a setting such as our current cash system, private quantum money is applicable in settings such as travel tickets, wherein we do not expect users to transact with anyone other than the ticket issuer. \fi The second characterization refers to whether quantum money users are given \emph{bills} or \emph{coins}. In a bill scheme, each quantum money state is unique and is usually associated, with a distinct classical serial number. On the other hand, in a coin scheme, all quantum states are exact copies\iffv, and therefore supposed to be indistinguishable from each other\fi.  \ifab A bill, unlike a coin, is marked with a distinct serial number, which can be used to track the bill, compromising the users' privacy. Coins are indistinguishable from one another and hence, cannot be tracked. Therefore, coins provide better privacy than bills. \fi  In both variants, the quantum money scheme overall is said to be secure if the users cannot counterfeit the quantum money state.  The quantum setting is well suited to prove such unforgeability property due to the quantum no-cloning theorem~\cite{WZ82,Par70,Die82}. 

\iffv
\ifab \vspace{-15pt} \fi \subsection{Related works}\label{pg:related-works}
Quantum money has been studied extensively in investigations of private bills~\cite{Wie83,BBB+83,TOI03,Gav12,GK15}, public bills~\cite{Aar09,FGH+12,Zha21,Shm22,Kan18,LMZ23}, and private coins~\cite{MS10,JLS18,AMR20}. 

The security of the private schemes is generally solid, and some schemes, such as that of Wiesner, are unconditionally secure~\cite{MVW12,PYJ+12}. 
Mosca and Stebila constructed an \emph{inefficient} 
\iffv
(see \cref{definition:inefficient quantum money})
\fi 
private coin scheme, in which the coin is an $n$ qubit state sampled uniformly from the Haar measure~\cite{MS10}. The construction for private coins\footnote{\cite{JLS18} do not make the distinction between coins and bills and hence do not call their construction a quantum coin scheme, even though it clearly is.} by Ji, Liu, and Song is based on Pseudorandom States (\PRS)~\cite{JLS18}. The construction of \PRS was later simplified by~\cite{BS19}. \anote{I think Brakrski Shmueli should be removed from here:  The construction was later simplified by \cite{BS19}. This is arguably one of the weakest computational assumption possible in quantum cryptography.}
Another recent work by Alagic, Majenz and Russell~\cite{AMR20} provides a \emph{stateful}\footnote{i.e., the bank maintains a (quantum) state which is used and updated with every mint and verify query.} construction for private quantum coins by simulating Haar random states. Their construction is unconditionally secure and shares many of the properties of the work by Mosca and Stebila while still being efficient. Of course, the obvious disadvantage is the statefulness of this scheme.

In contrast to the private scenario, it has been extremely difficult to construct public quantum money schemes, let alone public quantum coins. Several constructions of public quantum money were broken. Aaronson's scheme~\cite{Aar09} was broken in Ref.~\cite{LAF+10}. Aaronson and Christiano's scheme~\cite{AC13} was broken in Refs.~\cite{PFP15,Aar16,BS16a,PDF+19} and a fix using quantum-secure Indistinguishability Obfuscation ($\io$) was suggested in~\cite{BS16a} and proved to be secure in~\cite{Zha21}. There have been other constructions of public quantum money based on \io, see~\cite{CLL+21,Shm22}. Unfortunately, quantum secure \io is still not a standard assumption, as we still lack sufficient understanding of it. In particular, the existing constructions are either based on pre-quantum assumptions~\cite{JLS20}, or else they are candidate constructions that we do not know how to instantiate from standard assumptions, see~\cite{GGH15,BGM+18,BDGM20,WW20}. Moreover, it is unclear why \io should be necessary to construct public quantum money.  

Another construction by Zhandry~\cite{Zha21}, called quantum lightning, is based on a non-standard hardness assumption and was broken for the most possible choice of the parameters~\cite{Rob19}. Farhi et al.~\cite{FGH+12} constructed a quantum money scheme using elegant techniques from knot theory, but their construction only has a partial security reduction~\cite{Lut11} to a non-standard hardness assumption. Some of the ideas in~\cite{FGH+12} have been abstracted into a framework in~\cite{LMZ23} which also shows other instantiations of the framework based on non-standard assumptions. In~\cite{Kan18}, Kane proposed a new technique to construct a class of public quantum money schemes and showed that a general sub-exponential attack (black-box attack) against such quantum money schemes is impossible. Further, she argues that instantiating such a technique with modular forms could yield a secure public money scheme and provides arguments supporting her claim, but it still lacks security proof at this point. 

 In terms of impossibility results, it was shown in~\cite{LMZ23} that a certain class of lattice-based constructions of public quantum money is insecure. Similarly, partial impossibility results show that a black-box construction of public quantum money from collision-resistant hash functions where the verification algorithm makes only classical queries to the hash function is insecure, see~\cite{AHY23}. Our work constructs an intermediate primitive between public and private quantum money schemes but from \prs, a weaker assumption than not only collision-resistant hash but even one-way functions, see Ref~\cite{Kre21}, thus complementing the impossibility result of~\cite{AHY23}. Similarly, in subsequent work, \cite{RZ21} proposes ``Franchised Quantum Money,'' which is another intermediate notion between private and public quantum money that works in a model with a bounded number of users. \cite{RZ21} constructs franchised quantum money based on digital signatures, which can be based on one-way functions.

Subsequent to our work, there has been significant progress in the study of pseudorandom states, which is the only assumption used in the construction of our quantum money scheme. In particular, it was shown in~\cite{Kre21} that pseudorandom states are strictly weaker primitive than one-way functions. In light of this result, our work shows that a meaningful form of public quantum money scheme can be constructed based on an assumption strictly weaker than one-way functions. Furthermore, there has been an extensive study of all the applications of pseudorandom states, their variants, and the primitives implied by them, see~\cite{AQY21,MY22a,AGQY22,ALY24}. The collection of these cryptographic primitives that are (potentially) weaker than one-way functions is called ``Microcrypt.'' Some of the notable examples of cryptographic primitives in Microcrypt include the fundamental primitives such as quantum public-key encryption with quantum public keys~\cite{BGH+23,KMN+23}, one-time digital signatures with quantum public keys~\cite{MY22a}, bit-commitments and multi-party computations with quantum communication~\cite{BCK+21,AQY21,MY22a}, etc. The current work adds to this list by showing a weaker yet meaningful form of public quantum money can be achieved in Microcrypt.

\fi

\iffv
\subsection{Coins vs. bills: What difference does it make?} 

A currency bill, unlike a coin, is marked with a distinct serial number, which can be used to track the bill and may compromise the users' privacy. \iffv Indeed, one needs to look no further than the police, whose investigations sometimes use ``marked bills'', \fi \ifab For example, police uses marked bill for investigation, \fi a technique that can also be easily exploited by others (e.g., a business may try to learn the identity of its competitor's customers, etc.). On the other hand, coins are supposed to be identical copies of each other, and hence, should be untraceable. Therefore, intuitively coins provide better privacy than bills. \iffv It should be noted that \ifab even then \fi \iffv in reality even though coins are supposed to be identical copies of each other, still \fi  there can be attacks to violate the indistinguishability of coins and therefore violating the privacy of the users. For example, the attacker might use color ink to mark one of the coins and later identify the coin using the mark. The notion of untraceability for quantum money was defined only recently in~\cite{AMR20} and there seems to be a few issues with the definition that we discuss in \cref{appendix:untraceability}.
\fi

\iffv
Similar notions of privacy have been extensively studied in the classical setting. For example, Chaum's ECash~\cite{Cha82} provided anonymity using the notion of blind signatures. The Bitcoin~\cite{Nak08} system stores all the transactions in a public ledger hence making it pseudonymous. Even though it provides pseudonymity, various studies have shown heuristics and approaches which can be used to reveal all the different addresses that belong to the same person~\cite{RS13,RS14,MPJ+16}. The raison d'être of several crypto-currencies and protocols, like CoinJoin~\cite{Max13}, Monero~\cite{Sab13}, Zcoin~\cite{MGGR13}, Zcash~\cite{BCG+14} and Mimblewimble~\cite{Poe16}, is to provide better privacy.

In the quantum setting, quantum bills do not require transactions to be recorded like the block-chain based classical crypto-currencies and hence better in that sense. However, quantum bills rely on serial numbers and thereby prone to privacy threats, since these serial numbers could be recorded by the parties involved in the transactions. 
\fi



From a theoretical perspective, quantum coins can be thought of as no-cloning on steroids: the no-cloning theorem guarantees that copying a quantum state is impossible. More formally, given an arbitrary quantum state $\ket{\psi}$, it is impossible to create a two register state $\ket{\phi}$ such that the states $\ket{\phi}$ and $\ket{\psi}\tensor \ket{\psi}$ have high fidelity. This property provides the motivation for quantum money, wherein the use of a variant of the no-cloning theorem precludes counterfeiting quantum bills. Yet the ``na\"ive'' no-cloning theorem cannot guarantee that, given $n$ copies of the same state, one cannot generate $n+1$ copies with high fidelity. In other words, the unforgeability of quantum bills resembles or extends results regarding,  $1 \rightarrow 2$ optimal cloners while that of quantum coins requires that one understand the properties of $n\rightarrow n+1$ optimal cloners---see~\cite{DEM98} and references therein.
\iffv
We stress that a quantum coin forger does not imply a universal cloner for three main reasons:\footnote{In other words, an impossibility result regarding universal $n \rightarrow n+1$ cloning is not sufficient to prove unforgeability of quantum coins.} (a) A coin forger only needs to succeed in cloning the set of coins generated by the scheme; as its name suggests, a \emph{universal} cloner guarantees the fidelity for \emph{all} quantum states. (b) In the unforgeability game, the forger who receives $n$ coins can try to successfully verify many more states than it receives, whereas a universal cloner must succeed on exactly $n+1$ states. (c) In the \emph{adaptive} unforgeability game, the forger learns the outcomes of the verifications one by one and can exploit that knowledge.  
\fi

\subsection{Our approach}
\ifab Banknote counters are often used to detect counterfeiting of cash. \fi \iffv
The current ``gold standard'' for detecting counterfeit cash bills is to use a banknote counter.  Equipped with dedicated hardware, banknote counters can verify the built-in security features (e.g., ultraviolet ink, magnetic ink, etc.) of a given cash bill. This approach, however, depends on the target currency and requires tailor made technologies. \fi \ifab However, there \else There \fi is an alternative approach. Consider the following scenario: you travel to a foreign country and withdraw some cash from an ATM. Later you execute a monetary transaction in which you receive money from an untrusted source. How will you verify the authenticity of this money?  You could compare it to the money that you withdrew from the bank's ATM\iffv, and therefore trust\fi. If it does not look the same, you would not accept it\iffv, and you might even revert the transaction\fi. We call this method \emph{comparison-based verification}\ifab . \fi \iffv, and we use the money of a foreign country as an example to emphasize the fact that \else Note that, \fi this approach works even when the specific security features of the money are not known to the verifier\ifab, as shown in the example\fi.

In this work, we propose a novel way to lift any quantum private coin scheme to a scheme which, up to some restrictions, is a public quantum coin scheme, by using an approach inspired from comparison-based verification. 
\iffv
A user can verify a coin that she receives by comparing it to the coins she already has.

In this case, we do not need the bank to run the verification to authenticate money, thus rendering the scheme \emph{public}. \fi Since the comparison is to money the user already has, it is crucial that the money states \iffv of each denomination \fi be exact copies, i.e., this approach only works for \iffv private quantum \fi coins \iffv and not for private quantum bills\fi. \iffv Technically, in the quantum scenario, the comparison is achieved by testing whether the new money and the money that we already have are in the symmetric subspace. The verifier, therefore, must have at least one original coin to validate the authenticity of new coins.
This is similar to the setting given in the example above, where we compare the money issued to us by a (trusted) bank to the new money.

\fi

\subsection{Main obstacles and our solution} \label{pg:main_obstacles_solutions}
As mentioned above, our approach to lift a private quantum coin scheme to a public scheme, is based on comparison-based verification. In the classical setting, comparison-based verification is achieved by testing whether two money bills are identical. The quantum analog of this classical approach, which is also known as SWAP test is to project on to the symmetric subspace of two registers.  As far as the authors are aware, this approach of using projective measurement into the symmetric subspace for verification, was never used in a cryptographic protocol. This is perhaps why it seems hard to construct public quantum coins using this approach: \iffv
\begin{itemize}
    \item \textbf{(Naive $0$ to $1$ cloning)} The quantum SWAP test has slightly different properties~\cite{BCW+01}, when compared to the classical approach. In particular, a state that is the tensor product of two orthogonal states, such as $\ket{0}\tensor \ket{1}$, is rejected (only) with probability $\frac{1}{2}$. As a result, an adversary without any coins can pass a single verification with probability at least $\frac{1}{2}$, see the paragraph \emph{Proof idea} on \cpageref{pg:proof_idea}, for more details.
    \item \textbf{(Sabotage attacks)} Since the verification of a new coin is done using coins from the wallet, an honest user can lose or destroy her own coins due to a transaction with an adversarial merchant.
    \item \textbf{(Refund)} Even if money is not destroyed due to verification, we need a way to recover our own coins after a failed verification.
\end{itemize}
\else
 The quantum SWAP test has slightly different properties~\cite{BCW+01}, when compared to the classical approach. In particular, a state that is the tensor product of two orthogonal states, such as $\ket{0}\tensor \ket{1}$, is rejected (only) with probability $\frac{1}{2}$. As a result, an adversary without any coins can pass a single verification with probability $\frac{1}{2}$, see the  paragraph \emph{Proof idea} on \cpageref{pg:proof_idea} for more details. Since the verification of a new coin is done using coins from the wallet, an honest user can lose or destroy her own coins due to a transaction with an adversarial merchant. Even if money is not destroyed, we need a way to recover our own coins after a failed verification. 
\fi

In our work, we come up with a construction that manages to get around these issues. This is done by finding weaker variants of the existing security definitions, which are still meaningful, see the  paragraph \emph{Notions of security} on \cpageref{pg:notions_of_security}, for more details. For example, we show that our scheme \emph{is} forgeable, but we manage to prove rational unforgeability, see the  paragraph \emph{Proof idea} on \cpageref{pg:proof_idea}. We provide a user-manual with a few limitations that we discuss next\iffv (also see \cref{subsec:user_manual}) \fi, which gives one way to use the money in a secure fashion. Making all these variations to the definitions, and proving them requires quite a bit of lengthy calculations; in few cases, it requires tweaking the analysis of known constructions and showing that they satisfy a related security notion needed for our purposes. The main effort is finding the right (and usually, weak) definitions to work with and the right way to stitch these together to a meaningful procedure (see the user-manual \iffv in \cref{subsec:user_manual}\fi)---the proofs are mostly straightforward and use standard linear algebraic arguments and some basic combinatorics.

 We find it interesting that despite the simplicity and how primitive our approach is, we can eventually guarantee meaningful notions of security, which are all based on very weak hardness assumptions. We leave several open questions, for which an affirmative answer could be used to lift some of the restrictions we currently have in the user-manual.

\paragraph*{User-manual} The quantum coin scheme that we construct has slightly different properties compared to a true public coin. Our money scheme uses the following user-manual. A user needs a fresh coin received directly from the bank, in order to verify each transaction she receives. However, she can spend as much money as she wants from her wallet, including the ones that she received from others. Due to this restriction, we call our construction an \emph{almost public quantum coin} scheme, which also explains the title of the paper. We elaborately discuss the user manual in \cref{subsec:user_manual}.
\paragraph*{Comparison to a true public quantum coin scheme}The only inconvenience in our scheme compared to a truly public money scheme, is that a user cannot receive more transactions than the number of fresh coins from the bank she started with. Hence, after sometime, she might have to go to the bank for refund. However, this is mostly a theoretical inconvenience. Practically, the user will never have to go to the bank since money initially withdrawn would be enough for all her transactions, for a long period of time. For example, if a user takes $100$ dollars from the bank, and suppose each public coin is worth a cent, then she essentially has $10000$ cents/public coins, and is eligible to receive $10000$ transactions. Moreover, she can spend the money that she received from others and therefore there is no limit on spending. This is practically enough for all her transactions, all the year round, and therefore, would have to go to the bank only once a year.  We can also increase the limit of the number of receivable transactions by devaluing the worth of our coins. For instance, in the previous example, we can instead declare the worth of our public coin to be a micro-dollar, and hence if a merchant starts with $100$ dollars she would now have $10^8$ public coins and would be eligible to receive $10^8$ new transactions. We show that our construction, under the user manual discussed above is rationally secure against forgeability and sabotage attacks, i.e., on average the gain of a cheating adversary and the net loss of an honest user is at most negligible. Since in real life, the users are rational entities, rational unforgeability and rational security against sabotage ensures that nobody will try to forge money or sabotage others.

Hence, in all practicality, a user using our coins scheme will never feel any difference from a truly public coin scheme. Therefore, we believe our construction is practically equivalent to a public coin scheme even though there are some theoretical differences.

\subsection{Notions of security}\label{pg:notions_of_security}
As mentioned before, we cannot achieve the standard notions of security for our construction, which forces us to consider alternate weaker notions. We consider two forms of weakening of the standard security notions and then consider a form of strengthening for our purposes. 

Firstly,  we had to shift to a weaker notion called rational unforgeability to show our construction is unforgeable. In the standard definition, we say a quantum money scheme is unforgeable if an adversary starting with $n$ money states cannot pass more than $n$ verifications (except with negligible probability).  However, in rational unforgeability, we require that the expected gain of any cheating adversary should be, at most, negligible. We show that our construction is not standard unforgeable, and hence such a relaxation is necessary. Further, we would like to emphasize that even though rational unforgeability is a weaker notion, it is still effective and meaningful in real life.
Consider a money scheme where a forger can counterfeit a money state with probability $\frac{1}{2}$, but \iffv only if she risks two of her own money states, i.e., \fi if she fails, then she has to lose her two money states. Even if she has a non-negligible probability of forging, in expectation, she loses one money state while trying to forge. Therefore, a rational forger would not try to forge and would instead stick to the protocol. Such a scheme will be deemed forgeable under standard notions, although it is secure in some sense. 
This leads to the definition of rational \iffv unforgeability (see \cref{definition:rational_unforgeability}), \else unforgeability, \fi where we only require that \emph{in expectation} any forger can have at best negligible advantage due to forging. Our construction is rational \iffv unforgeable---and we argue that this provides a meaningful notion of security. \else unforgeable. \fi

Secondly, we also deviate from the usual notion of defining the gain of the adversary. Usually, the gain or utility of the adversary is defined as the difference between the number of times she passed verification and the number of coins she took from the bank. We call this the \emph{flexible utility}. However, we do not know if our construction is rationally unforgeable under this definition, and hence, we switch to a restricted definition called \emph{all-or-nothing utility}, 
wherein 
we either count all or none of her coins. The motivation for all-or-nothing utility comes from scenarios in which only one kind of coins are used for a particular item. Suppose a person goes to buy a TV from an honest seller but is allowed to buy only one TV. He puts all the money on the table according to the worth of the TV she plans to buy. The seller either approves the transaction and gives a TV or rejects and simply says no to the user but does not return the money back to the buyer. Even if one of the money states fail verification, the seller does not approve the transaction. 
We show that our construction is \emph{all-or-nothing rationally unforgeable}. An elaborate discussion on these different notions of utility is given in \cref{subsec:unforgeability and security}.

Next, we consider a strengthening of the security notion by considering multiple users or verifiers that the adversary can attack simultaneously. We call this \emph{ multiverifier all-or-nothing rational unforgeability}, which is still \emph{nonadaptive} in the sense that post-verified money is never returned to the adversary, but it manages to encapsulate the threat model in the user manual (given in \cref{subsec:user_manual}), thus providing a meaningful notion of security to the users. A detailed discussion on the equivalence of the multiverifier threat model and that under the user manual is provided in \cref{appendix:user_manual-multi-ver-equivalence}.
We prove multiverifier all-or-nothing rational unforgeability for our construction, see \Cref{thm:unconditional_secure}. The proof is quite long and tedious, and therefore, we first discuss a weaker notion called \emph{single-verifier all-or-nothing rational unforgeability}, which has a relatively elegant proof for our construction. Even though the single-verifier all-or-nothing rational unforgeability does not capture the most generic threat model as per the usual manual, it is an interesting simplification of the multiverifier unforgeability notion.

The main security guarantees for our construction in the different threat models are listed down in \cref{table:unforgeability_results}.
\renewcommand\arraystretch{1.5}

\begin{table}[htp]
\centering
\begin{adjustbox}{max width=\textwidth}
\begin{tabular}{llll}
\toprule                                          
& Standard Unforgeability                      & Rational Unforgeability                                                                                              \\
\midrule {Single-verifier}             
                                                                 & No (\cref{alg:opt_attack}) & Yes (\cref{thm:unconditional_secure})                                                              \\
\hline {Multiverifier} 
                                                                 & No (\cref{alg:opt_attack}) & Yes (\cref{thm:multi-unconditional_secure})\\ \bottomrule   
\end{tabular}
\end{adjustbox}
\caption{Unforgeability of our construction}
\label{table:unforgeability_results}
\end{table}

\iffv

 Our construction also achieves another security notion other than unforgeability called the security against sabotage (a notion which was first defined in in~\cite{BS16a}), which essentially ensures that no adversary can sabotage or harm honest users. We only discuss rational security against sabotage because the standard notion does not hold for our construction by the same attack (\cref{alg:opt_attack}) as in the case of unforgeability. All results regarding security against sabotage are given in \cref{table:fairness_results}.

 \begin{table}[htp]
\centering
\begin{adjustbox}{max width=\textwidth}
\begin{tabular}{llll}
\toprule                                          
& Private sabotage                      & Public sabotage                                                                                              \\
\midrule {Single-verifier}             
                                                                 & Yes (\cref{thm:rational priv fair}) & Yes (\cref{thm:rational public fair})                                                              \\
\hline {Multiverifier} 
                                                                 & Yes (\cref{prop:sing-multi-rational-fair} and \cref{thm:rational priv fair}) & Yes (\cref{remark:single-multi-rational-fair-public} and \cref{thm:rational public fair})\\ \bottomrule   
\end{tabular}
\end{adjustbox}
\caption{Rational security against sabotage of our construction}
\label{table:fairness_results}
\end{table} 
We discuss security against sabotage in detail, including the definition of private and public sabotage in \cref{appendix:fairness}.

We are not aware of rational notions of unforgeability in cryptography. More broadly, there is a line of work that is called rational cryptography~\cite{GKM+13,ACH11,PH10,FKN10,KN08,ADG+06,HT04}, which is quite different than ours: it discusses protocols consisting of multiple competing and potentially corrupt parties who might deviate from the protocol and use different strategies in order to maximize their gain.
It should be noted that our work is different from the notion of rational cryptography. In the rational cryptography setting, the analysis is about the equilibrium arising out of the multiple competing parties, whereas here, the problem is to optimize the strategy of the (single) adversary trying to maximize her gain against an honest bank. Our work is somewhat closer to the notion of rational prover discussed in~\cite{MN18}. In~\cite{MN18}, the authors discuss quantum delegation in the setting where the verifier also gives a reward to the prover, the value of which depends on how she cooperated. Hence, the prover aims to maximize her expected reward rather than cheat the honest verifier. The authors show that for a particular reward function they constructed, the optimal strategy for the prover is to cooperate honestly.
\fi 



\subsection{Main result}\label{pg:main-results}
Our main contribution is the lifting result, which can lift any private coin scheme to an almost public coin scheme---see \cref{prop: multi_ver-unforge}. By lifting the result in Ref.~\cite{JLS18}\anote{ (\iffv or the \fi simplified version in~\cite{BS19})}, we get the following result:
\begin{theorem}[Informal Main Result]
Assuming \prs exists, there is a public quantum coin scheme that is rationally secure against multi-verifier forgery attacks.
\label{thm:informal_main}
\end{theorem}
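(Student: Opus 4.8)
The plan is to obtain \cref{thm:informal_main} as an essentially immediate consequence of the lifting result \cref{prop: rational-unforge}, once it is instantiated with a concrete private coin scheme. First I would invoke the private quantum coin construction of Ji, Liu and Song~\cite{JLS18} (equivalently, its simplified variant in~\cite{BS19}), which builds a private quantum coin scheme from any quantum-secure one-way function. The crucial point is not merely that this is a secure private coin scheme in the usual sense, but that its coins are computationally indistinguishable from Haar-random states --- the scheme is built out of pseudorandom states --- and that its verification procedure is (negligibly close to) the rank-one projector onto the issued coin. This is precisely the flavor of private coin scheme that the comparison-based lifting needs: the security analysis of the lifted scheme reduces an adversary acting on the pseudorandom coins to one acting on genuinely Haar-random coins, for which the symmetric-subspace combinatorics developed in the body of the paper apply.

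Second, I would apply \cref{prop: rational-unforge} directly to this private coin scheme. Its output is the almost public quantum coin scheme governed by the user manual: a user holding $n$ fresh coins from the bank can verify up to $n$ received transactions by projecting the incoming register together with a wallet coin onto the symmetric subspace, while spending arbitrarily many coins. By the guarantee of \cref{prop: rational-unforge} (as recorded, for the nonadaptive all-or-nothing setting, in \cref{thm:unconditional_secure} and the corresponding entry of \cref{table:unforgeability_results}), the resulting scheme is \arnauf; that is, on expectation no nonadaptive forger can gain more than a negligible advantage under the all-or-nothing utility. Re-reading the informal statement, ``public quantum coin scheme'' is meant in the ``almost public'' sense of the user manual, and ``rationally secure against nonadaptive forgery attacks'' is exactly this notion.

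The main obstacle --- and the only step that is more than a citation --- is checking that the \cite{JLS18}/\cite{BS19} construction actually satisfies the precise hypothesis demanded by \cref{prop: rational-unforge}. The published analyses establish standard private-coin unforgeability, whereas the lifting argument needs the stronger structural facts mentioned above: that a bounded adversary cannot distinguish the real coins from Haar-random ones even in the presence of the verification oracle, and that verification is well approximated by the projector onto the coin. I would therefore revisit the security reduction of~\cite{JLS18}, replace the pseudorandom-state coins by genuine Haar-random states using the pseudorandomness guarantee, observe that the verification measurement on Haar-random states coincides (up to negligible error) with the symmetric-subspace test of the lifted scheme, and then invoke the information-theoretic $n\to n+1$ bounds in the symmetric subspace proved earlier. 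The negligible distinguishing error from the pseudorandomness step is absorbed into the negligible rational-security bound; the remaining bookkeeping of utilities and the reduction from a forger of the public scheme to an adversary of the private scheme are already carried out in the proof of \cref{prop: rational-unforge} and need no repetition.
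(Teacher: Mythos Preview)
Your overall plan---instantiate the lifting result \cref{prop: rational-unforge} with the private coin scheme of~\cite{JLS18} (or~\cite{BS19})---is exactly what the paper does, and this is the whole proof: the formal version (\cref{thm:unconditional_secure}) is obtained by combining \cref{prop: rational-unforge} with \cref{thm:qaruf}, which already records that the~\cite{JLS18} scheme is \nauf with a rank-$1$ projective verification.

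Where you diverge is in your third paragraph, and this reflects a misconception about the hypotheses of \cref{prop: rational-unforge}. You claim the lifting ``needs the stronger structural facts'' of pseudorandomness and indistinguishability from Haar-random states, and propose to redo the reduction by swapping to Haar-random coins and re-deriving the symmetric-subspace bounds. This is unnecessary. The lifting proposition is stated and proved under exactly two hypotheses on the private scheme: (i) it is \fnauf, and (ii) $\prqc.\verify$ is a rank-$1$ projective measurement. Both are supplied verbatim by \cref{thm:qaruf}. The proof of \cref{prop: rational-unforge} (via \cref{prop:opt_attack}) uses the private-scheme unforgeability as a black box---it bounds the adversary's overlap with the ``bad'' subspace directly from the $\nauf$ guarantee, and then the symmetric-subspace combinatorics take over purely algebraically. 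The pseudorandomness of~\cite{JLS18} is what makes \emph{their} unforgeability proof work; it is not an input to the lifting at all. So your proposed extra step is harmless but redundant, and your description of the verification as ``negligibly close to'' rank-$1$ is also slightly off: in the~\cite{JLS18} scheme it is exactly the projector onto the coin state.
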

Similarly, by lifting the results in Ref.~\cite{MS10}, we show an \emph{inefficient} scheme with the same properties as in \cref{thm:informal_main} above, which is secure even against \emph{computationally unbounded} adversaries. In~\cite{AMR20}, the authors construct a stateful yet efficient version of the money scheme in~\cite{MS10}, using a stateful mechanism to approximately simulate truly random states. Hence, by lifting the results in Ref.~\cite{AMR20}, we show a stateful public coin scheme that satisfies all the security notions mentioned in~\cref{thm:informal_main} unconditionally. \iffv The formal result is provided in \cref{thm:multi-unconditional_secure}. \fi
 
 \subsection{Technical overview} \label{pg:proof_idea}As discussed above, our construction is an analogical extension of the classical \emph{comparison-based verification}. \iffv So the first attempt to lift a private quantum coin scheme is that we give users a private coin $\ket{\cent}$ as a public coin, such as the private coin construction by Ji, Liu and Song~\cite{JLS18}. In order to verify a given coin, the verifier should compare it with a valid coin from her wallet, i.e., perform symmetric subspace verification on the two registers (the wallet coin and the new coin)\footnote{This special case when the number of registers is $2$ is also known as the swap-test.}. \fi \ifab 
 Hence, the first natural attempt to lift a private quantum coin scheme such as~\cite{JLS18}, is as follows. Use the private coin $\ket{\cent}$, itself as a public coin; to verify an alleged coin, compare it to a fresh coin from the wallet, by doing a symmetric subspace verification on the  two registers\footnote{This special case when the number of registers is $2$ is also known as the swap-test.}. \fi  Unfortunately, the scheme that we just described is rationally forgeable. For example, suppose an adversary $\adv$ who does not have a valid coin to start with, submits a coin $\ket{\psi}$ to the verifier. Since the private quantum scheme is secure, \iffv $\ket{\psi}$ and $\ket{\cent}$ would be almost orthogonal, i.e., \fi $\ket{\psi} \approx \ket{\cent^{\perp}}$ for some $\ket{\cent^{\perp}}$ in the orthogonal space of $\ket{\cent}$. 
The combined state of the registers $\ket{\cent}\tensor\ket{\cent^{\perp}}$ pass the symmetric subspace verification with probability $\frac{1}{2}$. So, the probability of successful forging (and the \emph{expected utility} of the forger) is $\frac{1}{2}$. \iffv Hence, this scheme is forgeable.  \fi

In order to bypass this problem, we will use a form of amplification. \iffv A public coin will instead consist of multiple private coins, suppose $\kappa$ many. The private coins cannot be transacted discretely of their own but only in sets of $\kappa$ as a public coin; just like \emph{cents} (denoted by $\cent$) and \emph{mills} (denoted by $\mill$) in the current world---cent coins, each of which is equivalent to ten mills, are used in transactions, but mill coins are not used even though as a unit they exist. Hence, we use $\ket{\cent}$ to denote a public coin and $\ket{\mill}$ to denote a private coin in order to show the resemblance to the analogy of cents and mills. 
Therefore, we define a public coin as $\ket{\cent}:=\ket{\mill}^{\tensor \kappa}$, a collection of $\kappa$ private coins ($\ket{\mill}$). \fi \ifab We define a public coin $\ket{\cent}$, as a collection of $\kappa$ private coins ($\ket{\mill}$), which cannot be transacted otherwise, as an individual currency. \fi \iffv Again, to check the authenticity of the coin, the verifier uses the valid public coin in her wallet (which,  for the sake of simplicity, contains only one public coin) \fi \ifab To verify an alleged coin, take a  fresh coin from the wallet, \fi and perform symmetric subspace verification on all the $2\kappa$ registers. In this setting, if an adversary $\adv$ with $0$ public coins, produces an alleged public coin $\ket{\psi}$, then since the private scheme is unforgeable, none of the $\kappa$ registers should pass the verification of the private scheme. Hence, $\ket{\psi}$ must have a large overlap with the span of the states, that can be written as a tensor product of states orthogonal to $\ket{\mill}$\iffv (since every state in the orthogonal space of the subspace mentioned above, will pass the private verification of at least one of its $\kappa$ registers)\fi. 
The combined state of the registers hence, \iffv is 
\[\ket{\psi}\tensor\ket{\cent}\approx (\ket{\mill_1^{\perp}}\tensor \ldots \ket{\mill_{\kappa}^\perp})\tensor \ket{\mill}^{\tensor \kappa},\] which \fi has a squared overlap\iffv\footnote{squared overlap means the modulus square of the projection in to the subspace.}\fi of $\frac{1}{\binom{2\kappa}{\kappa}}$ with the symmetric subspace of the $2\kappa$ registers. \iffv The term $\frac{1}{\binom{2\kappa}{\kappa}}$ is bounded above by $< \left(\frac{1}{2}\right)^\kappa$. \fi Therefore,  \iffv by choosing $\kappa = \lambda$ or even $\kappa=\log^\alpha(\lambda)$ for $\alpha>1$, \fi $\adv$'s forging probability in this attack, is \iffv $\negl$ (where $\secpar$ is the security parameter). \else $\negl[\kappa]$. \fi  \iffv A similar use of symmetric subspace operations was used in a recent work on private quantum coins~\cite{JLS18}. \fi \ifab Unfortunately, if an adversary having $n$ coins tries to pass $(n+1)$ coins, the optimal success probability becomes inverse polynomially close to $1$, for $n$ large enough. \fi  \iffv Unfortunately, when more number of coins are submitted, suppose an adversary having $n$ coins tries to pass $(n+1)$ coins, the optimal success probability becomes $\frac{\binom{(n+1)\kappa}{n\kappa}}{\binom{(n+2)\kappa}{(n+1)\kappa}}$, which is inverse polynomially close to $1$, for $n$ large enough.   
Hence, it is hard for any adversary to produce two coins from one coins but it is easy to produce $(n+1)$ from $n$ coins. \fi
However, a simple examination shows that \iffv when taking into account that the money is lost in case of a failed verification, \fi the expected utility of any \iffv polynomial-time \fi adversary is \iffv negligible if not negative. \else negligible. \fi


\iffv  
This is what motivated us to define rational unforgeability, where we want the expected gain of any adversary should be at best negligible. 
\fi

\ifab \vspace{-10pt} \fi 
\iffv
\subsection{Comparison to previous works}
 In comparison to other public quantum money constructions, our scheme has two main advantages. 
Firstly, we propose a construction (described in \cref{alg:ts}) of an almost public quantum coin scheme, which is \emph{rationally secure} (see \cref{thm:unconditional_secure}) based on \prs, which is a standard assumption even weaker than one-way functions. On the contrary, as discussed in the \emph{Related works} paragraph on \cpageref{pg:related-works}, all known public money constructions known so far are either not provably secure or are based on non-standard assumptions. Moreover, \cite{AHY23} gives evidence that some of the standard and generic assumptions, such as collision-resistant hash functions, are not enough to construct public quantum money.  Given the proven difficulties and failed attempts in constructing public quantum money, we investigate the possibility of constructing a weaker yet meaningful form of public quantum money from possibly weaker assumptions. In that vein, our work constructs an intermediate primitive between public and private quantum money schemes but from \prs, a weaker assumption than not only collision-resistant hash but even one-way functions, see Ref~\cite{Kre21}, thus complementing the impossibility result of~\cite{AHY23}. 


Next, in all the efficient public quantum money schemes that the authors are aware of, the number of qubits required in a money state and the running time of all the algorithms (keygen, mint, and verify) are polynomial in the security parameter. 
However, in our construction, by choosing the underlying constructions carefully, we can make the time and space complexity of our constructions to be polylogarithmic in the security parameter (see \cref{subsec:complexity}). 
\fi

The main drawback of our construction (see \cref{alg:ts}) is that we could only prove a weaker notion of security in the following sense. Firstly, as mentioned earlier, we consider rational unforgeability, which is weaker than standard unforgeability. Secondly, we consider a non-standard notion of utility or adversarial gain in the unforgeability game, called the all-or-nothing utility, which is weaker than the standard notion of utility called the flexible utility (see Notions of security on \cpageref{pg:notions_of_security}). Moreover, we consider a non-adaptive model, where a post-verified money state is never returned back to the adversary. It is natural to ask whether all these restrictions can be lifted. 
Unfortunately, at least some of the restrictions are necessary. For instance, we show that standard unforgeability does not hold for our construction, even if we consider nonadaptive adversaries along with the all-or-nothing utility, see \cref{alg:opt_attack}. Moreover, if we consider the flexible utility instead and allow the adversaries multiple adaptive attempts or queries to the verification oracle, then our construction is not rationally unforgeable (see the discussion after \cref{definition:nonadapt_all-or-nothing_unforge} and \cref{remark:adaptive_Attack}). Of course, this does not rule out other options of potential strengthening, such as considering rational unforgeability under flexible utility and nonadaptive adversaries\footnote{We do not know the answer both in the single-verifier and the multiverifier cases.},  the answers to which are unknown. We enforce the nonadaptive threat model by publishing a user manual for the users that instructs the user to use a fresh coin from the bank to perform a comparison-based verification of a received transaction---see \cref{subsec:user_manual}.

\ifab \vspace{-1pt} \fi \subsection{Scientific contribution} \label{pg:sceintific_constributions}
\begin{enumerate}
    \item \textbf{(Weak and generic \iffv computational \fi assumptions)} \iffv As far as the authors are aware, the \else Our \fi construction comes closest to a provably secure quantum public money based on \prs, which is a generic hardness assumption\iffv, weaker than quantum-secure one-way functions. \else, with polylogarithmic time and space complexities. \fi There is no known construction of public quantum money from standard assumptions, and it seems unlikely~\cite{AHY23} to have a public quantum money construction that is provably secure based on generic standard hardness assumptions such as one-way functions or collision-resistant hash functions.   \iffv Moreover, our construction is comparatively more efficient because unlike most efficient money schemes, both running time for all algorithms and the number of qubits required for each coin is only polylogarithmic (see \cref{subsec:complexity} and the discussion in the last paragraph in \cref{subsec:user_manual}) instead of being polynomial in the security parameter. 
    \fi 
    \item \textbf{(Almost public coin scheme)} Our construction comes close to satisfy the features of both public verifications and coins. There is currently no other public coin scheme.
    \item \textbf{(Quantum public key)} Our construction closely resembles a \emph{public quantum money scheme with a quantum public key},  a topic  that has not been studied. By itself, such a scheme is quite interesting as it may evade the known impossibility result \iffv(see \cref{quote: impossibility}), \fi that unconditionally secure public quantum money schemes (with a classical public key), cannot exist. In fact, we managed to partially circumvent the impossibility \iffv result  (see Appendix \cref{thm:unconditional_secure}), \else result, \fi  by constructing an \ifab unconditionally secure \fi inefficient almost public quantum money scheme, based on a previous work~\cite{MS10}\iffv. \fi\ifab, thus getting closer to answering the open question: Can unconditionally secure public quantum money, with a quantum public key exist? \fi  
 \iffv This brings us closer to answering the open question: Can unconditionally secure public quantum money, with a quantum public key exist? \fi  
\item \textbf{(Rational unforgeability)} We also put forward a new notion of rational unforgeability, 
which is weaker than the usual notion of security but still has strong guarantees. \iffv This might open up new possibilities for new constructions, or as a way to circumvent impossibility results. This notion is relevant in most of the cases as in reality users are rational parties rather than adversarial madmen. The authors are not aware of any such notion in the context of  quantum money. \fi  
\item \textbf{(Modularity)} The lifting technique used in our work \iffv lifts any private quantum coin to an almost public quantum coin, preserving the main notions of security against forging and sabotage. 
Our techniques are \else is \fi fairly general, and we hope that \iffv they \else it \fi  could be used to lift other cryptographic protocols such as quantum copy-protection. We discuss this in detail in the future work section\ifab. \fi\iffv, see \cref{sec:future_work}.\fi
\end{enumerate}

\iffv 

\paragraph*{Paper organization}
\onote{Reorder stuff, and explain that the main text contains only the single-verifier unforgeability, which is a weaker notion of security; the stronger result regarding multi-verifier security, that encapsulates the real threat model under the user manual (which is currently not mentioned), is given in the appendix.}\anote{Add the equivalence result.}
\cref{sec:prelim-definitions} contains notations used in the proofs of our results (\cref{subsec:notations}), preliminaries (\cref{subsec:prelims}) and definitions (\cref{subsec:definitions}). \cref{subsec:unforgeability and security} contains the single-verifier unforgeability definitions, which are a weaker simplification of the multiverifier unforgeability notions given in \cref{appendix:multi-ver-unforge} that encapsulates the real threat model under the user manual. In \cref{sec:constructions}, we describe our construction (\cref{alg:ts}) and state our main result \cref{thm:multi-unconditional_secure}. 
 We discuss the complexity and possible implementation of our construction in \cref{subsec:complexity}, followed by the user manual to use our construction (\cref{subsec:user_manual}). Next, in \cref{sec:attack}, we discuss optimal forgery attack (\cref{subsec:attack_description}, \cref{subsec:optimality} and \cref{subsec:analysis_of_attack}) on our construction, and use it to prove the lifting theorem (\cref{prop: rational-unforge}) in \cref{subsec:completeness_security proofs}. 
In \cref{sec:future_work}, we discuss a few open questions relevant to our work and the scope of future work in order to further improve our construction.
In \cref{appendix:fairness}, we discuss the security against sabotage attacks for our scheme. \cref{appendix:untraceability} contains a discussion on the untraceability definitions for quantum money. \cref{appendix:multi-ver-attacks} contains the definitions of multiverifier unforgeability (\cref{appendix:multi-ver-unforge}), and multiverifier security against sabotage (\cref{appendix:multi-ver-fair}), the security results for our construction in the multiverifier setting (\cref{appendix:multi-ver-results}), and a discussion on how the multiverifier security notions capture all possible efficient attacks under the user manual (\cref{appendix:user_manual-multi-ver-equivalence}). The proofs of security results, other than unforgeability, are given in \cref{appendix:appendix_proofs}, which also contains the proof of completeness for our construction. Finally, nomenclature is given on the last page.

Next, we show a pictorial representation of how we achieve the different notions of security for our scheme in \cref{fig:struct_fairness,fig:struct_unforgeability}. 

\begin{figure}[htp]
\centering 
\begin{adjustbox}{max width=1.2\textwidth,center}
\begin{tikzpicture}
  \matrix (m) [ row sep=2em, column sep=3em, align=center]
  {& \node[draw, line width=1mm, square](1){$\begin{matrix}\text{Inefficient or stateful}\\ \text{multiverifier nonadaptive}\\
   \text{unforgeable private coin scheme}\\ \text{with rank-$1$ verification}\\ \text{\cite{MS10,AMR20}}\end{matrix}$}; & \node[draw, line width=1mm, rectangle](2){$\begin{matrix}\text{Inefficient or stateful nonadaptive}\\ \text{unforgeable private coin scheme}\\ \text{with rank-$1$ verification}\end{matrix}$};\\
   & \node[draw, ellipse](3){$\begin{matrix}\text{Quantum secure}\\ \text{Pseudorandom States exist}\end{matrix}$}; &\\ 
   & \node[draw, rectangle, line width=1mm](4){$\begin{matrix}\text{Multiverifier nonadaptive unforgeable private}\\ \text{coin scheme with rank $1$ verification}\end{matrix}$}; &\\ \node[draw, rectangle](5){$\begin{matrix}\text{Multiverifier nonadaptive}\\ \text{rational unforgeability}\end{matrix}$}; & \node[draw, rectangle, line width=1mm](7){$\begin{matrix}\text{Nonadaptive unforgeable private}\\ \text{coin scheme with rank $1$ verification}\end{matrix}$}; & \\
     \node[draw, diamond](6){User manual}; &  & 
   \node[draw, rectangle](9){$\begin{matrix}\text{Nonadaptive}\\\text{rational unforgeability}\end{matrix}$};\\};
   \path[->,draw,thick]
   (1) edge [] node[labeled]  [scale=1] 
   {\text{$\begin{matrix}\text{Follows}\\[.001mm]\text{immediately}\end{matrix}$}} (2) ;
   \path[->,draw,dashed]
   (2) edge [above right] node[labeled]  [scale=1] 
   {\text{\cref{prop: rational-unforge}}} (9) ;
   \path[->,draw,dashed]
   (3) edge [right] node[labeled]  [scale=1] 
   {\text{\cite{JLS18}}} (4) ;
   \path[->,draw,dashed]
   (4) edge [left] node[labeled]  [scale=1] 
   {\text{\cref{prop: multi_ver-unforge}}} (5) ;
   \path[->,draw,dashed]
   (1) edge [left, bend right=22] node[labeled]  [scale=1] 
   {\text{\cref{prop: multi_ver-unforge}}} (5) ;
   \path[->,draw,thick]
   (4) edge [right] node[labeled]  [scale=1] 
   {\text{Follows immediately}} (7) ;
   \path[-,draw,thick]
   (5) edge [left] node[labeled]  [scale=1] 
   {\cref{appendix:user_manual-multi-ver-equivalence}} (6) ;
   \path[->,draw,dashed]
   (7) edge [right] node[labeled]  [scale=1] 
   {\text{\cref{prop: rational-unforge}}} (9) ;
\end{tikzpicture}
\end{adjustbox} 
\vspace{-10pt}
\caption{The above figure contains the structural diagram for the  unforgeability properties. The figure contains nodes which are either squares, rectangles, diamonds or ellipses. It also contains arrows which connect two nodes. A rectangle represents a security definition and a square represents the existence of a security definition, which holds unconditionally. It also contains the reference of the proof. An ellipse represents the  assumption on the existence of a security notion. Arrows are of two types: solid arrow and dashed arrow. A solid arrow connecting two nodes A and B means if A holds for some construction then B also holds for that construction. A dashed arrow connecting two nodes A and B means if A exists, then B exists via some construction. Most arrows contain a reference to the proof. However, there are two arrows which follow immediately, and hence, we do not have references for them. The rectangles or squares with thick outline contain security notions regarding a private coin scheme whereas the ones with thin outline contain security notions regarding our construction, which is a comparison-based public quantum coin scheme with private verification. All the results for our construction hold with respect to the all-or-nothing utility, and for the private scheme, we use the flexible utility. There is also the user manual depicted in a diamond box, which points to the security notion that ensures the user-manual is secure. The notions of security, including non-adaptive security which was not presented in the introduction, are discussed in \cref{subsec:unforgeability and security,appendix:multi-ver-unforge}.}
\label{fig:struct_unforgeability}
\end{figure}

\begin{figure}[htp]
\centering
\begin{tikzpicture}
  \matrix (m) [ row sep=2em, column sep=0em, align=center]
  {& \node[draw, square,  line width=1mm](1){$\begin{matrix}\text{Private quantum}\\ \text{coin scheme with}\\ \text{with rank-$1$ verification}\\[.5 cm] \text{\cite{JLS18,MS10,AMR20}}\end{matrix}$}; &\\
  & &\\
  \node[draw, rectangle](2){$\begin{matrix}\text{Nonadaptive rational security}\\\text{against private sabotage}\end{matrix}$}; & & \node[draw, rectangle](3){$\begin{matrix}\text{Nonadaptive rational security}\\\text{against public sabotage}\end{matrix}$};\\
  & &\\
  \node[draw, rectangle](4){$\begin{matrix}\text{Multiverifier nonadaptive}\\\text{rational security}\\\text{against private sabotage}\end{matrix}$}; & & \node[draw, rectangle](5){$\begin{matrix}\text{Multiverifier nonadaptive}\\\text{rational security}\\\text{against public sabotage}\end{matrix}$};\\
  & \node[draw, diamond](6){User Manual}; &\\
  };
  
  \path[->,draw,dashed]
   (1) edge [above left] node[labeled]  [scale=1] 
   {\cref{thm:rational priv fair}
    } (2) ;
  \path[->,draw,dashed]
   (1) edge [above right] node[labeled]  [scale=1] 
   {\cref{thm:rational public fair}
    } (3) ;
    \path[->,draw,thick]
   (2) edge [left] node[labeled]  [scale=1] 
   {\cref{prop:sing-multi-rational-fair}
    } (4) ;
    \path[->,draw,thick]
   (3) edge [right] node[labeled]  [scale=1] 
   {\cref{remark:single-multi-rational-fair-public}
    } (5) ;
    \path[-,draw,thick]
   (4) edge [left] node[labeled]  [scale=1] 
   {\cref{appendix:user_manual-multi-ver-equivalence}} (6) ;
   \path[-,draw,thick]
   (5) edge [right] node[labeled]  [scale=1] 
   {\cref{appendix:user_manual-multi-ver-equivalence}} (6) ;
\end{tikzpicture}
\caption{The above figure represents the structural diagram for security against sabotage. The notations are the same as that for the previous figure. All the results for our construction hold with respect to the all-or-nothing loss function. We would like to point out that the proof of \cref{prop: multi_ver-unforge} used in \cref{fig:struct_unforgeability} uses the rectangle containing the security notion \emph{multiverifier nonadaptive rational security against private sabotage} in this figure. 
The notions of security are discussed in \cref{appendix:fairness,appendix:multi-ver-fair}.
}
\label{fig:struct_fairness}
\end{figure}

\section{Preliminaries and definitions}\label{sec:prelim-definitions}
\subsection{Notations}\label{subsec:notations}
This subsection contains some notations and conventions that will be required only in the proofs (\cref{sec:attack,subsec:completeness_security proofs,appendix:appendix_proofs}).
 
\begin{enumerate}
    \item We use \nom{H}{$\mathbb{H}$}{$\CC^d$} to represent $\CC^d$. We fix the local dimension of each register to $d$, i.e., the state of each register is a unit vector (or an ensemble of unit vectors) in $\mathbb{H}$. 
    \item \label{item:symmetric_sub_proj_def} We use \nom{S}{$\Sym{n}$}{the symmetric subspace of $\mathbb{H}$ over $n$ registers} to denote the symmetric subspace of $\mathbb{H}$ over $n$ registers. Let \nom{P}{$\Pi_{\Sym{n}}$}{projection operator on to $\Sym{n}$} to be the projection on to $\Sym{n}$. The symmetric subspace over $n$ registers is the set of all states on $n$ register which remain the same under any permutation of the registers. For more information on the symmetric subspace, see~\cite{Har13}. 
    \item For every vector ${\vec{j}} = (j_0,j_1,\ldots j_{d-1}) \in \NN^d$ such that $\sum_{r=0}^{d-1} j_r = n$, we denote \nom{M}{$\binom{n}{{\vec{j}}}$}{$\binom{n}{j_0,j_1,\ldots j_{d-1}}$} as $\binom{n}{j_0,j_1,\ldots j_{d-1}}$.
    \item Let \nom{I}{$\mathcal{I}_{d, n}$}{Set of all vectors in $(j_0,j_1,\ldots ,j_{d-1})$ such that $\sum_{k=0}^{d-1} j_k = n$} be defined as $\mathcal{I}_{d, n} := \{(j_0,j_1,\ldots ,j_{d-1}) \in \NN^d~|~\sum_{k=0}^{d-1} j_k = n \}$ - see also Ref.~\cite[Notations]{Har13}.\label{item:I-def}
    \item For every vector ${\vec{i}}= (i_1,\ldots,i_n)$ in $\ZZ_d^{n}$ we define \nom{T}{$T(\vec i) $}{vector in $\mathcal{I}_{d,n}$ whose $k^{th}$ entry (for $k \in \ZZ_d$) is the number of times $k$ appears in the vector $\vec{i}$} to be the vector in $\mathcal{I}_{d,n}$ whose $k^{th}$ entry (for $k \in \ZZ_d$) is the number of times $k$ appears in the vector $\vec{i}$---see also Ref.~\cite[discussion before Theorem 3.]{Har13}. Note that, $|T^{-1}(\vec{j})| = \binom{n}{\vec{j}}$. We shall represent the $k^{th}$ entry (for $k \in \ZZ_d$) of $T(\vec{i})$ by $(T(\vec{i}))_k$.\label{item:T-def}
    \item \label{item:product_basis_definition} We extend $\ket{\mill}$ (a private coin) to an orthonormal basis of $\mathbb{H}$  denoted by $\{\ket{\phi_j}\}_{j \in \ZZ_d}$ such that $\ket{\phi_0} = \ket{\mill}$ \footnote{In \cref{alg:opt_attack}, we require that this basis is such that, the vector $\ket{1}$ has non-zero overlap with only $\ket{\phi_0}$ (same as $\ket{\mill}$) and $\ket{\phi_1}$. In other words, the component of $\ket{1}$ orthogonal to $\ket{\mill}$ is proportional to $\ket{\phi_1}$. Such a basis exists and we fix such a basis for our analysis.}. Hence, \begin{equation}\label{eq:private-public}
    \ket{\cent} = \ket{\mill}^{\tensor \kappa} = \ket{\phi_0}^{\tensor \kappa}.
    \end{equation} Clearly, this can be extended to a basis for $\mathbb{H}^{\tensor n}$ given by \[\{\otimes_{k=1}^{n}\ket{\phi_{i_k}}\}_{\vec{i} \in \ZZ_d^n}.\] 
    \item Fix $n, d\in \NN$. For all $\vec{j}=(j_0,\ldots,j_{d-1}) \in \mathcal{I}_{d,n}$ let, 
    the states \nom{S}{$\ket{\BasisSym{n}_{\vec{j}}}$}{$\frac{1}{\sqrt{\binom{n}{\vec{j}}}}\sum_{\vec{i}: T(\vec{i}) = \vec{j}} \ket{\phi_{i_1}\ldots\phi_{i_n}}$} and \nom{S}{$\ket{\BasisSymTilde{n}_{\vec{j}}}$}{$\ket{\cent}\tensor \ket{\BasisSym{n}_{\vec{j}}}$} be defined as
    \begin{align}\label{eq:basis_states_def}
    \ket{\BasisSym{n}_{\vec{j}}} = \ket{\BasisSym{n}_{(j_0,j_1,\ldots, j_{d-1})}} &:= \frac{1}{\sqrt{\binom{n}{\vec{j}}}}\sum_{\vec{i}: T(\vec{i}) = \vec{j}} \ket{\phi_{i_1}\ldots\phi_{i_n}},\\
    \ket{\BasisSymTilde{n}_{\vec{j}}} &:= \ket{\cent}\tensor \ket{\BasisSym{n}_{\vec{j}}}. \\ 
    \end{align}

    \item Let \nom{S}{$\BasisSym{n}$}{orthonormal basis for $\Sym{n}$ denoted by $\{\ket{\BasisSym{n}_{(j_0, \ldots,j_{d-1})}}\}_{\vec{j}\in \mathcal{I}_{d,n}}$} and \nom{S}{$\BasisSymTilde{n}$}{orthonormal basis for the subspace $\SymTilde{n}$ denoted by $\{\ket{\BasisSym{n}_{(j_0, \ldots,j_{d-1})}}\}_{\vec{j}\in \mathcal{I}_{d,n}}$} be sets defined as
    \begin{align}\label{eq:basis_def}
    \BasisSym{n} &:= \{\ket{\BasisSym{n}_{(j_0, \ldots,j_{d-1})}}\}_{\vec{j}\in \mathcal{I}_{d,n}}.\\
    \BasisSymTilde{n} &:= \{\ket{\BasisSymTilde{n}_{\vec{j}}}\}_{\vec{j}\in \mathcal{I}_{d,n}}.
    \end{align}
    It is easy to see that $\BasisSym{n}$ is an orthonormal set. Hence, the set  $\BasisSymTilde{n}$ is also orthonormal. Moreover it can be shown that $\BasisSym{n}$ is an orthonormal basis for $\Sym{n}$---see also Ref.~\cite[Theorem 3]{Har13}).
    \item We will use bold letters to denote subspaces and use the same English letter to denote a particular basis for the subspace, for example - $\Sym{n}$ and $\BasisSym{n}$.
    \item For any state $\ket{\psi}$, we will use $\ket{\widetilde{\psi}}$ to denote the state $\ket{\cent}\tensor\ket{\psi}$. Similarly, for any subspace $A$, we will use \nom{A}{$\tilde{A}$}{Subspace of $n+\kappa$ register states such that the state of first $\kappa$ registers is $\ket{\cent}$ and the state of the last $m\kappa$ registers is some state in $A$, for any subspace $A\subset\mathbb{H}^n$} to represent the subspace $\{\ket{\cent}\tensor\ket{\psi}~|~\ket{\psi} \in A\}$. In a similar way for any basis $B$ we will use $\widetilde{B}$ to denote \[\{\ket{\cent}\tensor\ket{\psi}~|~\ket{\psi} \in B\}.\]
    \item For any hermitian operator $H$, we use \nom{L}{$\lambda_{\text{max}}(H)$}{largest eigenvalue of $H$ for any Hermitian operator $H$} to denote the largest eigenvalue of $H$. \label{item:lambda}

     
\end{enumerate}
\subsection{Preliminaries}\label{subsec:prelims} In this section we will recall some definitions regarding quantum money as well as some tools from linear algebra.

\begin{definition}[Private quantum money (adapted from \cite{Aar09})]\label{definition:private_quantum_money}
A private quantum money scheme consists of the three Quantum Polynomial Time (\nom{Q}{QPT}{Quantum Polynomial Time}) algorithms: $\keygen$, $\bank$ and $\verify$.\footnote{Note that we are implicitly assuming that the quantum money scheme is stateless. Indeed, for a stateful scheme this definition does not hold, for example -~\cite{AMR20}. We will be only concerned with stateless quantum money schemes in this work.}

\begin{enumerate}
    \item $\keygen$ takes a security parameter $\secparam$ and outputs a classical secret key, $sk$. 
    \item  $\bank$  takes the secret key and prepares a quantum money state $\ket{\$}$.\footnote{Even though in most generality the quantum money state may be a mixed state, in all schemes we are aware of the money state is pure, and we use the pure state formalism for brevity.}
    \item $\verify$ receives the secret key and an (alleged) quantum money state $\rho$, which it either accepts or rejects. We emphasize that $\verify$ does not output the post measurement state. 
\end{enumerate}
    
Completeness: the quantum money scheme has perfect completeness if for all $\secpar$
\[ \Pr[sk\gets \keygen(\secparam); \ket{\$}\gets \bank(\sk): \verify(\sk,\ket{\$})=\accept]=1.\]
Notice  that repeated calls to $\bank$ could produce different money states, just like dollar bills, which have serial numbers, and therefore these are not exact copies of each other. Hence, we use $\ket{\$}$ to denote the potentially unique banknotes produced by $\bank$, in order to show the resemblance to dollar bills.
\end{definition}

For any subspace $A$, we will use \nom{A}{$A^\perp$}{the orthogonal subspace of $A$, for any subspace $A$} to denote the orthogonal subspace of $A$ and \nom{P}{$\Pi_A$}{projection onto $A$, for any subspace $A$} to denote the projection onto $A$. For any linear operator $T$ we use \nom{K}{$\ker(T)$}{kernel of $T$, for any linear operator $T$} to denote the kernel of $T$ and \nom{I}{$\Ima(T)$}{the image of $T$, for any linear operator $T$} to denote the image of $T$. We will use \nom{T}{$\tr(\rho)$}{trace of the matrix $\rho$, for any matrix $\rho$} to denote the trace of the matrix $\rho$, for any matrix $\rho$. For any set $S$, we use \nom{S}{$Span(S)$}{the subspace spanned by $S$} to denote the subspace spanned by $S$.

\subsection{Definitions}\label{subsec:definitions}
In this section we will see some new definitions that would be relevant for our work.
\begin{definition}[Inefficient Quantum Money]\label{definition:inefficient quantum money}
In a money scheme, if at least one of the three algorithms - $\keygen$, $\bank$ and $\verify$ is not QPT, then it is called an inefficient money scheme.
\end{definition}

We generalize the definition of public quantum money given in~\cite{Aar09} by allowing the verification key or the public key to be a quantum state and not necessarily a classical string.
\begin{definition}[Public quantum money (generalized from~\cite{Aar09})]
A public quantum money scheme consists of four QPT algorithms: $\prkeygen$, $\pkkeygen$, $\bank$ and $\verify$. Usually public quantum money has one algorithm $\keygen$ that produces a private-public key pair but we break this into two algorithms $\prkeygen$ and $\pkkeygen$. In our definition, the public key can be quantum, and hence cannot be published in a classical bulletin board; instead, users get access to it via $\pkkeygen$ oracle. Therefore, it is essential to break $\keygen$ in to $\prkeygen$ and $\pkkeygen$.

\begin{enumerate}
    \item $\keygen$ takes a security parameter $\secparam$ and outputs a secret key $sk$. 
    \item $\pkkeygen$ takes the secret key, and prepares a quantum verification key, denoted $\ket{v}$.
    \item  $\bank$  takes the secret key and prepares a quantum money state $\ket{\$}$.%
    \item $\verify$ receives a quantum verification key $\ket{v}$ and an (alleged) quantum money state $\rho$, and either accepts or rejects but never returns the money. If money is returned after verification then it might lead to adaptive attacks which we do not discuss in our work. Therefore, we deviate from the definitions used in other constructions in order to prevent adaptive attacks.
\end{enumerate}
Completeness: the quantum money scheme has perfect completeness if for all $\secpar$
\[ \Pr[sk\gets \keygen(\secparam); \ket{v} \gets \pkkeygen(\sk);  \ket{\$}\gets \bank(\sk): \verify(\ket{v},\ket{\$})=\accept]=1.\]
We require that even after repeated successful verifications of correct money states of the form $\ket{\$}\gets \bank(\sk)$ using the same verification key $\ket{v}$, for a new note $\ket{\widetilde{\$}} \gets \bank(\sk)$, \[ \verify(\ket{v},\ket{\$})=\accept]=1,\], i.e., the completeness property holds even after repeated calls of $\verify$. Note that the quantum public key might change due to both valid and invalid verifications. We use $\ket{\$}$ to denote money states for the same reason as discussed in \cref{definition:private_quantum_money}.

\end{definition}

\ifab \vspace{-10pt} \fi \paragraph*{}In all the existing constructions of money schemes, the public key is a classical string and not a quantum state. Although the scheme we construct is similar to a quantum money scheme with a quantum public key, technically it is what we call a comparison-based definition---see \cref{definition:quantum_coin}. It differs from the quantum money scheme mainly in that the verification key that it uses is the quantum money itself, and therefore, the security notion is slightly different. 

When comparing a quantum public money scheme with a classical key  and a quantum money scheme with a public quantum key, the main difference is that the $\verify$ algorithm of the latter could be thought of as a stateful rather than a stateless protocol. This is because the quantum state that is used as the key can change between different calls to $\verify$. As is often the case in cryptographic protocols, the security definitions and analysis of stateful protocols require  more care than for their stateless counterparts. 
A (private or public) quantum money scheme may output different quantum money states in response to consecutive calls of $\bank(\sk)$. We call the money produced by such schemes \emph{quantum bills}.

\begin{definition}[Quantum Coins (adapted from~\cite{MS10})]\label{definition:quantum coins}
A (private or public) quantum coin scheme is a scheme in which repeated calls of $\bank()$ produce the same (pure)\footnote{May not be true for stateful constructions such as~\cite{AMR20}.} state. We will use \nom{Ce}{$\ket{\cent}$}{A public coin equivalent to $\kappa$ private coins.} to denote a public coin and \nom{M}{$\ket{\mill}$}{a private coin} to denote a private coin.

In a public coin scheme with comparison-based verification, $\verify$ uses one coin as its initial public key.
\label{definition:quantum_coin}
\end{definition}
\begin{definition}[Public Quantum Coins with Private Verification]
In a public coin scheme withe private verification, we have in addition to the public verification algorithm, $\verify_\pk( \cdot)$, a private verification algorithm $\verify_\sk(\cdot)$. These two algorithms may function differently. Note that this public key $pk$ can potentially be a quantum state $\ket{v}$.

In our construction, a public coin is a collection of private coins. Hence, we will use $\ket{\cent}$ to denote a public coin and $\ket{\mill}$ to denote a private coin.
\label{definition:public_and_private_verification}
\end{definition}


\begin{definition}[\Count]\label{definition:count}
In any quantum money scheme $\MS$, $\Count$ is a procedure that takes a \emph{key} and a number of alleged quantum money states and runs the verification algorithm on them one by one  and outputs the number of valid quantum states that passed verification.
 
 In a private quantum money scheme, $\Count$ implicitly takes $\sk$ generated using $\keygen(\secparam)$ as \emph{key} whereas in a comparison-based public quantum money scheme instead, $\Count$ takes the wallet as \emph{key} (where the wallet is initialized to a valid coin using $\bank(\sk)$) for (public) verification. In the case of all other public quantum money schemes, \emph{key} is the public key $\ket{v}$ generated by $\pkkeygen(\sk)$.
 
 In the case of public quantum coin with private verification there are two $\Count$ operations - one for the public verification 
 and the other for the private verification. 
 
\end{definition}

\section{Different notions of security}\label{subsec:unforgeability and security}


In this section, we will define the different notions of unforgeability but only in the single-verifier setting. The multiverifier versions of these unforgeability notions are defined in \Cref{appendix:multi-ver-attacks}. From now on, we remove the term \emph{single-verifiers} from the names of the security notions, and hence, any security notion that does not mention explicitly in its name that it is a multiverifier notion, should be understood as a single-verifier notion.
As per usual convention, we will use a game-based definition of unforgeability notions. For every unforgeability security game $\mathcal{G}^{\MS}_{\secpar}$ (such as Game~\ref{game:nonadapt_unforge_strongest}), we define a utility of the adversary, to represent the gain it has in Game $\mathcal{G}^{\MS}_{\secpar}$.
\begin{definition}\label{definition:utility}
A utility function with respect to any unforgeability security game $\mathcal{G}^{\MS}_{\secpar}$ is a function,
\[U_{\mathcal{G}^{\MS}_{\secpar}} : \mathcal{G}^{\{\mathcal{A}\},\MS}_{\secpar} \rightarrow \RR, \] where $\mathcal{G}^{\{\mathcal{\adv}\},\MS}_{\secpar}$ is the set of all possible outcomes due to the different adversarial strategies from a set $\{\mathcal{A}\}$, in the security game $\mathcal{G}$ with respect to the money scheme $\MS$ and security parameter $\secpar$. $\{\mathcal{A}\}$ is a fixed set of adversarial strategies/algorithms in the unforgeability game $\mathcal{G}^{\MS}_{\secpar}$. $\mathcal{G}^{\adv,\MS}_\secpar$ denotes a particular outcome of the unforgeability game $\mathcal{G}^{\MS}_{\secpar}$ due to the adversarial strategy $\adv$. 
\end{definition}
Note that in an unforgeability game $\mathcal{G}^{\MS}_{\secpar}$, there can be multiple ways to define the utility of an adversary $U_{\mathcal{G}^{\MS}_{\secpar}}$. 
\begin{definition}[Standard Unforgeability]\label{definition:standard_unforgeability}
A money scheme $\MS$ is \suf in an unforgeability game $\mathcal{G}^{\MS}_{\secpar}$, with respect to a utility $U_{\mathcal{G}^{\MS}_{\secpar}}$, if for every \nom{Q}{QPA}{Quantum Poly-time Algorithm} (Quantum Poly-time Algorithm) $\adv$ in Game $\mathcal{G}^{\adv,\MS}_{\secpar}$,
there exists a negligible function $\negl$ such that,
\begin{equation}
   \Pr[U_{\mathcal{G}^{\MS}_{\secpar}}(\mathcal{G}^{\adv,\MS}_\secpar)>0]\leq \negl. 
\end{equation}
\end{definition}
\begin{definition}[Rational Unforgeability]\label{definition:rational_unforgeability}
A money scheme $\MS$ is \ruf in an unforgeability game $\mathcal{G}^{\MS}_{\secpar}$, with respect to a utility $U_{\mathcal{G}^{\MS}_{\secpar}}$, if for every QPA $\adv$ in Game $\mathcal{G}^{\adv,\MS}_{\secpar}$,
there exists a negligible function $\negl$ such that,
\begin{equation}
    \mathbb{E}(U_{\mathcal{G}^{\MS}_{\secpar}}(\mathcal{G}^{\adv,\MS}_\secpar)) \leq \negl, 
    \label{eq:rational-unforge-strongest}
\end{equation}

\end{definition}

Clearly, in any unforgeability game, standard unforgeability implies rational unforgeability with respect to a common utility.

From now onward, we will use the convention that vanilla unforgeability means standard unforgeability, unless mentioned otherwise.

We start with a general unforgeability game that captures both adaptive and nonadaptive unforgeability notions for all money schemes. In the adaptive setting, the adversary can query the verification procedures on the fly and adapt its strategy based on the outcomes of such queries.
The nonadaptive setting is the special case in which the adversary is allowed to query the verification procedures at most once.

For any (private or public money scheme, including comparison-based public coins and public quantum coins with private verification) quantum money scheme $\MS$, we define the following security game, see Game~\ref{game:nonadapt_unforge_strongest}.

\begin{boxx}
\procedure[linenumbering]{$\uf^{\adv,\MS}_{\secpar}$:}
{ 
sk\gets\keygen(\secparam)\\
fail \gets 0\\
adaptive \gets 0\\
\label{line:priv_and_public_ver} \adv(\secpar) \text{ gets access to } {\bank(\sk),\pkkeygen(\sk), \Count_{\pk}, \Count_{\sk}}\\
\pcif \exists \text{ failed verification in a $\Count_\pk$ query}\\
\quad fail \gets 1\\
\pcendif\\
\pcif \text{total number of $\Count_\pk$ and $\Count_\sk$ queries is strictly more than 1}\\
\quad adaptive \gets 1\\
\pcendif\\
\text{Denote by $n$ the number of times that the $\bank(\sk)$ oracle was called by $\adv$}\\
\text{Denote by $m$ the sum of the outcomes of $\Count_\pk$ queries called by $\adv$}\\
\text{Denote by $m'$ the sum of the outcomes of $\Count_\sk$ queries called by $\adv$}\\
\pcreturn m,m',n, fail, adapt.
}
\caption{Unforgeability Game}
\label{game:nonadapt_unforge_strongest}
\end{boxx} 
\anote{ If the above description of the security game makes less sense we can instead add the following protocol based description.

\procedure[linenumbering, head={$\auf^{\adv,\MS}_{\secpar}$:}]{Adaptive Unforgeability Game}{
\textbf{Verifier ($m=0$)} \< \< \textbf{Adversary} \< \< \textbf{Bank ($n=m'=0$)}\\[][\hline]
\< \< \< \< \sk \gets \keygen(\secparam)\\
\< \<  \vdots \< \< \\
\<\<\< \sendmessage{->}{length=2cm, top = queries mint, width = 0.05cm} \<\\
\<\<\<\< \ket{\$} = \bank(\sk)\\
\<\<\<\< n=n+1\\
\< \< \< \sendmessage{<-}{length=1cm, top = $\ket{\$}$, width = 0.05cm} \< \\
\< \<  \vdots \< \< \\ 
\< \sendmessage{<-}{length=1cm, top={$\rho_i$}, bottom={$r_i$ alleged money states}, width = 0.7cm} \< \< \< \\
\label{line:pub_ver}m=m+\Count_{\pk}(\rho_i) \< \< \< \< \\ 
\pcif \Count_\pk \neq r_i\<\<\<\< \\ 
fail=fail+1\<\<\<\<\\
\pcendif \<\<\<\<\\
\< \sendmessage{->}{length=1cm, top={Result}, width = 0.7cm} \< \< \< \\
\< \<  \vdots \< \< \\
\< \< \< \sendmessage{->}{length=1cm, top={$\rho_j$}, bottom={$r_j$ alleged money states}, width = 0.05cm} \< \\
\label{line:priv_ver}\< \< \< \< m' = m' + \Count_{\sk}(\rho_j) \\
\< \< \< \< \pcif \Count_\sk(\rho_j)\neq r_j\\ \<\<\<\<fail=fail+1 \\ 
\< \< \< \< \pcendif\\
\< \< \< \sendmessage{<-}{length=1cm, top={Result}, width = 0.05cm} \< \\
\< \<  \vdots \< \< \\
\<\sendmessage{<-}{length=1cm, width=0.1cm}\<\<\<\\
\<\sendmessage{->}{length=1cm, width=0.1cm}\<\<\<\\
\<\<\<\sendmessage{->}{length=3cm, width=0.05cm, top= queries mint, bottom= or submits for verification}\<\\
\<\<\<\sendmessage{<-}{length=1cm, width=0.05cm}\<\\
\< \<  \vdots  \< \< \\[][\hline]
\<\<\pcreturn m, m', n, fail \<\<
}

$fail$ is a common variable between Bank and the Verifier, which is initially set to $0$.
}
\paragraph*{}
For a private scheme $\MS$ in Game~\ref{game:nonadapt_unforge_strongest}, we use the convention that $\pkkeygen$ and $\Count_\pk$ in line~\ref{line:priv_and_public_ver} outputs $\bot$. For a public coin scheme with comparison-based verification, $\pkkeygen$ outputs $\bot$, and $\Count_\pk$ represents the comparison-based $\Count$ algorithm that takes a wallet as the verification key (see \cref{definition:count}). For all other public money schemes, including (not comparison-based) public quantum coins with private verification, $\pkkeygen$ outputs a fresh public key, and $\Count_\pk$ represents the public count, i.e., the count with respect to public verification. 
Finally, for private schemes and public coin schemes with private verification, $\Count_\sk$ represents the private count; for all other public money schemes, $\Count_\sk$ outputs $\bot$.
We follow the convention that the Count procedures only returns the number of money states that passed verification and not the post-verified state.

The $\bank(sk)$ oracle\footnote{In older works~\cite{Aar09,Aar16,MS10,JLS18}, the adversary is allowed to ask for money states only at the beginning while in Game~\ref{game:nonadapt_unforge_strongest}, the adversary is given oracle access to minting. Giving oracle access to minting does not give the adversary $\adv$ more power in Game~\ref{game:nonadapt_unforge_strongest}, because any adversary $\adv$ with oracle access to $\bank$ is nonadaptive and hence, can be simulated by an adversary, that takes the money states form the mint, all at the beginning. This can be done by taking the maximum number of money states that $\adv$ ever asks for and simulating \adv using those money states. If some money states are unused by the end of the simulation, they can be submitted to the verifier at the end, and all such money state would pass verification due to the completeness of the scheme.} outputs a money state (no matter what the input is), thus providing a way for the forger to receive as much money as it wants to perform the forging.  

Similarly, for public money schemes (except comparison-based schemes), the $\pkkeygen(\sk)$ oracle allows the adversary to get public keys multiple times. In the classical case, that would not make any difference (there is no need for multiple keys), but in the quantum case, the adversary's actions could give it an advantage---e.g., perhaps the secret key could be extracted from multiple copies of the \emph{quantum} verification key, but not from a single copy of the verification key.  

We define separate utility functions for adaptive and nonadaptive unforgeability notions.
With respect to Game~\ref{game:nonadapt_unforge_strongest}, we define the following nonadaptive flexible utility for the adversary, \nom{uf}{$\uflna(\uf^{\adv,\MS}_\secpar)$}{flexible nonadaptive utility for the adversary in Game~\ref{game:nonadapt_unforge_strongest}}.  Since it is a nonadaptive utility, it is defined only when the total number of  $\Count$ queries (i.e., including both $\Count_\pk$ and $\Count_\sk$) is at most one in Game~\ref{game:nonadapt_unforge_strongest}, i.e, $adaptive=0$.
\begin{equation}\label{eq:flex_nonadapt_utility_def}
    \uflna(\uf^{\adv,\MS}_\secpar) = \begin{cases}
                   m + m' - n, &\text{if $adaptive=0$,}\\
                undefined, &\text{otherwise.}
                  \end{cases}
\end{equation}

Similarly, we define the adaptive flexible utility, \nom{uf}{$\ufla(\uf^{\adv,\MS}_\secpar)$}{flexible adaptive utility for the adversary in Game~\ref{game:nonadapt_unforge_strongest}}.

\begin{equation}\label{eq:flex_adapt_utility_def}
    \ufla(\uf^{\adv,\MS}_\secpar) = m + m' - n.
\end{equation} 

\begin{definition}[Standard~\cite{Aar09} and Rational Flexible Adaptive unforgeability]\label{definition:adapt_flex_unforge}
A quantum money scheme $\MS$ is \frauf (respectively, \fauf) if it is \ruf (respectively, \suf) in  Game~\ref{game:nonadapt_unforge_strongest}, with respect to  the utility $\ufla$.
\end{definition} 

\begin{definition}[Standard~\cite{Aar09} and Rational Flexible Nonadaptive unforgeability]\label{definition:nonadapt_flex_unforge}
A quantum money scheme $\MS$ is \frnauf (respectively, \fnauf) if it is \ruf (respectively, \suf) in  Game~\ref{game:nonadapt_unforge_strongest}, with respect to  the utility $\uflna$.
\end{definition}

We call this utility function flexible because even when some of the money states it submitted, do not pass verification, the definition counts the rest of the money states which passed verification. One can think of a more restrictive utility definition, in which unless all the money states submitted by the adversary pass verification, no coins are counted in the adversary's utility. We call this the all-or-nothing utility, which leads to a weaker unforgeability notion.

We will only consider the nonadaptive variant of the all-or-nothing utility \nom{ua}{$\uana(\uf^{\adv,\MS}_\secpar)$}{all-or-nothing adaptive utility for the adversary in Game~\ref{game:nonadapt_unforge_strongest}} and 
 \nom{ua}{$\uanna(\uf^{\adv,\MS}_\secpar)$}{all-or-nothing nonadaptive utility for the adversary in Game~\ref{game:nonadapt_unforge_strongest}} because the adaptive version is not required in the context of our work, respectively.
\begin{equation}\label{eq:all-or-nothing_nonadaptive_utility_def}
    \uanna(\uf^{\adv,\MS}_\secpar) = \begin{cases}
                   undefined, &\text{if $adaptive=1$,}\\
                   m + m' - n, &\text{if $fail=0$ and $adaptive=0$,}\\
                -n, &\text{otherwise.}
                  \end{cases}
\end{equation}

\begin{equation}\label{eq:all-or-nothing_adaptive_utility_def}
    \uana(\uf^{\adv,\MS}_\secpar) = \begin{cases}
                   m + m' - n, &\text{if $fail=0$,}\\
                m'-n, &\text{otherwise.}
                  \end{cases}
\end{equation}
The motivation for all-or-nothing utility comes from scenarios where only one kind of coin is used for a particular item. Suppose a person goes to buy a TV from an honest seller but is allowed to buy only one TV. He puts all the money on the table according to the worth of the TV she plans to buy. The seller either approves the transaction and gives a TV or rejects it and simply says no to the user but does not return the money back to the buyer. Even if one of the money states fails verification, the seller does not approve the transaction.

\paragraph*{}


In the context of our work, we will only talk about the nonadaptive all-or-nothing utility function and not its adaptive version. It is not clear if our construction (given in \cref{alg:ts}) satisfies all-or-nothing adaptive rational unforgeability.  


\begin{definition}[Standard and Rational All-Or-Nothing Nonadaptive unforgeability]\label{definition:nonadapt_all-or-nothing_unforge}
A quantum money scheme $\MS$ is \arnauf (respectively, \anauf) if it is \ruf (respectively, \suf) in  Game~\ref{game:nonadapt_unforge_strongest}, with respect to  the utility $\uanna$.
\end{definition} 
Clearly by definitions (see \cref{eq:flex_nonadapt_utility_def,eq:all-or-nothing_nonadaptive_utility_def}), $\uflna(\uf^{\adv,\MS}_\secpar)\geq \uanna(\uf^{\adv,\MS}_\secpar)$ and hence, if a scheme is then it is also \anauf.

 
The standard definitions of nonadaptive or adaptive unforgeability used in the literature, such as~\cite{Aar09}, coincide with \fnauf and \fauf schemes, respectively, which are the strongest among all the other unforgeability notions, discussed here in the adaptive and nonadaptive setting, respectively. Hence, from now onwards, we would simply use \nauf and \auf to mean \fnauf and \fauf schemes, respectively.

It turns out that the \fauf definition (both standard and rational) is too strong for the comparison-based public coin scheme with private verification that we construct due to the following reasons. 
\begin{remark}\label{remark:adaptive_Attack}
The construction $\pkqc$ (given in \cref{alg:ts}) is not \frauf and hence not \fauf.
\end{remark}  
The proof sketch is as follows: We can construct the following adaptive attack. There exists an attack (described in \ref{alg:opt_attack}), where the adversary starts from $n$ coins and tries to pass $n+1$ coins. The probability that all coins pass the public verification in this attack is close to $1$. Now, in Game~\ref{game:nonadapt_unforge_strongest}, the adversary would do the following: It performs the (nonadaptive) algorithm as mentioned in \cref{alg:opt_attack} to prepare the $n+1$ alleged coins. It would send these  $n+1$ alleged coins one by one to the verifier for public verification. The adversary would stop at the first failure and would submit the rest of the unsubmitted coins to the bank for private verification (refund).

Intuitively, the $n+1$ coin state constructed according to \cref{alg:opt_attack} has $n$ true coins and one false coin. Therefore, the coins submitted for refund would have, at most, one false coin. Hence, all but at most one coin would pass the private verification. On the other hand, since the adversary stops at the first failure, it loses at most one coin in the public verification phase. Therefore, the adversary loses at most two coins (utility is minus two), which happens if a failure occurs in the public verification phase. Clearly, the adversary gains a coin if no such failure happens. Therefore, if the attack is successful, i.e., all coins pass verification, then the adversary gains 1; otherwise, it loses at most two. Since the success probability of~\cref{alg:opt_attack} is close to $1$, the expected utility of the adversary would be positive and non-negligible. For example, for $n =\kappa^2$, and $\kappa$ large enough the success probability is greater than $\frac{3}{4}$ (see \cref{subsec:analysis_of_attack} for more details), in which case the expected utility of the adversary would be at least $\frac{3}{4}\cdot1 - \frac{1}{4}\cdot2 = \frac{1}{4}$. As a result, our construction is not \frauf, and therefore not \fauf either. 

Hence, we conclude that we need to soften our security definition and settle for a weaker notion which is the nonadaptive unforgeability. 
Moreover, we will see the public coins scheme with private verification that we construct is neither \nauf nor \anauf. We show that it is possible to prepare $n+1$ alleged coins starting from $n$ coins, such that all $n+1$ coins pass verification with overwhelming probability, see \cref{alg:opt_attack,sec:attack}.

However, we show that our construction $\pkqc$ (given in \cref{alg:ts}) is \arnauf, but we do not know if it is \frnauf, which we leave for future works. From now onwards, we will use the convention that nonadaptive rational unforgeability implicitly means nonadaptive all-or-nothing rational unforgeability.

\begin{definition}[Unconditional security]
We call an adversary that can apply $\poly$, and if queries to the oracles, but that is otherwise computationally unbounded an \emph{unbounded} adversary. 

For all the security notions above, we define an unconditional security flavor, in which the
definition is with respect to unbounded adversaries.
\label{definition:unconditional_unforgeability}
\end{definition}
Note that, for a \nauf or \auf (both stateless) private money scheme, the number of coins the adversary submits and the number of correct coins it takes from the mint (denoted by $n$ in Game~\ref{game:nonadapt_unforge_strongest}), cannot be exponential\footnote{For a stateful private money scheme, it is indeed possible to have both the number of coins the adversary submits and the number of correct coins it takes from the mint arbitrary, for example -~\cite{AMR20}.}. If the adversary is allowed to get exponentially many copies of the coin, then it can use standard tomography to learn the unique quantum state of the coin. On the other hand, if it is allowed to submit exponentially many coins, then she can submit the maximally mixed state, exponentially many times, which would result in a non-negligible probability for positive utility $\uflna(\uf^{\adv,\MS}_\secpar)$ in Game~\ref{game:nonadapt_unforge_strongest}.

\section{Our construction and results in the single-verifier setting}\label{sec:constructions}
Our main result is the following.
\begin{theorem}\label{thm:multi-unconditional_secure}
If \prs exist, then a \mnars (see \cref{definition:multi_rational_secure}) public quantum coin with comparison-based verification also exists, and hence such quantum coin schemes are in microcrypt\footnote{The set of all quantum cryptographic primitives weaker than one-way functions.}.
Furthermore, there exists an inefficient quantum money scheme and also a stateful quantum money scheme which are comparison-based public quantum coins with private verifications that is \mnars (see \cref{definition:multi_rational_secure}) unconditionally (see \cref{definition:unconditional_unforgeability}). 
\end{theorem}
The proof of \cref{thm:multi-unconditional_secure} is given on \cpageref{pf:thm:multi-unconditional_secure}.  
For the sake of simplicity, the formal definitions of the multiverifier security notions used in the main result are not discussed in the main text and are instead given in \cref{appendix:multi-ver-attacks}. We emphasize that the multiverifier security notions capture the real threat model under our user manual (given in \cref{subsec:user_manual}); see \cref{appendix:user_manual-multi-ver-equivalence} for more details. 
Next, note that we have not formally defined a stateful quantum money scheme but have used it in \cref{thm:multi-unconditional_secure}. As defined in~\cite{AMR20}, stateful private quantum money schemes are money schemes in which the minting and private verification procedures are stateful instead of being stateless. There is no key-generation procedure and the internal state maintained serves as the secret information analogous to a secret key in stateless schemes. 
We can extend this definition to stateful comparison-based public quantum coins with private verification by allowing the mint and private verifications to be stateful. The internal state acts as the alternative to the secret key. We do not need public keys for public verification since we consider a comparison-based money scheme. From now onward, we will mostly mean quantum money schemes to be stateless unless mentioned otherwise.

In the main text, we will only prove the single-verifier version of \cref{thm:multi-unconditional_secure}, i.e., proving that our construction is \nars, which is a weaker yet interesting simplification of the corresponding multiverifier security notions. We upgrade the analysis for the same construction in \cref{appendix:multi-ver-results} to achieve \cref{thm:multi-unconditional_secure}. 
Our main result in the single-verifier setting is the following.
\begin{theorem}\label{thm:unconditional_secure}
There exists a comparison-based public quantum coin scheme (see \cref{definition:quantum coins}) with private verification, which is \nars, i.e., both (all-or-nothing) \rnauf and \narf (see \cref{definition:nonadapt_all-or-nothing_unforge,definition:rational_fair} respectively), based on \prs.

Furthermore, there exists an inefficient (see \cref{definition:inefficient quantum money}) quantum money scheme and also a stateful quantum money scheme which are comparison-based public quantum coin schemes with private verification, that is unconditionally \nars. 
\end{theorem}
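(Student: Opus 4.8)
The plan is to reduce the theorem to a single \emph{lifting lemma} and then instantiate it three times. The lifting lemma (which is \cref{prop: rational-unforge} for the unforgeability half and \cref{thm:rational priv fair} for the sabotage half) asserts: if the construction $\pkqc$ of \cref{alg:ts} is instantiated with \emph{any} private quantum coin scheme that is nonadaptively unforgeable, then the resulting comparison-based public quantum coin scheme with private verification is \nars, i.e.\ both all-or-nothing \rnauf and \narpf, and moreover the reduction is black-box in the private scheme, so it preserves the unconditional flavour as well as the ``inefficient'' and ``stateful'' attributes. Granting the lemma, the three parts of the theorem follow by plugging in the three known private coin schemes.

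Recall the shape of $\pkqc$: a public coin is $\ket{\cent}=\ket{\mill}^{\otimes\kappa}$ for a parameter $\kappa$ (polylogarithmic, or linear, in $\secpar$; see \cref{subsec:complexity}); the verifier keeps a wallet initialized with one honest public coin; public verification projects the alleged registers together with the current wallet onto the symmetric subspace $\Sym{N}$ over all $N$ registers; and private verification runs the underlying $\verify$ on each of the $\kappa$ registers of an alleged public coin. Completeness is immediate, since any tensor product of honest public coins equals $\ket{\mill}^{\otimes t\kappa}\in\Sym{t\kappa}$, hence is accepted with probability $1$, stably under repeated verifications.

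For the unforgeability half of the lemma I would proceed as follows. Fix a nonadaptive adversary against $\pkqc$ in Game~\ref{game:nonadapt_unforge_strongest} that withdraws $n$ public coins and, in its single $\Count$ query, submits $s$ alleged public coins to $\Count_\pk$ (using $\Count_\sk$ instead cannot help, directly by private unforgeability, and keeping genuine coins aside is never beneficial under the all-or-nothing utility, so assume $s$ alleged coins are submitted and none retained). Applying nonadaptive unforgeability of the private scheme to the thought experiment in which the adversary's $s\kappa$ submitted registers are instead fed one by one to the private $\verify$, at most $n\kappa$ of them can pass; hence, up to trace distance $\negl$, the submitted $s\kappa$-register state lies in the span of product basis states having at most $n\kappa$ tensor factors equal to $\ket{\mill}$ (this is the step that requires the private scheme to satisfy the precise, mildly strengthened, unforgeability notion the reduction consumes). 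The linear-algebraic core is then: for every state in that span, the squared overlap of (wallet $\otimes$ state) with $\Sym{(s+1)\kappa}$ is at most $\binom{(n+1)\kappa}{\kappa}/\binom{(s+1)\kappa}{\kappa}$, and by \cref{subsec:optimality} this is tight, attained by \cref{alg:opt_attack}; proving it amounts to expanding in the canonical symmetric-subspace basis and tracking the constructive interference among product basis states of equal type, which reduces to elementary but lengthy estimates on multinomial coefficients. Feeding this into the all-or-nothing utility $\uanna$: any failed verification costs all $n$ coins, while passing all $s$ submitted coins when $s>n$ has probability at most $\bigl(\tfrac{n+1}{s+1}\bigr)^{\kappa}$ (telescoping $\binom{(j+1)\kappa}{\kappa}/\binom{(j+2)\kappa}{\kappa}\le\bigl(\tfrac{j+1}{j+2}\bigr)^{\kappa}$), so a short case analysis, using Bernoulli's inequality for the case $s=n+1$, gives $\mathbb{E}[\uanna]\le\negl$ for $\kappa\ge 2$. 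Replacing ``trace distance $\negl$'' by ``exactly'' for poly-query but computationally unbounded adversaries yields the unconditional version verbatim, and the \narpf half is the mirror-image argument of \cref{thm:rational priv fair}: under the user-manual restriction that an honest user spends a fresh bank coin to verify each received transaction, the user risks at most one fresh coin per transaction, and the same symmetric-subspace estimate bounds the expected loss by $\negl$.

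Finally I would instantiate the lifting lemma: with the private quantum coin scheme of \cite{JLS18} (equivalently the simplified \cite{BS19}), nonadaptively unforgeable assuming quantum-secure one-way functions, this gives the first assertion; with the Mosca--Stebila scheme \cite{MS10}, in which $\ket{\mill}$ is a Haar-random state, which is inefficient per \cref{definition:inefficient quantum money} but unconditionally nonadaptively unforgeable (the adversary is poly-query, so tomography is impossible), this gives the inefficient unconditionally \nars scheme; and with the stateful efficient simulator of Haar-random states of \cite{AMR20} it gives the stateful unconditionally \nars scheme, one only checking that the lifting adds no computational assumption and leaves ``inefficient''/``stateful'' untouched, since the sole new operations are the efficient, stateless symmetric-subspace projections. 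The main obstacle throughout is the tight symmetric-subspace bound at the heart of the lifting lemma: establishing optimality of \cref{alg:opt_attack} (so that the scheme's standard-forgeability, see \cref{subsec:analysis_of_attack}, does not spoil the \emph{rational} bound), correctly handling the off-diagonal and interference terms when $(s+1)\kappa$ registers are projected jointly while the $\Count_\pk$ verifications act sequentially on an entangled input, and verifying that \cite{JLS18}, \cite{MS10} and \cite{AMR20} each meet the exact private-unforgeability notion the reduction feeds on; once that bound is in hand, collapsing $\mathbb{E}[\uanna]$ to $\negl$ via the all-or-nothing penalty is routine.
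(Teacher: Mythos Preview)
Your proposal is correct and follows essentially the same approach as the paper: reduce to the lifting results (\cref{prop: rational-unforge} for all-or-nothing rational unforgeability and \cref{thm:rational priv fair} for rational security against private sabotage), then instantiate with \cite{JLS18}, \cite{MS10}, and \cite{AMR20}. Your sketch of the unforgeability lifting---bounding the symmetric-subspace overlap by $\binom{(n+1)\kappa}{\kappa}/\binom{(s+1)\kappa}{\kappa}\le(\tfrac{n+1}{s+1})^\kappa$ via the optimality of \cref{alg:opt_attack}---matches the paper's argument. Two minor points: you omit the hypothesis that the private $\verify$ is a rank-$1$ projective measurement, which both lifting results use; and for the \cite{AMR20} instantiation the paper does not re-run the lifting directly but instead invokes statistical indistinguishability from \cite{MS10} (via \cite[Theorem~7]{AMR20}) to inherit the already-proved security properties, which is a cleaner way to handle the stateful case than checking the lifting preserves statefulness.
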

The proof is given in \cref{appendix:appendix_proofs} on \cpageref{pf:thm:unconditional_secure}. 

Notice that we have not yet discussed the definition of \narf money schemes. The definition (\cref{definition:rational_secure} is given in \cref{appendix:fairness}. We delay the discussion to the appendix for two reasons - unforgeability is the most important security notion, and the other security notions, namely, security against sabotage, are not that interesting to discuss.

We now discuss the construction that lets us achieve our results (\cref{thm:multi-unconditional_secure,thm:unconditional_secure}). The formal description of our construction is given in the \cref{alg:ts}. We denote $\Pi_{\Sym{n}}$ to denote the orthogonal projection onto the symmetric subspace of $n$ registers. Suppose $\prqc$ is a private coin scheme (with algorithms  $\prqc.\keygen$, $\prqc. \bank$ and $\prqc. \verify$). We define a public coin scheme as follows. $\pkqc. \keygen$ is the same as $\prqc. \keygen$, and $\pkqc. \bank$ produces $\kappa$ coins of the private quantum coin scheme instead of one using $\prqc. \bank$ (needs to be written in an algorithm). Hence, each public quantum coin is a collection of $\kappa$ private quantum coins where $\kappa \in \log^c(\secpar), c>1$. We define a \emph{wallet} where we keep the public coins. When the user receives a new coin for verification, it uses the public coins already in the wallet  for verification. On successful verification, it adds the new coin to the wallet. Initially the wallet is instantiated with one valid coin $\pkqc.\bank(\sk)$ from the bank. If at any point the wallet has $m$ public coins, then the running of $\pkqc. \verify$ on the one new coin that was received executes a projective measurement into the symmetric subspace on the combined $(m+1)\kappa$ registers of the wallet and the new coin. If the projective measurement succeeds, the verification algorithm accepts the new coin as authentic. It is known that the measurement $\{\Pi_{\Sym{n}}, (I - \Pi_{\Sym{n}})\}$ can be efficiently implemented~\cite{BBD+97}. From now onward, we will use the convention that for every algorithm $A$, we sometime use the pure state formalism and write $A(\ket{\psi})$ instead of $A(\ketbra{\psi})$.


\begin{algorithm}[htbp!]
    \caption{Construction of $\pkqc$: A public quantum coin scheme }
\label{alg:ts}
    \begin{algorithmic}[1] 
        \Procedure{$\keygen$}{$1^{\secpar}$}
            \State  $(\emptyset,sk) \gets \prqc.\keygen(\secpar)$ \Comment{Note that there is no public key.\footnote{We define the $\keygen$ procedure only if the underlying private scheme has a $\keygen$ procedure.}}
            \State \textbf{return} $(\emptyset,sk)$
        \EndProcedure
        
        \Procedure{$\bank$}{$\sk$}
            \State $\kappa \equiv \log(\secpar)^c$ for some constant $c>1$.
            \State $\ket{\mill}^{\tensor \kappa} \gets ((\prqc.\bank(\sk))^{\tensor \kappa}$
            \State \textbf{return} $\ket{\cent} = \ket{\mill}^{\tensor \kappa}$
        \EndProcedure
        \State {\bf Init:} $\omega \gets \bank(\sk)$ \label{line:init}\Comment{Before running the first verification, we assume the user receives one valid public coin from the bank.}
        \Procedure{$\verify$}{$\rho$}
            \State Denote by \nom{O}{$\tilde{\omega}$}{combined state of the wallet and the new alleged coins received for verification using the wallet} the combined wallet state \nom{O}{$\omega$}{state of wallet in general, possible after a verification involving symmetric subspace measurement} and the new coin $\rho$.\label{line:pre-measure}  \Comment{Note that $\widetilde{\omega}$ is not necessarily $\omega \tensor \rho$ since they might be entangled.}
            \State Measure the state $\widetilde{\omega}$ with respect to the two-outcome measurement $\{\Pi_{\Sym{\kappa\cdot(1+m)}}, I-\Pi_{\Sym{\kappa\cdot(1+m)}}\}$.
            \State Denote the post measurement state the new wallet state $\omega$. \label{line:post-measure} 
            \State $m \gets m + 1$
            \If{Outcome is $\Pi_{\Sym{\kappa\cdot(1+m)}}$}
                \State {\bf accept}.
            \Else
                \State {\bf reject}. \Comment{Note that we do not return any register to the person submitting the coins for verification; We only notify them that the coins were rejected.}
            \EndIf
        \EndProcedure
        \Procedure{$\Count_{\ket{\cent}}$}{$(\rho_1,\ldots, \rho_m)$}\Comment{Here, each $\rho_i$ represents a state over $\kappa$ registers}\label{line:Count}
        \State Set $Counter \gets 0$.
        \State Run \textbf{Init} to initialize the wallet $\omega \gets \ket{\cent} = \bank(\sk) $.
        \For{$i = 1$ to $m$}
        \State Run $\verify( \rho_i)$
        \If{$\verify(\rho_i) = \text{accept}$}
            \State $Counter = Counter +1$.
        \EndIf
        \EndFor
        \State{\textbf{Output}} $Counter$. 
        \EndProcedure
        \algstore{pageChange}
    \end{algorithmic}
\end{algorithm}
\begin{savenotes}
\begin{algorithm}
    \begin{algorithmic}[1]
        \algrestore{pageChange}
        \Procedure{$\verify_{bank}$}{$\sk$, $\rho$}\Comment{Here, $\rho$ represents a state over $\kappa$ registers.}
            \State  $k \gets \prqc.\Count(\sk, \rho)$  
            \State Destroy the coin state.
            \State With probability $\frac{k}{\kappa}$, \textbf{Accept} and mint a fresh new coin as the replacement\footnote{The fresh coin acts as the post verified state. Usually we do not return the post verified state. However in some cases such as Game~\ref{game:nonadapt_rational-fair} and Game~\ref{game:multi_nonadapt_rational-fair}, we do need the post verified state and there we use this freshly minted coin as the post verified state after acceptance.}, and with probability $1 - \frac{k}{\kappa}$, \textbf{reject}.
        \EndProcedure
        \Procedure{$\Count_{bank}$}{$\sk$, ($\rho_1,\ldots, \rho_m$)}\label{line:count_bank}\Comment{Here, $\rho_i$ represents a state over $\kappa$ 
        registers}
        \State Set $Counter \gets 0$.
        \For{$i = 1$ to $m$}
        \State Run $\verify_{bank}( \rho_i)$
        \If{$\verify_{bank}(\rho_i) = \text{accept}$}
            \State $Counter = Counter +1$.
        \EndIf
        \EndFor
        \State{\textbf{Output}} $Counter$.
        \EndProcedure
    \end{algorithmic}
\end{algorithm}
\end{savenotes}
The construction \cref{alg:ts} is a comparison-based public quantum coin scheme with private verification (see \cref{definition:quantum_coin} and \cref{definition:public_and_private_verification}) where $\verify$ and $\verify_{bank}$ are interpreted as $\verify_{pk} $ and $\verify_{sk}$ respectively and similarly for $\Count$ and $\Count_{bank}$. 

It is easy to see that our construction is complete.
\begin{proposition}\label{prop:completeness}
The quantum public coin scheme $\pkqc$ is complete. 
\end{proposition}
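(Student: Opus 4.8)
The plan is to establish \emph{perfect} completeness by propagating a single invariant through an arbitrary sequence of honest operations: after any number of accepted verifications, the verifier's wallet register holds a clean tensor power of the private coin $\ket{\mill}$. The only structural fact required is that for any pure state $\ket{\phi}$ and any $N$, the state $\ket{\phi}^{\otimes N}$ lies in the symmetric subspace $\Sym{N}$ and is therefore a $+1$-eigenvector of $\Pi_{\Sym{N}}$; hence measuring it with $\{\Pi_{\Sym{N}}, I - \Pi_{\Sym{N}}\}$ yields the first outcome with probability $1$ and leaves the state undisturbed.

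First I would unpack completeness of the underlying private coin scheme $\prqc$. Since $\prqc$ is a \emph{coin} scheme, every call $\prqc.\bank(\sk)$ outputs the \emph{same} pure state $\ket{\mill}$, and $\prqc.\verify(\sk,\ket{\mill}) = \accept$ with probability $1$; consequently $\prqc.\Count(\sk,\ket{\mill}^{\otimes\kappa}) = \kappa$ with probability $1$. In particular $\pkqc.\bank(\sk)$ always produces exactly $\ket{\cent} = \ket{\mill}^{\otimes\kappa}$, so all honestly minted public coins are identical.

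The heart of the argument is an induction on the number $t$ of honest public coins already accepted by $\verify$, with invariant ``the wallet is in state $\ket{\mill}^{\otimes(1+t)\kappa}$''. The base case $t=0$ is the initialization step \textbf{Init} of \cref{alg:ts}, which sets $\omega \gets \bank(\sk) = \ket{\mill}^{\otimes\kappa}$. For the inductive step, if the wallet holds $\ket{\mill}^{\otimes(1+t)\kappa}$ and an honest coin $\rho = \ket{\mill}^{\otimes\kappa}$ arrives fresh from the bank, then $\widetilde\omega$ factorises as $\ket{\mill}^{\otimes(2+t)\kappa}$, so by the structural fact the measurement $\{\Pi_{\Sym{(2+t)\kappa}}, I - \Pi_{\Sym{(2+t)\kappa}}\}$ accepts with probability $1$ and leaves the state unchanged; the new wallet is $\ket{\mill}^{\otimes(2+t)\kappa} = \ket{\mill}^{\otimes(1+(t+1))\kappa}$, closing the induction. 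Since $t$ is arbitrary, this simultaneously yields the strengthened requirement that completeness persists after arbitrarily many prior successful verifications using the same (growing) quantum key. The remaining procedures reduce to this: $\Count_{\ket{\cent}}$ re-initialises the wallet and loops $\verify$ over the $\rho_i$, so on honest input every iteration accepts and $Counter$ equals the number of submitted coins; and $\verify_{bank}(\sk,\ket{\cent})$ computes $k \gets \prqc.\Count(\sk,\ket{\mill}^{\otimes\kappa}) = \kappa$ and then accepts with probability $k/\kappa = 1$, so $\Count_{bank}$ likewise returns the full count on honest input.

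I do not anticipate a genuine obstacle: the proof is essentially the remark that symmetric-subspace projection fixes tensor powers, together with careful bookkeeping of register counts. The single point worth flagging explicitly is that $\widetilde\omega$ is \emph{defined} as the (in general entangled) joint state of the wallet and the incoming coin rather than literally $\omega \otimes \rho$; in the completeness setting the incoming coin is a fresh output of $\bank$, uncorrelated with the wallet, which is precisely what licenses the factorisation used in the inductive step.
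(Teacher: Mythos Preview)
Your proof is correct and follows essentially the same route as the paper: an induction on the number of accepted honest coins, maintaining the invariant that the wallet is a clean tensor power $\ket{\mill}^{\otimes(1+t)\kappa}$, which lies in the symmetric subspace and hence is fixed by the projective measurement. The paper's proof additionally spends a few lines arguing that each procedure runs in QPT (which you omit), while you go slightly further in explicitly checking $\verify_{bank}$ and $\Count_{bank}$ (which the paper's proof does not spell out); these are minor differences in scope rather than in method.
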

The proof is given in \cref{appendix:appendix_proofs} on \cpageref{pf:prop:completeness}.

Our construction satisfies the all-or-nothing nonadaptive rational unforgeability, defined in the previous section (\cref{definition:nonadapt_all-or-nothing_unforge}).
\begin{proposition}\label{prop: rational-unforge}
The scheme $\pkqc$ in \cref{alg:ts} is \arnauf (see \cref{definition:nonadapt_all-or-nothing_unforge}) if the underlying private scheme $\prqc$ is \nauf (see \cref{definition:adapt_flex_unforge}) and $\prqc.\verify$ is a rank-$1$ projective measurement. Moreover if the $\prqc$ is \nauuf (see  \cref{definition:unconditional_unforgeability}) then the $\pkqc$ will be unconditionally \arnauf (see \cref{definition:unconditional_unforgeability}). If the underlying $\prqc$ scheme is inefficient then the $\pkqc$ will also be inefficient but still all the results will hold. 
\end{proposition}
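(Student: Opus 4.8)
The plan is to bound the expected all-or-nothing nonadaptive utility $\uanna$ of an arbitrary adversary $\adv$ against $\pkqc$ by a negligible function, reducing everything to the \fnauf{} of the underlying private scheme $\prqc$. Since the relevant game is nonadaptive, $\adv$ makes at most one $\Count$ query. If it makes none, the utility is $-n\le 0$. If it makes one $\Count_\sk$ query (which runs $\verify_{bank}$ on the submitted $\ell\kappa$ registers, $\ell=\poly(\secpar)$), then $fail=0$, so the utility is $m'-n$; because $\verify_{bank}$ accepts a $\kappa$-register block with probability exactly $k/\kappa$, where $k$ is the number of $\prqc$-verifications that block passes, and because measuring a register with $\prqc.\verify$ does not change the marginals of the not-yet-measured registers, one gets $\E[m']=\tfrac1\kappa\,\E[\prqc.\Count(\cdot)]$ over all submitted registers; as $\adv$ holds $n\kappa$ genuine private coins and $\prqc$ is \fnauf, $\Pr[\prqc.\Count>n\kappa]\le\negl$, hence $\E[\prqc.\Count]\le n\kappa+\ell\kappa\cdot\negl=n\kappa+\negl$ and $\E[\uanna]\le\negl$. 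The remaining and main case is a single $\Count_\pk$ query.

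Say $\adv$ submits $r$ alleged coins, and let $\ket{\Psi}$ be the combined state just before verification: the wallet's genuine coin $\ket{\cent}=\ket{\mill}^{\otimes\kappa}$ on $\kappa$ registers, tensored with $\adv$'s state on the $r\kappa$ submitted registers and its ancilla. The $i$-th call to $\verify$ measures $\{\Pi_{\Sym{(i+1)\kappa}},\, I-\Pi_{\Sym{(i+1)\kappa}}\}$ on the first $(i+1)\kappa$ registers. Since a state symmetric on $(i+1)\kappa$ registers is symmetric on any prefix of $i\kappa$ of them, $\Pi_{\Sym{(i+1)\kappa}}\le \Pi_{\Sym{i\kappa}}\otimes I_\kappa$, so $\Pi_{\Sym{(i+1)\kappa}}\bigl(\Pi_{\Sym{i\kappa}}\otimes I_\kappa\bigr)=\Pi_{\Sym{(i+1)\kappa}}$, and the sequence of projectors applied inside $\Count_\pk$ telescopes. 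Hence $\Pr[\text{all } r \text{ verifications accept}]=p:=\bigl\|\Pi_{\Sym{(r+1)\kappa}}\ket{\Psi}\bigr\|^2$, and $\E[\uanna]=p(r-n)+(1-p)(-n)=rp-n$. It therefore suffices to prove $rp\le n+\negl$ (the case $r\le n$ being immediate).

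To bound $rp$ I would combine \fnauf{} of $\prqc$ with the optimality of the canonical attack. Since $\prqc.\verify$ is the rank-one projective measurement $\{\ketbra{\mill},\,I-\ketbra{\mill}\}$, running $\prqc.\Count$ on a state just measures each register in the two-outcome basis $\{\ketbra{\mill},\,I-\ketbra{\mill}\}$ and counts the $\ket{\mill}$-outcomes; as these measurements commute, $\prqc.\Count$ on a state is distributed exactly as its ``$\ket{\mill}$-content'' (the number of registers lying in $\ket{\mill}$). Applying \fnauf{} of $\prqc$ to $\adv$ (which holds $n\kappa$ genuine private coins and whose submitted state $\ket{\chi}$ lives on $r\kappa$ registers) gives $\Pr[\text{content}(\ket{\chi})>n\kappa]\le\negl$; splitting $\ket{\chi}$ into its content-$\le n\kappa$ and content-$>n\kappa$ parts, the latter has norm at most $\negl$, and the former, call it $\ket{\chi'}$, is supported on product states with at most $n\kappa$ registers in state $\ket{\mill}$. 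Now invoke the optimality result established in the analysis of \cref{alg:opt_attack}: for every such state $\ket{\chi'}$ on $r\kappa$ registers, $\bigl\|\Pi_{\Sym{(r+1)\kappa}}(\ket{\mill}^{\otimes\kappa}\otimes\ket{\chi'})\bigr\|^2\le p^\star(n,r,\kappa)$, the explicit success probability of \cref{alg:opt_attack} (a ratio of multinomial coefficients), which is moreover non-increasing in $r$. Combining, $p\le p^\star(n,r,\kappa)+\negl$, and a direct estimate of $p^\star$ — using $\kappa=\log^c\secpar$ with $c>1$ — yields $r\cdot p^\star(n,r,\kappa)-n\le\negl$ for every polynomial $r$: when $r$ is much larger than $n$ the multinomial ratio is exponentially small, and when $r=n+1$ the factor $\kappa$ forces $r p^\star-n<0$. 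Therefore $\E[\uanna]=rp-n\le\negl$, which is exactly all-or-nothing nonadaptive rational unforgeability.

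For the ``moreover'' claims no new work is needed. The reduction turning an adversary against $\pkqc$ into one against $\prqc$ merely runs $\adv$, prepares $\ket{\Psi}$, and performs a projective measurement onto a symmetric subspace (efficiently implementable by~\cite{BBD+97}); so an unbounded $\adv$ making $\poly$ many oracle queries becomes an unbounded $\prqc$-adversary making $\poly$ many queries, and unconditional \fnauf{} of $\prqc$ transfers to unconditional \arnauf{} of $\pkqc$. And $\pkqc$ plainly inherits inefficiency from $\prqc$, while none of the steps above uses efficiency of $\prqc$ — only its \fnauf{} — so all the bounds go through verbatim. The main obstacle is the quantitative core of the main case, the inequality $rp\le n+\negl$: it rests on the optimality of \cref{alg:opt_attack} (a somewhat lengthy linear-algebra argument identifying the extremal state for the projection onto the symmetric subspace, subject to the bounded $\ket{\mill}$-content constraint coming from $\prqc$'s unforgeability) and on carrying the resulting multinomial estimate through uniformly in $r$; the case split, the reduction, and the telescoping of the symmetric-subspace projectors are routine.
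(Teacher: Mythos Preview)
Your proposal is correct and follows essentially the same route as the paper: split into the no-query, private-verification, and public-verification cases; in the last case telescope the symmetric-subspace projectors to get $p=\|\Pi_{\Sym{(r+1)\kappa}}\ket{\Psi}\|^2$, invoke the optimality of \cref{alg:opt_attack} (\cref{prop:opt_attack}) to bound $p\le p^\star+\negl$, and then show $rp^\star-n\le\negl$.

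The one place your argument is looser than the paper's is the final quantitative step. Your case split ``$r$ much larger than $n$'' versus ``$r=n+1$'' does not cover intermediate $r$, and the claim that for $r=n+1$ one gets $rp^\star-n<0$ is not quite true (take $n=0$). The paper instead gives a clean uniform bound valid for every $r\ge n$: since $p^\star=\prod_{j=1}^\kappa\frac{n\kappa+j}{r\kappa+j}\le\bigl(\frac{n+1}{r+1}\bigr)^\kappa$, one has
\[
rp^\star-n\;\le\;\frac{r(n+1)^\kappa-n(r+1)^\kappa}{(r+1)^\kappa}\;\le\;\frac{r-n}{(r+1)^\kappa}\;\le\;\frac{1}{(r+1)^{\kappa-1}}\;\le\;\frac{1}{2^{\kappa-1}},
\]
which is negligible for $\kappa=\log^c\secpar$, $c>1$. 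This closes the gap you flagged as ``the main obstacle'' without any case analysis on $r$.
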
 
The proof is given in \cref{subsec:completeness_security proofs} on \cpageref{pf:prop: rational-unforge}. The multiverifier result of the above lifting result also holds, and it is stated in \cref{prop: multi_ver-unforge}.

Later in \cref{sec:attack} (see \cref{alg:opt_attack}), we show that the relaxation of the unforgeability notion to rational unforgeability is necessary, and that standard nonadaptive unforgeability (both with respect to all-or-nothing and flexible utilities, see \cref{eq:all-or-nothing_nonadaptive_utility_def} and \cref{eq:flex_nonadapt_utility_def}) does not hold for our construction, $\pkqc$. In fact, the attack succeeds with probability, inverse polynomially close to $1$ (see \cref{subsec:analysis_of_attack}).

Our construction also satisfies security against sabotage (see \cref{appendix:fairness}). We elaborately discuss these properties in \cref{appendix:fairness}.

We instantiate our construction (see \cref{alg:ts}) $\pkqc$ with the private quantum coin scheme in~\cite{JLS18} \anote{(or the simplified version in~\cite{BS19}} and~\cite{MS10} (as the underlying $\prqc$ scheme). The private coin schemes provide the following results

\begin{theorem}[{Adapted from~\cite[Theorem 6]{JLS18}}]
If \prs exist, then there exists a private quantum coin scheme that is \nauf (see \cref{definition:nonadapt_flex_unforge}) such that the verification algorithm is a rank-$1$ projective measurement.
\label{thm:qaruf}
\end{theorem}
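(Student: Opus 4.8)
The plan is to build the private coin scheme directly from a pseudorandom state (PRS) generator. By~\cite{JLS18} (or the simplified~\cite{BS19}), a quantum-secure one-way function yields a PRS generator: a QPT circuit family $\{U_k\}_k$ with $U_k\ket{0^{p(\secpar)}}=\ket{\phi_k}$ such that for every polynomial $t$ the ensembles $\ket{\phi_k}^{\tensor t(\secpar)}$ (uniform $k$) and $\ket{\psi}^{\tensor t(\secpar)}$ ($\ket{\psi}$ Haar-random on $d:=2^{p(\secpar)}$ dimensions, with $p$ superlogarithmic) are computationally indistinguishable. I would set $\keygen(\secparam)$ to sample a PRS key $k$, $\bank(k)$ to output $\ket{\phi_k}$ (the same state on every call, as a coin scheme demands), and $\verify(k,\rho)$ to apply $U_k^{\dagger}$ to $\rho$ and accept iff a computational-basis measurement yields $0^{p(\secpar)}$. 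This $\verify$ is QPT, it implements exactly the two-outcome measurement $\{\ketbra{\phi_k},\,I-\ketbra{\phi_k}\}$, which is a rank-$1$ projective measurement, and perfect completeness is immediate: $\verify(k,\ket{\phi_k})$ accepts with certainty, and since $\verify$ never returns a post-measurement state and the coin is always the same pure state, this persists under repeated verifications.

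To show the scheme is \nauf\ I would establish standard unforgeability in Game~\ref{game:nonadapt_unforge_strongest} with respect to the flexible nonadaptive utility $\uflna$. Fix a QPT adversary $\adv$ in the nonadaptive game; let $n=\poly(\secpar)$ be the number of $\bank(k)$ queries it makes, let it submit polynomially many registers to its single $\Count_{\sk}$ query, and let $m'$ of them be accepted; since the public count is vacuous for a private scheme, $\uflna$ reduces to $m'-n$, so it suffices to show $\Pr[m'\ge n+1]\le\negl$. First I would pass to a hybrid in which the $n$ copies $\ket{\phi_k}^{\tensor n}$ handed to $\adv$ are replaced by $\ket{\psi}^{\tensor n}$ for Haar-random $\ket{\psi}$: the event $\{m'\ge n+1\}$ is decidable by a QPT procedure holding only copies of the coin and no key (it runs $\adv$ and then, for each submitted register, verifies it against fresh copies --- either by the reflection assembled from one copy, or, up to controllable error, by the symmetric-subspace test with a few reference copies --- and checks whether the count reaches $n+1$), so by the PRS property this probability changes by at most $\negl$. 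Second, I would invoke the information-theoretic soundness of a Haar-random coin: $\adv$'s output is $\Lambda(\ketbra{\psi}^{\tensor n})$ for a fixed channel $\Lambda$ independent of $\psi$, and using $\mathbb{E}_{\psi}\ketbra{\psi}^{\tensor k}=\Pi^{(k)}_{\mathrm{sym}}/\binom{d+k-1}{k}$ (with $\Pi^{(k)}_{\mathrm{sym}}$ the symmetric-subspace projector on $k$ registers) a dimension count bounds the probability that any $n+1$ of the submitted registers all collapse onto $\ket{\psi}$ by $\poly(\secpar)/d=\negl$; this is exactly the $n\to n+1$ counterfeiting bound underlying the Mosca--Stebila scheme~\cite{MS10}, which I would cite. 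Combining the two steps gives $\Pr[m'\ge n+1]\le\negl$, i.e.\ \nauf; for the unconditional flavour one takes the Haar coin itself, where the hybrid step is vacuous and the second step alone handles query-bounded but computationally unbounded adversaries.

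Two points need genuine care. In the hybrid step, a copies-only distinguisher cannot run the exact key-based $\verify$, so one must either justify that the symmetric-subspace-with-references verification is an adequate proxy (perfect completeness, and its acceptance probability exceeds that of $\ketbra{\phi_k}$ by an amount driven below any inverse polynomial with enough reference copies), or, more cleanly, treat the SWAP-test version of~\cite[Theorem~6]{JLS18} as a black box and observe that switching its verification to the strictly harder-to-pass rank-$1$ measurement $\{\ketbra{\phi_k},\,I-\ketbra{\phi_k}\}$ keeps completeness perfect while only lowering acceptance probabilities, hence cannot hurt unforgeability --- so the rank-$1$ scheme inherits \nauf. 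In the soundness step, the naive union bound over the $\binom{t}{n+1}$ possible supporting sets (with $t$ the number of submitted registers) is too lossy, carrying a superpolynomial $\binom{t}{n+1}$ factor; one must instead bound $\Pr[m'\ge n+1]$ by a single symmetric-subspace argument, which is the real technical heart. I expect that soundness bound to be the main obstacle, but it is precisely the content of~\cite{MS10} (and of~\cite[Theorem~6]{JLS18}), so for the present theorem the actual work reduces to writing down the construction above and verifying the rank-$1$-projective and perfect-completeness claims before invoking~\cite[Theorem~6]{JLS18}.
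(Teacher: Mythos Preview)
The paper does not give its own proof of this theorem; it is stated as adapted from \cite[Theorem~6]{JLS18}. However, the paper does prove the stronger \cref{thm:sqaruf}, and the method there---which is essentially the method of \cite{JLS18}---differs from yours in one key respect. Rather than a copies-only PRS-to-Haar hybrid, the reduction goes through the \emph{Cryptographic No-cloning with Oracle} theorem (\cref{thm:strong_noclon_prs}, i.e.\ \cite[Theorem~5]{JLS18}): from a forger one builds a cloner that is given not only copies of $\ket{\phi_k}$ but also the reflection oracle $U_{\phi_k}=I-2\ketbra{\phi_k}$. This oracle is exactly what is needed to implement the rank-$1$ verification $\{\ketbra{\phi_k},I-\ketbra{\phi_k}\}$ perfectly, so the cloner can run $\verify$ on every submitted register, retain precisely those that pass (which are then exact copies of $\ket{\phi_k}$), and output $n+1$ of them whenever the forger succeeds. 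This sidesteps both difficulties you flag: there is no need to approximate $\verify$ from reference copies, and there is no subset-selection or union-bound issue because the post-verification survivors are already certified copies.

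Your route is not incorrect, and your diagnosis of the two obstacles is accurate, but your second workaround does not apply as written: the \cite{JLS18} scheme (as described in the paper's proof of \cref{thm:sqaruf}) already uses the rank-$1$ measurement $\{\ketbra{\phi_k},I-\ketbra{\phi_k}\}$ via $U_k^\dagger$, not a SWAP test, so there is no ``stricter verification inherits unforgeability'' shortcut to invoke. You would therefore have to carry out the symmetric-subspace-with-references approximation in full, controlling the error in both directions, which is feasible but genuinely more work than the reflection-oracle reduction that the paper and \cite{JLS18} actually use.
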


\begin{theorem}[{Adapted from~\cite[Theorem 4.3]{MS10}}]
There exists an inefficient private quantum coin scheme that is \auuf and hence \nauuf (see \cref{definition:adapt_flex_unforge,definition:nonadapt_flex_unforge} and \cref{definition:unconditional_unforgeability})  such that the verification algorithm is a rank-$1$ projective measurement.
\label{thm:qaruuf}
\end{theorem}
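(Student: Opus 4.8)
The plan is to instantiate $\prqc$ with the Haar-random coin scheme of Mosca and Stebila and to check, on one hand, that it fits the framework of a private quantum coin scheme with a rank-$1$ projective verification, and on the other hand that the unforgeability argument of~\cite{MS10}---suitably phrased, and strengthened from the nonadaptive to the adaptive game---yields exactly (flexible) \auuf. Concretely, $\prqc.\keygen(1^{\secpar})$ samples $\ket{\phi}$ from the Haar measure on $\CC^{d}$ with $d$ superpolynomial in $\secpar$ (say $d=2^{\secpar}$) and outputs a classical description of $\ket{\phi}$ as $\sk$; $\prqc.\bank(\sk)$ outputs a fresh copy of $\ket{\phi}$; and $\prqc.\verify(\sk,\rho)$ applies the two-outcome measurement $\{\ketbra{\phi},\,I-\ketbra{\phi}\}$, accepting on the first outcome. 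Then $\verify$ is a rank-$1$ projective measurement by construction; the scheme is inefficient, since neither sampling and storing a Haar-random state nor implementing $\ketbra{\phi}$ can be done in $\poly(\secpar)$ time; and perfect completeness is immediate because $\verify(\sk,\ket{\phi})=\accept$ with certainty and does not disturb a genuine coin. Since the nonadaptive game is the restriction of the adaptive game to a single $\Count_{\sk}$ query, it is enough to prove that in Game~\ref{game:nonadapt_unforge_strongest} every polynomially-query-bounded adversary $\adv$ satisfies $\Pr[m'-n>0]\le\negl$ (here $m=0$, there being no public verification), which is unconditional (flexible) \auuf and, a fortiori, (flexible) \nauuf.

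The engine is the symmetric-subspace identity $\E_{\phi}\,\ketbra{\phi}^{\tensor t}=\Pi_{\Sym{t}}/\binom{d+t-1}{t}$ for Haar $\ket{\phi}\in\CC^{d}$, and its consequence---the optimal-cloning bound---that for any channel $\Lambda$ sending $n$ registers of dimension $d$ to $k$ such registers, writing $a$ for the number of output registers of $\Lambda(\ketbra{\phi}^{\tensor n})$ that pass the verification $\{\ketbra{\phi},I-\ketbra{\phi}\}$ applied register by register, one has $\E_{\phi}[a]=\sum_{i=1}^{k}\E_{\phi}\bra{\phi}\tr_{\neq i}\!\big(\Lambda(\ketbra{\phi}^{\tensor n})\big)\ket{\phi}\le n+\tfrac{(k-n)(n+1)}{d+n}$, the maximum being attained by the optimal universal $n\!\to\!k$ cloner (see~\cite{DEM98} and references therein). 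As both $n$ and $k$ are $\poly(\secpar)$ for a stateless private coin scheme (the discussion after \cref{definition:unconditional_unforgeability}) while $d$ is superpolynomial, this bound is $n+\negl$. Together with the elementary estimate $\Pr[a\ge n+1]\le\E[a]-n$ (valid since $\E[a]\ge n\Pr[a\le n]+(n+1)\Pr[a\ge n+1]$ for a nonnegative integer variable $a$), this already disposes of the nonadaptive case: as in the footnote to the $\bank$ oracle the adversary may collect all $n$ of its coins up front, so its strategy is a single fixed channel $\Phi$ applied to $\ket{\phi}^{\tensor n}$ producing the $k$ registers it submits to the one $\Count_{\sk}$ query, whence $\Pr[m'-n>0]=\Pr[a\ge n+1]\le\negl$.

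For the full adaptive statement one must also neutralise the feedback $\adv$ obtains from its $\Count_{\sk}$ queries. The approach I would take is to defer the verification measurements: replace each single-register verification by the isometry $V\colon\ket{\phi}\mapsto\ket{\phi}\ket{\accept}$, $\ket{\eta}\mapsto\ket{\eta}\ket{\reject}$ for $\ket{\eta}\perp\ket{\phi}$; keep the verified register and its (now coherent) outcome register in the bank's vault; have $\adv$ realise its outcome-dependent actions as unitaries controlled on the outcome registers; and measure all outcome registers only at the very end. This reproduces the original output distribution, and the defining property of $V$ shows that, in the final global state, projecting a chosen set $S$ of outcome registers onto $\ket{\accept}$ acts as the same operator as projecting the corresponding vaulted registers onto $\ket{\phi}^{\tensor|S|}$; hence $a$ has the same law as the number of vaulted verified registers lying in $\ket{\phi}$, and tracing out everything but any $n+1$ of those registers displays them as the image of a channel applied to $\ket{\phi}^{\tensor n}$. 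The cloning bound and the Markov-type inequality then finish as before. Because no step uses a computational assumption, the conclusion holds against computationally unbounded, polynomially-query-bounded adversaries, which is exactly unconditional \auuf (hence unconditional \nauuf).

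The step I expect to be the main obstacle is precisely this last reduction: since the bank knows $\sk$, the deferred isometries $V$ genuinely depend on $\ket{\phi}$, so after deferral the combined adversary--bank circuit is not $\phi$-independent and one cannot pull $\E_{\phi}$ through the symmetric-subspace identity naively. What needs to be pinned down is a query-type argument showing that the $\poly(\secpar)$ verification isometries cannot, in aggregate, leak enough information about $\ket{\phi}$ to beat the cloning bound---ideally by peeling the $\phi$-dependence off the circuit one verification at a time via a hybrid argument---or, absent a slick argument, one can simply invoke the corresponding step in the proof of~\cite[Theorem~4.3]{MS10}, from which this theorem is adapted.
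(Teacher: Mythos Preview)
The paper does not give a self-contained proof of this theorem; it simply records that Mosca--Stebila prove \emph{black-box unforgeability} for the Haar-random coin scheme and observes that this notion coincides with (indeed, dominates) unconditional flexible adaptive unforgeability for private coins. The point is that in the black-box model the adversary receives oracle access to the reflection $U_\phi=I-2\ketbra{\phi}$, and every adaptive $\Count_\sk$ query in Game~\ref{game:nonadapt_unforge_strongest} can be simulated using $U_\phi$ (in fact $U_\phi$ is strictly more generous, since it also hands back the post-measurement state). Thus black-box unforgeability $\Rightarrow$ \auuf $\Rightarrow$ \nauuf, and the actual work---bounding cloning success in the presence of the reflection oracle---is the complexity-theoretic no-cloning theorem of Aaronson, which is what underlies~\cite[Theorem~4.3]{MS10}.

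Your route is genuinely different: you aim for a direct symmetric-subspace/optimal-cloning argument. The nonadaptive part is fine and is essentially the Werner bound plus the Markov-type inequality you state. But your adaptive reduction via deferred measurement is not complete, and you correctly flag the obstacle: after deferral the global circuit is $\phi$-dependent through the bank's isometries $V$, so you can no longer treat it as a fixed channel $\Lambda$ and average over $\phi$ using $\E_\phi\ketbra{\phi}^{\tensor t}=\Pi_{\Sym{t}}/\binom{d+t-1}{t}$. A hybrid that peels off verifications one at a time does not obviously close the gap, because each hybrid step still mixes $\phi$-dependence into the adversary's workspace. The clean fix is precisely the one the paper (implicitly) takes: instead of deferring, absorb each verification into an oracle call to $U_\phi$ and then invoke the oracle-aware no-cloning bound, which already accounts for polynomially many $\phi$-dependent queries. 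That is exactly what~\cite[Theorem~4.3]{MS10} packages, so your final fallback (``invoke the corresponding step in MS10'') is in fact the whole proof here, not a patch for a missing lemma.
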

The authors in~\cite[Theorem 4.3]{MS10} prove Black-box unforgeability for their construction which is the same as unconditional (flexible) adaptive unforgeability (see \cref{definition:adapt_flex_unforge}) in private quantum coin schemes.

The private coin schemes in~\cite{JLS18,MS10} are stateless quantum money schemes. In a recent work~\cite{AMR20}, the authors construct a stateful private coin scheme. They show a way to simulate Haar random states using which, they construct the stateful coins scheme. Their scheme can be viewed as a stateful implementation of the quantum coins scheme by~\cite{MS10}. Their result can be summed up as follows.

\begin{theorem}[{Restated from~\cite[Proposition 3]{AMR20}}]
There exists a stateful private coin scheme which is 
\nauuf (see \cref{definition:nonadapt_flex_unforge} and \cref{definition:unconditional_unforgeability}).
\label{thm:qaruuf_Stateful}
\end{theorem}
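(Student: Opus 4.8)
The plan is to obtain this statement almost for free from the \emph{inefficient} Mosca--Stebila scheme of \cref{thm:qaruuf}, by replacing the exponentially costly Haar-random coin state with a statistically faithful, \emph{efficient} but \emph{stateful} simulation of it. Recall that in~\cite{MS10} a coin is a single Haar-random $n$-qubit pure state $\ket{\psi}$, $\bank$ outputs a fresh copy of $\ket{\psi}$, and $\verify$ is the rank-$1$ projective measurement $\{\ketbra{\psi}, I-\ketbra{\psi}\}$; that scheme is unconditionally (flexibly, adaptively, hence also nonadaptively) unforgeable, so the \emph{only} thing to fix is efficiency. The stateful construction of~\cite{AMR20} never materialises $\ket{\psi}$: instead the bank keeps a polynomial-size internal register that is merely \emph{consistent} with the existence of a Haar-random coin, and extends this register lazily whenever a new copy is minted or an alleged coin is verified. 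This is the random-state analogue of Zhandry's compressed-oracle technique, and the whole point of the proof is to show that the laziness is invisible to any adversary making polynomially many queries.

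Concretely, I would first isolate the representation-theoretic identity underlying the simulation: averaging $t$ i.i.d.\ copies over the Haar measure gives $\E_{\psi}\,\ketbra{\psi}^{\otimes t} = \binom{2^n+t-1}{t}^{-1}\,\Pi_{\Sym{t}}$, the normalised projector onto the symmetric subspace of $t$ copies of $\CC^{2^n}$. Consequently the joint state of all copies handed out so far, together with the bank's purifying register, can be kept inside the symmetric subspace, its dimension growing only polynomially in the number of queries, and a fresh copy (or a verification of an honest copy) can be produced by a local operation on this succinctly described state. I would then \emph{define} the stateful algorithms $\bank$ and $\verify$ directly from this bookkeeping, where $\verify$ on an alleged $\kappa$-qubit-free register $\rho$ is implemented as the measurement that projects $\rho$, jointly with the bank's register, onto being ``the same coin'' — a rank-$1$-type measurement in the purified picture — and note that, unlike a stateless scheme, the number $n$ of minted coins need not be polynomially bounded a priori (only the number of queries made by a given adversary is).

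The heart of the argument is an indistinguishability/hybrid claim: for every (computationally unbounded) adversary $\adv$ issuing $q=\poly(\secpar)$ oracle queries in Game~\ref{game:nonadapt_unforge_strongest}, the distribution of $(m',n,\mathit{fail})$ obtained against the stateful scheme is within statistical distance $\negl$ of the distribution obtained against the truly-Haar Mosca--Stebila scheme. One proves this by a sequence of hybrids, one per query, bounding the error introduced when the lazily-maintained register is ``opened'' rather than drawn from scratch: $\bank$ queries introduce zero error by the symmetric-subspace identity above, and $\verify$ queries — even on arbitrary, possibly entangled, adversarially chosen states, with the outcome fed back to $\adv$ — introduce only exponentially small error, because a rank-$1$ projective verification either accepts (pinning the submitted register to the coin) or rejects (and the disturbance to the honest copies is negligible by the gentle-measurement lemma applied inside the symmetric subspace). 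Given this claim, unforgeability transfers immediately: if the stateful scheme were forgeable, i.e.\ $\Pr[\uflna(\uf^{\adv,\MS}_\secpar)>0]$ were non-negligible for some poly-query unbounded $\adv$, then the \emph{same} $\adv$ would achieve non-negligible $\Pr[\uflna>0]$ against the unconditionally \nauuf scheme of \cref{thm:qaruuf}, a contradiction. The main obstacle, therefore, is exactly the faithfulness of the stateful simulation under adversarial verification queries \emph{with feedback}; once that is established the reduction is routine. This is precisely the content of~\cite[Proposition 3]{AMR20}, whose proof the above outlines.
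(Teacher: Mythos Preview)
The paper does not prove this statement; it is a direct citation of \cite[Proposition 3]{AMR20} and is used as a black box. Your outline is accurate and matches both the construction in \cite{AMR20} and the way this paper itself invokes the result later: in the proof of \cref{thm:unconditional_secure} the paper appeals to \cite[Theorem 7]{AMR20}---precisely the statistical indistinguishability of the stateful simulation from the true Haar-random Mosca--Stebila scheme---and then transfers the unconditional unforgeability of \cref{thm:qaruuf} across that indistinguishability, exactly as you describe.
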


On instantiating our construction $\pkqc$ using the stateful private coin scheme in~\cite{AMR20}, we construct a stateful comparison-based public quantum coins scheme with private verification.

Combining the lifting result (\cref{prop: rational-unforge}) and completeness, \cref{prop:completeness} with \cref{thm:qaruf}, \cref{thm:qaruuf} and the result~\cite[Theorem 7]{AMR20} which shows the indistinguishability of the schemes in~\cite{AMR20} and~\cite{MS10}, along with \cref{thm:rational public fair,thm:rational priv fair}, that we will see later in \cref{appendix:fairness} respectively, gives us \cref{thm:unconditional_secure}.

\begin{remark}
As noted by Aaronson and Christiano~\cite{AC13}:
\begin{quotation}\label{quote: impossibility}
It is easy to see that, if public-key quantum money is possible, then it must rely on some computational assumption, in addition to the No-Cloning Theorem. 
This is because a counterfeiter with unlimited time could simply search for a state $\ket{\psi}$ that the (publicly-known) verification procedure accepted.
\end{quotation}
Although this argument holds equally well for most public quantum money schemes, it breaks down when the public scheme uses a quantum state as the public key: As the exponential number of verifications could perturb the public quantum key, a state that passes verification by the perturbed quantum key may fail with a fresh quantum key. Note that by tweaking the definition of public quantum money, i.e., by adding the notion of a public quantum key, we managed to circumvent this impossibility result in \cref{thm:unconditional_secure}.
\end{remark}

\subsection{Complexity and efficient implementations of \texorpdfstring{$\pkqc$}{Pk-QC}}\label{subsec:complexity}
Note that in the scheme \pkqc, each public coin is a quantum state over $\kappa$ many registers (private coins) where $\kappa$ is polylogarithmic in $\secpar$ (where $\secpar$ is the security parameter), and the local dimension of each register is given by $d$ (see notations in \cref{subsec:notations}). In other words, each public coin is a state over $\kappa\log(d)$ qubits, where $d$ depends on the private money scheme, $\prqc$.

The private coin scheme in~\cite{MS10} is secure, even if the number of qubits for each private coin is set to $\log^c(\secpar)$ for some $c>1$. In fact, what they essentially show is that, as long as $n$ is superlogarithmic in the security parameter, there exists an inefficient private quantum coin scheme on $n$ qubits, that is black-box unforgeable. The security guarantees hold due to  the \emph{complexity-theoretic no-cloning theorem} (\cite[Theorem 2]{Aar09}), which asserts the following fact : Given polynomially many copies of a Haar random state on $n$-qubits, and an oracle access (with polynomially many queries) to the reflection around the state, the optimal cloning fidelity, is negligible, as long as $n$ is superlogarithmic. Hence, on instantiating $\pkqc$ with the~\cite{MS10} scheme on polylogarithmic qubits, we get a public coin construction on polylogarithmic qubits.

The private coin construction given in~\cite{JLS18} is a modular construction using Pseudo-Random family of States (PRS, defined in~\cite{JLS18}). We would not delve into the discussion about \nom{P}{PRS}{Pseudo Random family of States}, but the authors prove the following result. 
\begin{theorem}[Private coin construction from PRS,~\cite{JLS18}]
Let $n\in\omega(\log(\secpar))$.
Suppose, there exists a PRS $\{\ket{\phi_k}\}_{k\in \mathcal{K}}$ on $n$-qubits, such that  for every $k\in \mathcal{K}$, the state $\ket{\phi_k}$, given the key $k$, can be constructed in time $t(\secpar)$\footnote{$t(\secpar)\in O(\poly)$ by the definition of PRS.}. Then, there exists a \nauf private coin scheme such that each coin is a quantum state on $n$-qubits, such that $\bank$ and $\verify$ algorithm runs in time $O(t(\secpar))$. Moreover, the $\keygen$ algorithm takes $O(\log(|\mathcal{K}|))$ time, where $\mathcal{K}$ is the key-space of the PRS.
\label{thm:priv-coin_from_PRS}
\end{theorem}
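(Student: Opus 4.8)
\textbf{The scheme.} The plan is to instantiate the natural Wiesner-style coin on top of the given PRS. Let $\keygen(1^{\secpar})$ sample a uniformly random key $k\gets\mathcal{K}$ and output $\sk=k$ (this costs $O(\log|\mathcal{K}|)$ time). Let $\bank(k)$ run the PRS generator to produce the $n$-qubit state $\ket{\phi_k}$ (time $O(t(\secpar))$). Let $\verify(k,\rho)$ run, in reverse, the size-$O(t(\secpar))$ circuit $U_k$ that prepares $\ket{\phi_k}$ from $\ket{0^n}$, measure all $n$ qubits in the computational basis, and accept iff the outcome is $0^n$; this is precisely the rank-$1$ projective measurement $\{\ketbra{\phi_k},\,I-\ketbra{\phi_k}\}$, it is stateless, and it runs in time $O(t(\secpar))$. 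Perfect completeness is immediate, since $\verify(k,\ket{\phi_k})$ accepts with certainty. So it only remains to prove \nauf (\cref{definition:nonadapt_flex_unforge}): no QPT forger wins Game~\ref{game:nonadapt_unforge_strongest} with noticeable probability under the flexible nonadaptive utility.

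\textbf{Reducing forgery to a scalar.} Fix a QPT adversary $\adv$; without loss of generality (by the discussion accompanying Game~\ref{game:nonadapt_unforge_strongest}) it receives $\ket{\phi_k}^{\otimes\ell}$ up front, where $\ell=\poly(\secpar)$ is the number of $\bank$ calls, then outputs a joint state $\sigma$ on registers $\rho_1,\dots,\rho_m$, $m=\poly(\secpar)$, submitted in a single $\Count_{\sk}$ query; it forges iff at least $\ell+1$ of them pass. Since the verification of register $i$ is the projector $\ketbra{\phi_k}$ acting on register $i$ alone, the $m$ tests pairwise commute, so the number that pass is distributed as when all $m$ projectors are applied simultaneously; Markov's inequality then gives
\[
\Pr[\adv\text{ forges}]\;\le\;\frac{1}{\ell+1}\;\E_{k}\!\Big[\textstyle\sum_{i=1}^{m}\langle\phi_k|\rho_i|\phi_k\rangle\Big],
\qquad \rho_i:=\tr_{\neq i}\,\sigma .
\]
Write $s$ for the expectation on the right-hand side; it suffices to show $s$ is negligible.

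\textbf{The information-theoretic bound.} Let $s_{\mathrm{Haar}}$ be the analogue of $s$ when $\ket{\phi_k}$, $k\gets\mathcal{K}$, is replaced by a Haar-random $\ket{\psi}\in\mathbb{C}^{d}$ with $d=2^{n}$. Writing $\sigma=\Lambda(\ketbra{\psi}^{\otimes\ell})$ for the channel $\Lambda$ realized by $\adv$, each summand of $s_{\mathrm{Haar}}$ equals $\E_{\psi}\langle\psi|\,(\tr_{\neq i}\!\circ\Lambda)(\ketbra{\psi}^{\otimes\ell})\,|\psi\rangle$, which is at most the optimal fidelity for estimating a Haar-random pure state of $\mathbb{C}^{d}$ from $\ell$ copies. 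A standard symmetric-subspace computation --- using $\E_{\psi}\ketbra{\psi}^{\otimes(\ell+1)}=\binom{d+\ell}{\ell+1}^{-1}\Pi_{\Sym{\ell+1}}$ --- evaluates this fidelity to $\tfrac{\ell+1}{\ell+d}$. Summing over the $m$ registers, $s_{\mathrm{Haar}}\le \tfrac{m(\ell+1)}{\ell+2^{n}}$, which is negligible because $n\in\omega(\log\secpar)$ makes $2^{n}$ superpolynomial while $m$ and $\ell$ are polynomial.

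\textbf{Transferring the bound to the PRS.} The last step is to show $|s-s_{\mathrm{Haar}}|$ is negligible via a reduction to PRS security. The obstacle is that a PRS distinguisher is handed only \emph{copies} of the challenge state and cannot run the exact projector $\ketbra{\phi_k}$, which requires $U_k$. To get around this, the distinguisher $D$, given $\ell+m$ copies of an unknown state, uses $\ell$ of them to run $\adv$, obtains $\rho_1,\dots,\rho_m$, runs a SWAP test of each $\rho_i$ against one fresh copy (all on disjoint registers), lets $c$ be the number of tests reporting ``equal'', and outputs $1$ with probability $c/m$. As a SWAP test of $\rho_i$ against $\ket{\theta}$ reports ``equal'' with probability $\tfrac12\big(1+\langle\theta|\rho_i|\theta\rangle\big)$, we get $\Pr[D\text{ outputs }1]=\tfrac12+\tfrac{1}{2m}\,\E\big[\sum_i\langle\theta|\rho_i|\theta\rangle\big]$, the outer expectation being over the source; hence $D$'s distinguishing advantage equals $|s-s_{\mathrm{Haar}}|/(2m)$. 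Were $s$ non-negligible, this would be non-negligible (as $s_{\mathrm{Haar}}$ is negligible and $m=\poly$), contradicting PRS security. Thus $s$, and therefore $\Pr[\adv\text{ forges}]$, is negligible, which is exactly \nauf. I expect this final step --- making copies suffice to witness forgery, together with the commuting-projectors observation that neutralizes the entanglement across the $\rho_i$ --- to be the main obstacle; the Haar-fidelity estimate is routine and the completeness and efficiency claims are immediate from the construction.
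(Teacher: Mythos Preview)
The paper itself does not prove this theorem---it is cited from~\cite{JLS18}---but it does prove the closely related multiverifier strengthening (\cref{thm:sqaruf}), and that argument, like the original in~\cite{JLS18}, differs essentially from your plan.

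Your approach has a genuine gap at the Markov step. You assert it suffices to show $s=\E_k\big[\sum_i\langle\phi_k|\rho_i|\phi_k\rangle\big]$ is negligible, but $s$ is never negligible: an adversary that simply outputs its $\ell$ genuine coins (padded with arbitrary registers) already has $s\ge\ell$, and Markov then gives only $\Pr[\text{forge}]\le\ell/(\ell+1)$, which is useless. The Haar calculation inherits the same error: each summand $\E_\psi[\langle\psi|\rho_i|\psi\rangle]$ can equal $1$ (let $\rho_i$ be one of the input copies), so the bound $(\ell+1)/(\ell+d)$ cannot hold for it---that quantity is the \emph{global} $\ell\to\ell{+}1$ cloning fidelity $\E_\psi\big[\langle\psi|^{\otimes(\ell+1)}\Lambda(\ketbra\psi^{\otimes\ell})|\psi\rangle^{\otimes(\ell+1)}\big]$, not a single-register fidelity.

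The~\cite{JLS18} route sidesteps this by first establishing that a PRS remains pseudorandom even when the distinguisher is additionally given the reflection oracle $U_{\phi_k}=I-2\ketbra{\phi_k}$, and then proving a no-cloning theorem \emph{with that oracle} (restated here as \cref{thm:strong_noclon_prs}). With the oracle in hand the forger-to-cloner reduction is lossless: the cloner runs the forger, uses the oracle to implement the exact measurement $\{\ketbra{\phi_k},I-\ketbra{\phi_k}\}$ on each output register, and whenever at least $\ell+1$ registers pass, the post-measurement states on those registers are literally $\ket{\phi_k}^{\otimes(\ell+1)}$; hence the cloning fidelity is at least $\Pr[\text{forge}]$ (this is exactly the proof of \cref{thm:sqaruf} on \cpageref{pf:thm:sqaruf}). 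Your SWAP-test step is precisely an attempt to do without the oracle, but without it you cannot collapse the forger's outputs to clean copies, and the Markov surrogate loses exactly the factor $\ell$ that makes the bound vacuous.
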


In~\cite{JLS18}, the authors also propose a PRS construction based on a quantum-secure Pseudo-Random Function family (\nom{P}{PRF}{Pseudo-Random Function family}), in order to instantiate \cref{thm:priv-coin_from_PRS}. More precisely, they prove the following:
\begin{theorem}[PRS based on PRF,~\cite{JLS18}]
Suppose, there exists a quantum secure PRF $\{f_k\}_{k\in\mathcal{K}}$ on $n$-bit inputs, such that $n\in\omega(\log(\secpar))$ and for every $k\in\mathcal{K}$, $f_k$ can be implemented on a quantum computer, in time $t$\footnote{$t\in\poly$ by definition of PRF.}. Then, there exists a PRS family $\{\ket{\phi_k}\}_{k\in \mathcal{K}}$ on $n$-qubits, such that the key-space is the same as the key-space of the PRF. Moreover, for every $k\in \mathcal{K}$, given the key $k$, the state $\ket{\phi_k}$, can be constructed in time, $\poly[n] + t$.
\label{thm:PRS_from_PRF}
\end{theorem}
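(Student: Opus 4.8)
The plan is to realize the PRS by the \emph{random phase state} construction and to prove pseudorandomness in two moves: first replace the PRF by a truly random function (a computational step resting on PRF security), then compare the resulting ensemble directly with the Haar ensemble (a purely statistical step). Concretely, set $N := 2^n$ (which is super-polynomial in $\secpar$ since $n\in\omega(\log\secpar)$), read $f_k(x)\in\{0,1\}^n$ as an element of $\mathbb{Z}_N$, and define
\[
\ket{\phi_k}\;=\;\frac{1}{\sqrt{2^n}}\sum_{x\in\{0,1\}^n}\omega_N^{f_k(x)}\ket{x},\qquad \omega_N:=e^{2\pi i/N}.
\]
The PRS key space is $\mathcal{K}$ itself. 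Preparing $\ket{\phi_k}$ from $k$ amounts to: $n$ Hadamards for the uniform superposition; a coherent evaluation of $f_k$ into an ancilla (time $t$); the diagonal phase $\ket{x}\ket{y}\mapsto\omega_N^{y}\ket{x}\ket{y}$, which is a product of $n$ single-qubit phase rotations (time $\poly[n]$); and an uncomputation of $f_k$ (time $t$). Hence $\ket{\phi_k}$ is constructed in time $\poly[n]+O(t)$, which is the claimed complexity once pseudorandomness is established.

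\textbf{Step 1 (derandomizing the PRF).} Fix a polynomial $t=t(\secpar)$ and a QPT distinguisher $D$. I would first argue that
\[
\Bigl|\Pr_{k\gets\mathcal{K}}\bigl[D(\ket{\phi_k}^{\otimes t})=1\bigr]-\Pr_{f}\bigl[D(\ket{\phi_f}^{\otimes t})=1\bigr]\Bigr|=\negl,
\]
where $f:\{0,1\}^n\to\mathbb{Z}_N$ is a uniformly random function and $\ket{\phi_f}$ is defined as above with $f$ in place of $f_k$. This is immediate from quantum PRF security by the natural reduction: an oracle distinguisher $D'$ with quantum oracle access to $O\in\{f_k,f\}$ prepares $t$ copies of $\ket{\phi_O}$ using $2t$ oracle queries (one to compute and one to uncompute per copy, each invoked in superposition), feeds them to $D$, and outputs $D$'s bit. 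Because the preparation queries the oracle \emph{in superposition}, this is exactly the point where quantum-query security of the PRF is used; $D'$ is QPT and makes $\poly$ queries, so its advantage, hence the displayed difference, is negligible.

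\textbf{Step 2 (random-function ensemble vs.\ Haar).} It remains to bound $\frac{1}{2}\|\rho^{(t)}_{\mathrm{ph}}-\rho^{(t)}_{\mathrm{Haar}}\|_1$, where $\rho^{(t)}_{\mathrm{ph}}:=\E_f\ketbra{\phi_f}^{\otimes t}$ and $\rho^{(t)}_{\mathrm{Haar}}:=\E_{\ket{\psi}\sim\mu}\ketbra{\psi}^{\otimes t}=\Pi_{\Sym{t}}/\binom{2^n+t-1}{t}$, with $\Pi_{\Sym{t}}$ the projector onto the symmetric subspace of $t$ copies of $(\mathbb{C}^2)^{\otimes n}$. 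Expanding $\ketbra{\phi_f}^{\otimes t}$ in the computational basis and taking $\E_f$, the coefficient of $\ket{x_1\cdots x_t}\bra{y_1\cdots y_t}$ is $\E_f\bigl[\omega_N^{\sum_i f(x_i)-\sum_j f(y_j)}\bigr]$, which factorizes over the values of $f$ and equals $1$ if the multisets $\{x_1,\dots,x_t\}$ and $\{y_1,\dots,y_t\}$ coincide and $0$ otherwise (using $N>t$). Writing $T$ for a size-$t$ multiset over $\{0,1\}^n$, $\ket{\mathrm{Sym}_T}$ for the associated normalized symmetric basis vector, and $c_T=t!/\prod_v m_v(T)!$ for the number of orderings of $T$, this gives
\[
\rho^{(t)}_{\mathrm{ph}}=\frac{1}{2^{nt}}\sum_T c_T\,\ketbra{\mathrm{Sym}_T},\qquad \rho^{(t)}_{\mathrm{Haar}}=\frac{1}{\binom{2^n+t-1}{t}}\sum_T\ketbra{\mathrm{Sym}_T}.
\]
Both are diagonal in the orthonormal family $\{\ket{\mathrm{Sym}_T}\}$, so the trace distance is $\frac{1}{2}\sum_T\bigl|c_T 2^{-nt}-\binom{2^n+t-1}{t}^{-1}\bigr|$. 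I would split this into collision-free types ($c_T=t!$) and collision types. The collision-free types contribute $O(t^2/2^n)$, since $\binom{2^n+t-1}{t}^{-1}=t!\,2^{-nt}\prod_{r=1}^{t-1}(1+r/2^n)^{-1}=t!\,2^{-nt}\bigl(1-O(t^2/2^n)\bigr)$ and there are at most $2^{nt}/t!$ of them. The collision types contribute $O(t^2/2^n)$ as well, since the mass placed on them by each of $\rho^{(t)}_{\mathrm{ph}}$ and $\rho^{(t)}_{\mathrm{Haar}}$ is bounded by the birthday-collision probability $O(t^2/2^n)$. Hence the trace distance is $O(t^2/2^n)=\negl$, because $n\in\omega(\log\secpar)$ and $t$ is polynomial, and this bounds $D$'s distinguishing advantage in Step 2.

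Combining Steps 1 and 2 by the triangle inequality gives $\bigl|\Pr[D(\ket{\phi_k}^{\otimes t})=1]-\Pr[D(\ket{\psi}^{\otimes t})=1]\bigr|=\negl$ for every polynomial $t$ and QPT $D$, which is precisely the PRS pseudorandomness guarantee; together with the complexity bookkeeping of the first paragraph this proves the theorem. Step 1 is routine — a one-line reduction. The crux, where I expect to spend the effort, is Step 2: obtaining the closed form for $\E_f\ketbra{\phi_f}^{\otimes t}$ and, above all, the combinatorial estimate showing that its deviation from $\Pi_{\Sym{t}}/\binom{2^n+t-1}{t}$ is controlled by birthday-type collisions and is therefore $O(t^2/2^n)$.
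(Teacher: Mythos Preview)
The paper does not prove this theorem at all: it is stated as a citation of \cite{JLS18} and used as a black box in \cref{subsec:complexity}, with no proof given anywhere in the paper. So there is no ``paper's own proof'' to compare against.

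That said, your proposal is correct and is precisely the \cite{JLS18} argument: the random-phase-state construction $\ket{\phi_k}=\frac{1}{\sqrt{2^n}}\sum_x\omega_N^{f_k(x)}\ket{x}$, followed by a hybrid that first swaps the PRF for a truly random function (using quantum-query PRF security) and then shows the random-phase ensemble is $O(t^2/2^n)$-close in trace distance to the Haar ensemble via the symmetric-subspace/birthday analysis. Your handling of the key combinatorial point --- that $\E_f[\omega_N^{\sum_i f(x_i)-\sum_j f(y_j)}]$ vanishes unless the multisets agree, which requires $N>t$ --- is right, and the trace-distance estimate is the standard one.
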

It is known that by~\cite{Zha12}, PRFs on inputs of bit-size polynomial, exist assuming the existence of quantum-secure one-way functions. Hence, using \cref{thm:PRS_from_PRF}, the authors construct a PRS over polynomially many qubits, and polynomial construction time, based on quantum-secure one-way functions. By instantiating \cref{thm:priv-coin_from_PRS} with such a PRS, they prove their main result, \cref{thm:qaruf}. However, in order to get close to optimal result using \cref{thm:priv-coin_from_PRS}, we require a PRS over $n$ qubits such that $n=\log^c(\secpar)$, for some $c>1$. By \cref{thm:PRS_from_PRF}, we can construct such a PRS, using a PRF on $n$ bits such that $n=\log^c(\secpar)$, for some $c>1$. Moreover, in order to achieve polylogarithmic running time of the PRS, we would require that the PRF has polylogarithmic running time\footnote{Note that the running time of the PRS determines the running time of $\bank$ and $\verify$ in the private coin scheme, and are hence crucial for the efficiency of the private coin scheme.}. For such an optimal PRS, note that, the corresponding private quantum money scheme by \cref{thm:priv-coin_from_PRS}, would have polylogarithmic time complexities and each coin would be on polylogarithmic qubits. Note that, on instantiating \pkqc, with such a private scheme, would mean that each public coin is over $n\kappa$ qubits, which is polylogarithmic for the choice of $\kappa$ and $n$. Since, $\pkqc.\bank$ is the same as running $\prqc.\bank$  $\kappa$ many times, it can be done in polylogarithmic time. Moreover, the verification of a public coin, \pkqc.\verify, using a wallet with a fresh coin is a symmetric subspace measurement over $2\kappa$ private coins, and hence can also be done in polylogarithmic time. This is because the projective measurement into the  symmetric subspace, can be implemented in time, quadratic in the number of registers and square logarithmic in the local dimension of registers~\cite{BBD+97}.
However, we require a PRF with polylogarithmic input size and running time, which is a strong form of PRF. In practice, for such purposes, block ciphers are used (such as AES~\cite{DR11}) instead of PRF, see~\cite[Chapter 3.5]{KL14} for more details. Hence, we can use a block cipher with the same properties namely, polylogarithmic input size and running time. The main downside of using block ciphers is that they use a fixed block size and hence, do not fit the asymptotic analysis, which we use through out this work. At the same time though, block ciphers have the advantage that the best known quantum attack is slightly below $2^{z/2}$, where $z$ is the key-size (which is expected due to Grover's search, see the post-quantum cryptanalysis of AES~\cite{BNS19}). 

\paragraph*{}
Another way to implement the private coin scheme in~\cite{JLS18} efficiently, is to use a \emph{Scalable PRS} construction, a notion that was recently introduced in~\cite{BS20b}. In~\cite{BS20b}, the authors prove the following:
\begin{corollary}[{Restated from~\cite{BS20b}}]
 If post-quantum one-way functions exist, then for every $n\in \NN$ (even constant), independent of the security parameter $\secpar$, there exists a PRS on $n$ qubits.\label{cor:scalable_PRS}
\end{corollary}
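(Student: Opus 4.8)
The plan is to obtain this corollary directly from the scalable-PRS construction of~\cite{BS20}: for any target length $n = n(\secpar)$ with $n \geq 1$, that construction yields a PRS family on $n$ qubits from post-quantum one-way functions, and specializing $n(\cdot)$ to a constant function is exactly the statement. For completeness I would first recall why such a construction can exist at all, given that the Ji--Liu--Song phase states underlying \cref{thm:PRS_from_PRF} provably break for $n \in O(\log\secpar)$: in $\frac{1}{\sqrt{2^n}}\sum_x (-1)^{f_k(x)}\ket x$ every amplitude has magnitude $2^{-n/2}$, whereas in a Haar-random $n$-qubit state the amplitudes are (normalized) i.i.d.\ complex Gaussians, and for $2^n = \poly(\secpar)$ this discrepancy is detectable with polynomially many copies. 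So the construction must make the \emph{amplitudes}, not just the phases, look pseudorandom.

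The construction I would present is the following. Fix a polynomial $D = D(\secpar)$, to be taken large, and a constant modulus $q \geq 3$; from the post-quantum one-way function obtain a quantum-secure pseudorandom function family $\{f_k\}$ with domain $\{0,1\}^n \times [D]$ and range $\ZZ_q$---since $n$ is constant this domain has polynomial size, so such a family exists~\cite{Zha12}. To prepare a state: form $\frac{1}{\sqrt{2^n D}}\sum_{x \in \{0,1\}^n}\sum_{j \in [D]} \omega^{f_k(x,j)}\,\ket x\ket j$ with $\omega = e^{2\pi i/q}$ (which costs $\poly(\secpar)$ time, as $2^n$ is constant and $D$ polynomial), apply the quantum Fourier transform to the $[D]$-register, measure it, and postselect on outcome $0$; conditioned on that outcome the $\{0,1\}^n$-register holds the normalization of $\sum_x\bigl(\sum_{j}\omega^{f_k(x,j)}\bigr)\ket x$. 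Outcome $0$ occurs with probability $\approx 1/D$, so one repeats $O(\secpar D)$ times and succeeds except with negligible probability; the key is $k$ and the PRS state is this $n$-qubit vector (set to $\ket{0^n}$ in the negligible-probability event that all coefficients vanish).

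For pseudorandomness I would argue in two steps. First, replace $f_k$ by a truly random function $f\colon\{0,1\}^n\times[D]\to\ZZ_q$; the whole experiment---even when run to produce $t = t(\secpar)$ independent copies for an adversary---makes only polynomially many superposition queries to the function, so this substitution is computationally indistinguishable by PRF security. Second, with a truly random $f$ the coefficient of $\ket x$ is $\sum_{j=1}^D \omega^{f(x,j)}$, a sum of $D$ i.i.d.\ uniform $q$-th roots of unity; since $q \geq 3$ makes the summand circularly symmetric, a multivariate Berry--Esseen estimate shows the joint law of these $2^n$ sums is within total variation $\poly(2^n)/\sqrt D$ of a vector of i.i.d.\ circularly-symmetric complex Gaussians. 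A Haar-random $n$-qubit state is exactly the normalization of such a Gaussian vector, and both normalization and ``produce $t$ copies of the corresponding pure state'' are fixed stochastic maps applied to the coefficient vector, so by data processing the induced $t$-copy mixed states are no farther apart in trace distance than the underlying coefficient-vector distributions are in total variation---i.e.\ at most $\poly(2^n)/\sqrt D$, crucially \emph{independent of $t$}. Choosing $D$ a sufficiently large polynomial in $\secpar$ makes this negligible. Since $n$ enters every bound only through the constant $2^n$ while all the hardness and all the statistical slack come from $\secpar$ (through $D$ and the PRF), $n$ may be taken to be any constant, independent of $\secpar$, which is the claim.

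The main obstacle I anticipate is the statistical core: a multivariate Berry--Esseen bound for the $2^n$-dimensional vector of random-walk sums that is strong enough in total-variation distance, together with a clean handling of the postselect-and-repeat preparation so that conditioning does not bias the output state (it does not, since for each Fourier-measurement outcome the measured and output registers factor). These are precisely the steps carried out in~\cite{BS20}; granting them, the rest is a routine PRF-to-random-function hybrid plus the observation that a quantum-secure PRF on a polynomial-size domain follows from post-quantum one-way functions.
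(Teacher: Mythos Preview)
The paper does not prove this corollary; it is restated verbatim from \cite{BS20} and used as a black box in \cref{subsec:complexity}. There is no in-paper proof to compare against---your sketch is an attempt to reconstruct the \cite{BS20} argument.

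The high-level plan is correct and in the spirit of \cite{BS20}: Ji--Liu--Song phase states fail for small $n$ because every amplitude has fixed modulus $2^{-n/2}$; the fix is to make the amplitudes themselves look like i.i.d.\ complex Gaussians, since normalizing such a vector yields the Haar distribution. The PRF-to-random hybrid is standard, and the implication ``TV-closeness of coefficient laws $\Rightarrow$ $t$-independent trace-distance bound on the $t$-copy mixed states'' is valid as stated.

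There is, however, a genuine gap at exactly the step you flag as the main obstacle---and the obstacle is worse than merely finding a good multivariate Berry--Esseen constant. The sum $\sum_{j=1}^D \omega^{f(x,j)}$ takes values in the discrete lattice $\ZZ[\omega]$, so its law is purely atomic, while the target complex Gaussian is absolutely continuous. These measures are mutually singular, and their total-variation distance is exactly $1$, not $\poly(2^n)/\sqrt{D}$. Classical Berry--Esseen controls the Kolmogorov metric, which does not feed the data-processing inequality you invoke. Bridging CLT-type convergence to a trace-distance bound on $\mathbb{E}[\ketbra{\psi}^{\otimes t}]$ therefore requires a different tool---for instance, comparing the relevant degree-$2t$ moments directly while controlling the normalization $\|c\|^{-2t}$ via concentration, or working in a Wasserstein sense and tracking the (now $t$-dependent, but still negligible for polynomial $t$) error through normalization and tensoring. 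That is the substantive work in \cite{BS20}; the one-line ``multivariate Berry--Esseen in total variation'' shortcut does not exist for lattice-valued summands.
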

In particular, we can get a PRS by fixing $n=\log^c(\secpar)$, for some $c>1$, as required for the optimal PRS in \cref{thm:priv-coin_from_PRS}. This would result in a private coin scheme on $n$-qubits by \cref{thm:priv-coin_from_PRS}, where $n$ is as above, and instantiate $\pkqc$ using such a private scheme. Thus, we get the following result.

\begin{theorem}[Nonadaptive unforgeability for \cite{JLS18} with PRS over roughly logarithmic qubits]
If quantum-secure one-way functions exist, then there exists a \nauf private coin scheme, with a rank-$1$ projective measurement and $\log^c(\secpar)$ (with $c>1$) qubits required for each coin.\label{thm:poly-log_private_coins}
\end{theorem}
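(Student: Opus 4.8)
The plan is to obtain the claimed scheme as a direct instantiation of the modular construction in \cref{thm:priv-coin_from_PRS}, fed a pseudorandom state family living on roughly logarithmically many qubits. The only ingredient that is not completely off-the-shelf is such a ``short'' PRS, and this is exactly what the scalable PRS of \cref{cor:scalable_PRS} provides: assuming post-quantum (equivalently, quantum-secure) one-way functions, there is a PRS on $n$ qubits for any target $n$, in particular for $n = n(\secpar) := \lceil \log^c(\secpar) \rceil$ with a fixed constant $c > 1$. So the theorem is essentially a corollary, and the proof amounts to checking that the pieces fit.

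First I would fix this $n(\secpar)$ and verify the hypotheses of \cref{thm:priv-coin_from_PRS}. Since $c>1$ we have $n(\secpar) \in \omega(\log(\secpar))$, as required; and because every PRS is by definition efficiently generable, the state $\ket{\phi_k}$ can be prepared from the key $k$ in time $t(\secpar) \in O(\poly)$. Plugging this PRS into \cref{thm:priv-coin_from_PRS} yields a \nauf private coin scheme in which each coin is a pure state on $n(\secpar) = \log^c(\secpar)$ qubits, with $\bank$ and $\verify$ running in time $O(t(\secpar)) = \poly$ and $\keygen$ running in time $O(\log|\mathcal{K}|)$. Non-adaptive unforgeability is immediate from the conclusion of \cref{thm:priv-coin_from_PRS}; under the hood this rests on the pseudorandomness of the PRS against $\poly(\secpar)$-time distinguishers, which \cref{cor:scalable_PRS} guarantees even for the small qubit count $n(\secpar)$.

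It remains to argue that the verification algorithm is a rank-$1$ projective measurement. Here I would recall that in the Ji--Liu--Song construction underlying \cref{thm:priv-coin_from_PRS} (see also \cref{thm:qaruf}), $\verify$ with key $k$ simply performs the two-outcome measurement $\{\ketbra{\phi_k}, I - \ketbra{\phi_k}\}$, i.e.\ it tests whether the alleged coin equals the pure state $\ket{\phi_k}$, and the accept projector $\ketbra{\phi_k}$ has rank one. This property does not depend on how many qubits $\ket{\phi_k}$ lives on, so it is preserved under the instantiation above; perfect completeness likewise follows since $\verify$ accepts the honest coin $\ket{\phi_k}$ with certainty. This completes the proof.

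I do not expect a genuine obstacle. The one point deserving care is making sure the scalable PRS of \cref{cor:scalable_PRS} stays secure when the qubit number $n(\secpar)$ is allowed to grow (mildly) with $\secpar$ rather than being held constant --- but the guarantee there is phrased uniformly over all choices of $n$, and the regime $n(\secpar) = \log^c(\secpar)$ is in any case far more forgiving than the constant-$n$ regime, so pseudorandomness against all $\poly(\secpar)$-time adversaries still holds. A secondary, purely bookkeeping point is to confirm that ``$\log^c(\secpar)$ qubits per coin'' in the statement is meant up to the ceiling and is independent of the lifting parameter $\kappa$ (which plays no role here, since this theorem concerns only the private scheme).
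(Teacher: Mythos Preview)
Your proposal is correct and matches the paper's approach exactly: the paper derives \cref{thm:poly-log_private_coins} by fixing $n = \log^c(\secpar)$ in the scalable PRS of \cref{cor:scalable_PRS} and plugging it into \cref{thm:priv-coin_from_PRS}, with the rank-$1$ projective verification inherited from the JLS18 construction. Your additional checks (that $n \in \omega(\log\secpar)$, that scalable PRS security holds for $n$ growing with $\secpar$, and that the rank-$1$ property is independent of qubit count) are all sound and simply make explicit what the paper leaves implicit.
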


This way, we can avoid the use of block ciphers and their limitations, as discussed in the previous paragraph. However, using the PRS construction in~\cref{cor:scalable_PRS}, we lose out on the guarantee of polylogarithmic running time, that we had with block ciphers. The definition of PRS only ensures that the running time is polynomial. As a result, if we instantiate $\pkqc$ using a private scheme with such a PRS, then the minting algorithm would run in polynomial time and not polylogarithmic time. The verification time is still polylogarithmic since it only depends on the number of qubits used for each private coin.


\subsection{How to use \texorpdfstring{$\pkqc$}{Pk-QC}: User manual}\label{subsec:user_manual}
\paragraph*{Motivation}Our construction $\pkqc$ (see \cref{alg:ts}), 
is \arnauf (see \cref{definition:nonadapt_all-or-nothing_unforge}), but it is not secure against adaptive attacks (see the discussion after \cref{definition:nonadapt_all-or-nothing_unforge}). One way to avoid adaptive attacks, is by using a fresh wallet for receiving each transaction, thereby preventing adaptive attacks. Moreover, we could only prove that our scheme is \arnauf. In the all-or-nothing utility (see \cref{eq:all-or-nothing_nonadaptive_utility_def}), we require that in every transaction, the user either approves all the coins (if all of them pass verification) or approves none. The users are allowed to go to the bank to get a refund or valuation of the coins she possess. There can be potential sabotage attacks where an honest user, after doing transaction with adversarial merchant, gets a refund less than what she should get. In order to provide a meaningful way of refund especially in the case of failed verification, we use the $\pkqc.\Count_{bank}$ or the private count. In \cref{appendix: priv rational fair}, we prove that if the user goes for refund after a transaction, she would be secure against sabotage attacks, in the rational sense.  Moreover, in \cref{appendix:public_sabotage}, we show that if an honest user spends money that was accepted upon successful verification, then also she is rationally secure against sabotage attacks in the all-or-nothing model of transactions. Since, these notions are fairly technical, we skip the discussion on the results regarding sabotage attacks to the appendix (see \cref{appendix:fairness}). 

\paragraph*{Specification} Our construction $\pkqc$, (see \cref{alg:ts}) should be used in the following way: the user starts with a wallet called the main wallet which contains public coins from the bank. 
The receiver possesses multiple receiving wallets---one receiving wallet per received payment. In order to receive a payment, the user needs to have at least one coin in her main wallet.
The receiver brings out one coin from her main wallet and creates a separate receiving wallet with that coin. He uses this new wallet to receive and apply $\pkqc.\verify()$ on the received sum from the payer. The transaction is approved if and only if the $\pkqc.\verify()$ accepts all the coins of the submitted sum using the newly formed receiving wallet. If the transaction fails, the receiver does not return the post-verified money state to the (cheating) payer. The user can pay money from the main wallet as well as the receiving wallet, if the money received had passed verification. In case of a failed verification, the user needs to go and get a refund of the received money in the corresponding receiving wallet. 
To refund any given receiving wallet, the bank applies $\pkqc.\Count_{bank}$ on the wallet coins.

 

\paragraph*{Analysis of the user manual} 
The user manual is well-suited for our construction, $\pkqc$, described in \cref{alg:ts} as well as for any comparison-based public quantum coin scheme with private verification. In the user manual, we implicitly assume that the verification of the money scheme is done using wallets initialized by a fresh coin, which is very similar to comparison-based verification.
We also require a separate private procedure for the bank's refund. Hence, the user manual implicitly assumes that the scheme in use is a public quantum coin scheme with private verification.

Note that every received wallet is used only once to receive a transaction, which is either successful and all the coins are approved or none of the coins are accepted. Hence, the user manual indeed ensures the all-or-nothing utility definition that we use in \cref{definition:nonadapt_all-or-nothing_unforge}. 

It can be shown that if the user manual is followed, any cheating forger can be viewed as an adversary in a multiple verifier version of Game~\ref{game:nonadapt_unforge_strongest} with a nonadaptive all-or-nothing utility. The notions of \mnaruf money schemes, and how it captures all attacks on the scheme $\pkqc$, are discussed more precisely in \cref{appendix:multi-ver-unforge,appendix:user_manual-multi-ver-equivalence}\footnote{In \cref{appendix:user_manual-multi-ver-equivalence}, it is also discussed how all kinds of sabotage attacks under the user manual are captured by the multiverifier version of nonadaptive rational security against sabotage, defined in \cref{appendix:fairness}}. The scheme \pkqc is indeed \mnaruf (see \cref{prop: multi_ver-unforge}). The proof is fairly easy but has some technicalities; because of this, we skip the corresponding results and their proofs to the appendix (see \cref{appendix:multi-ver-unforge,appendix:appendix_proofs}). The proof goes via a couple of reductions. We first prove that if the underlying private scheme, $\prqc$, is \mnauf, then in the nonadaptive setting, rational security against private sabotage attacks against multiple verifiers implies multiverifier rational unforgeability for the scheme, \pkqc. We then use the observation that in the rational sense, any nonadaptive multiverifier private sabotage attack for any general public money scheme can be reduced to a nonadaptive private sabotage adversary against a single-verifier using a single payment. Then, we prove that the scheme, \pkqc, is rationally secure against private sabotage targeted on a single-verifier using a single payment in the nonadaptive setting. Hence, as a side product, we also prove multiverifier security against private sabotage for the scheme, \pkqc.

The user manual, just as in the case of unforgeability, ensures that any kind of sabotage attack against our scheme $\pkqc$ (with respect to the user manual) can be viewed as a multiverifier nonadaptive sabotage attack, i.e., an adversary that tries to sabotage by submitting to multiple verifiers, one by one. The discussion about sabotage attacks and the multiverifier version are neither that interesting nor important, and hence we skip it to the appendix (\cref{appendix:fairness,appendix:multi-ver-fair}). In \cref{appendix:multi-ver-fair}, we deduce that the scheme, $\pkqc$ is rationally secure, even against multiverifier sabotage attacks\footnote{We do not discuss multiverifier rational security against public sabotage, but the same reduction in the private sabotage case (see \cref{prop:sing-multi-rational-fair}) shows that public sabotage attacks on multiple verifiers are equivalent to that of the single-verifier case, see \cref{remark:single-multi-rational-fair-public}. Hence, we instead prove rational security against public sabotage attacks on a single-verifier in \cref{appendix:public_sabotage}.}, in the nonadaptive setting (see \cref{cor:multi_ver-rational-fair}). The proof goes through some intermediate results, the proofs of which are given in \cref{appendix:appendix_proofs}.  All these results regarding unforgeability and security against sabotage in the multiple verifiers setting are summarized in  \cref{appendix:multi-ver-results}, which together imply the main result, \cref{thm:multi-unconditional_secure}. 

\paragraph*{Potential use case} The user manual has the following main restrictions, namely the all-or-nothing mode of transaction and the requirement of a fresh coin to receive a transaction. However, it is relevant and applicable in various cases. For example consider a shop selling electronic goods such as TV or computer, the vendor usually receives money from buyers and gives the item to the buyer only if it the transaction is successful. In general, the vendor never has to pay. In particular the credit card terminal machine, that are used in practice operate in a manner similar to the user manual, since they only receive money\footnote{A typical credit card terminal also allows refunds. In our setting, it is not so simple; the vendor can pay from her wallet, including the receiving wallets.}. Similarly, the vendor can operate through quantum coins using the user manual - receiving the sum of money from buyers into separate receiving wallets (one for each transaction), and approving the transaction only if all the coins pass. She can spend the money which was received upon successful verification or go to the bank to get a refund of her receiving wallets. 

Since, the user manual allows either approve all coins or no coins in a transaction, in order to verify $n$ coins in a transaction, it suffices to do just one symmetric subspace projective measurement on all the $(n+1)\kappa$ registers of the new coins as well as the one fresh coin in the wallet. We accept all coins if the measurement outcome is into the symmetric subspace. This is the same as doing $n$ verifications one by one, because the symmetric subspace of a bigger system is a subspace of the the symmetric subspace of a smaller subsystem. Therefore, the probability that all the coins pass verification subjected to $\pkqc.\verify$, one by one, is the same as the squared overlap of the $(n+1)\kappa$ registers (wallet coin and new coins) with the symmetric subspace over $(n+1)\kappa$ registers. 
As a result, the time required for verifying $n$ coins in a single transaction, is equivalent to the time required to perform a projective measurement into the symmetric subspace over $(n+1)\kappa$ registers, which requires time quadratic in the number of registers (which is $O(n\kappa)$), see~\cite{BBD+97}. Note that, if the $n$ coins are submitted in multiple transactions, then the total verification time can only decrease. Hence, $n$ coins can be verified using $O(n^2\kappa^2)$ time.

\section{Unforgeability}\label{sec:attack}
As mentioned earlier, our construction is not unforgeable according to the standard unforgeability notions, i.e., the scheme $\pkqc$ is neither \nauf (i.e., with respect to flexible utitlity) nor \anauf (see \cref{definition:nonadapt_flex_unforge,definition:nonadapt_all-or-nothing_unforge}). 
In the next two subsections, \cref{subsec:attack_description,subsec:analysis_of_attack}, we discuss a class of nonadaptive attacks (see \cref{alg:opt_attack}) on our construction parameterized by $n, m \in \poly$. In \cref{subsec:optimality}, we prove that for any nonadaptive QPT adversary which takes $n$ coins from the mint and submits $m$ alleged coins for public verification, the attack has the maximum probability (up to negligible corrections) for passing all the $m$ verifications provided the underlying private scheme, $\prqc$ (the private scheme that we lift to $\pkqc$ in \cref{alg:ts}) is \nauf (see \cref{definition:adapt_flex_unforge}). The analysis of this attack will be vital in the proof of (all-or-nothing) nonadaptive rational unforgeability for our construction, given in \cref{subsec:completeness_security proofs}.  


\subsection{Candidate nonadaptive attack}\label{subsec:attack_description}
A class of nonadaptive forgery attacks parameterized by $m, n \in \NN$ such that $m>n$, is described in  \cref{alg:opt_attack}, in which the adversary gets $n$ coins from the mint, and submits $m$ alleged coins. Hence, for every $n$, the attack is successful if running $\pkqc.\Count_{\ket{\cent}}()$ (see  \cref{line:Count}) on the submitted coins reads $m$.
\begin{algorithm}
\caption{A class of Nonadaptive attacks on the scheme $\pkqc$, parameterized by $n$}
\label{alg:opt_attack}
\begin{algorithmic}
    \State Obtain $n$ copies of public coins $\ket{\cent}^{\tensor n} \gets (\pkqc.\bank(\sk))^{\tensor n}$.
    \State Construct the $m\kappa$ register state $\ket{\BasisSym{m\kappa}_{(n\kappa, (m-n)\kappa, 0\ldots,0)}} $ (see Notations in \cref{subsec:notations}) which is the same as     
    $\frac{1}{\sqrt{\binom{m\kappa}{n\kappa}}}\sum_{\vec{i}, T(\vec{i}) = (n\kappa, (m-n)\kappa, 0,\ldots)} \ket{\phi_{i_1}}\tensor \cdots \ket{\phi_{i_{m\kappa}}}.$
    \State Submit the state $\ket{\BasisSym{m\kappa}_{(n\kappa, (m-n)\kappa, 0\ldots,0)}}$ to the verifier.
\end{algorithmic}
\end{algorithm}
The construction of the state $\ket{\BasisSym{m\kappa}_{(n\kappa, (m-n)\kappa, 0\ldots,0)}}$ from $\ket{\cent}^{\tensor n}$ can be done as follows: 
Add $(m-n)\kappa$ registers each initialized to $\ket{1}$ to the $n\kappa$ registers and call these $m\kappa$ registers the input registers.\footnote{We use the fact that the basis for $\mathbb{H}$, that we fixed in \cref{item:product_basis_definition} in \cref{subsec:notations}, is such that the vector $\ket{1}$ has non-zero overlap with only $\ket{\phi_0}$ (same as $\ket{\mill}$ and $\ket{\phi_1}$. Hence, the component of $\ket{1}$, orthogonal to $\ket{\mill}$ (which is overwhelmingly large in our case), is proportional to $\ket{\phi_1}$.}
Note that the underlying private scheme $\prqc$ is \nauf. In particular, the state $\ket{1}$, which can be prepared efficiently, must have very little fidelity with the correct coin state $\ket{\cent}$, otherwise, the QPT algorithm which produces the state $\ket{1}$ can  nonadaptively forge the scheme $\prqc$. Therefore the state has overwhelmingly high fidelity with a state of the form $\ket{\mill}^{\tensor n\kappa}\tensor\ket{\mill^\perp}^{\tensor (m-n)\kappa}$ where $\ket{\mill^\perp}$ is some state orthogonal to $\ket{\mill}$. The fidelity of $\ket{\mill^\perp}$ with $\ket{1}$ is overwhelmingly large.

Add another $m\kappa$  work registers initialized to \[\frac{1}{\sqrt{\binom{m\kappa}{n\kappa}}}\sum_{\vec{i}, T(\vec{i}) = (n\kappa, (m-n)\kappa, 0,\ldots)}\ket{i_1}\tensor \cdots \ket{i_{m\kappa}}.\]
Apply controlled swap operation controlled at the work registers to get the following intermediate state with high fidelity\[ \frac{1}{\sqrt{\binom{m\kappa}{n\kappa}}}\sum_{\vec{i}, T(\vec{i}) = (n\kappa, (m-n)\kappa, 0,\ldots)} \left(\ket{\phi_{i_1}}\tensor \cdots \ket{\phi_{i_{m\kappa}}}\right)\tensor \left(\ket{i_1}\tensor \cdots \ket{i_{m\kappa}}\right).\]
Apply C-Swap operations again but this time controlled on the input registers. Since the state $\ket{\mill^\perp}$ is close to $\ket{1}$ (fidelity wise) applying the C-Swap operation is almost the same as disentangling the work and the input registers such that we are left with a pure state in the input registers which has an overwhelmingly high fidelity with the state $\ket{\BasisSym{m\kappa}_{(n\kappa, (m-n)\kappa, 0\ldots,0)}}$.\label{subsec:analysis_of_attack} 

\subsection{Analysis of the attack}

Clearly, $\ket{\BasisSym{m\kappa}_{(n\kappa, (m-n)\kappa, 0\ldots,0)}}$ is a symmetric state such that \[\Pr[\prqc.\Count(\ket{\BasisSym{m\kappa}_{(n\kappa, (m-n)\kappa, 0\ldots,0)}}) = n\kappa] = 1,\] for every $n$ and $m > n$. Hence, the attack does not violate the adaptive unforgeability (see \cref{definition:adapt_flex_unforge}) of the underlying private scheme $\prqc$. Next, for every $n$ and $m>n$, the success probability of the attack:
\begin{equation}
Pr[\pkqc.\Count_{\ket{\cent}}(\ket{\BasisSym{m\kappa}_{(n\kappa, (m-n)\kappa, 0\ldots,0)}}) = m] = \frac{\binom{m\kappa}{n\kappa}}{\binom{(m+1)\kappa}{(n+1)\kappa}}.
\label{eq:probability_opt_attack}
\end{equation} 
This can be seen in the following way.
Observe that the combined state of the new coins and the wallet (initialized to $\ket{\cent}$) just before the $\pkqc.\Count$ operation (see  \cref{line:Count} in  \cref{alg:ts}) is $\ket{\BasisSymTilde{m\kappa}_{(n\kappa, (m-n)\kappa, 0\ldots,0)}}$ (similar to $\widetilde{\omega}$ in  \cref{line:pre-measure} in  \cref{alg:ts}). Recall, 
\begin{align}
\ket{\BasisSymTilde{m\kappa}_{(n\kappa, (m-n)\kappa, 0\ldots,0)}} &= \ket{\cent}\tensor \ket{\BasisSym{m\kappa}_{(n\kappa, (m-n)\kappa, 0\ldots,0)}}\\ &= \ket{\phi_0}^{\tensor \kappa} \tensor \ket{\BasisSym{m\kappa}_{(n\kappa, (m-n)\kappa, 0\ldots,0)}}.\\
\end{align}
For notations, see \cref{eq:basis_states_def} and \cref{eq:private-public} in \cref{subsec:notations}. Notice that, $\ket{\BasisSymTilde{m\kappa}_{(n\kappa, (m-n)\kappa, 0\ldots,0)}}$ has a non-trivial overlap with only one vector in the basis $\BasisSym{(m+1)\kappa}$, which is $\ket{\BasisSym{(m+1)\kappa}_{((n+1)\kappa, (m-n)\kappa, 0\ldots,0)}}$.
It is not hard to see that \[ \left|\braket{\BasisSym{(m+1)\kappa}_{((n+1)\kappa, (m-n)\kappa\ldots)}}{\BasisSymTilde{m\kappa}_{(n\kappa, (m-n)\kappa, 0\ldots,0)}}\right|^2  = \frac{\binom{m\kappa}{n\kappa}}{\binom{(m+1)\kappa}{(n+1)\kappa}}.\]

Hence, the squared overlap of $\ket{\BasisSymTilde{m\kappa}_{(n\kappa, (m-n)\kappa, 0\ldots,0)}}$ with $\Sym{(m+1)\kappa}$ is $\frac{\binom{m\kappa}{n\kappa}}{\binom{(m+1)\kappa}{(n+1)\kappa}}.$
This completes the derivation of \cref{eq:probability_opt_attack}.
\paragraph*{}Next we show that the attack described in \cref{alg:opt_attack} also shows that our scheme $\pkqc$ is not \nauf in the traditional sense. Note that, the probability of passing at least $(n+1)$ verifications out of $m$ is
\begin{align}
    \Pr&[\pkqc.\Count_{\ket{\cent}}(\ket{\BasisSym{m\kappa}_{(n\kappa, (m-n)\kappa, 0\ldots,0)}})>n]\\ 
    &\geq \Pr[\pkqc.\Count_{\ket{\cent}}(\ket{\BasisSym{m\kappa}_{(n\kappa, (m-n)\kappa, 0\ldots,0)}}) = m]\\
    &= \frac{\binom{m\kappa}{n\kappa}}{\binom{(m+1)\kappa}{(n+1)\kappa}}.
\end{align}

\paragraph*{}For $m= n+1$, the probability that more than $n$ coins pass the verification is exactly equal to  $\frac{\binom{((n+1)\kappa}{n\kappa}}{\binom{(n+2)\kappa}{(n+1)\kappa}}$. It can be shown that the term $\frac{\binom{((n+1)\kappa}{n\kappa}}{\binom{(n+2)\kappa}{(n+1)\kappa}}$ asymptotically converges to $1$ when $n \rightarrow \infty$. This is because,
\begin{align}
    \frac{\binom{((n+1)\kappa}{n\kappa}}{\binom{(n+2)\kappa}{(n+1)\kappa}} &= \frac{((n+1)\kappa)!((n+1)\kappa)!}{((n+2)\kappa)! (n\kappa) !}\\
    &=\prod_{r=1}^\kappa 
    \frac{n\kappa +r}{(n+1)\kappa + r}\\ 
    &> \left(\frac{n\kappa}{(n+1)\kappa}\right)^\kappa = \left(\frac{n}{n+1}\right)^\kappa > \left(1 - \frac{1}{n}\right)^\kappa.
\end{align}
Clearly, the term $\left(\frac{n}{n+1}\right)^\kappa > \left(1 - \frac{1}{n}\right)^\kappa \rightarrow 1,$ as $n \rightarrow \infty$.
Hence, the scheme $\pkqc$ is not \nauf. Moreover, a little analysis also shows that for $n = c \cdot\kappa$ and taking the limit of large $\kappa$, the term goes to $e^{-1/c}=1-\frac{1}{c}+O(\frac{1}{c^2})$, although we do not use it in any of our results. 

On the other hand, this attack fails when the adversary starts with one public coin, i.e., $n = 1$. In this case the expression becomes $\frac{\binom{2\kappa}{\kappa}}{\binom{3\kappa}{\kappa}}$, which is upper bounded by $\left(\frac{2}{3}\right)^\kappa=\negl$, for our choice of $\kappa$. 

\subsection{Optimal success probability for nonadaptive forgery}\label{subsec:optimality}

In this section we will prove the optimality (up to negligible corrections) of the attack given in \cref{alg:opt_attack} in \cref{subsec:attack_description}. 

\begin{proposition}[Optimality of the attack]\label{prop:opt_attack}
Suppose $\prqc$ is $\nauf$ (see \cref{definition:nonadapt_flex_unforge}), and $\prqc.\verify$ is a rank-$1$ projective measurement.
Consider a nonadaptive QPT adversary in Game~\ref{game:nonadapt_unforge_strongest}, which takes $n$ coins from the mint does not query the private verification oracle, and submits $m$ registers for public verification such that $m,n \in \poly$, and $m>n$. For such an adversary, the attack described in \cref{alg:opt_attack} is optimal (i.e., has the highest possible probability that all $m$ are accepted), up to additive negligible corrections, against $\pkqc$ (see \cref{alg:ts}). Moreover if the underlying $\prqc$ scheme is \auuf (see \cref{definition:nonadapt_flex_unforge} and \cref{definition:unconditional_unforgeability}), then the attack is optimal even for computationally unbounded adversaries. Note that even for such an adversary, $m,n \in \poly$, i.e., it can submit and receive polynomially many coins. 
\end{proposition}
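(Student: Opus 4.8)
The plan is to show that no nonadaptive QPT strategy of the stated form passes all $m$ public verifications with probability exceeding $\binom{m\kappa}{n\kappa}/\binom{(m+1)\kappa}{(n+1)\kappa}$ by more than a negligible amount; since \cref{alg:opt_attack} attains exactly that value up to negligible error (the prepared state is $\negl$-close to $\ket{\BasisSym{m\kappa}_{(n\kappa,(m-n)\kappa,0,\dots,0)}}$, see \cref{subsec:analysis_of_attack}), this is optimality up to additive negligible corrections. First I would reduce to a single symmetric-subspace overlap: because the adversary is nonadaptive it submits all $m$ alleged coins in one $\Count_{pk}$ call, the wallet having been initialized to one fresh coin $\ket{\cent}=\ket{\mill}^{\tensor\kappa}$; and since a fully symmetric state is symmetric on any subset of its registers, the composition of the projectors $\Pi_{\Sym{2\kappa}},\dots,\Pi_{\Sym{(m+1)\kappa}}$ applied one by one by $\pkqc.\Count_{\ket{\cent}}$ collapses to $\Pi_{\Sym{(m+1)\kappa}}$ (the observation already used in \cref{subsec:user_manual}). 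Hence, if $\rho$ is the state on the $m\kappa$ submitted registers, $\Pr[\text{all } m \text{ accepted}]=\tr\!\big(\Pi_{\Sym{(m+1)\kappa}}(\ket{\cent}\!\bra{\cent}\tensor\rho)\big)$, and by linearity it suffices to bound this for a pure $\rho=\ket{\psi}\!\bra{\psi}$.

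The combinatorial core is next. Expand $\ket{\psi}=\sum_{\vec i\in\ZZ_d^{m\kappa}}c_{\vec i}\ket{\phi_{i_1}\cdots\phi_{i_{m\kappa}}}$ in the product basis of \cref{item:product_basis_definition}. Each vector $\ket{\cent}\tensor\ket{\phi_{i_1}\cdots\phi_{i_{m\kappa}}}$ is a product basis vector on $(m+1)\kappa$ registers whose type is $\vec j^{+\kappa}$, where $\vec j=T(\vec i)$ and $\vec j^{+\kappa}$ denotes $\vec j$ with its $0$-th coordinate increased by $\kappa$; hence it overlaps only the single symmetric basis state $\ket{\BasisSym{(m+1)\kappa}_{\vec j^{+\kappa}}}$, with overlap $\binom{(m+1)\kappa}{\vec j^{+\kappa}}^{-1/2}$. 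Therefore $\tr\!\big(\Pi_{\Sym{(m+1)\kappa}}\ket{\cent}\!\bra{\cent}\tensor\ket{\psi}\!\bra{\psi}\big)=\sum_{\vec j\in\mathcal{I}_{d,m\kappa}}\binom{(m+1)\kappa}{\vec j^{+\kappa}}^{-1}\big|\sum_{\vec i:T(\vec i)=\vec j}c_{\vec i}\big|^2$. Bounding each inner sum (of $\binom{m\kappa}{\vec j}$ terms) by Cauchy--Schwarz and regrouping yields $\Pr[\text{all } m \text{ accepted}]\le\sum_{\vec i}r\big((T(\vec i))_0\big)|c_{\vec i}|^2$, where $r(t):=\binom{m\kappa}{\vec j}/\binom{(m+1)\kappa}{\vec j^{+\kappa}}$ depends on $\vec j$ only through $t=j_0$ and equals $\tfrac{(t+1)(t+2)\cdots(t+\kappa)}{(m\kappa+1)(m\kappa+2)\cdots(m\kappa+\kappa)}$; this is nondecreasing in $t$, with $r(m\kappa)=1$ and $r(n\kappa)=\binom{m\kappa}{n\kappa}/\binom{(m+1)\kappa}{(n+1)\kappa}$, the success probability of \cref{alg:opt_attack}.

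To finish I would use the unforgeability of $\prqc$ to control the only terms where $r(t)>r(n\kappa)$, namely $t>n\kappa$. Since $\prqc.\verify$ is the rank-$1$ projective measurement $\{\ket{\mill}\!\bra{\mill},\,I-\ket{\mill}\!\bra{\mill}\}$ and these commute across registers, $\prqc.\Count(\rho)$ is distributed as the outcome of the number operator $\sum_{k=1}^{m\kappa}\ket{\mill}\!\bra{\mill}^{(k)}$, whose eigenspace for eigenvalue $t$ is spanned by the product vectors with exactly $t$ of the $i_k$ equal to $0$; hence $\Pr[\prqc.\Count(\ket{\psi})>n\kappa]=\sum_{\vec i:(T(\vec i))_0>n\kappa}|c_{\vec i}|^2$. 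The QPT that requests $n\kappa\in\poly$ private coins, bundles them into $n$ public coins, runs the given $\pkqc$-adversary on them to produce $\rho$, and submits $\rho$ to a single $\prqc.\verify$-count is a legitimate nonadaptive forger for $\prqc$ (it makes no private-verification query, by hypothesis), so nonadaptive (flexible) unforgeability of $\prqc$ (\cref{definition:nonadapt_flex_unforge}) forces $\Pr[\prqc.\Count(\rho)>n\kappa]\le\negl$. Splitting the bound of the previous paragraph at $t=n\kappa$, using $r(t)\le r(n\kappa)$ for $t\le n\kappa$ and $r(t)\le r(m\kappa)=1$ otherwise, gives $\Pr[\text{all } m \text{ accepted}]\le r(n\kappa)+\negl$; taking convex combinations handles general mixed $\rho$, and when $\prqc$ is unconditionally unforgeable the reduction works against arbitrary (poly-query) adversaries, yielding the unbounded case.

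I expect the main obstacle to be the combinatorial estimate itself: carefully tracking types under $\ket{\psi}\mapsto\ket{\cent}\tensor\ket{\psi}$, performing the Cauchy--Schwarz and regrouping so that the bound lands \emph{exactly} on $r(n\kappa)$ rather than something weaker, and verifying the monotonicity and boundary values of $r$. The reduction in the last step is conceptually routine but relies on the observation that a rank-$1$ private verification makes the pass-count a genuine number-operator measurement, which is precisely what lets the per-$\vec i$ overweight mass be bounded by the private forging probability; it also needs the (trivial but necessary) bookkeeping that nonadaptivity and the $\poly$ bound on the number of coins are preserved by the reduction.
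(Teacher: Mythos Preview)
Your proposal is correct and takes a genuinely different route from the paper. The paper's proof is operator-theoretic: it decomposes the submitted state $\ket{\alpha}$ into its component $\ket{\alpha_1}\in\Good{m\kappa}{n\kappa}$ (the subspace where $\prqc.\Count\le n\kappa$ with certainty) and its orthogonal complement $\ket{\alpha_2}\in\Bad{m\kappa}{n\kappa}$, uses unforgeability to bound $|a_2|^2$, proves via a commutation relation $[\Pi_{\Good{(m+1)\kappa}{(n+1)\kappa}},\Pi_{\Sym{(m+1)\kappa}}]=0$ that the cross terms $\bra{\widetilde\alpha_2}\Pi_{\Sym{(m+1)\kappa}}\ket{\widetilde\alpha_1}$ vanish \emph{exactly}, and then bounds the Good part by computing the largest eigenvalue of $\Poperator{m}{n}=\Pi_{\GoodTilde{m\kappa}{n\kappa}}\Pi_{\Sym{(m+1)\kappa}}\Pi_{\GoodTilde{m\kappa}{n\kappa}}$ through an explicit eigenbasis $\BasisGoodSymTilde{m\kappa}{n\kappa}$. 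You bypass all of this machinery: expanding directly in the product basis and applying Cauchy--Schwarz per type immediately yields the bound $\sum_{\vec i}r((T(\vec i))_0)\,|c_{\vec i}|^2$, and no cross-term cancellation is ever needed. What you gain is brevity and elementarity---the five auxiliary lemmas of the paper collapse into one Cauchy--Schwarz and a monotonicity check on $r$. What the paper's approach buys is a more structural picture (the eigenstructure of $\Poperator{m}{n}$ is identified, and the Good/Bad subspace framework is reused in the later security-against-sabotage proofs), at the cost of substantially more setup.
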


The full proof is given on \cpageref{pf:prop:opt_attack}.

The proof follows by combining the security guarantees of the underlying $\prqc$ scheme along with some algebraic results that we are going to see in the next lemmas.
\paragraph*{} For every $m,n \in \NN$ such that $m > n$, let \nom{G}{$\Good{m\kappa}{n\kappa}$}{Subspace of $m\kappa$ registers states on which $\prqc.\Count$ is $\leq n\kappa$ with probability $1$}, \nom{G}{$\GoodTilde{m\kappa}{n\kappa}$}{Subspace of $(\kappa+m\kappa)$ register states such that the quantum state of the first $\kappa$ registers is $\ket{\cent}$ and the state of the rest $m\kappa$ register is a vector in $\Good{m\kappa}{n\kappa}$}, \nom{B}{$\Bad{m\kappa}{n\kappa}$}{$(\Good{m\kappa}{n\kappa})^{\perp}$} and  \nom{B}{$\BadTilde{m\kappa}{n\kappa}$}{Subspace of $m\kappa$ registers such that the state of the first $\kappa$ registers is $\ket{\cent}$ and the state of the last $m\kappa$ registers is some state in $\Bad{m\kappa}{n\kappa}$} and  be subspaces defined as 
\begin{align}
    \Good{m\kappa}{n\kappa} &:= \{ \ket{\psi} \in (\mathbb{H})^{m\kappa}| \Pr[\prqc.\Count(\sk, \ketbra{\psi}) \leq n\kappa] = 1 \},\\
    \GoodTilde{m\kappa}{n\kappa}  &:= \{\ket{\cent}\tensor \ket{\psi}| \ket{\psi} \in \Good{m\kappa}{n\kappa}\}.\\
    \Bad{m\kappa}{n\kappa} &:= (\Good{m\kappa}{n\kappa})^{\perp},\\
    \BadTilde{m\kappa}{n\kappa} &:= \{\ket{\cent}\tensor\ket{\psi}~|~\ket{\psi} \in \Bad{m\kappa}{n\kappa}\}.
    \label{eq:subspace definition}
\end{align}
Since we assume that $\prqc.\verify$ is a rank-$1$ projective measurement $(\ketbra{\mill}, I - \ketbra{\mill})$, $\Good{m\kappa}{n\kappa}$ is essentially the span of all the states with at least $(m-n)\kappa$ out of the $m\kappa$ registers having quantum state orthogonal to $\ket{\mill}$ and $\mathbb{{\GoodTilde{m\kappa}{n\kappa}}}$ is the subspace of all $(\kappa + m\kappa)$ registers such that the quantum state of the first $\kappa$ registers is $\ket{\cent}$ and the state of the rest $m\kappa$ register is a vector in $\Good{m\kappa}{n\kappa}$. Similarly, the subspace $\BadTilde{m\kappa}{n\kappa}$ consists of all $(\kappa + m\kappa)$ registers such that the quantum state of the first $\kappa$ registers is $\ket{\cent}$ and the state of the rest $m\kappa$ register is a vector in $\Bad{m\kappa}{n\kappa}$. Since we assume that the underlying $\prqc$ scheme is \nauf (see \cref{definition:nonadapt_flex_unforge}), if any QPT adversary that in the unforgeability game (Game~\ref{game:nonadapt_unforge_strongest}) against $\pkqc$  takes $n$ public coins, and submits $m$ (such that $n\leq m$) alleged coins for public verification, then the quantum state of the submitted coins for public verification, must have an overwhelming overlap (squared) with $\Good{m\kappa}{n\kappa}$ and negligible overlap (squared) with $\Bad{m\kappa}{n\kappa}$. Hence, the quantum state of the submitted coins has overwhelming overlap (squared) with $\Good{m\kappa}{n\kappa}$ and negligible overlap with $\Bad{m\kappa}{n\kappa}$. 
Every vector in $\GoodTilde{m\kappa}{n\kappa}$ (resp. $\BadTilde{m\kappa}{n\kappa}$) represents the combined state of the verifier's wallet (initialized to $\ket{\cent}$) and a $\kappa m$ register state in ${\Good{m\kappa}{n\kappa}}$ (resp.  $\Bad{m\kappa}{n\kappa}$) submitted by the adversary, just before the
$\pkqc.\Count_{\ket{\cent}}()$ operation (see \cref{line:Count} in \cref{alg:ts}).
\paragraph*{}Clearly the subspaces $\BadTilde{m\kappa}{n\kappa}$ and $\GoodTilde{m\kappa}{n\kappa}$ are orthogonal spaces. It follows from the definition that for every $m,n \in \NN$ and $m>n$,
\begin{equation}\label{eq:dollarspacesubspacelargerspace}
    \GoodTilde{m\kappa}{n\kappa} \subset \Good{\kappa + m\kappa}{\kappa + n\kappa}.
\end{equation}
The relation between the subspaces $\Good{(m+1)\kappa}{(n+1)\kappa}$, $\GoodTilde{m\kappa}{n\kappa}$, $\Bad{(m+1)\kappa}{(n+1)\kappa}$ and $\BadTilde{m\kappa}{n\kappa}$ is described in \cref{fig:pictures_subspaces}. The subspace $\Ima(\Pi_{\Sym{(m+1)\kappa}}\cdot \Pi_{\GoodTilde{m\kappa}{n\kappa}})$, the image of $\GoodTilde{m\kappa}{n\kappa}$ under $\Pi_{\Sym{(m+1)\kappa}}$, is also of great importance and its relation with the good and bad subspaces are also shown in the figure. The following few lemmas prove that this is indeed the case.
\begin{figure}[ht]
\includegraphics[scale=0.374]{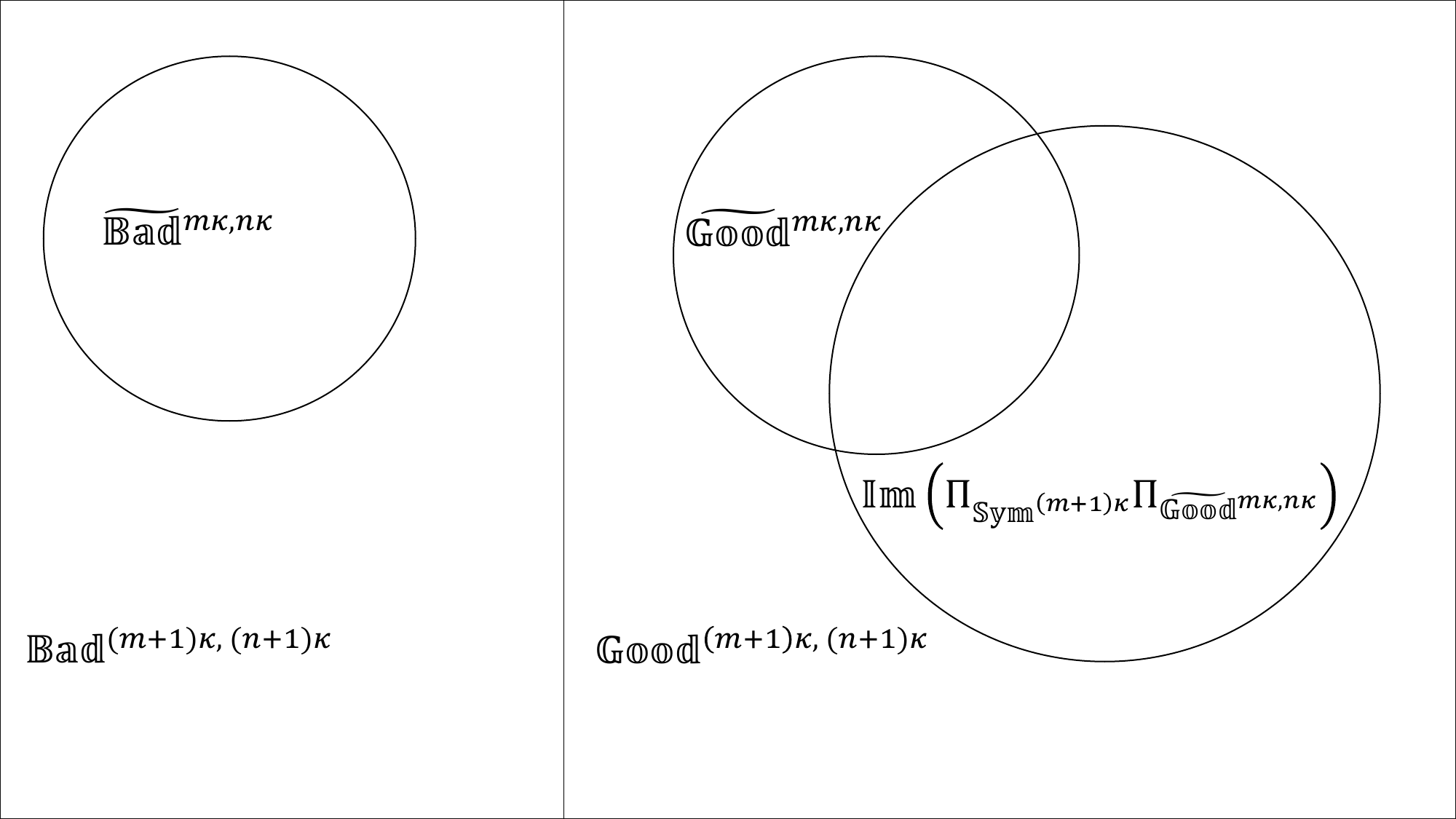}
\caption{In this figure, we see the relation between the different subspaces. The space $\mathbb{H}^{(m+1)\kappa}$ represented by the entire large rectangle is decomposed as the direct sum of the spaces $\Good{(m+1)\kappa}{(n+1)\kappa}$ and $\Bad{(m+1)\kappa}{(n+1)\kappa}$ represented by the left and right rectangles respectively. The subspace labeled $\Ima(\Pi_{\Sym{(m+1)\kappa}}\Pi_{\GoodTilde{m\kappa}{n\kappa}})$ in the figure, is the image of the operator $\Pi_{\Sym{(m+1)\kappa}}\Pi_{\GoodTilde{m\kappa}{n\kappa}}$.  
}
\label{fig:pictures_subspaces}
\end{figure}



\begin{lemma}\label{lemma:ASubperp} 
$\Bad{m\kappa}{n\kappa}$ is the same as the subspace \[\{\ket{\psi} \in (\mathbb{H})^{m\kappa}| \Pr[\prqc.\Count(\sk, \ketbra{\psi}) > n\kappa] = 1\},\] where $\Good{m\kappa}{n\kappa}$ is as defined in \cref{eq:subspace definition}. Moreover, for any $m\kappa$ register state $\ket{\alpha} := a_1\ket{\alpha_1} + a_2\ket{\alpha_2}$ such that $\ket{\alpha_1} \in \Good{m\kappa}{n\kappa}$ and $\ket{\alpha_2} \in \Bad{m\kappa}{n\kappa}$, \begin{equation}\Pr[\prqc.\Count(\sk, \ket{\alpha}) > n\kappa] = |a_2|^2.\end{equation}
\end{lemma}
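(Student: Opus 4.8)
The plan is to recast $\prqc.\Count$ as a single projective measurement built out of a commuting family of rank-$1$ projectors, to identify $\Good{m\kappa}{n\kappa}$ and $\Bad{m\kappa}{n\kappa}$ with the kernel and the image of the projector that records the event ``more than $n\kappa$ registers pass,'' and then to read off both assertions from elementary facts about orthogonal projections.

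First I would fix the measurement description. By completeness of $\prqc$ together with the standing assumption that $\prqc.\verify$ is a rank-$1$ projective measurement, this measurement must be $(\ketbra{\mill},\, I-\ketbra{\mill})$ acting on a single register, with acceptance corresponding to the outcome $\ketbra{\mill}$. For $T\subseteq\{1,\dots,m\kappa\}$ define the operator $P_T:=\bigotimes_{k=1}^{m\kappa}\Pi^{(k)}$ on $\mathbb{H}^{\otimes m\kappa}$, where $\Pi^{(k)}=\ketbra{\mill}$ for $k\in T$ and $\Pi^{(k)}=I-\ketbra{\mill}$ otherwise. The $P_T$ are orthogonal projectors that are pairwise orthogonal and sum to the identity, and since the single-register verifications act on disjoint registers and hence commute, executing $\prqc.\verify$ on the $m\kappa$ registers one after another yields the outcome ``exactly the registers in $T$ accept'' with probability $\tr(P_T\rho)$, independently of the order of the individual calls; consequently $\prqc.\Count(\sk,\rho)$ outputs $|T|$ with probability $\tr(P_T\rho)$. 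Writing $Q:=\sum_{|T|>n\kappa}P_T$, which is again an orthogonal projector, we obtain for every unit vector $\ket{\psi}$ that $\Pr[\prqc.\Count(\sk,\ketbra{\psi})>n\kappa]=\|Q\ket{\psi}\|^2$ and $\Pr[\prqc.\Count(\sk,\ketbra{\psi})\le n\kappa]=\|(I-Q)\ket{\psi}\|^2$.

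Next I would pin down the subspaces. For a unit vector $\ket{\psi}$, the defining condition $\Pr[\prqc.\Count(\sk,\ketbra{\psi})\le n\kappa]=1$ is equivalent to $\|(I-Q)\ket{\psi}\|=1$, i.e.\ to $Q\ket{\psi}=0$; since $\ker Q$ is a linear subspace this shows $\Good{m\kappa}{n\kappa}=\ker Q=\Ima(I-Q)$, and therefore $\Bad{m\kappa}{n\kappa}=(\Good{m\kappa}{n\kappa})^{\perp}=(\ker Q)^{\perp}=\Ima Q$ (using $Q^{\dagger}=Q$). Now $\Ima Q=\{\ket{\psi}:Q\ket{\psi}=\ket{\psi}\}$, and for a unit vector this is precisely the set for which $\|Q\ket{\psi}\|^2=1$, i.e.\ $\Pr[\prqc.\Count(\sk,\ketbra{\psi})>n\kappa]=1$; the only nonobvious direction uses the Pythagorean identity $\|\ket{\psi}\|^2=\|Q\ket{\psi}\|^2+\|(I-Q)\ket{\psi}\|^2$, which forces $(I-Q)\ket{\psi}=0$. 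This establishes the first claim.

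Finally, for the decomposition formula I would simply compute: writing $\ket{\alpha}=a_1\ket{\alpha_1}+a_2\ket{\alpha_2}$ with $\ket{\alpha_1}\in\Good{m\kappa}{n\kappa}=\ker Q$ and $\ket{\alpha_2}\in\Bad{m\kappa}{n\kappa}=\Ima Q$ (both unit vectors), we get $Q\ket{\alpha}=a_1Q\ket{\alpha_1}+a_2Q\ket{\alpha_2}=a_2\ket{\alpha_2}$, hence $\Pr[\prqc.\Count(\sk,\ket{\alpha})>n\kappa]=\|Q\ket{\alpha}\|^2=|a_2|^2$. The whole argument is routine once the register-by-register count is rewritten through the commuting family $\{P_T\}$; the only point that needs a genuine (though short) argument is exactly this rewriting, namely that the sequential collapses induced by the individual verifications do not disturb the distribution of the count, which is immediate from the pairwise orthogonality and commutativity of the $P_T$.
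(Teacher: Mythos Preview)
Your proof is correct and follows essentially the same approach as the paper's. Both arguments identify the event ``$\prqc.\Count>n\kappa$'' with a single orthogonal projector on $\mathbb{H}^{\otimes m\kappa}$ and then read off both assertions from that; the paper simply asserts $\Pr[\prqc.\Count(\sk,\ketbra{\psi})\le n\kappa]=\bra{\psi}\Pi_{\Good{m\kappa}{n\kappa}}\ket{\psi}$ directly, whereas you make this explicit by building $Q=\sum_{|T|>n\kappa}P_T$ from the commuting register-wise projectors and showing $\Good{m\kappa}{n\kappa}=\ker Q$.
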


The proof is given on \cpageref{pf:lemma:ASubperp}.
\paragraph*{}

Let \nom{P}{$\Pi_{\BadTilde{m\kappa}{n\kappa}}$}{Projection on to $\BadTilde{m\kappa}{n\kappa}$} and  \nom{P}{$\Pi_{\GoodTilde{m\kappa}{n\kappa}}$}{projection on to $\GoodTilde{m\kappa}{n\kappa}$} denote the projection operators on to the subspaces $\BadTilde{m\kappa}{n\kappa}$ and $\GoodTilde{m\kappa}{n\kappa}$ respectively (see \cref{eq:subspace definition} for the definition of $\GoodTilde{m\kappa}{n\kappa}$). 
The following holds:
\begin{lemma}\label{lemma:Bad-Good}
For every $m,n \in \NN$ and $m> n$,
\[\Pi_{\BadTilde{m\kappa}{n\kappa}}\Pi_{\Sym{(m+1)\kappa}}\Pi_{\GoodTilde{m\kappa}{n\kappa}} = 0,\]
where $\Pi_{\Sym{(m+1)\kappa}}$ is the projection on to the symmetric subspace $\Sym{m\kappa}$ (see \cref{item:symmetric_sub_proj_def} in \cref{subsec:notations}).
\end{lemma}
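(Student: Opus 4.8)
The plan is to carry everything out in the product (occupation-number) basis $\{\otimes_{k=1}^{N}\ket{\phi_{i_k}}\}_{\vec{i}\in\ZZ_d^{N}}$ fixed in the notations, exploiting that $\prqc.\verify$ is the rank-$1$ measurement $(\ketbra{\mill}, I-\ketbra{\mill})$ with $\ket{\mill}=\ket{\phi_0}$. First I would record the combinatorial description of the Good/Bad subspaces implied by \cref{lemma:ASubperp}: for any $\ell>k\geq 0$, the event ``$\prqc.\Count\leq k\kappa$'' on $\ell\kappa$ registers is a projective measurement whose projector is exactly the orthogonal projection onto the span of those product basis vectors $\ket{\phi_{i_1}\cdots\phi_{i_{\ell\kappa}}}$ having at most $k\kappa$ of the indices $i_j$ equal to $0$ (using $I-\ketbra{\mill}=\sum_{l\geq 1}\ketbra{\phi_l}$). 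Hence $\Good{\ell\kappa}{k\kappa}$ is the span of those product basis vectors, and $\Bad{\ell\kappa}{k\kappa}$ is the span of the complementary product basis vectors, namely those with strictly more than $k\kappa$ zero indices; in particular both are spanned by product basis vectors and are orthogonal complements in $\mathbb{H}^{\ell\kappa}$.

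Next I would establish the two containments into the $((m+1)\kappa)$-register subspaces. Since $\ket{\cent}=\ket{\phi_0}^{\tensor\kappa}$ contributes exactly $\kappa$ further zero indices, tensoring on the left with $\ket{\cent}$ sends a product basis vector on $m\kappa$ registers with $\leq n\kappa$ zeros to one on $(m+1)\kappa$ registers with $\leq (n+1)\kappa$ zeros, and one with $>n\kappa$ zeros to one with $>(n+1)\kappa$ zeros. By the description above this gives $\GoodTilde{m\kappa}{n\kappa}\subseteq \Good{(m+1)\kappa}{(n+1)\kappa}$ (this is \cref{eq:dollarspacesubspacelargerspace}) and $\BadTilde{m\kappa}{n\kappa}\subseteq \Bad{(m+1)\kappa}{(n+1)\kappa}$.

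The key structural step is that $\Pi_{\Sym{(m+1)\kappa}}$ leaves $\Good{(m+1)\kappa}{(n+1)\kappa}$ invariant. For this I would use the type decomposition $\mathbb{H}^{(m+1)\kappa}=\bigoplus_{\vec{j}\in\mathcal{I}_{d,(m+1)\kappa}}V_{\vec{j}}$, where $V_{\vec{j}}$ is the span of the product basis vectors of type $\vec{j}$. Each $V_{\vec{j}}$ is permutation-invariant, hence invariant under $\Pi_{\Sym{(m+1)\kappa}}$ (which is the average of the register-permutation operators); concretely $\Pi_{\Sym{(m+1)\kappa}}$ restricted to $V_{\vec{j}}$ is the rank-$1$ projection onto $\ket{\BasisSym{(m+1)\kappa}_{\vec{j}}}$, cf.~\cite{Har13}. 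Since every vector in $V_{\vec{j}}$ has the same number $j_0$ of zero indices, $\Good{(m+1)\kappa}{(n+1)\kappa}=\bigoplus_{\vec{j}:\,j_0\leq (n+1)\kappa}V_{\vec{j}}$ is a union of these blocks and is therefore invariant under $\Pi_{\Sym{(m+1)\kappa}}$ (and so is its orthogonal complement $\Bad{(m+1)\kappa}{(n+1)\kappa}$).

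Finally I would assemble the pieces. By the invariance just proved together with the first containment, $\Ima\bigl(\Pi_{\Sym{(m+1)\kappa}}\Pi_{\GoodTilde{m\kappa}{n\kappa}}\bigr)\subseteq \Pi_{\Sym{(m+1)\kappa}}\bigl(\Good{(m+1)\kappa}{(n+1)\kappa}\bigr)\subseteq \Good{(m+1)\kappa}{(n+1)\kappa}$, which is orthogonal to $\Bad{(m+1)\kappa}{(n+1)\kappa}$ and hence, by the second containment, to $\BadTilde{m\kappa}{n\kappa}$. Therefore $\Pi_{\BadTilde{m\kappa}{n\kappa}}$ annihilates that image, i.e.\ $\Pi_{\BadTilde{m\kappa}{n\kappa}}\Pi_{\Sym{(m+1)\kappa}}\Pi_{\GoodTilde{m\kappa}{n\kappa}}=0$. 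I expect the only point that needs a little care is the type decomposition and block-diagonality of $\Pi_{\Sym{(m+1)\kappa}}$ — this is exactly Harrow's description of the symmetric subspace~\cite{Har13} — while the containments and the final assembly are routine.
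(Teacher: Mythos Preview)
Your proof is correct and follows essentially the same route as the paper's: both establish the two containments $\GoodTilde{m\kappa}{n\kappa}\subseteq\Good{(m+1)\kappa}{(n+1)\kappa}$ and $\BadTilde{m\kappa}{n\kappa}\subseteq\Bad{(m+1)\kappa}{(n+1)\kappa}$, then use invariance of $\Good{(m+1)\kappa}{(n+1)\kappa}$ under $\Pi_{\Sym{(m+1)\kappa}}$ to sandwich. The only cosmetic difference is that the paper packages the invariance as a separate commutation lemma $[\Pi_{\Good{(m+1)\kappa}{(n+1)\kappa}},\Pi_{\Sym{(m+1)\kappa}}]=0$ (\cref{lemma:ASub-symmetric_invariant_properties}) and proves it by a direct projector calculation, whereas you prove the same invariance inline via the type-block decomposition; the final assembly is identical.
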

The proof is given on \cpageref{pf:lemma:Bad-Good}.
It might seem that the operators $\Pi_{\Sym{(m+1)\kappa}}$ and $\Pi_{\GoodTilde{m\kappa}{n\kappa}}$ commute and since $\BadTilde{m\kappa}{n\kappa}$ and $\GoodTilde{m\kappa}{n\kappa}$ are orthogonal spaces, \cref{lemma:Bad-Good} follows. It is not hard to show that this is not the case and as shown in \cref{fig:pictures_subspaces}, \[\Ima(\Pi_{\Sym{(m+1)\kappa}}\Pi_{\GoodTilde{m\kappa}{n\kappa}}) \not\subset \GoodTilde{m\kappa}{n\kappa}.\] Hence, \[[\Pi_{\Sym{(m+1)\kappa}}, \Pi_{\GoodTilde{m\kappa}{n\kappa}}] \neq 0.\] 
 Therefore, in order to prove \cref{lemma:Bad-Good} we need a commutation property described in the next lemma. 

\paragraph*{}Let \nom{P}{$\Pi_{\Good{m\kappa}{n\kappa}}$}{projection on to $\Good{m\kappa}{n\kappa}$} be the projection operator on to $\Good{m\kappa}{n\kappa}$. The following holds.
\begin{lemma}\label{lemma:ASub-symmetric_invariant_properties}
For every $m,n \in \NN$ and $m>n$,
\[[\Pi_{\Good{m\kappa}{n\kappa}}, \Pi_{\Sym{m\kappa}}] = 0,\] where $\Pi_{\Sym{m\kappa}}$ is the projection onto the symmetric subspace over $m\kappa$ registers (see notations in \cref{subsec:notations}).
\end{lemma}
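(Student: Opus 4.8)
The plan is to show that the subspace $\Good{m\kappa}{n\kappa}$ is invariant under every permutation of its $m\kappa$ tensor factors, and then to invoke the elementary fact that the projector onto a permutation-invariant subspace commutes with the symmetrizer $\Pi_{\Sym{m\kappa}}$.

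First I would record the symmetric structure of $\Good{m\kappa}{n\kappa}$. Since $\prqc.\verify$ is the rank-$1$ projective measurement $(\ketbra{\mill}, I - \ketbra{\mill})$, the operation $\prqc.\Count(\sk,\cdot)$ on $m\kappa$ registers is nothing but applying this same two-outcome measurement to each register one by one and reporting how many registers yielded the outcome $\ket{\mill}$. In particular, the output distribution of $\prqc.\Count(\sk,\cdot)$ is unchanged if the $m\kappa$ registers are permuted. Writing $P_\sigma$ for the unitary on $\mathbb{H}^{\otimes m\kappa}$ that permutes the tensor factors according to a permutation $\sigma$ of $[m\kappa]$, this gives, for every state $\ket{\psi}$,
\[
\Pr[\prqc.\Count(\sk, P_\sigma\ketbra{\psi}P_\sigma^{\dagger}) \le n\kappa] \;=\; \Pr[\prqc.\Count(\sk, \ketbra{\psi}) \le n\kappa].
\]
By the definition of $\Good{m\kappa}{n\kappa}$ in \cref{eq:subspace definition}, this means $\ket{\psi}\in\Good{m\kappa}{n\kappa}$ if and only if $P_\sigma\ket{\psi}\in\Good{m\kappa}{n\kappa}$; that is, $\Good{m\kappa}{n\kappa}$ is $P_\sigma$-invariant for every $\sigma$. (Equivalently, in the basis of \cref{item:product_basis_definition}, $\Good{m\kappa}{n\kappa}$ is spanned by the product vectors $\bigotimes_{k=1}^{m\kappa}\ket{\phi_{i_k}}$ with at most $n\kappa$ of the indices $i_k$ equal to $0$, and $P_\sigma$ merely permutes this spanning set since it preserves the number of zero indices.)

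Next, since $P_\sigma$ is unitary, the $P_\sigma$-invariance of $\Good{m\kappa}{n\kappa}$ forces $(\Good{m\kappa}{n\kappa})^{\perp}$ to be $P_\sigma$-invariant as well; decomposing an arbitrary vector along $\Good{m\kappa}{n\kappa}\oplus(\Good{m\kappa}{n\kappa})^{\perp}$ then yields $\Pi_{\Good{m\kappa}{n\kappa}}P_\sigma = P_\sigma\Pi_{\Good{m\kappa}{n\kappa}}$ for every $\sigma$. Finally I would use the standard identity $\Pi_{\Sym{m\kappa}} = \frac{1}{(m\kappa)!}\sum_{\sigma}P_\sigma$ (the average over all $(m\kappa)!$ permutations of the registers) and average the commutation relation over $\sigma$, obtaining $[\Pi_{\Good{m\kappa}{n\kappa}}, \Pi_{\Sym{m\kappa}}]=0$, as required. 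The hypothesis $m>n$ plays no role beyond matching the subspaces used elsewhere; there is essentially no obstacle in this lemma — the only point that needs care is the observation that $\prqc.\Count$ treats the registers symmetrically, which is immediate once $\prqc.\verify$ is a single-register measurement applied register by register.
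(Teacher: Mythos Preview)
Your argument is correct, but it follows a different route from the paper. The paper's proof works by brute force in explicit bases: it writes $\Pi_{\Sym{m\kappa}}=\sum_{\vec{j}\in\mathcal{I}_{d,m\kappa}}\ketbra{\BasisSym{m\kappa}_{\vec{j}}}$ and $\Pi_{\Good{m\kappa}{n\kappa}}=\sum_{\vec{i}:(T(\vec{i}))_0\le n\kappa}\bigotimes_{k}\ketbra{\phi_{i_k}}$, then multiplies these two sums in both orders and checks the results coincide term by term. Your approach is more structural: you observe that $\Good{m\kappa}{n\kappa}$ is invariant under every register permutation $P_\sigma$ (because $\prqc.\Count$ applies the same single-register measurement everywhere), deduce $[P_\sigma,\Pi_{\Good{m\kappa}{n\kappa}}]=0$, and then average over $\sigma$ using $\Pi_{\Sym{m\kappa}}=\frac{1}{(m\kappa)!}\sum_\sigma P_\sigma$. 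Your argument is cleaner and makes transparent \emph{why} the commutation holds (permutation invariance of the type condition $(T(\vec{i}))_0\le n\kappa$), whereas the paper's computation verifies the identity directly but obscures the reason. Both are complete proofs; the only thing you rely on that the paper does not state explicitly at this point is the averaging formula for $\Pi_{\Sym{m\kappa}}$, but that is standard (it appears elsewhere in the paper, in the proof of \cref{thm:rational priv fair}, and in the reference~\cite{Har13}).
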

The proof is given on \cpageref{pf:lemma:ASub-symmetric_invariant_properties}.
\paragraph*{}From now onwards, we fix an arbitrary $m,n \in \NN$ and $m>n$. Recall $\Pi_{\GoodTilde{m\kappa}{n\kappa}}$, the projection operator on to $\GoodTilde{m\kappa}{n\kappa}$ (see \cref{eq:subspace definition} for the definition of $\GoodTilde{m\kappa}{n\kappa}$). Define, the operator \nom{P}{$\Poperator{m}{n}$}{$\Pi_{\GoodTilde{m\kappa}{n\kappa}} \Pi_{\Sym{(m+1)\kappa}} \Pi_{\GoodTilde{m\kappa}{n\kappa}}$} as follows:
\begin{equation}
\Poperator{m}{n}:=\Pi_{\GoodTilde{m\kappa}{n\kappa}} \Pi_{\Sym{(m+1)\kappa}} \Pi_{\GoodTilde{m\kappa}{n\kappa}},
\label{eq:P_defined_as_A_Sym_A}
\end{equation} where $\Pi_{\Sym{(m+1)\kappa}}$ is the projection onto the symmetric subspace over $(m+1)\kappa$ registers (see notations in \cref{subsec:notations}).

\begin{lemma}\label{lemma:restricted}
For every $m,n \in \NN$ and $m>n$, and for every $\ket{\beta} \in \Good{m\kappa}{n\kappa}$, 
\begin{equation}
\Pr[\pkqc.\Count_{\ket{\cent}}(\ket{\beta}) = m] \leq  \lambda_{\text{max}}(\Poperator{m}{n}),
\end{equation} where $\Poperator{m}{n}$ is as defined in \cref{eq:P_defined_as_A_Sym_A}. 
\end{lemma}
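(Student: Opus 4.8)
The plan is to reduce the quantity $\Pr[\pkqc.\Count_{\ket{\cent}}(\ket{\beta}) = m]$ to a single expectation value of $\Poperator{m}{n}$ in the state $\ket{\widetilde{\beta}} = \ket{\cent}\tensor\ket{\beta}$, and then bound that expectation value by $\lambda_{\text{max}}(\Poperator{m}{n})$. Here we take $\ket{\beta}$ to be a unit vector.

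First I would unpack what $\pkqc.\Count_{\ket{\cent}}(\ket{\beta}) = m$ means. The $\Count$ procedure initializes the wallet to $\ket{\cent}$ ($\kappa$ registers) and runs $\pkqc.\verify$ on the $m\kappa$ registers of $\ket{\beta}$ one coin at a time, so the event "count $= m$" is the event that every one of the $m$ symmetric-subspace measurements accepts. For $i = 1,\ldots,m$, the $i$-th measurement is $\{\Pi_{\Sym{(i+1)\kappa}}\tensor I,\ I - \Pi_{\Sym{(i+1)\kappa}}\tensor I\}$ acting on the first $(i+1)\kappa$ registers of the combined wallet-plus-coins system, whose initial state is $\ket{\widetilde{\beta}}$ on $(m+1)\kappa$ registers. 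Hence the probability that all $m$ accept is
\[\left\| \Pi_{\Sym{(m+1)\kappa}}\,(\Pi_{\Sym{m\kappa}}\tensor I)\cdots(\Pi_{\Sym{2\kappa}}\tensor I)\,\ket{\widetilde{\beta}} \right\|^2.\]

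The key simplification is the nesting of symmetric subspaces: any state that is fully symmetric over $(m+1)\kappa$ registers is in particular invariant under the permutations that touch only the first $j\kappa$ registers, so $\Sym{(m+1)\kappa}\subseteq \Sym{j\kappa}\tensor\mathbb{H}^{(m+1-j)\kappa}$ and therefore $(\Pi_{\Sym{j\kappa}}\tensor I)\,\Pi_{\Sym{(m+1)\kappa}} = \Pi_{\Sym{(m+1)\kappa}}$ for every $j \le m+1$; taking adjoints, $\Pi_{\Sym{(m+1)\kappa}}\,(\Pi_{\Sym{j\kappa}}\tensor I) = \Pi_{\Sym{(m+1)\kappa}}$. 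Applying this repeatedly from the left end, the whole product collapses to $\Pi_{\Sym{(m+1)\kappa}}$, so the acceptance probability equals $\| \Pi_{\Sym{(m+1)\kappa}}\ket{\widetilde{\beta}}\|^2 = \bra{\widetilde{\beta}}\Pi_{\Sym{(m+1)\kappa}}\ket{\widetilde{\beta}}$, using that $\Pi_{\Sym{(m+1)\kappa}}$ is an orthogonal projection.

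Finally, since $\ket{\beta}\in\Good{m\kappa}{n\kappa}$, by definition $\ket{\widetilde{\beta}} = \ket{\cent}\tensor\ket{\beta}\in\GoodTilde{m\kappa}{n\kappa}$, so $\Pi_{\GoodTilde{m\kappa}{n\kappa}}\ket{\widetilde{\beta}} = \ket{\widetilde{\beta}}$. Inserting two copies of this projection,
\[\bra{\widetilde{\beta}}\Pi_{\Sym{(m+1)\kappa}}\ket{\widetilde{\beta}} = \bra{\widetilde{\beta}}\,\Pi_{\GoodTilde{m\kappa}{n\kappa}}\Pi_{\Sym{(m+1)\kappa}}\Pi_{\GoodTilde{m\kappa}{n\kappa}}\,\ket{\widetilde{\beta}} = \bra{\widetilde{\beta}}\Poperator{m}{n}\ket{\widetilde{\beta}} \le \lambda_{\text{max}}(\Poperator{m}{n}),\]
the last step because $\ket{\widetilde{\beta}}$ is a unit vector and $\Poperator{m}{n}$ is Hermitian. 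This is the claimed bound. The only real subtlety, and the one place I would be careful, is the bookkeeping in the first step: that the "accept all $m$ times" event is exactly the composition of projections onto the progressively larger symmetric subspaces (with the post-measurement residues tracked correctly), after which the nesting identity makes every intermediate projection irrelevant. Everything else is a one-line manipulation; no use of the unforgeability hypothesis on $\prqc$ is needed for this particular lemma, as that enters only later when $\lambda_{\text{max}}(\Poperator{m}{n})$ is estimated.
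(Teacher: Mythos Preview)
Your proof is correct and follows essentially the same approach as the paper. The paper's proof of this lemma is a four-line computation that starts directly from $\Pr[\pkqc.\Count_{\ket{\cent}}(\ket{\beta}) = m] = \bra{\widetilde{\beta}}\Pi_{\Sym{(m+1)\kappa}}\ket{\widetilde{\beta}}$, then inserts the $\Pi_{\GoodTilde{m\kappa}{n\kappa}}$ projections and bounds by the largest eigenvalue; the collapse of the sequential projections to a single $\Pi_{\Sym{(m+1)\kappa}}$ is justified elsewhere in the paper (in the proof of the optimality proposition, via exactly the nesting argument you spell out), so you have simply made that step self-contained here.
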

The proof is given on \cpageref{pf:lemma:restricted}.
The next lemma estimates the largest eigenvalue of $\Poperator{m}{n}$.
\begin{lemma}\label{lemma:eigenbasis}
For every $m,n \in \NN$ and $m>n$, \[\lambda_{\text{max}}(\Poperator{m}{n})= \frac{\binom{m\kappa}{n\kappa}}{\binom{(m+1)\kappa}{(n+1)\kappa}},\] where $\Poperator{m}{n}$ is as defined in \cref{eq:P_defined_as_A_Sym_A}.
\end{lemma}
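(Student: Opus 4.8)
The plan is to identify $\lambda_{\text{max}}(\Poperator{m}{n})$ with the maximal squared overlap with the symmetric subspace over unit vectors of $\GoodTilde{m\kappa}{n\kappa}$, reduce that maximization to \emph{symmetric} vectors in the last $m\kappa$ registers, and then diagonalize in the type basis. First note that $\Poperator{m}{n}=\Pi_{\GoodTilde{m\kappa}{n\kappa}}\Pi_{\Sym{(m+1)\kappa}}\Pi_{\GoodTilde{m\kappa}{n\kappa}}$ is positive semidefinite with image contained in $\GoodTilde{m\kappa}{n\kappa}$; since $\Pi_{\GoodTilde{m\kappa}{n\kappa}}\ket{\psi}=\ket{\psi}$ for $\ket\psi\in\GoodTilde{m\kappa}{n\kappa}$, we get $\bra{\psi}\Poperator{m}{n}\ket{\psi}=\norm{\Pi_{\Sym{(m+1)\kappa}}\ket{\psi}}^2$, hence
\[
\lambda_{\text{max}}(\Poperator{m}{n})=\max\left\{\norm{\Pi_{\Sym{(m+1)\kappa}}\ket{\psi}}^2 ~:~ \ket{\psi}\in\GoodTilde{m\kappa}{n\kappa},~\norm{\psi}=1\right\}.
\]

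Next I would reduce to symmetric vectors. Since $\Sym{(m+1)\kappa}\subseteq \Sym{\kappa}\tensor\Sym{m\kappa}$, we have $\Pi_{\Sym{(m+1)\kappa}}=\Pi_{\Sym{(m+1)\kappa}}(\Pi_{\Sym{\kappa}}\tensor\Pi_{\Sym{m\kappa}})$. Applying this to a unit vector $\ket{\cent}\tensor\ket{\psi}\in\GoodTilde{m\kappa}{n\kappa}$ (with $\ket{\psi}\in\Good{m\kappa}{n\kappa}$), and using that $\ket{\cent}=\ket{\phi_0}^{\tensor\kappa}$ is already symmetric, gives $\Pi_{\Sym{(m+1)\kappa}}(\ket{\cent}\tensor\ket{\psi})=\Pi_{\Sym{(m+1)\kappa}}(\ket{\cent}\tensor\Pi_{\Sym{m\kappa}}\ket{\psi})$. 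By \cref{lemma:ASub-symmetric_invariant_properties}, $\Pi_{\Sym{m\kappa}}$ preserves $\Good{m\kappa}{n\kappa}$, so writing $\ket{\phi}=\Pi_{\Sym{m\kappa}}\ket{\psi}/\norm{\Pi_{\Sym{m\kappa}}\ket{\psi}}$ (a unit vector of $\Sym{m\kappa}\cap\Good{m\kappa}{n\kappa}$) we obtain $\norm{\Pi_{\Sym{(m+1)\kappa}}(\ket{\cent}\tensor\ket{\psi})}^2\le\norm{\Pi_{\Sym{(m+1)\kappa}}(\ket{\cent}\tensor\ket{\phi})}^2$. Therefore the maximum above equals the maximum of $\norm{\Pi_{\Sym{(m+1)\kappa}}(\ket{\cent}\tensor\ket{\phi})}^2$ over unit $\ket{\phi}\in\Sym{m\kappa}\cap\Good{m\kappa}{n\kappa}$.

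Now expand $\ket{\phi}$ in the orthonormal type basis $\BasisSym{m\kappa}$: because $\prqc.\verify$ is the rank-$1$ measurement $(\ketbra{\phi_0},I-\ketbra{\phi_0})$, the space $\Sym{m\kappa}\cap\Good{m\kappa}{n\kappa}$ is spanned exactly by those $\ket{\BasisSym{m\kappa}_{\vec j}}$ with $\vec j\in\mathcal{I}_{d,m\kappa}$ and $j_0\le n\kappa$. A direct count of product-basis terms (every term of $\ket{\cent}\tensor\ket{\BasisSym{m\kappa}_{\vec j}}$ has the fixed type $\vec j\,'=(j_0+\kappa,j_1,\ldots,j_{d-1})$) shows
\[
\Pi_{\Sym{(m+1)\kappa}}\bigl(\ket{\cent}\tensor\ket{\BasisSym{m\kappa}_{\vec j}}\bigr)=\sqrt{\tfrac{\binom{m\kappa}{\vec j}}{\binom{(m+1)\kappa}{\vec j\,'}}}\;\ket{\BasisSym{(m+1)\kappa}_{\vec j\,'}}.
\]
Since $\vec j\mapsto\vec j\,'$ is injective and the $\ket{\BasisSym{(m+1)\kappa}_{\vec j\,'}}$ are orthonormal, for $\ket{\phi}=\sum_{\vec j:\,j_0\le n\kappa}c_{\vec j}\ket{\BasisSym{m\kappa}_{\vec j}}$ with $\sum|c_{\vec j}|^2=1$ we get $\norm{\Pi_{\Sym{(m+1)\kappa}}(\ket{\cent}\tensor\ket{\phi})}^2=\sum_{\vec j}|c_{\vec j}|^2\,\binom{m\kappa}{\vec j}/\binom{(m+1)\kappa}{\vec j\,'}$, a convex combination of the numbers $\binom{m\kappa}{\vec j}/\binom{(m+1)\kappa}{\vec j\,'}=\prod_{r=1}^{\kappa}\frac{j_0+r}{m\kappa+r}$. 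Hence the maximum is the largest such number; as it is increasing in $j_0$, it is attained at $j_0=n\kappa$ (e.g. $\vec j=(n\kappa,(m-n)\kappa,0,\ldots,0)$, which lies in $\mathcal{I}_{d,m\kappa}$ since $m>n$), giving $\prod_{r=1}^{\kappa}\frac{n\kappa+r}{m\kappa+r}=\frac{((n+1)\kappa)!\,(m\kappa)!}{(n\kappa)!\,((m+1)\kappa)!}=\frac{\binom{m\kappa}{n\kappa}}{\binom{(m+1)\kappa}{(n+1)\kappa}}$, as claimed; the optimal eigenvector is $\ket{\BasisSymTilde{m\kappa}_{(n\kappa,(m-n)\kappa,0,\ldots,0)}}$, matching the attack of \cref{alg:opt_attack}.

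The only step that is not pure bookkeeping is the reduction to symmetric $\ket{\phi}$, i.e. showing the non-symmetric component of a vector in $\GoodTilde{m\kappa}{n\kappa}$ cannot help; this rests on $\Pi_{\Sym{(m+1)\kappa}}(\Pi_{\Sym{\kappa}}\tensor\Pi_{\Sym{m\kappa}})=\Pi_{\Sym{(m+1)\kappa}}$ combined with the invariance of $\Good{m\kappa}{n\kappa}$ under $\Pi_{\Sym{m\kappa}}$ supplied by \cref{lemma:ASub-symmetric_invariant_properties}. Everything else is a multinomial-coefficient identity.
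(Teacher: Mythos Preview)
Your proof is correct and follows essentially the same route as the paper's: both reduce to the subspace $\GoodSymTilde{m\kappa}{n\kappa}=\ket{\cent}\tensor(\Sym{m\kappa}\cap\Good{m\kappa}{n\kappa})$ via \cref{lemma:ASub-symmetric_invariant_properties} and the inclusion $\Sym{(m+1)\kappa}\subset\mathbb H^{\tensor\kappa}\tensor\Sym{m\kappa}$, compute the projection $\Pi_{\Sym{(m+1)\kappa}}(\ket{\cent}\tensor\ket{\BasisSym{m\kappa}_{\vec j}})$ in the type basis, simplify the resulting ratio to $\prod_{r=1}^{\kappa}\frac{j_0+r}{m\kappa+r}$, and maximize at $j_0=n\kappa$. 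The only cosmetic difference is that the paper phrases the reduction as ``$\BasisGoodSymTilde{m\kappa}{n\kappa}$ is an eigenbasis spanning $\ker(\Poperator{m}{n})^\perp$'' and reads off eigenvalues, whereas you use the Rayleigh-quotient formulation and a convex-combination argument; these are two packagings of the same linear-algebraic content.
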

Note the r.h.s in the lemma above is equal to the success probability of the attack described in \cref{alg:opt_attack}, see \cref{eq:probability_opt_attack} and the discussion below it regarding how this term should be interpreted (essentially, it can be close to $1$ even for $m=n+1$ and polynomial $n$, and converges to $1$ in the large $n$ limit). The proof is given on \cpageref{pf:lemma:eigenbasis}.

Next, we will see a proof of \cref{prop:opt_attack} using \cref{lemma:ASubperp,lemma:Bad-Good,lemma:ASub-symmetric_invariant_properties,lemma:restricted,lemma:eigenbasis}.
\begin{proof}[Proof of \cref{prop:opt_attack}]\label{pf:prop:opt_attack}

We assume that a QPT adversary $\adv$ receives $n$ public coins from the bank. Of course, the bank generates these coins using $\pkqc.\bank$ in \cref{alg:ts} (recall that by construction, the state is the same as $n\kappa$ private coins). It submits $m$ alleged public coins, which is a $\kappa m$-register state which we denote be $\ket{\alpha}$ (it would become clear that the assumption that the submitted state is a pure state is WLOG later). 
Since the verifier's wallet is initialized with one fresh public coin, $\ket{\cent}$ (see $\pkqc.\Count$ in  \cref{line:Count} in  \cref{alg:ts}),  the total state of the wallet and the $m$ alleged new coins submitted by the adversary should be\[\ket{\widetilde{\alpha}} := \ket{\cent}\tensor \ket{\alpha}.\]



Express $\ket{\alpha}$ as $(a_1 \ket{\alpha_1} + a_2\ket{\alpha_2})$ such that \[\ket{\alpha_1} \in \Good{m\kappa}{n\kappa}, \ket{\alpha_2} \in \Bad{m\kappa}{n\kappa}\text{ and }\sum_{i=1}^2 |a_i|^2 = 1.\]see \cref{eq:subspace definition} for the definition of $\Good{m\kappa}{n\kappa}$.
By \cref{lemma:ASubperp}, \[\Pr[\prqc.\Count(\sk, \ket{\alpha}) > n\kappa] = |a_2|^2.\]
Therefore by the nonadaptive unforgeability (see \cref{definition:adapt_flex_unforge}) of the underlying $\prqc$ scheme (the private coin scheme that we lift to $\pkqc$ in \cref{alg:ts}), there exists a negligible function $\negl$ such that
\begin{equation}\label{eq:forbidden_prob}
\Pr[\prqc.\Count(\sk, \ket{\alpha}) > n\kappa] = |a_2|^2 = \negl.  
\end{equation}
Note that if the underlying $\prqc$ scheme is \nauuf (see \cref{definition:adapt_flex_unforge} and \cref{definition:unconditional_unforgeability}), then \cref{eq:forbidden_prob} holds even if $\adv$ is computationally unbounded.
Let $\ket{\widetilde{\alpha}_1} := \ket{\cent}\tensor \ket{\alpha_1}$ and similarly define $\ket{\widetilde{\alpha}_2}$. 
Hence,
\begin{align}
\ket{\widetilde{\alpha}} = \ket{\cent}\tensor \ket{\alpha} = \ket{\cent}\tensor (a_1 \ket{\alpha_1} + a_2\ket{\alpha_2}) = a_1 \ket{\widetilde{\alpha}_1} + a_2\ket{\widetilde{\alpha}_2}.
\end{align}

By definition, 
\begin{equation}\label{eq:dollarelements}
\ket{\widetilde{\alpha}_1} \in \GoodTilde{m\kappa}{n\kappa}, \ket{\widetilde{\alpha}_2} \in \BadTilde{m\kappa}{n\kappa}.\end{equation}
Therefore,
\begin{align}
    \Pi_{\BadTilde{m\kappa}{n\kappa}}&(\Pi_{\Sym{m\kappa}}\ket{\widetilde{\alpha}_1})\\
    &= \Pi_{\BadTilde{m\kappa}{n\kappa}}\Pi_{\Sym{m\kappa}}\Pi_{\GoodTilde{m\kappa}{n\kappa}}\ket{\widetilde{\alpha}_1}\\
    &= 0. &\text{By \cref{lemma:Bad-Good}}
\end{align}
Hence, \[\Pi_{\Sym{m\kappa}}\ket{\widetilde{\alpha}_1} \in (\BadTilde{m\kappa}{n\kappa})^\perp.\]
Since $\ket{\widetilde{\alpha}_2} \in \BadTilde{m\kappa}{n\kappa}$ (see \cref{eq:dollarelements}), the states $\Pi_{\Sym{m\kappa}}\ket{\widetilde{\alpha}_1}$ and $\ket{\widetilde{\alpha}_2}$ are mutually orthogonal and hence, the following holds:

\begin{equation}\label{eq:cross_terms_vanish}
     \tr(\Pi_{\Sym{(m+1)\kappa}}\ket{\widetilde{\alpha}_2}\bra{\widetilde{\alpha_1}}) = \overline{\tr(\Pi_{\Sym{(m+1)\kappa}}\ket{\widetilde{\alpha}_1}\bra{\widetilde{\alpha_2}})}=\overline{\bra{\widetilde{\alpha_2}}\Pi_{\Sym{(m+1)\kappa}}\ket{\widetilde{\alpha}_1}}=0.
\end{equation}

The symmetric subspace over $(m+1)\kappa$ registers is the subspace over all $(m+1)\kappa$-register pure states which are invariant under any permutation of the registers. Clearly, any state in the symmetric subspace over $(m+1)\kappa$ register must remain invariant under an arbitrary permutation of the last $m\kappa$ registers (keeping the first $\kappa$ registers intact) since any permutation on the last $m\kappa$ registers is also a permutation of the entire $(m+1)\kappa$ (which does nothing to the first $\kappa$ registers). Therefore, the symmetric subspace over $(m+1)\kappa$ register is a subspace of the symmetric subspace over $m\kappa$ registers, i.e.,
\begin{equation}\label{eq:symsub_inclusion} \Sym{(m+1)\kappa} \subset \mathbb{H}^{\tensor{\kappa}} \tensor \Sym{m\kappa}.\end{equation}
Hence, for any state $\ket{\psi}$,
\begin{align}\label{eq:trace_collapse}
 \Pr&[\pkqc.\Count_{\ket{\cent}}(\ket{\psi})) = m]\\
&= \tr\left(\Pi_{\Sym{(m+1)\kappa}} (\Pi_{\Sym{m\kappa}}\tensor I_{\kappa})\cdots (\Pi_{\Sym{2\kappa}} \tensor I_{(m-1)\kappa}) \ketbra{\cent \tensor \psi}\right)\\
&=  \tr(\Pi_{\Sym{(m+1)\kappa}} \ketbra{\cent \tensor \psi}) . 
\end{align}

Therefore, the success probability of $\adv$, i.e., the probability that all $m$ coins pass verification is:

\begin{align}
 \Pr&[\pkqc.\Count_{\ket{\cent}}(\ket{\alpha})) = m]\\
&=  \tr(\Pi_{\Sym{(m+1)\kappa}} \ket{\widetilde{\alpha}}\bra{\widetilde{\alpha}}) &\text{By \cref{eq:trace_collapse}}\\
& = \tr(\Pi_{\Sym{(m+1)\kappa}} (|a_1|^2\ket{\widetilde{\alpha}_1}\bra{\widetilde{\alpha}_1}) +  |a_2|^2\ket{\widetilde{\alpha}_2}\bra{\widetilde{\alpha}_2})\\
+ &\Pi_{\Sym{(m+1)\kappa}} (a_1 \bar{a_2}\ket{\widetilde{\alpha}_1}\bra{\widetilde{\alpha}_2}) +  a_2 \bar{a_1}\ket{\widetilde{\alpha}_2}\bra{\widetilde{\alpha}_2}))\\
&= |a_1|^2 \tr(\Pi_{\Sym{(m+1)\kappa}} (\ket{\widetilde{\alpha}_1}\bra{\widetilde{\alpha}_1})\\
 + &|a_2|^2\tr(\Pi_{\Sym{(m+1)\kappa}} (\ket{\widetilde{\alpha}_2}\bra{\widetilde{\alpha}_2}) + 0 &\text{By \cref{eq:cross_terms_vanish}}\\
&\leq \tr(\Pi_{\Sym{(m+1)\kappa}} (\ket{\widetilde{\alpha}_1}\bra{\widetilde{\alpha}_1}) + |a_2|^2\\
&= \Pr[\pkqc.\Count_{\ket{\cent}}(\ket{\alpha_1})) = m] + |a_2|^2  &\text{By \cref{eq:trace_collapse}}\\ 
&= \Pr[\pkqc.\Count_{\ket{\cent}}(\ket{\alpha_1})) = m] + \negl &\text{By \cref{eq:forbidden_prob}}\\
&\leq \lambda_{\text{max}}(\Poperator{m}{n}) + \negl &\text{(By \cref{lemma:restricted}}\\
&&\text{since $\ket{\alpha_2} \in {\Good{m\kappa}{n\kappa}}$)}\\
&= \frac{\binom{m\kappa}{n\kappa}}{\binom{(m+1)\kappa}{(n+1)\kappa}} + \negl &\text{By \cref{lemma:eigenbasis}}\\
&= \Pr[\pkqc.\Count_{\ket{\cent}}(\ket{\BasisSym{m\kappa}_{(n\kappa, (m-n)\kappa, 0\ldots,0)}}) = m] \\
&+ \negl, &\text{By \cref{eq:probability_opt_attack}}
\end{align}
where $\negl$ is the negligible function, used in \cref{eq:forbidden_prob}.

 Note that $\ket{\BasisSym{m\kappa}_{(n\kappa, (m-n)\kappa, 0\ldots,0)}}$ is the same as the state submitted in \cref{alg:opt_attack} and hence, we are done with the proof for the case in which the adversary submits a pure state. The proof can be easily extended to the general case when $\adv$ submits a mixed state using a standard convexity argument. By the nonadaptive unforgeability (see \cref{definition:adapt_flex_unforge}) of the underlying $\prqc$ scheme (the private scheme that we lift to $\pkqc$ in \cref{alg:ts}), the ensemble submitted by the adversary must have overwhelming overlap with ${\Good{m\kappa}{n\kappa}}$. Note that, every ensemble or a mixed state is a convex combination of pure states. Therefore, up to some negligible correction, the optimal success probability for the submitted mixed state to pass verification for all $m$ coins, is bounded by bounded by $\lambda_{\text{max}}(\Poperator{m}{n})$, by \cref{lemma:restricted}). This along with  \cref{lemma:eigenbasis} concludes the proof for the first statement of the proposition.


We now prove the ``Moreover'' part of the proposition.
The only place in the proof where we might need computational assumptions on $\adv$ is \cref{eq:forbidden_prob} depending on whether the underlying $\prqc$ scheme is \nauf (see \cref{definition:adapt_flex_unforge}) only against QPT adversaries or computationally unbounded adversaries. Hence, if the underlying $\prqc$ scheme is \nauuf (see \cref{definition:adapt_flex_unforge} and \cref{definition:unconditional_unforgeability}) then the attack described in \cref{alg:opt_attack} will be optimal even for computationally unbounded nonadaptive adversaries as well, who get $n$ public coins from the mint and submit $m$ alleged public coins ($m,n \in \poly$ and $m>n$).

\end{proof}


Finally, we will see proofs of  \cref{lemma:ASubperp,lemma:Bad-Good,lemma:ASub-symmetric_invariant_properties,lemma:restricted,lemma:eigenbasis} which completes the proof of  \cref{prop:opt_attack} and our discussion regarding optimal $n$ to $m$ nonadaptive attacks ($m,n \in \poly$ and $m > n$) on the scheme described in  \cref{alg:ts}.

\begin{proof}[Proof of \cref{lemma:ASubperp}]\label{pf:lemma:ASubperp}
Since $\prqc.\verify$ is a projective measurement \[\{\ketbra{\mill}, I - \ketbra{\mill}\},\] and $\Good{m\kappa}{n\kappa}$ is as defined in \cref{eq:subspace definition}, 
\begin{equation}\label{eq:count-restricted-projections}
    \Pr[\prqc.\Count(\sk, \ketbra{\psi}) \leq n\kappa] = \bra{\psi}\Pi_{\Good{m\kappa}{n\kappa}}\ket{\psi}.
\end{equation}
where \nom{P}{$\Pi_{\Good{m\kappa}{n\kappa}}$}{projection on to the subspace $\Good{m\kappa}{n\kappa}$} is the projection on to the subspace $\Good{m\kappa}{n\kappa}$.
Hence,
\begin{align}\label{eq:count-forbidden-projections}
    \Pr&[\prqc.\Count(\sk, \ketbra{\psi}) > n\kappa]\\ &= 1 - \Pr[\prqc.\Count(\sk, \ketbra{\psi}) \leq n\kappa]\\ &= \bra{\psi}(I - \Pi_{\Good{m\kappa}{n\kappa}})\ket{\psi}\\ &= \bra{\psi}\Pi_{(\Good{m\kappa}{n\kappa})^{\perp}}\ket{\psi}\\ &= \bra{\psi}\Pi_{\Bad{m\kappa}{n\kappa}}\ket{\psi}.
\end{align}
where \nom{P}{$\Pi_{\Bad{m\kappa}{n\kappa}}$}{projection on to the subspace $\Bad{m\kappa}{n\kappa}$} is the projection on to the subspace $\Bad{m\kappa}{n\kappa}$. Hence, for any $\ket{\psi} \in (\mathbb{H})^{m\kappa}$, 
\begin{equation}
    \ket{\psi} \in \Bad{m\kappa}{n\kappa} \iff \Pr[\prqc.\Count(\sk, \ketbra{\psi}) > n\kappa] = 1.
\end{equation}
Therefore, \[\Bad{m\kappa}{n\kappa} = \{ \ket{\psi} \in (\mathbb{H})^{m\kappa}| \Pr[\prqc.\Count(\sk, \ketbra{\psi}) > n\kappa] = 1 \}.\]
Moreover, for any $m\kappa$ register state $\ket{\alpha} := a_1\ket{\alpha_1} + a_2\ket{\alpha_2}$ such that $\ket{\alpha_1} \in \Good{m\kappa}{n\kappa}$ and $\ket{\alpha_2} \in \Bad{m\kappa}{n\kappa}$,
\begin{align}
    \Pr[\prqc.\Count(\sk, \ket{\alpha}) > n\kappa] &= \bra{\alpha}\Pi_{\Bad{m\kappa}{n\kappa}}\ket{\alpha}& \text{(By \cref{eq:count-forbidden-projections})}\\
    &= |a_2|^2. 
\end{align}
\end{proof}

\begin{proof}[Proof of \cref{lemma:Bad-Good}]\label{pf:lemma:Bad-Good}
By \cref{eq:dollarspacesubspacelargerspace} for every $m,n \in \NN$ and $m>n$,
\begin{equation}\label{eq:good-projec-relations}
\Pi_{\GoodTilde{m\kappa}{n\kappa}}=\Pi_{\Good{(m+1)\kappa}{(n+1)\kappa}}\Pi_{\GoodTilde{m\kappa}{n\kappa}}.
\end{equation}
By \cref{lemma:ASubperp}, we know that 
\[\Bad{m\kappa}{n\kappa} = \{\ket{\psi} \in (\mathbb{H})^{m\kappa}| \Pr[\prqc.\Count(\sk, \ketbra{\psi}) > n\kappa] = 1\}.\] Therefore by the definition of $\BadTilde{m\kappa}{n\kappa}$,
\[\BadTilde{m\kappa}{n\kappa} \subset \{\ket{\psi} \in (\mathbb{H})^{(m+1)\kappa}| \Pr[\prqc.\Count(\sk, \ketbra{\psi}) > (n+1)\kappa] = 1\}.\]
Hence, by \cref{lemma:ASubperp},
\begin{equation}\label{eq:dollar_perp_subspaces}
    \BadTilde{m\kappa}{n\kappa} \subset \Bad{(m+1)\kappa}{(n+1)\kappa}.
\end{equation}
Therefore,
\begin{equation}\label{eq:bad-projec-relations}
    \Pi_{\BadTilde{m\kappa}{n\kappa}}\Pi_{\Good{(m+1)\kappa}{(n+1)\kappa}}= \Pi_{\BadTilde{m\kappa}{n\kappa}}\Pi_{(\Bad{(m+1)\kappa}{(n+1)\kappa})^{\perp}} = 0.
\end{equation}
The rest of the proof follows by combining \cref{eq:good-projec-relations,eq:bad-projec-relations} with \cref{lemma:ASub-symmetric_invariant_properties}.
\begin{align}
    &\Pi_{\BadTilde{m\kappa}{n\kappa}}\Pi_{\Sym{(m+1)\kappa}}\Pi_{\GoodTilde{m\kappa}{n\kappa}}\\
    &=\Pi_{\BadTilde{m\kappa}{n\kappa}}\Pi_{\Sym{(m+1)\kappa}}\Pi_{\Good{(m+1)\kappa}{(n+1)\kappa}}\Pi_{\GoodTilde{m\kappa}{n\kappa}}&\text{By \cref{eq:good-projec-relations}}\\
    &=\Pi_{\BadTilde{m\kappa}{n\kappa}}\Pi_{\Good{(m+1)\kappa}{(n+1)\kappa}}\Pi_{\Sym{(m+1)\kappa}}\Pi_{\GoodTilde{m\kappa}{n\kappa}}&\text{By \cref{lemma:ASub-symmetric_invariant_properties}}\\
    &=0.&\text{By \cref{eq:bad-projec-relations}}
\end{align}
\end{proof}
\begin{proof}[Proof of \cref{lemma:ASub-symmetric_invariant_properties}]\label{pf:lemma:ASub-symmetric_invariant_properties}

For every $m$, recall the basis $\BasisSym{m\kappa}$ (see \cref{eq:basis_def} in \cref{subsec:notations}) for the symmetric subspace $\Sym{m\kappa}$. Therefore,
\begin{equation}\label{eq:projection-sym-basis-definition}
    \Pi_{\Sym{m\kappa}} = \sum_{\vec{j} \in \mathcal{I}_{d, m\kappa}} \ketbra{\BasisSym{m\kappa}_{\vec{j}}}.
\end{equation}
Recall \cref{eq:basis_states_def}, 
\[\ket{\BasisSym{m\kappa}}_{\vec{j}} = \frac{1}{\sqrt{\binom{m\kappa}{\vec{j}}}}\sum_{\vec{i}: T(\vec{i}) = \vec{j}} \ket{\phi_{i_1}\ldots\phi_{i_{m\kappa}}}.\]
Moreover the set, \[\{\bigotimes_{k = 1}^{m\kappa} \ket{\phi_{i_k}}\}_{(i_1,\ldots i_{m\kappa}) \in (\ZZ_d)^{m\kappa}, (T(\vec{i}))_{0} \leq n\kappa },\] forms as an orthonormal basis for $\Good{m\kappa}{n\kappa}$ for every $m,n \NN$ and $m>n$ (see \cref{item:product_basis_definition} in \cref{subsec:notations}).  Hence,
\begin{align}\label{eq:projection-good-basis-definition}
    \Pi_{\Good{m\kappa}{n\kappa}} = \sum_{\substack{\vec{i} \in (\ZZ_d)^{m\kappa}\\ (T(\vec{i}))_{0} \leq n\kappa}}\bigotimes_{k = 1}^{m\kappa} \ketbra{\phi_{i_k}}.\\
\end{align}
Therefore,
\begin{align}
    &\Pi_{\Sym{m\kappa}}\Pi_{\Good{m\kappa}{n\kappa}}\\
    &=\Pi_{\Sym{m\kappa}} \left(\sum_{\substack{\vec{i} \in (\ZZ_d)^{m\kappa}\\ (T(\vec{i}))_{0} \leq n\kappa}}\bigotimes_{k = 1}^{m\kappa} \ketbra{\phi_{i_k}}\right)\\
    &=\frac{1}{\sqrt{\binom{m\kappa}{T(\vec{i})}}}\sum_{\substack{\vec{i} \in (\ZZ_d)^{m\kappa}\\ (T(\vec{i}))_{0} \leq n\kappa}}\ket{\BasisSym{m\kappa}_{T(\vec{i})}}\bra{\phi_{i_1}\ldots\phi_{i_{m\kappa}}}\\
    &=\sum_{\substack{\vec{i},\vec{r} \in (\ZZ_d)^{m\kappa}\\T(\vec{i})=T(\vec{r}) \\(T(\vec{i}))_{0}=(T(\vec{r}))_{0} \leq n\kappa}}\frac{1}{\binom{m\kappa}{T(\vec{i})}}\ket{\phi_{r_1}\ldots\phi_{r_{m\kappa}}}\bra{\phi_{i_1}\ldots\phi_{i_{m\kappa}}}\\
    &=\frac{1}{\sqrt{\binom{m\kappa}{T(\vec{r})}}}\sum_{\substack{\vec{r} \in (\ZZ_d)^{m\kappa}\\ (T(\vec{r}))_{0} \leq n\kappa}} \ket{\phi_{r_1}\ldots\phi_{r_{m\kappa}}}\bra{\BasisSym{m\kappa}_{T(\vec{r})}}\\
    &=\Pi_{\Good{m\kappa}{n\kappa}}\left(\sum_{\vec{j} \in \mathcal{I}_{d, m\kappa}} \ketbra{\BasisSym{m\kappa}_{\vec{j}}}\right)\\
    &=\Pi_{\Good{m\kappa}{n\kappa}}\Pi_{\Sym{m\kappa}}.
\end{align}
This concludes the proof.

\end{proof}

\begin{proof}[Proof of  \cref{lemma:restricted}]\label{pf:lemma:restricted}
Fix $m,n \in \NN$ such that $m>n$. Let the state submitted for $\pkqc.\Count_{\ket{\cent}}()$ operation be $\ket{\beta} \in \Good{m\kappa}{n\kappa}$. The state of the verifier's wallet along with the new registers just before the $\pkqc.\Count_{\ket{\cent}}()$ operation involving symmetric subspace measurement (see \cref{line:Count} in  \cref{alg:ts}) is $\ket{\widetilde{\beta}} := \ket{\cent}\tensor \ket{\beta} \in \GoodTilde{m\kappa}{n\kappa}$. 
Therefore,
\begin{align}
\Pr[\pkqc.\Count_{\ket{\cent}}(\ket{\beta}) = m] &= \bra{\widetilde{\beta}}\Pi_{\Sym{(m+1)\kappa}}\ket{\widetilde{\beta}}  \\
&= \bra{\widetilde{\beta}}\Pi_{\GoodTilde{m\kappa}{n\kappa}}^{\dagger} \Pi_{\Sym{(m+1)\kappa}} \Pi_{\GoodTilde{m\kappa}{n\kappa}} \ket{\widetilde{\beta}} & \text{ (since $\ket{\beta} \in \GoodTilde{m\kappa}{n\kappa}$)} \\
&= \bra{\widetilde{\beta}}\Pi_{\GoodTilde{m\kappa}{n\kappa}} \Pi_{\Sym{(m+1)\kappa}} \Pi_{\GoodTilde{m\kappa}{n\kappa}} \ket{\widetilde{\beta}} \\
&\leq \lambda_{\text{max}}(\Poperator{m}{n}). & \text{(see  \cref{eq:P_defined_as_A_Sym_A})}
\end{align}
\end{proof}

\begin{proof}[Proof of  \cref{lemma:eigenbasis}]\label{pf:lemma:eigenbasis}
Fix $m,n \in \NN$ such that $m>n$. 
In order to estimate $\lambda_{\text{max}}(\Poperator{m}{n})$ we will find a set of orthonormal eigenvectors with non-zero eigenvalues, such that the vectors span $\ker(\Poperator{m}{n})^{\perp}$, the subspace where $\Poperator{m}{n}$ acts non-trivially (where $\Poperator{m}{n}$ is as defined in \cref{eq:P_defined_as_A_Sym_A}).

Let the sets \nom{G}{$\BasisGoodSym{m\kappa}{n\kappa}$}{vectors in $\BasisSym{m\kappa}$, which are in the subspace $\Good{m\kappa}{n\kappa}$} and \nom{G}{$\BasisGoodSymTilde{m\kappa}{n\kappa}$}{set of $(m+1)\kappa$ register states obtained by tensoring $\ket{\cent}$ with every vector in $\BasisGoodSym{m\kappa}{n\kappa}$} be defined as: 
\begin{align}\label{eq:eigenbasis_def}
\BasisGoodSym{m\kappa}{n\kappa} &:= \{\ket{\BasisSym{m\kappa}_{\vec{j}}}\}_{\vec{j} \in \mathcal{I}_{d, m\kappa}: j_0 \leq n\kappa}, \\ 
\BasisGoodSymTilde{m\kappa}{n\kappa} &:= \{\ket{\BasisSymTilde{m\kappa}_{\vec{j}}}\}_{\vec{j} \in \mathcal{I}_{d, m\kappa}: j_0 \leq n\kappa}.\\
\end{align}
 For the definitions of $\mathcal{I}_{d, m\kappa}$, $\ket{\BasisSym{m\kappa}_{\vec{j}}}$ and $\ket{\BasisSymTilde{m\kappa}_{\vec{j}}}$, see Notations   \cref{eq:basis_states_def} and  \cref{eq:basis_def} in \cref{subsec:notations}.
 Clearly, $\BasisGoodSym{m\kappa}{n\kappa}$ is a subset of the basis, $\BasisSym{m\kappa}$ (see \cref{subsec:notations}) and hence, is an orthogonal set. Therefore, $\BasisGoodSymTilde{m\kappa}{n\kappa}$ is also an orthogonal set.

We will prove the lemma by proving these parts:
\begin{enumerate} 
    \item \label{item:W_n_spans_ker_P_perp}$\BasisGoodSymTilde{m\kappa}{n\kappa}$ spans $\ker(\Poperator{m}{n})^\perp$. Hence, $Span(\BasisGoodSymTilde{m\kappa}{n\kappa})^{\perp} \subset \ker(\Poperator{m}{n}).$
    \item \label{item:W_n_eigenvectors_P}$\BasisGoodSymTilde{m\kappa}{n\kappa}$ is a set of orthogonal eigenvectors of $\Poperator{m}{n}$ with positive eigenvalues.
    \item \label{item:lambda_max_P}$\ket{\BasisSymTilde{m\kappa}_{(n\kappa, (m-n)\kappa, 0\ldots,0)}}\in \BasisGoodSymTilde{m\kappa}{n\kappa}$ has the largest eigenvalue, $\lambda_{\text{max}}(\Poperator{m}{n})$, which is equal to $\frac{\binom{m\kappa}{n\kappa}}{\binom{(m+1)\kappa}{(n+1)\kappa}}$. 
\end{enumerate}
Note that \cref{item:lambda_max_P} proves the lemma. 

\paragraph*{\cref{item:W_n_spans_ker_P_perp}:}  Observe that, for every $\ket{\BasisSym{m\kappa}_{\vec{j}}} \in \BasisSym{m\kappa}$, \[\Pr[\prqc.\Count(\sk, \ket{\BasisSym{m\kappa}_{\vec{j}}}) = j_0] = 1.\] Hence, by definition, $\BasisGoodSym{m\kappa}{n\kappa}$ (see definition in \cref{eq:eigenbasis_def}) is the subset of those vectors from the basis $\BasisSym{m\kappa}$, which are in $\Good{m\kappa}{n\kappa}$ (see definition of $\Good{m\kappa}{n\kappa}$ in \cref{eq:subspace definition} and $\BasisGoodSym{m\kappa}{n\kappa}$ in \cref{eq:eigenbasis_def}). Moreover, for every $\ket{\BasisSym{m\kappa}_{\vec{j}}} \in \BasisSym{m\kappa} \setminus \BasisGoodSym{m\kappa}{n\kappa}$, 
\[\ket{\BasisSym{m\kappa}_{\vec{j}}} \in \{\ket{\psi} \in \mathbb{H}^{\tensor m\kappa} ~|~ \Pr[\prqc.\Count(\sk, \ket{\psi}) >n\kappa] = 1\}.\]
Therefore by \cref{lemma:ASubperp},
\[\ket{\BasisSym{m\kappa}_{\vec{j}}} \in \Bad{m\kappa}{n\kappa} = (\Good{m\kappa}{n\kappa})^{\perp}.\]
Therefore $\BasisGoodSym{m\kappa}{n\kappa}$ forms an orthonormal basis for $\Good{m\kappa}{n\kappa} \bigcap \Sym{m\kappa} =:$ \nom{G}{$\GoodSym{m\kappa}{n\kappa}$}{intersection of the symmetric subspace $\Sym{m\kappa}$ with ${\Good{m\kappa}{n\kappa}}$}. Hence, $\BasisGoodSymTilde{m\kappa}{n\kappa}$ forms an orthonormal basis for the subspace \nom{G}{$\GoodSymTilde{m\kappa}{n\kappa}$}{subspace of $(m+1)\kappa$ register states such that the quantum state of the first $\kappa$ registers is $\ket{\cent}$ and the state of the rest $m\kappa$ register is a vector in $\GoodSym{m\kappa}{n\kappa}$} defined as 
\begin{align}\label{eq:SymASub_def}
\GoodSymTilde{m\kappa}{n\kappa} &:= \{\ket{\cent}\tensor\ket{\beta}~|~\ket{\beta} \in \GoodSym{m\kappa}{n\kappa}\}\\ 
&= \GoodTilde{m\kappa}{n\kappa} \bigcap \left(\mathbb{H}^{\tensor \kappa} \tensor \Sym{m\kappa}\right),
\end{align} 
where $\GoodTilde{m\kappa}{n\kappa}$ is as defined in \cref{eq:subspace definition}. Essentially, $\GoodSym{m\kappa}{n\kappa}$ is the subspace of all symmetric states in $\Good{m\kappa}{n\kappa}$ (see the discussion below \cref{eq:subspace definition} to interpret $\Good{m\kappa}{n\kappa}$) and $\GoodSymTilde{m\kappa}{n\kappa}$ is the subspace consisting of all $(m+1)\kappa$ states such that the quantum state of the first $\kappa$ register is $\ket{\cent}$ and the state of the rest $m\kappa$ registers is a symmetric state in $\Good{m\kappa}{n\kappa}$. Therefore, it is enough (for \cref{item:W_n_spans_ker_P_perp}) to show that, \[\ker{\Poperator{m}{n}}^{\perp} \subset \GoodSymTilde{m\kappa}{n\kappa}.\] 

As discussed earlier in the proof of \cref{prop:opt_attack} (see \cref{eq:symsub_inclusion}), the symmetric subspace over $(m+1)\kappa$ register is a subspace of the symmetric subspace over $m\kappa$ registers, i.e.,
\[ \Sym{(m+1)\kappa} \subset \mathbb{H}^{\tensor{\kappa}} \tensor (\Sym{m\kappa}.\] Hence, 
\begin{equation}\label{eq:Poperator_rewrite}
\Pi_{\Sym{(m+1)\kappa}}\Pi_{\mathbb{H}^{\tensor\kappa}\tensor\Sym{m\kappa}} = \Pi_{\Sym{(m+1)\kappa}},
\end{equation} where \nom{P}{$\Pi_{\mathbb{H}^{\tensor\kappa}\tensor\Sym{m\kappa}}$}{projection on to $\mathbb{H}^{\tensor\kappa}\tensor\Sym{m\kappa}$} is the projection on to the subspace, $\mathbb{H}^{\tensor\kappa}\tensor\Sym{m\kappa}$. Note that the following commutation property holds, 
\begin{align}
&\Pi_{\mathbb{H}^{\tensor\kappa}\tensor\Sym{m\kappa}}\cdot \Pi_{\GoodTilde{m\kappa}{n\kappa}}\\
&= \mathbb{I}_\kappa\tensor\Pi_{\Sym{m\kappa}}\cdot (\ketbra{\cent}\tensor\Pi_{\Good{m\kappa}{n\kappa}})&\text{by definition of $\GoodTilde{m\kappa}{n\kappa}$, see \cref{eq:subspace definition}}\\
&= (\mathbb{I}_\kappa \cdot \ketbra{\cent}) \tensor (\Pi_{\Sym{m\kappa}}\cdot
\Pi_{\Good{m\kappa}{n\kappa}})\\
&= (\ketbra{\cent}\cdot \mathbb{I}_\kappa) \tensor (\Pi_{\Good{m\kappa}{n\kappa}}\cdot \Pi_{\Sym{m\kappa}})&\text{by \cref{lemma:ASub-symmetric_invariant_properties}}\\
&= \Pi_{\GoodTilde{m\kappa}{n\kappa}}\cdot \Pi_{\mathbb{H}^{\tensor\kappa}\tensor\Sym{m\kappa}}
    \label{eq:second_commutation}
\end{align}
Therefore,
\begin{align}
    \Poperator{m}{n}&=\Pi_{\GoodTilde{m\kappa}{n\kappa}}\Pi_{\Sym{(m+1)\kappa}}\Pi_{\GoodTilde{m\kappa}{n\kappa}}\\
    &=\Pi_{\GoodTilde{m\kappa}{n\kappa}}\Pi_{\Sym{(m+1)\kappa}}\Pi_{\mathbb{H}^{\tensor\kappa}\tensor\Sym{m\kappa}}\Pi_{\GoodTilde{m\kappa}{n\kappa}}&\text{by \cref{eq:Poperator_rewrite}}\\
    &=\Pi_{\GoodTilde{m\kappa}{n\kappa}}\Pi_{\Sym{(m+1)\kappa}}\Pi_{\GoodTilde{m\kappa}{n\kappa}}\Pi_{\mathbb{H}^{\tensor\kappa}\tensor\Sym{m\kappa}}.&\text{by \cref{eq:second_commutation}}.\\
\end{align}Hence, $(\mathbb{H}^{\tensor\kappa}\tensor\Sym{m\kappa})^\perp \subset \ker(\Poperator{m}{n})$ which is equivalent to \[\ker(\Poperator{m}{n})^\perp \subset \mathbb{H}^{\tensor\kappa}\tensor\Sym{m\kappa}.\] Similarly, since, $\Poperator{m}{n}=\Pi_{\GoodTilde{m\kappa}{n\kappa}}\Pi_{\Sym{(m+1)\kappa}}\Pi_{\GoodTilde{m\kappa}{n\kappa}}$, $\ker(\Poperator{m}{n})^\perp \subset \GoodTilde{m\kappa}{n\kappa}.$
Therefore, by the above two arguments,
\begin{align}
    \ker(\Poperator{m}{n})^\perp &\subset {\GoodTilde{m\kappa}{n\kappa}} \bigcap \left(\mathbb{H}^{\tensor \kappa}\tensor\Sym{m\kappa}\right)\\
    &= \GoodSymTilde{m\kappa}{n\kappa}. & \text{by \cref{eq:SymASub_def}}
\end{align}
\paragraph*{\cref{item:W_n_eigenvectors_P}:}Now, we will show that $\BasisGoodSymTilde{m\kappa}{n\kappa}$ is a set of orthogonal eigenvectors for $\Poperator{m}{n}$ with positive eigenvalues. \[\forall \vec{j} \in \mathcal{I}_{d, m\kappa}:  j_0 \leq n\kappa,\]  

\begin{align}\label{eq:eigen_vec}
    \Poperator{m}{n}\ket{\BasisSymTilde{m\kappa}_{\vec{j}}} &= \Pi_{\GoodTilde{m\kappa}{n\kappa}} \Pi_{\Sym{(m+1)\kappa}} \Pi_{\GoodTilde{m\kappa}{n\kappa}} \ket{\BasisSymTilde{m\kappa}_{\vec{j}}} \\
    &= \Pi_{\GoodTilde{m\kappa}{n\kappa}}\Pi_{\Sym{(m+1)\kappa}}\ket{\BasisSymTilde{m\kappa}_{\vec{j}}} \\
    &= \sqrt{\frac{\binom{m\kappa}{\vec{j}}}{\binom{(m+1)\kappa}{(j_0 + \kappa, j_1,\ldots, j_{d-1})}}}\Pi_{\GoodTilde{m\kappa}{n\kappa}}\ket{\BasisSym{(m+1)\kappa}_{(j_0 +\kappa, j_1\ldots j_{d-1})}} \\
    &= \frac{\binom{m\kappa}{\vec{j}}}{\binom{(m+1)\kappa}{(j_0 + \kappa,j_1,\ldots, j_{d-1})}}\ket{\BasisSymTilde{m\kappa}_{\vec{j}}}.\\
\end{align}
Therefore, $\ket{\BasisSymTilde{m\kappa}_{(j_0\ldots j_{d-1})}}$ is an eigenvector with eigenvalue
\[  \frac{\binom{m\kappa}{\vec{j}}}{\binom{(m+1)\kappa}{(j_0 + \kappa,j_1,\ldots, j_{d-1})}} \quad \forall \vec{j} \in \mathcal{I}_{d, m\kappa}:  j_0 \leq n\kappa.\]

\paragraph*{\cref{item:lambda_max_P}:}  \cref{item:W_n_spans_ker_P_perp} shows that all non-zero eigenvalues of $\Poperator{m}{n}$ must be in $Span(\BasisGoodSymTilde{m\kappa}{n\kappa})$. Clearly, $\Poperator{m}{n}$ is positive semidefinite by definition. Therefore, $\lambda_{\text{max}}(\Poperator{m}{n})$ is attained in $\BasisGoodSymTilde{m\kappa}{n\kappa}$, i.e., $\lambda_{\text{max}}(\Poperator{m}{n})$ is the eigenvalue for some eigenvector in  $\BasisGoodSymTilde{m\kappa}{n\kappa}$. 
By \cref{item:W_n_eigenvectors_P}, for every $\vec{j} \in \mathcal{I}_{d, m\kappa}$ the term for the corresponding eigenvalue is 
\begin{align}
&= \frac{\binom{m\kappa}{\vec{j}}}{\binom{(m+1)\kappa}{(j_0 + \kappa,j_1,\ldots, j_{d-1})}} \\
&= \frac{\binom{m\kappa}{j_0}\binom{m\kappa - j_0}{(j_1, j_2, \ldots j_{d-1})}}{\binom{(m+1)\kappa}{\kappa + j_0}\binom{m\kappa - j_0}{(j_1, j_2, \ldots j_{d-1})}}\\
&= \frac{\binom{m\kappa}{m\kappa - j_0}}{\binom{(m+1)\kappa}{m\kappa - j_0}}\\
&= \prod_{r=1}^\kappa \frac{j_0 + r}{m\kappa + r}.\\
\end{align}
Hence, the term for the eigenvalue increases with increasing $j_0$ and is independent of $j_1, j_2, \ldots j_{d-1}$. Since, $j_0 \leq n\kappa$, the maximum value is attained by all the vectors $\vec{j}$ in $\mathcal{I}_{d, m\kappa}$ such that $j_0 = n\kappa$.
Therefore, $\ket{\BasisSymTilde{m\kappa}_{(n\kappa, (m-n)\kappa,0\ldots,0)}}$ is one of the largest eigenvectors. Hence, 
\begin{equation}\label{eq:max_eigenvalue}
 \lambda_{\text{max}}(\Poperator{m}{n}) = \frac{\binom{m\kappa}{{n\kappa}}}{\binom{(m+1)\kappa}{(n+1)\kappa}}, \quad \text{the eigenvalue for $ \ket{\BasisSymTilde{m\kappa}_{(n\kappa, (m-n)\kappa,0\ldots,0)}}$}.
\end{equation}
\end{proof}



\section{Security proofs}\label{subsec:completeness_security proofs} In this subsection we use \cref{prop:opt_attack} to prove our lifting result, \cref{prop: rational-unforge}.

\begin{proof}[Proof of \cref{prop: rational-unforge}]\label{pf:prop: rational-unforge}
Let $\adv$ be an arbitrary nonadaptive adversary (QPT or computationally unbounded depending on the unforgeability guarantees of the underlying $\prqc$ scheme) in Game~\ref{game:nonadapt_unforge_strongest}. We denote $n$ to be the number of coins, the adversary $\adv$ receives from mint. Since $\adv$ is nonadaptive in Game~\ref{game:nonadapt_unforge_strongest}, either she submits alleged public coins for private verification or instead submits for public verification. 
Suppose, it chooses to submit $\tilde{m}$ coins for private verification, where $\tilde{m} \in \poly$. Let $\omega$ be the quantum state of the coins she submits. Therefore the all-or-nothing utility (see \cref{eq:all-or-nothing_nonadaptive_utility_def}) of the adversary $\adv$, \footnote{Since, we assume that the adversary only submits for private verification, the flexible nonadaptive utility, $\uflna(\uf^{\adv,\MS}_\secpar)$ (see \cref{eq:flex_nonadapt_utility_def}) is also the same as $\uanna(\uf^{\adv,\MS}_\secpar)$.}
\begin{equation}
 \uanna(\uf^{\adv,\MS}_\secpar)  = \pkqc.\Count_{bank}(\omega) - n.
\end{equation}
Therefore,
\begin{equation}\label{eq:pf_uanna}
\mathbb{E}(\uanna(\uf^{\adv,\MS}_\secpar)) = \mathbb{E}(\pkqc.\Count(\omega))-n.
\end{equation}
By definition of $\pkqc.\Count_{bank}$ given in \cref{line:count_bank} in \cref{alg:ts},
\begin{equation}
    \mathbb{E}(\pkqc.\Count_{bank}(\omega))=\frac{\mathbb{E}(\prqc.\Count(\sk,\omega)}{\kappa}.
\end{equation}
Substituting this in \cref{eq:pf_uanna}, we get,
\[\mathbb{E}(\uanna(\uf^{\adv,\MS}_\secpar)) = \frac{\mathbb{E}(\prqc.\Count(\sk, \omega))-n\kappa}{\kappa}.\]
Hence, it suffices to show that the numerator, $\mathbb{E}(\prqc.\Count(\sk, \omega))-n\kappa$ is bounded above by a negligible function. This follows from the unforgeability of the private scheme.
Since we assume the underlying scheme $\prqc$ is \nauf, and since $n$ public coins taken by $\adv$, is the same as $n\kappa$ private coins, there exists a negligible function $\negl$ such that,
\[\Pr[\prqc.\Count(\sk, \omega)-n\kappa>0] = \Pr[\prqc.\Count(\sk, \omega)>n\kappa] \leq \negl.\]
Hence, $\mathbb{E}(\prqc.\Count(\sk, \omega))-n\kappa$ is also bounded above by a negligible function.

\paragraph*{} Next, suppose the nonadaptive adversary $\adv$ submits $m$ alleged coins for public verification, after receiving $n$ public coins from the bank, such that $m,n\in \poly$.


According to the all-or-nothing nonadaptive utility definition (see \cref{eq:all-or-nothing_nonadaptive_utility_def}) with respect to Game~\ref{game:nonadapt_unforge_strongest}, either the public verification is successful and the verifier accepts all the $m$ coins, in which case the utility $U(\adv)$ is $(m-n)$. Otherwise, the verifier rejects and $\adv$'s utility is $-n$.

Since the underlying $\prqc$ scheme is \nauf (see \cref{definition:adapt_flex_unforge}), by \cref{prop:opt_attack}, the success probability of $\adv$, i.e., all the $m$ coins are accepted, is less than or equal to
\[\negl + \frac{\binom{m\kappa}{n\kappa}}{\binom{(m+1)\kappa}{(n+1)\kappa }},\] for some negligible function $\negl$. As discussed before, if the underlying $\prqc$ scheme is \nauuf (see \cref{definition:adapt_flex_unforge} and \cref{definition:unconditional_unforgeability}), then this holds for any computationally unbounded adversary. Therefore, the expected utility of the adversary $\adv$
\begin{align}\label{eq:rational_unforge_utility}
& \mathbb{E}(U(\adv))\\
&\leq \negl +  \frac{\binom{m\kappa}{n\kappa}}{\binom{(m+1)\kappa}{(n+1)\kappa }}.(m-n) + (1 - \frac{\binom{m\kappa}{n\kappa}}{\binom{(m+1)\kappa}{(n+1)\kappa }})(-n)\\
& = m\frac{\binom{m\kappa}{n\kappa}}{\binom{(m+1)\kappa}{(n+1)\kappa }} - n + \negl\\
&\leq m\left(\frac{n+1}{m+1}\right)^{\kappa} - n + \negl\\
&= \frac{m(n+1)^\kappa - n(m+1)^\kappa}{(m+1)^\kappa} + \negl\\
&\leq \frac{m-n}{(m+1)^\kappa} + \negl&\text{$m\geq n.$}\\
&\leq \frac{1}{(m+1)^{\kappa - 1}} + \negl\\
&\leq \frac{1}{2^{\kappa - 1}} + \negl&\text{since $m \geq 1$.}\\
&= \frac{1}{2^{\log^c(\secpar) - 1}}+ \negl&\text{since $\kappa=(\log(\secpar))^c$, $c>1$.}\\
&=\frac{2}{\secpar^{\log^{c-1}(\secpar)}}+ \negl\\
&= \text{negl}'(\secpar)&\text{since $c>1.$}\\
\end{align} 

Hence, if the underlying $\prqc$ scheme is \nauf (see \cref{definition:adapt_flex_unforge}) then the $\pkqc$ scheme is \arnauf (see \cref{definition:nonadapt_all-or-nothing_unforge}). Moreover, if the underlying $\prqc$ scheme is \nauuf (see \cref{definition:adapt_flex_unforge} and \cref{definition:unconditional_unforgeability}), then $\pkqc$ is unconditionally \arnauf (see \cref{definition:nonadapt_all-or-nothing_unforge} and \cref{definition:unconditional_unforgeability}). Note that 
 this holds even if the underlying $\prqc$ scheme is inefficient.
\end{proof}

\section{Discussion and future Work}\label{sec:future_work}
\label{sec:discussion}
Our techniques for the construction of public coins scheme are fairly simple and general. We hope that this could be relevant to other related notions in quantum cryptography, such as quantum copy-protection~\cite{Aar09,ALL+20,ALZ20,CMP20}, quantum tokens for digital signatures~\cite{BS16a}, disposable backdoors~\cite{CGLZ19}, secure software leasing~\cite{KNY20}, uncloneable quantum encryption~\cite{BL20} (see also~\cite{BI20}),  and one shot signatures~\cite{AGK+20}. 
The lifting technique used in our work to lift a private coin scheme to an almost public coin scheme is based on the general idea of comparison-based verification.
Can such a lifting technique be used in related topics, such as quantum copy-protection~\cite{Aar09}? 
E.g., suppose a software company issues several quantum states as replicas of a quantum copy-protected program, to its users. It might be possible to use comparison-based verification to avail second-hand transactions of such programs without going to the issuer for verification. However, in order to use comparison-based verification, it is crucial that the algorithm generating the copy-protected programs must always generate the same quantum state for a fixed program. This is an additional requirement which does not follow from the existing definition of quantum copy-protection~\cite{Aar09}.

Next, even though $\PRS$ are sufficient to build private quantum coins and, hence, private quantum money, it is unclear if their existence is necessary to build private quantum money. In fact, for private quantum money, we currently do not know of any microcrypt lower bound, i.e., a candidate primitive in microcrypt primitive, the existence of which is necessary to build private quantum money. This raises the following natural question. 
\begin{openproblem}\label{op:private-quantum money-necessary-assumption}
Is there a microcrypt primitive implied by private quantum money? 
\end{openproblem}
It is known that One-Way State Generators (OWSG), a weaker primitive than $\PRS$, are necessary to build private quantum \emph{coins}~\cite{MY22b}, but for general private quantum money, no such lower bound is known.




There are a number of candidate constructions of public quantum money, some of which are provably adaptively unforgeable in the strongest sense, based on strong assumptions such as indistinguishability obfuscation. However, all these constructions, to the best of our knowledge, are bill schemes, and there is no known candidate construction for public quantum coins,  which raises the following question.

\begin{openproblem}
    Is there a public quantum coins construction that is adaptively unforgeable based on generic assumptions?
\end{openproblem}

\ifab \vspace{-10pt} \fi 
\paragraph{Acknowledgments}
We would like to thank Zhengfeng Ji for helpful discussions. 
\BeforeBeginEnvironment{wrapfigure}{\setlength{\intextsep}{0pt}}
            \begin{wrapfigure}{r}{100px}
                \includegraphics[width=100px]{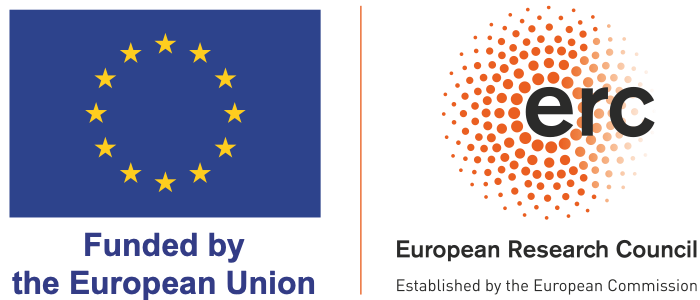}
            \end{wrapfigure}
            This work was funded by the European Union (ERC-2022-COG, ACQUA, 101087742). Views and opinions expressed are however those of the author(s) only and do not necessarily reflect those of the European Union or the European Research Council Executive Agency. Neither the European Union nor the granting authority can be held responsible for them.
\fi 

\ifnum\sigconf=1
    \bibliographystyle{ACM-Reference-Format}
\else
    \bibliographystyle{alphaabbrurldoieprint}
\fi
{

\ifnum\masterthesis=0
\bibliography{almost_public_quantum_coins_crypto}}
\fi



\iffv
    \appendix
    \ifnum\llncs=1
        \appendixpageoff
        \chapter*{Supplementary Material}
    \fi

\section{Security against sabotage}\label{appendix:fairness} In the context of money, we should also consider attacks which are intended to hurt honest users rather than forging money. 
 For example - suppose an adversary who starts with one fresh coin, gives you a coin which you accepted after successful verification, but later when you send this money to somebody else, it does not pass verification. Note that, the adversary did not manage to forge money, but was able to sabotage the honest user. Such attacks are called sabotage attacks (as discussed in~\cite{BS16a}).  

In public key quantum money schemes, usually unforgeability ensures that the users do not get cheated. However in case of quantum money scheme with comparison-based verification this implication is not very clear. Since the verification involves the quantum coins of the user's wallet, it is essential that the honest users (money receivers) do not end up having less valid money than what they should have had, due to transactions with adversarial merchants. For example - suppose an adversary starts with one fresh coin, and gives you a coin. You accept it after successful public verification using a wallet, which initially had one true coin. Later, you send both these money states (the one received and the one initially had) to somebody else, but only one of the two coins passes verification. 

For a comparison-based public quantum coin scheme with private verification, we split the discussion about sabotage attacks into two categories: security against private sabotage attacks and security against public sabotage attacks. We only discuss the nonadaptive model since in the context of our work, it suffices to consider only nonadaptive sabotage attacks. 

Next, we argue why our construction $\pkqc$ (given in \cref{alg:ts}) is not rationally secure against private sabotage in the flexible adaptive setting.
\begin{remark}\label{remark:flex-adapt-priv-sabotage}
The construction $\pkqc$ (given in \cref{alg:ts}) is not rationally flexible-adaptive secure against private sabotage.
\end{remark}
The proof sketch is as follows: The attack given in the proof of \cref{remark:adaptive_Attack} can also be seen as a private sabotage attack in the flexible adaptive setting. In that attack, if the verification of all the adversary's money succeeds, then the honest verifier, on doing a refund from the bank will get one coin less, and hence, loses one coin. If the verification fails at some stage, then due to the flexible setting, the verifier gains at most a coin, after doing refund. Since the success probability is large, the expected loss of the adversary is non-negligible.

\subsection{Security against private sabotage}\label{appendix: priv rational fair}
First, we discuss what happens when the user after receiving a transaction goes to the private verification for refund. We call sabotage attacks in such cases as \emph{private sabotage} attacks. 

In order to address private sabotage attacks, we follow the same way as we did to define unforgeability. We will define a security game and define a loss function for the honest verifier, that the adversary tries to maximize. We will call a comparison-based public quantum coin scheme (standard) with private verification, secure against private sabotage with respect to loss function defined in a security game if for every QPT adversary the probability that the loss is positive is negligible function. Similarly, it is rationally secure against private sabotage with respect to the loss function if for every QPT adversary, the expected loss is negligible.

In the next security game, Game~\ref{game:nonadapt_rational-fair}, $\verify_\sk$ represents the private verification, and $\Count_{refund}$ refers to the private count. We assume that in this particular security game, the private verification returns the post-verified state as well. The $\Count$ algorithm used in Game~\ref{game:nonadapt_rational-fair}, is the public count.
\begin{boxx}{\procedure[linenumbering, space=auto]{$\napf^{\adv,\MS}_{\secpar}$:}
{sk\gets\keygen(\secparam)\\
\rho_1,\ldots,\rho_m \xleftarrow{\text{$\rho_i$ can be potentially entangled}} \adv^{\bank(\sk), \verify_\sk()}(\secparam)\\
\rho\equiv (\rho_1,\ldots,\rho_m)\\
m' \gets \Count(\rho)\\
\text{Denote $\omega'$ be the state of the wallet after receiving the coins, respectively}\\
refund' \gets \Count_{refund}(\omega') \text{ \Comment{Refund of the wallet after accepting the coins.}}\\ 
\pcreturn m,m', refund'.
}
} 
\caption{Nonadaptive Security against private sabotage}
\label{game:nonadapt_rational-fair}
\end{boxx}

With respect to Game~\ref{game:nonadapt_rational-fair}, we define the following all-or-nothing loss function. Since, we only require this particular loss function from here onwards, we can write it as\\ $L(\napf^{\adv,\MS}_{\secpar})$. 

\begin{equation}\label{eq:loss_def}
     L(\napf^{\adv,\MS}_{\secpar}) = \begin{cases} 
                m + 1 - refund', &\text{if $m=m'$,}\\
                1-refund', &\text{otherwise.}
              \end{cases}
\end{equation}
Since, $\MS$ is a comparison-based scheme, we assume that the initially before receiving the adversary's money, the wallet has one fresh coin and hence has a worth of $1$.

From now onwards, for any adversary $\adv$ in Game~\ref{game:nonadapt_rational-fair}, we will use $L(\adv)$ to represent $L(\napf^{\adv,\MS}_{\secpar})$ in short.

Hence, \nom{L}{$L(\adv)$}{loss of the honest verifier due to $\adv$, in the context of nonadaptive security against private sabotage in Game~\ref{game:nonadapt_rational-fair}} represents how much loss the honest verifier had, due to $\adv$ in Game~\ref{game:multi_nonadapt_rational-fair}.

\begin{definition}[Rational Security against private sabotage]\label{definition:rational_fair}
A public money scheme $\MS$ is \narpf if for every QPA $\adv$ in Game~\ref{game:nonadapt_rational-fair}, there exists a negligible function $\negl$ such that
\begin{equation}
    \mathbb{E}(L(\napf^{\adv,\MS}_{\secpar})) \leq \negl,  
    \label{eq:rational-fairness}
\end{equation} where $L(\napf^{\adv,\MS}_{\secpar})$ (also represented as $L(\adv)$) is the loss defined in \cref{eq:loss_def}.
\end{definition}


\begin{theorem}\label{thm:rational priv fair}
The scheme $\pkqc$ in \cref{alg:ts}
, is unconditionally \narpf (see \cref{definition:rational_fair}), if the underlying $\prqc.\verify(\sk,\ldots)$ is a rank-$1$ projective measurement.
\end{theorem}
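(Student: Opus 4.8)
\textbf{Proof plan for \cref{thm:rational priv fair}.}

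The plan is to fix an arbitrary (computationally unbounded, but polynomially-query-bounded) adversary $\adv$ in Game~\ref{game:nonadapt_rational-fair} and show that $\mathbb{E}(L(\adv)) \leq \negl$, where $L(\adv)$ is the all-or-nothing loss from \cref{eq:loss_def}. The key structural observation is that, because $\prqc.\verify(\sk,\cdot)$ is a rank-$1$ projective measurement $(\ketbra{\mill}, I - \ketbra{\mill})$, the quantity $\prqc.\Count$ is essentially a counting observable on the private registers, and the number of valid private coins is \emph{monotone} under the symmetric-subspace measurement performed by $\pkqc.\verify$. This is exactly the content behind \cref{lemma:ASub-symmetric_invariant_properties}: the projection $\Pi_{\Good{n}{k}}$ onto ``at most $k$ valid private coins out of $n$ registers'' commutes with $\Pi_{\Sym{n}}$, so a symmetric-subspace measurement never destroys a valid private coin. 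Concretely, if the adversary submits $m$ alleged public coins (a $\kappa m$-register state $\omega$) and the honest verifier's receiving wallet held one fresh coin $\ket{\cent}$ (i.e.\ $\kappa$ valid private coins), then I would first argue that the wallet state $\omega'$ after the single $\pkqc.\verify$ call satisfies $\prqc.\Count(\sk,\omega') \geq \kappa + \prqc.\Count(\sk,\omega)$ with probability $1$, conditioned on acceptance; and more carefully, in the all-or-nothing regime, the expected number of valid private coins is conserved (symmetric-subspace projection is a projective measurement, so it does not change the expectation of a commuting observable, up to the conditioning issue which I treat by working with the unnormalized post-measurement state).

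Next I would split into the two branches of \cref{eq:loss_def}. On the branch $m \neq m'$ (verification rejects), the loss is $1 - refund'$ where $refund' = \Count_{refund}(\omega') = \prqc.\Count(\sk,\omega')/\kappa$ by the definition of $\pkqc.\Count_{bank}$ in \cref{line:count_bank} of \cref{alg:ts}. By the monotonicity above, $\prqc.\Count(\sk,\omega') \geq \kappa$ even on rejection (the fresh coin's $\kappa$ private coins are not lost, by \cref{lemma:ASub-symmetric_invariant_properties} applied to the ``bad'' complement, as in the proof of \cref{lemma:Bad-Good}), so $refund' \geq 1$ and the loss is $\leq 0$ on this branch. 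On the branch $m = m'$ (all $m$ coins accepted), the loss is $m + 1 - refund'$. Here I would use that after an accepting symmetric-subspace measurement on the combined $(m+1)\kappa$ registers, the post-measurement state lies in the symmetric subspace, and its private-count expectation is at least the private-count expectation of the pre-measurement combined state $\ket{\cent}\tensor\omega$ — because the symmetric subspace is an eigenspace structure compatible with the counting observable (\cref{lemma:ASub-symmetric_invariant_properties} again, or more directly: on the symmetric subspace, $\prqc.\Count$ restricted there takes each value $j_0 \in \{0,\dots,(m+1)\kappa\}$, and projecting into $\Sym{}$ from $\ket{\cent}\tensor\omega$ only shifts weight among these $j_0$-eigenspaces without decreasing the mean, by the commutation). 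Thus $\mathbb{E}(\prqc.\Count(\sk,\omega') \mid \text{accept})\cdot\Pr[\text{accept}] \geq \mathbb{E}(\prqc.\Count(\sk,\ket{\cent}\tensor\omega)\cdot \mathbbm{1}_{\text{accept}}) $, and combining the two branches, $\mathbb{E}(L(\adv)) \leq \mathbb{E}\big[(m+1)\mathbbm{1}_{m=m'}\big] - \mathbb{E}(refund') + \mathbb{E}(refund' \mathbbm{1}_{m\neq m'})$, which after the monotonicity bookkeeping reduces to showing $\mathbb{E}\big[(m+1 - \prqc.\Count(\sk,\omega')/\kappa)\mathbbm{1}_{m=m'}\big] \leq \negl$.

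The final step is a reduction to unforgeability of the private scheme — but crucially, since we only want the unconditional statement, I would instead invoke the \emph{complexity-theoretic no-cloning} content implicitly, or rather just the raw linear-algebra fact: on the branch $m = m'$, acceptance forces the combined $(m+1)\kappa$-register state to have large overlap with $\Sym{(m+1)\kappa}$, and the private count of any symmetric state is at most $(m+1)\kappa$, but we need it to be $\geq (m+1)\kappa - o(\kappa)$ on average weighted by acceptance probability. Here is where I expect the main obstacle: controlling the ``missing'' private coins, i.e.\ bounding $\mathbb{E}[((m+1)\kappa - \prqc.\Count(\sk,\omega'))\mathbbm{1}_{m=m'}]$. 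The clean way is to note that $\adv$ received some number $n$ of public coins (hence $n\kappa$ private coins) from $\bank$, and by the monotonicity the total valid private coins never increase beyond what came from the bank plus the fresh wallet coin, i.e.\ $\prqc.\Count(\sk,\omega') \leq (n+1)\kappa$ always; but this alone is not enough — I need that when all $m$ coins pass, $m$ cannot exceed $n$ by much in expectation. This is precisely \cref{prop:opt_attack}: the acceptance probability when $m > n$ is at most $\binom{m\kappa}{n\kappa}/\binom{(m+1)\kappa}{(n+1)\kappa} + \negl \leq (\tfrac{n+1}{m+1})^\kappa + \negl$, which is $\negl$ unless $m$ is very close to $n$. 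So I would case on $m \leq n$ (where $m + 1 - refund' \leq m+1 - (n\kappa + \text{something})/\kappa \leq 0$ up to the fractional part, using $refund' \geq $ private count of wallet $\geq n\kappa/\kappa$ isn't quite right — need care that the adversary's $n$ bank coins actually end up counted) versus $m > n$ (negligible acceptance probability times a $\poly$-bounded loss $= \negl$). Assembling: $\mathbb{E}(L(\adv)) = \sum_{m,n} \Pr[\text{this }m,n]\big(\text{branch contributions}\big) \leq 0 + \sum_{m > n}\negl \cdot \poly = \negl$. I would be careful throughout to (i) justify the WLOG-pure-state reduction by convexity, (ii) handle the fact that $\omega$ may be entangled with $\adv$'s residual register — irrelevant since $\Count$ acts only on the submitted/wallet registers and the loss depends only on those, and (iii) make the monotonicity-of-private-count lemma fully precise, stating it as: for the two-outcome measurement $\{\Pi_{\Sym{(m+1)\kappa}}, I - \Pi_{\Sym{(m+1)\kappa}}\}$ applied to $\ket{\cent}\tensor\omega$, both post-measurement (unnormalized) states $\sigma$ satisfy $\tr(\prqc.\text{CountObs}\cdot\sigma) \geq \kappa \cdot \tr(\sigma)$ when restricted appropriately, which follows from $[\Pi_{\Good{(m+1)\kappa}{k}},\Pi_{\Sym{(m+1)\kappa}}]=0$ for all $k < \kappa$ (\cref{lemma:ASub-symmetric_invariant_properties} with a shift), exactly the tool already developed for the unforgeability proof.
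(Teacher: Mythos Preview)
Your commutation observation is correct and is indeed the key tool, and your treatment of the reject branch is fine: from $[\Pi_{\Good{(m+1)\kappa}{k}}, \Pi_{\Sym{(m+1)\kappa}}]=0$ one gets that the private-count spectral decomposition is preserved by the symmetric-subspace measurement, so in particular $\prqc.\Count(\sk,\omega')\geq\kappa$ after either outcome and the reject-branch expected loss is $\leq 0$. The gap is in the accept branch. You correctly identify that you must control $\mathbb{E}\big[(m+1 - \prqc.\Count(\sk,\omega')/\kappa)\,\mathbb{1}_{\text{accept}}\big]$, but then you try to do so by casing on the number $n$ of coins the adversary drew from $\bank$ and invoking \cref{prop:opt_attack} to bound $\Pr[\text{accept}]$ when $m>n$. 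This is a genuine error for two reasons. First, \cref{prop:opt_attack} explicitly assumes $\prqc$ is \nauf; the ``$+\negl$'' in its bound is precisely the overlap of the adversary's state with $\Bad{m\kappa}{n\kappa}$, and without unforgeability that overlap can be $1$ (an unbounded adversary with polynomially many oracle queries may recover $\ket{\mill}$ exactly, in which case $\Pr[\text{accept}]=1$ for arbitrary $m$). Second, and more fundamentally, $n$ is irrelevant to sabotage: if the adversary submits $m$ \emph{perfect} coins they all pass and the refund is exactly $m+1$, so the loss is $0$ --- there is no sabotage even when $m\gg n$. Your ``$m\leq n$'' case is also broken: the refund depends only on $\omega'$, not on the $n$ coins the adversary kept for itself, so an inequality of the form $refund'\geq n$ has no basis.

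The paper's proof discards $n$ entirely. Using the commutation it gets $\mathbb{E}[refund'] = \mathbb{E}[\prqc.\Count(\sk,\ket{\cent}\otimes\ket{\beta})]/\kappa$ (unconditioned on accept/reject), and then writes the expected loss as a single quadratic form
\[
\mathbb{E}(L(\adv)) \;=\; (\bra{\cent}\otimes\bra{\beta})\,\Big(m\,\Pi_{\Sym{(m+1)\kappa}} - \tfrac{1}{\kappa}\Count_{(m+1)\kappa} + I\Big)\,(\ket{\cent}\otimes\ket{\beta}) \;\leq\; \lambda_{\max}\big(\Pi_{\HTilde{m\kappa}}\, \Qoperator\, \Pi_{\HTilde{m\kappa}}\big),
\]
with $\Qoperator$ the bracketed operator. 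The point you are missing is that the acceptance probability and the refund are \emph{correlated} through the private-count value $j_0$ of the submitted state; the eigenvalue of $\Pi_{\HTilde{m\kappa}}\Qoperator\Pi_{\HTilde{m\kappa}}$ on $\ket{\BasisSymTilde{m\kappa}_{\vec j}}$ is exactly $m\cdot\frac{\binom{m\kappa}{j_0}}{\binom{(m+1)\kappa}{j_0+\kappa}} - \frac{j_0}{\kappa}$, which is the same expression as in \cref{eq:rational_unforge_utility} with $n\kappa$ replaced by the intrinsic count $j_0$, and is $\leq \frac{1}{(m+1)^{\kappa-1}}=\negl$ for every $j_0$ --- no unforgeability assumption anywhere. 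The fix for your approach is therefore to replace the case split on $n$ by the spectral decomposition over the $j_0$-eigenspaces of the count observable, and note that \cref{lemma:restricted,lemma:eigenbasis} (which are purely linear-algebraic) already give the per-$j_0$ acceptance bound you need.
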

The proof is given in \cref{appendix:appendix_proofs} on \cpageref{pf:thm:rational priv fair}.

\subsection{Security against public sabotage}\label{appendix:public_sabotage} 
Next, we discuss the property of security against public sabotage, which our construction satisfies. It is essentially a form of security against sabotage for a public quantum coin scheme with private verification, but in the scenario where money received from others upon successful verification is directly spent without going through the bank. In case of failed verification, she goes to the bank for a refund.

We define the notion of security against public sabotage using a \nrsp security game, which models the scenario of directly spending money received from others. We define a loss function with respect to the \nrsp game and call a public coin scheme with private verification, $\MS$ \nrrsp if for any adversary the expected loss function is negligible.

In the next security game, Game~\ref{game:nonadapt_respendability}, we assume that the money scheme is a comparison-based public quantum coin scheme with private verification, $\verify_\sk$ represents the private verification, and $\Count_{refund}$ and $\Count$ refers to the private and public count respectively. We assume that in this particular security game, the private verification $\verify_\sk$ returns the post-verified state as well. 
In the next security game, Game~\ref{game:nonadapt_respendability}, $k,m\in \NN$ and $n_i\in \NN$ such that $\sum_{i=1}^kn_i=m+1$. The parameter $k$ represents the number of transactions in which a merchant spends a previously received money and $n_i$ represents the number of money states paid in the $i^{th}$ transaction. Since it is a comparison-based scheme, we can assume without loss of generality, that the initially before receiving the coins from the adversary, the wallet had one fresh coin and hence its worth was $1$.

\begin{boxx}{\procedure[linenumbering, space=auto]{$\nrsp^{\adv,\MS}_{\secpar,k,m,(n_1,\ldots,n_k)}$:} 
{sk\gets\keygen(\secparam)\\
\rho_1,\ldots,\rho_m \xleftarrow{\text{$\rho_i$ can be potentially entangled}} \adv^{\bank(\sk), \verify_\sk()}(\secparam)\\
\rho\equiv (\rho_1,\ldots,\rho_m)\\
m' \gets \Count(\rho)\\
\label{line:ver-fail} \pcif m'<m  \text{\Comment{When some money states do not pass.}}\\ 
\text{Denote $\omega'$ be the state of the wallet after verifying the coins}\\
refund'\footnote{We are using the convention to use prime for quantities post-verification and hence use $refund'$ instead of $refund$.} \gets \Count_{refund}(\omega') \\ 
\pcelse\\
\label{line:lexi}\text{Group the post verified wallet into chunks, $\omega'_1,\ldots,\omega'_k$ such that $|\omega'_i|=n_i$}\\
\pcfor i\in[k]\\
Count_i\gets \Count(\omega'_i)\\
\pcendfor\\
\pcendif\\
\pcreturn m,m', refund',Count_1,\ldots,Count_k. 
}
} 
\caption{Nonadaptive Security against public sabotage}
\label{game:nonadapt_respendability}
\end{boxx}

In Game~\ref{game:nonadapt_respendability},  in Line~\ref{line:lexi}, the states are grouped in the lexicographic order.
In Game~\ref{game:nonadapt_respendability}, the if-else statements starting in Line~\ref{line:ver-fail}, represents the fact that if all the money sent by the adversary do not pass verification, i.e., the transaction is not successful, then the merchant goes for a refund from the bank, otherwise spends it in $k$ transactions.
Next, we define a loss function in the all-or-nothing setting.
With respect to the outcomes of $\nrsp^{\adv,\MS}_{\secpar,k,m,(n_1,\ldots,n_k)}$, we define the following quantities.
For $i\in[k]$, define
\begin{equation}\label{eq:gain-i}
    Accepted_i(\nrsp^{\adv,\MS}_{\secpar,k,m,(n_1,\ldots,n_k)})= \begin{cases}
        n_i &\text{if $Count_i=n_i$,}\\
        0 &\text{otherwise.}
    \end{cases}
\end{equation}
For every $i$, we will use $Accepted_i(\adv)$ to represent \newline$Accepted_i(\nrsp^{\adv,\MS}_{\secpar,k,m,(n_1,\ldots,n_k)})$ in short.

The quantity  \nom{Ac}{$Accepted_i(\adv)$}{the number of money states that were accepted in the $i^{th}$ payment of the received money, according to the all-or-nothing way of transaction.} represents the gain or the number of money states that were accepted in the $i^{th}$ payment of the received money, according to the all-or-nothing way of transaction.
The loss function \nom{Lo}{$\lanna$}{nonadaptive all-or-nothing Loss in the security against public sabotage game, Game~\ref{game:nonadapt_respendability}} is defined as:
\begin{align}\label{eq:loss-respendable-def}
    \lanna(\nrsp^{\adv,\MS}_{\secpar,k,m,(n_1,\ldots,n_k)})\\=\begin{cases}
        1 + m - \sum_{i=1}^k Accepted_i(\adv) &\text{if $m=m'$,}\\
        1 + 0 - refund' &\text{otherwise.}
    \end{cases}
\end{align}   
Note that, in the definition of $\lanna$, we use the fact that initial worth of the wallet before receiving the adversary's coins was $1$.

We will use $\lanna(\adv_{k,m,(n_1,\ldots,n_k)})$ to represent \\$\lanna(\nrsp^{\adv,\MS}_{\secpar,k,m,(n_1,\ldots,n_k)})$ in short.
With respect to this loss function, we define nonadaptive rational security against public sabotage as follows.
\begin{definition}[Rational security against public sabotage]\label{definition:secure_against_public_sabotage}
A comparison-based public quantum coin scheme with private verification is \nrrsp if for every QPT adversary $\adv$,  in Game~\ref{game:nonadapt_respendability}, and for every $k,m\in \NN$ and $(n_,\ldots,n_k)\in \NN^k$ such that $\sum_{i=1}^k n_i = m+1$, there exists a negligible function $\negl$ such that
\begin{equation}
    \mathbb{E}(\lanna(\nrsp^{\adv,\MS}_{\secpar,k,m,(n_1,\ldots,n_k)})) \leq \negl,  
    \label{eq:rational-fairness}
\end{equation} where $\lanna$ is the loss function defined in \cref{eq:loss-respendable-def}.
\end{definition}  

\begin{definition}[Rational Security against sabotage]
A public money scheme $\MS$ is \narf if it is both \narpf and \nrrsp.
\label{definition:rational_secure_against_sabotage}
\end{definition}

\begin{definition}[Rational Security]\label{definition:rational_secure} 
A public money scheme $\MS$ is \nars if it is both \narf and \arnauf.

\end{definition}

The scheme $\pkqc$ that we construct is \nrrsp unconditionally irrespective of the underlying private scheme \prqc.

\begin{theorem}[Nonadaptive rational security against public sabotage of \pkqc]
The money scheme $\pkqc$ given in \cref{alg:ts}, is \nrrsp, unconditionally\footnote{This holds even if the parameter $m$ in Game~\ref{game:nonadapt_respendability} is unbounded.} if the underlying $\prqc.\verify(\sk,\ldots)$ is a rank-$1$ projective measurement.
\label{thm:rational public fair}
\end{theorem}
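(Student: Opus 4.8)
The plan is to mirror the structure of the proofs of \cref{prop: rational-unforge} and \cref{thm:rational priv fair}: split on the outcome of the public count $\Count(\rho)$ that the honest verifier runs on the bundle $\rho=(\rho_1,\dots,\rho_m)$ produced by $\adv$, and bound the contribution of each branch to $\mathbb E[\lanna]$. The single workhorse is the observation that, because $\prqc.\verify$ is the rank-$1$ measurement $\{\ketbra{\mill},I-\ketbra{\mill}\}$, the operator $\prqc.\Count$ on any number of registers commutes with every symmetric-subspace projection $\Pi_{\Sym{\cdot}}$ (this is precisely \cref{lemma:ASub-symmetric_invariant_properties}, and the arithmetic behind \cref{lemma:eigenbasis}). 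Hence performing $\pkqc.\verify$ on the verifier's wallet does not change the distribution of $\prqc.\Count$ of that wallet (whichever way the symmetric-subspace test goes), and since $\ket{\cent}=\ket{\mill}^{\otimes\kappa}$ has count exactly $\kappa$, the count of $\ket{\cent}\otimes\rho$ equals $\kappa$ plus the count of $\rho$.

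In the rejection branch ($m'<m$) the verifier refunds her wallet $\omega'$ and gets $refund'=\prqc.\Count_{refund}(\omega')/\kappa$. By the commutation observation, $\prqc.\Count(\omega')$ is distributed as $\kappa+\prqc.\Count(\rho)$ conditioned on rejection, so $\prqc.\Count(\omega')\ge\kappa$ with certainty and $refund'\ge 1$ always; therefore $\lanna=1-refund'\le 0$ on this branch, contributing at most $0$ to $\mathbb E[\lanna]$. (This is the public-sabotage analogue of the private-sabotage bookkeeping in \cref{thm:rational priv fair} and needs no assumption on $\adv$.)

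In the acceptance branch ($m'=m$) the post-measurement wallet $\omega'=\Pi_{\Sym{(m+1)\kappa}}(\ket{\cent}\otimes\rho)$ (normalized) lies in $\Sym{(m+1)\kappa}$, and its marginal $\omega'_i$ on any block of $n_i\kappa$ registers is again exchangeable. Re-verifying a chunk $\omega'_i$ against a fresh wallet $\ket{\cent}$, coin by coin, accepts all $n_i$ coins iff the $(n_i+1)\kappa$-register state $\ket{\cent}\otimes\omega'_i$ survives $\Pi_{\Sym{(n_i+1)\kappa}}$; expanding $\omega'_i$ in the basis $\{\ket{\BasisSym{n_i\kappa}_{\vec j}}\}$ of $\Sym{n_i\kappa}$ and noting that $\ket{\cent}\otimes\ket{\BasisSym{n_i\kappa}_{\vec j}}$ projects (up to scalar) onto $\ket{\BasisSym{(n_i+1)\kappa}_{\vec j+(\kappa,0,\dots,0)}}$, so that cross terms vanish, one gets exactly as in \cref{lemma:eigenbasis} that $\Pr[\text{chunk }i\text{ fully passes}\mid\omega']=\sum_{\vec j} q^{(i)}_{\vec j}\prod_{s=1}^{\kappa}\frac{j_0+s}{n_i\kappa+s}$, where $q^{(i)}_{\vec j}=\bra{\BasisSym{n_i\kappa}_{\vec j}}\omega'_i\ket{\BasisSym{n_i\kappa}_{\vec j}}$ is a genuine probability distribution (since $\omega'_i$ is symmetric). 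Using $1-\prod_{s}(1-x_s)\le\sum_s x_s$ with $x_s=(n_i\kappa-j_0)/(n_i\kappa+s)$ gives $n_i\Pr[\text{chunk }i\text{ fails}\mid\omega']\le n_i\kappa-\mathbb E[\prqc.\Count(\omega'_i)\mid\omega']$; summing over $i$ and using additivity of $\prqc.\Count$ over the register partition, together with $\mathbb E[\prqc.\Count(\omega')\mid\text{accept}]=\kappa+\mathbb E[\prqc.\Count(\rho)\mid\text{accept}]$, and the identity $\lanna=(m+1)-\sum_i Accepted_i=\sum_i n_i\mathbf 1[\text{chunk }i\text{ fails}]$ on this branch, we obtain
\[
\mathbb E\big[\lanna\cdot\mathbf 1[\text{accept}]\big]\;\le\;\mathbb E\big[(m\kappa-\prqc.\Count(\rho))\cdot\mathbf 1[\text{accept}]\big].
\]

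It remains to control the right-hand side, and this is where I expect the real work. Writing $\rho=\sum_r\rho_r$ in $\prqc.\Count$-eigenspaces and invoking the eigenvalue computation of \cref{lemma:eigenbasis} once more, the count-$r$ component is accepted with probability at most $\prod_{s=1}^{\kappa}\frac{r+s}{m\kappa+s}$, so the bound collapses to $\max_{0\le r\le m\kappa}(m\kappa-r)\prod_{s=1}^{\kappa}\frac{r+s}{m\kappa+s}$, i.e.\ to an elementary inequality on ratios of binomials, in the same spirit as the estimate $m\big(\tfrac{n+1}{m+1}\big)^{\kappa}-n=\negl$ that closes the proof of \cref{prop: rational-unforge}. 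The hard part is that the loss is all-or-nothing per chunk, so this naive estimate may be too lossy on its own; if so, the argument should be completed by using that, under the user manual, a chunk whose re-verification fails is itself returned for a $\pkqc.\Count_{bank}$ refund, which by the commutation observation recovers its private-coin content — reducing the accept branch to the private-sabotage analysis of \cref{thm:rational priv fair}. Since no computational assumption on $\adv$ entered anywhere (only that $\prqc.\verify$ is rank-$1$), the resulting bound holds against unbounded adversaries and for unbounded $m$, giving unconditional \nrrsp.
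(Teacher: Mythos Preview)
Your separation into reject and accept branches, with the reject branch contributing $\le 0$, is correct but fatal: the negative contribution of the reject branch is not slack to be discarded, it is the term that cancels the positive contribution of the accept branch. Concretely, take $\rho$ to be the symmetric state with private count $j_0=m\kappa-1$ (one bad private coin). Then your accept-branch bound gives
\[
\mathbb E\big[\lanna\cdot\mathbf 1[\text{accept}]\big]\;\le\;(m\kappa-j_0)\,\prod_{s=1}^{\kappa}\frac{j_0+s}{m\kappa+s}\;=\;1\cdot\frac{m\kappa}{(m+1)\kappa}\;=\;\frac{m}{m+1},
\]
which is $\Theta(1)$, not negligible. (In fact the exact accept-branch contribution at $k=1$ is $(m+1)\Pr[X=1,X'=0]=\tfrac{m}{m+2}$, still $\Theta(1)$.) Your fallback of refunding failed chunks does not apply: Game~\ref{game:nonadapt_respendability} gives no refund when a respent chunk is rejected---the receiver keeps the post-verified state and the original honest payer simply loses those $n_i$ coins. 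So the accept branch alone cannot be made negligible, and throwing away the reject-branch negative term breaks the argument.

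The paper proceeds differently. It first reduces general $k$ to $k=1$ by showing $\Pr[X'=1\mid X=1]\le\Pr[X'_i=1\mid X=1]$ for every chunk $i$ (a one-line consequence of $\Sym{(m+2)\kappa}\subset I\otimes\Sym{(n_i+1)\kappa}$), so the single-respend loss dominates. It then \emph{keeps} both branches together and rewrites, using the commutation $\mathbb E[Y']=\mathbb E[Y]$, the expected loss as
\[
\mathbb E[\lanna]\;=\;(m+1)\big(\Pr[X=1]-\Pr[X=1,X'=1]\big)\;-\;\big(1-\Pr[X=1]\big)\,\tfrac{j_0}{\kappa}
\]
on each $j_0$-eigenspace; the second (negative) term is exactly the reject-branch refund, and it is what kills the first term. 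At $j_0=m\kappa-1$ this is $\tfrac{m}{m+2}-\tfrac{m\kappa-1}{(m+1)\kappa}<0$. The paper finishes with an explicit eigenvalue computation in the basis $\BasisSymTilde{m\kappa}$ and a case analysis on $j_0$ (small $j_0$ via the crude $(3/4)^\kappa$ bound, $j_0=m\kappa$ trivially, intermediate $j_0$ checked to be negative). The moral: do not split the two branches; expand both in the symmetric-subspace eigenbasis and let the reject-branch refund term do the cancellation.
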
 

The proof is given on \cpageref{pf:thm:rational public fair}.  
Combining \cref{prop: rational-unforge}, \cref{thm:rational priv fair} with \cref{thm:rational public fair}, we get an upgraded lifting theorem that also ensures security against sabotage.

\begin{theorem}\label{thm:nonadaptive secure}
The public money scheme $\pkqc$ constructed in~\ref{alg:ts} is \nars (see \cref{definition:rational_secure}) if the underlying private quantum money scheme $\prqc$ is \nauf (see \cref{definition:adapt_flex_unforge}) and $\prqc.\verify$ is a rank-$1$ projective measurement. Moreover if $\prqc$ is \nauuf (see \cref{definition:unconditional_unforgeability}), then the scheme $\pkqc$ is also unconditionally (see \cref{definition:unconditional_unforgeability}) \nars (see \cref{definition:rational_secure}).
\end{theorem}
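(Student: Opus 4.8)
\textbf{Proof proposal for \cref{thm:nonadaptive secure}.}

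The plan is to assemble this final lifting theorem purely by combining the three security properties already established for the construction $\pkqc$ in \cref{alg:ts}, together with the two observations about implications between notions. First I would recall the definition of \nars (\cref{definition:rational_secure}): a public money scheme is \nars if it is \narf, and \narf (\cref{definition:rational_secure_against_sabotage}) means it is both \narpf and \nrrsp. Thus it suffices to verify three things for $\pkqc$: (i) \arnauf, (ii) \narpf, and (iii) \nrrsp. Actually, re-reading \cref{definition:rational_secure}, \nars requires \narf \emph{and} \arnauf, so the three ingredients are exactly unforgeability plus the two sabotage notions.

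The three ingredients are supplied as follows. For unforgeability, invoke \cref{prop: rational-unforge}: if the underlying $\prqc$ is \fnauf (equivalently \nauf in our convention, see \cref{definition:adapt_flex_unforge}) and $\prqc.\verify$ is a rank-$1$ projective measurement, then $\pkqc$ is \arnauf; moreover if $\prqc$ is unconditionally \fnauf then $\pkqc$ is unconditionally \arnauf. For security against private sabotage, invoke \cref{thm:rational priv fair}: whenever $\prqc.\verify(\sk,\cdot)$ is a rank-$1$ projective measurement, $\pkqc$ is \emph{unconditionally} \narpf --- note this needs no unforgeability hypothesis on $\prqc$ at all. For security against public sabotage, invoke \cref{thm:rational public fair}: again whenever $\prqc.\verify(\sk,\cdot)$ is rank-$1$ projective, $\pkqc$ is \emph{unconditionally} \nrrsp, irrespective of $\prqc$. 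So the only place where the unforgeability assumption on $\prqc$ is actually consumed is ingredient (i).

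Putting these together: assume $\prqc$ is \auf (which implies \nauf $=$ \fnauf by \cref{definition:adapt_flex_unforge} and the discussion that standard adaptive unforgeability is the strongest notion) and $\prqc.\verify$ is a rank-$1$ projective measurement. Then by \cref{prop: rational-unforge}, $\pkqc$ is \arnauf; by \cref{thm:rational priv fair}, $\pkqc$ is \narpf; by \cref{thm:rational public fair}, $\pkqc$ is \nrrsp. Hence $\pkqc$ is \narf and therefore \nars. For the ``moreover'' clause, if $\prqc$ is \nauuf (unconditionally \fnauf, see \cref{definition:unconditional_unforgeability}), then the unconditional version of \cref{prop: rational-unforge} gives unconditional \arnauf, while \cref{thm:rational priv fair} and \cref{thm:rational public fair} already give unconditional \narpf and \nrrsp without any assumption; so $\pkqc$ is unconditionally \nars. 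One small bookkeeping point I would check explicitly is that \nauuf as stated is a nonadaptive notion whereas \cref{prop: rational-unforge} is phrased with the nonadaptive hypothesis \fnauf --- these line up, since \nauuf is precisely unconditional \fnauf, so no adaptive-to-nonadaptive passage is needed here.

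There is essentially no technical obstacle: this theorem is a corollary-style combination, and the real work lives in \cref{prop: rational-unforge}, \cref{thm:rational priv fair}, and \cref{thm:rational public fair}. The only thing to be careful about is consistency of hypotheses --- making sure that ``rank-$1$ projective measurement'' is the common requirement across all three cited results (it is), and that the \auf hypothesis on $\prqc$ indeed implies the \nauf/\fnauf hypothesis actually used by \cref{prop: rational-unforge} (it does, by the remark that \fauf is the strongest unforgeability notion and implies \fnauf). So the ``hard part'', such as it is, is purely the verification that the three lemmas' assumptions are simultaneously satisfiable by a single choice of $\prqc$ --- which \cref{thm:qaruf} (resp. \cref{thm:qaruuf}) then witnesses when we want a concrete instantiation.
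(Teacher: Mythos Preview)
Your proposal is correct and takes essentially the same approach as the paper: the paper's proof simply combines \cref{prop: rational-unforge}, \cref{thm:rational priv fair}, and \cref{thm:rational public fair} to obtain \arnauf and \narf (hence \nars), with the unconditional variant following in parallel. Your extra bookkeeping checks---that \auf implies the \fnauf hypothesis actually used in \cref{prop: rational-unforge}, and that the rank-$1$ projective measurement assumption is shared across all three cited results---are accurate and make explicit what the paper leaves implicit.
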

See the proof in \cref{appendix:appendix_proofs} on \cpageref{pf:thm:nonadaptive secure}.

\section{Untraceability}\label{appendix:untraceability}
In our discussion regarding Coins and Bills, we emphasized the fact that Coins are more secure than Bills in terms of privacy and untraceability. In Ref.~\cite{AMR20}, the authors formalized the notion of untraceability. They defined the following untraceability game. 
 
\setcounter{algorithm}{5}
\begin{savenotes}
\begin{game}
\begin{algorithmic}
    \State \textbf{set up the trace:} $\adv(1^\secpar)$ receives oracle access to $\bank(\sk)$ and $\pkkeygen(\sk)$, and outputs registers $M_1,\ldots,M_n$ and a permutation $\pi \in S_n$;
    \State \textbf{permute and verify:} $b\gets \{0,1\}$ is sampled at random, and if $b=1$ the states $M_1,\ldots,M_n$ are permuted by $\pi$. $\verify$ is invoked on all registers, the approved registers are placed in a set\footnote{In~\cite{AMR20}, they use $\mathcal{M}$ to denote the set. Since, we use $\MS$ to denote the money scheme, we use $\mathsf{S}$ to denote this set instead.} $\mathsf{S}$ while the rest are discarded;
    \State \textbf{complete the trace:} $\adv$ receives $\mathsf{S}$ and the secret key $\sk$,\footnote{In~\cite{AMR20}, the $\adv$ is given the entire state of the bank. Since, we only consider stateless money schemes, this is the same as giving the secret key $\sk$.} and outputs a guess $b'\in \{0,1\}$.
    \State The output of the game, $\mathsf{Untrace}_{\secpar}[\MS, \adv]$ is $1$ if and only if $b=b'$. 
\end{algorithmic}
\caption{Untraceability game: $\mathsf{Untrace}_{\secpar}[\MS, \adv]$}\label{game:byzantine-untraceability}
\end{game}
\end{savenotes}

\begin{definition}[Untraceability of quantum money~\cite{AMR20}]\label{definition:byzantine_untrace_coins}
A money scheme $\MS$ is called \ut, if for every (even computationally unbounded) $\adv$, there exists a negligible function $\negl$ such that
\begin{equation}
    \Pr(\mathsf{Untrace}^{\adv,\MS}_{ \secpar}=1) - \frac{1}{2} \leq \negl. 
    \label{eq:Untraceability}
\end{equation}

\end{definition}

\paragraph*{Potential drawbacks of the \ut definition, \cref{definition:byzantine_untrace_coins}} The untraceability definition in~\cite{AMR20} (\cref{definition:byzantine_untrace_coins}) is the first rigorous definition for security against traceability of quantum money. However, we think that it is a bit too stringent to the adversary. According to the definition (see Game~\ref{game:byzantine-untraceability}) for private quantum coins, the bank is assumed to be honest. The adversary receives the bank's internal state only in the final step, ``Compute the trace'' in Game~\ref{game:byzantine-untraceability}. However, the bank might collude with the adversary at an earlier step to mount a trace attack. 

According to this definition, every private quantum money scheme in which the verification destroys the money state and then mints a new state in place of it, is untraceable. This is because irrespective of whether or not the bank permuted or not in the second step, ``Permute and verify'' in Game~\ref{game:byzantine-untraceability}, the money states that is returned to the adversary after verification is the same.
As a result, the variant of Wiesner's scheme where money is destroyed after every verification, and a new money state is minted, is untraceable according to \cref{definition:byzantine_untrace_coins}. Since every bill in Wiesner's money has a serial number, one would expect that such a scheme should not be untraceable.

We believe that there is a need for a better definition of untraceability due to the reasons mentioned above, and encourage further research and discussion on this topic.

\section{Multiverifier attacks}\label{appendix:multi-ver-attacks} 
The security definition that we discussed so far, portrays the model where an adversary tries to forge or sabotage an honest verifier's money. The verifiers can be the different bank branches, in case of private money schemes\footnote{For private money schemes we only discuss forging attacks as sabotage attacks make sense only in the public setting.}, or public users, in case of public money schemes. However, the adversary need not be confined to such attacks, which essentially involve just one payment to a verifier. It might attack through multiple payments. This can be done by attacking one verifier with multiple small payments (for example, buying several items from the same merchant, perhaps using different identities to avoid being identified in case of failed attempt), or by attacking multiple victims. Note that, the user manual (see \cref{subsec:user_manual}) for the scheme, \pkqc (described in \cref{alg:ts}), does not forbid the attacks mentioned above, i.e., nothing in the user manual stops the adversary from paying money multiple times to the same verifier or paying multiple verifiers, as a part of its strategy to cheat or sabotage. However, according to the user manual (see \cref{subsec:user_manual}) of the scheme \pkqc (described in \cref{alg:ts}), a fresh public coin is used for every new transaction. Hence, in the context of the scheme \pkqc, attacks involving multiple small payment to one verifier, can be simply viewed as attacks involving payment to multiple verifiers. 
Therefore in the light of our work, we define a multiverifier version of the unforgeability and the security against sabotage. The multiverifier versions are the respective generalizations of the two security notions, and capture any general forging or sabotage attack against the scheme $\pkqc$, implemented according to the user manual discussed in \cref{subsec:user_manual}. 
For general private money schemes, this security model is relevant in a nonadaptive setting with multiple bank branches, i.e, the adversary tries to cheat against multiple bank branches, none of which returns the post verification state of money submitted. For general public money schemes, the security model is relevant in the case where there are multiple honest users whom the adversary tries to cheat or sabotage, but the honest users never spend or return the money received, similar to the user manual that we describe in \cref{subsec:user_manual}. 
In the scheme, \pkqc, the user manual allows the adversaries to access the refund oracles, i.e., the adversary can at any point of time go to the bank for a refund of its wallet. We capture this in the multiverifier nonadaptive security against sabotage game and the multiverifier nonadaptive unforgeability game (Games~\ref{game:multi_nonadapt_rational-fair} and~\ref{game:multi_nonadapt_unforge}, respectively), by giving the adversary access to $\verify_\sk$ oracle, the private verification. Note that, this is relevant only in the case of public quantum coins with private verification, and hence for public quantum money schemes without private verification, we use the convention that $\verify_\sk$ produces $\bot$ (garbage).

\subsection{Multiverifier nonadaptive unforgeability}\label{appendix:multi-ver-unforge}  First, we define the multiverifier nonadaptive unforgeability for any (private or public money scheme, including comparison-based public coins and public quantum coins with private verification) quantum money scheme, via the following security game (see Game~\ref{game:multi_nonadapt_unforge}). 

For a private scheme $\MS$ in Game~\ref{game:nonadapt_unforge_strongest}, we use the convention that $\pkkeygen$ and $\Count$ in line~\ref{line:priv_and_public_ver} is the private count algorithm associated with the private verification. 
For a public coin scheme with comparison-based verification, $\pkkeygen$ outputs $\bot$, and $\Count$ represents the comparison-based count algorithm that takes a wallet as the verification key (see \cref{definition:count}). For all other public money schemes, including (not comparison-based) public quantum coins with private verification, $\pkkeygen$ outputs a fresh public key, and $\Count$ represents the public count associated with respect to public verification. 
Finally, for private schemes and public coin schemes with private verification, $\verify_\sk$ represents the private count; for all other public money schemes, $\verify_\sk$ outputs $\bot$. The post-verification state after each $\verify_\sk$ query is never returned to the adversary $\adv$. For all other money schemes, we use the convention that $\verify_\sk$ returns $\bot$. 

There are a few disparities between the algorithms used in Game~\ref{game:multi_nonadapt_unforge} and its single-verifier counterpart, Game~\ref{game:nonadapt_unforge_strongest}. Firstly, we use $\verify_\sk$ in Game~\ref{game:multi_nonadapt_unforge} instead of $\Count_\sk$, representing the fact that the adversary can go to the bank multiple times for coin-by-coin private verification to avail a refund without any all-or-nothing restriction, irrespective of the utility model. Secondly, we use the $\Count$ algorithm in Game~\ref{game:multi_nonadapt_unforge} instead of C as done in Game~\ref{game:nonadapt_unforge_strongest} because, for multiverifier unforgeability in the case of private money schemes, we would like the $\Count$ algorithm to play the role of the private count $\Count_\sk$ instead of outputting $\bot$ as $\Count_\pk$ does in Game~\ref{game:nonadapt_unforge_strongest}. Hence it makes more sense to use $\Count$ as opposed to $\Count_\pk$ in Game~\ref{game:multi_nonadapt_unforge}.

\begin{savenotes}
\begin{boxx}
{\procedure[linenumbering, space = auto]{$\mnauf^{\adv,\MS}_{\secpar}$:}
{
sk\gets\keygen(\secparam)\\
\pcfor i \in [k] \t \text{\Comment{$k\in \poly$, is the number of verifiers/bank branches}.}\\
 (\rho_1^i,\ldots,\rho_{m_i}^i) \xleftarrow{\text{$\rho_i$ can be potentially entangled}} \adv^{\bank(\sk),\pkkeygen(\sk), \verify_\sk()}(\secparam)\\
\rho_i\equiv (\rho_1^i,\ldots,\rho_{m_i}^i)\\
m'_i \gets \Count(\rho_i)\text{ \Comment{The strategy of $\adv$ can depend on $m'_i$.}}\\ 
\pcendfor\\
\text{Denote by $n$ the number of times that the $\bank(\sk)$ oracle was called by $\adv$}\\
\text{Denote by $n'$ the number of times $\verify_\sk()$ output $1$ (accept).}\\
\pcreturn m_1,m'_1,m_2,m'_2,\ldots m_k, m'_k,n,n'.
}
}
\caption{Multiverifier Nonadaptive Unforgeability Game}
\label{game:multi_nonadapt_unforge}
\end{boxx}
\end{savenotes}
With respect to Game~\ref{game:multi_nonadapt_unforge}, we define the following quantities. For every $i \in [k]$, we define
\begin{equation}\label{eq:sing_rational_utility_def}
    U_i(\mnauf^{\adv,\MS}_{\secpar}) = \begin{cases} 
                m_i, &\text{if $m_i=m'_i$,}\\
                0, &\text{otherwise.}
              \end{cases}
\end{equation}
Finally, we define the following utility function,
\begin{equation}\label{eq:multi_rational_utility_def}
    \uanmna(\mnauf^{\adv,\MS}_{\secpar}) = \sum_{i=1}^{k}U_i(\adv) + n' - n.
\end{equation}
\begin{equation}\label{eq:byzantine_multi_rational_utility_def}
    \uflmna(\mnauf^{\adv,\MS}_{\secpar}) = \sum_{i=1}^{k}m'_i + n' - n.
\end{equation} 
We will use $U_i(\adv)$ and $\uanmna(\adv)$ to represent in short \newline$U_i(\text{\mnauf}^{\adv,\MS}_{\secpar})$ and
\newline$\uanmna(\mnauf^{\adv,\MS}_{\secpar})$ respectively, for every $\adv$ in Game~\ref{game:multi_nonadapt_unforge} and $i \in [k]$.

We shall see  \nom{Um}{$\uanmna(\adv)$}{combined utility of the adversary \adv, in the context of multiverifier nonadaptive rational unforgeability} as the multiverifier version of the single-verifier nonadaptive all-or-nothing utility function defined in \cref{eq:all-or-nothing_nonadaptive_utility_def}. \nom{Us}{$U_i(\adv)$}{utility of $\adv$ due to the $i^{th}$ verifier, in the context of multiverifier nonadaptive rational unforgeability} is the utility the $\adv$ gained by submitting money to the $i^{th}$ verifier, and $\uanmna(\adv)$ is the combined utility of \adv. Similarly, the utility \nom{Ut}{$\uflmna(\adv)$}{utility of \adv, in the context of multiverifier nonadaptive unforgeability} is the multiverifier version of the single-verifier nonadaptive flexible utility function defined in \cref{eq:flex_nonadapt_utility_def}. Following the usual convention, we shall use the standard multiverifier nonadaptive unforgeability to denote standard unforgeability in Game~\ref{game:multi_nonadapt_unforge}, with respect to the flexible utility function, $\uflmna(\adv)$ defined in \cref{eq:byzantine_multi_rational_utility_def}. However, as seen earlier, we will use the all-or-nothing utility, $\uanmna(\adv)$ defined in \cref{eq:multi_rational_utility_def} to prove rational unforgeability for our construction (see \cref{alg:ts}). Therefore from now onwards, we use multiverifier rational nonadaptive unforgeability to denote rational unforgeability in Game~\ref{game:multi_nonadapt_unforge} with respect to $\uanmna(\adv)$.

\begin{definition}[Multiverifier nonadaptive rational unforgeability]\label{definition:multi-rational_unforge}
A money scheme $\MS$ is \mnaruf if it is rationally unforgeable in Game~\ref{game:multi_nonadapt_unforge} with respect to the utility function, $\uanmna$ is as defined in \cref{eq:multi_rational_utility_def}, i.e., for every QPA (Quantum Poly-time Algorithm) $\adv$ in  Game~\ref{game:multi_nonadapt_unforge}, 
there exists a negligible function $\negl$ such that,
\begin{equation}
    \mathbb{E}(\uanmna(\mnauf^{\adv,\MS}_{\secpar})) \leq \negl. 
    \label{eq:multi-rational-unforge}
\end{equation}
\end{definition} 

\begin{definition}[Multiverifier nonadaptive unforgeability]\label{definition:byzantine-multi-unforge}
A money scheme $\MS$ is (standard) \mnauf if it is rationally unforgeable in Game~\ref{game:multi_nonadapt_unforge} with respect to the utility function, $\uflmna$ is as defined in \cref{eq:byzantine_multi_rational_utility_def}, i.e.,  for every QPA (Quantum Poly-time Algorithm) $\adv$ in  Game~\ref{game:multi_nonadapt_unforge} 
there exists a negligible function $\negl$ such that,
\begin{equation}
    \Pr[\uanmna(\mnauf^{\adv,\MS}_{\secpar}) > 0] = \negl. 
    \label{eq:byzantine-multi-rational-unforge}
\end{equation} \footnote{Clearly, multiverifier nonadaptive unforgeability implies multiverifier rational nonadaptive unforgeability.}
\end{definition}

At this point, it might seem that Game~\ref{game:multi_nonadapt_unforge} is very similar to Game~\ref{game:nonadapt_unforge_strongest} since going for adaptive verifications is similar to submitting to multiple verifiers on the fly. It might seem that the utilities $\uflmna(\adv)$ (same as $\uanmna(\mnauf^{\adv,\MS}_{\secpar})$) and $\uanmna(\adv)$ (same as $\uanmna(\mnauf^{\adv,\MS}_{\secpar})$) defined with respect to Game~\ref{game:multi_nonadapt_unforge} is equivalent to the utilities $\uana(\uf^{\adv,\MS}_\secpar)$ and $\ufla(\uf^{\adv,\MS}_\secpar)$ in Game~\ref{game:nonadapt_unforge_strongest}.
Indeed, they are the same when the verification key of the verification procedures ($\verify_\pk$ if the scheme is public and $\verify_\sk$ if it is private) is a fixed classical string. However, if the verification key is not a fixed classical string, then going for adaptive verifications in Game~\ref{game:nonadapt_unforge_strongest} is not the same as submitting to multiple verifiers in Game~\ref{game:multi_nonadapt_unforge}. This is because each verifier starts with a fresh key in Game~\ref{game:multi_nonadapt_rational-fair}, but the adaptive queries in Game~\ref{game:nonadapt_unforge_strongest} are instead answered using the same key, which might itself change along the way.

\subsection{Multiverifier security against sabotage}\label{appendix:multi-ver-fair}
Next we move to the multiverifier version of the security against sabotage, for public money schemes. We only discuss the private sabotage attacks in the context of multiple verifiers, and show that multiverifier rational security against private sabotage can be reduced to the single-verifier case, see \cref{prop:sing-multi-rational-fair}. The multiverifier rational security against public sabotage can be defined analogously, and the same multiverifier-to-single-verifier reduction that works in the private sabotage setting also holds for the public sabotage setting as well (see \cref{remark:single-multi-rational-fair-public}). Hence we omit the discussion about multiverifier security against public sabotage.

In the next security game, Game~\ref{game:multi_nonadapt_rational-fair}, we model private sabotage attacks against multiple verifiers for comparison-based public quantum coins with private verification. In Game~\ref{game:multi_nonadapt_rational-fair}, we assume $\verify_\sk$ represents the private verification, and $\Count_{refund}$ refers to the private count. In this particular security game, we assume that the private verification also returns the post verified state. The $\Count$ algorithm used in Game~\ref{game:nonadapt_rational-fair}, is the public count. 

\begin{boxx}{\procedure[linenumbering, space=auto]{$\mnapf^{\adv,\MS}_{\secpar}$:}
{sk\gets\keygen(\secparam)\\
\pcfor  i \in [k] \t\text{ \Comment{$k\in \poly$, refers to the number of verifiers.}}\\
\rho_1^i,\ldots,\rho_{m_i}^i \xleftarrow{\text{$\rho_i$ can be potentially entangled}} \adv^{\bank(\sk), \verify_\sk()}(\secparam)\\
\rho_i\equiv (\rho_1^i,\ldots,\rho_{m_i}^i)\\
m'_i \gets \Count(\rho_i)\\
\text{Denote $\omega'_i$ be the state of the wallet after receiving the coins}\\
refund'_i \gets \Count_{refund}(\omega'_i)\text{ \Comment{ This is not revealed to the adversary, \adv.}}\\
\pcendfor\\
\pcreturn m_1,m'_1,\ldots m_k, m'_k,refund'_1\ldots,refund'_k.
}
} 
\caption{Multiverifier nonadaptive Security against sabotage}
\label{game:multi_nonadapt_rational-fair}
\end{boxx}
With respect to Game~\ref{game:nonadapt_rational-fair}, we define the following quantities.
For every $i \in [k]$,
\begin{equation}\label{eq:sing_loss_def}
     L_i(\adv) = \begin{cases} 
                m_i + 1 - refund'_i, &\text{if $m_i=m'_i$,}\\
                1-refund'_i, &\text{otherwise.}
              \end{cases}
\end{equation}
Since, $\MS$ is a comparison-based scheme, we assume that the initially before receiving the adversary's money, the wallet has one fresh coin and hence has a worth of $1$.

Finally we define the following loss-function,
\begin{equation}\label{eq:multi_rational_loss_def}
     L_{\text{multi-ver}}(\adv) = \sum_{i=1}^k L_i(\adv).
\end{equation}

We will use $L_i(\adv)$ and $L_{\text{multi-ver}}(\adv)$ to represent in short \newline$L_i(\text{\mnapf}^{\adv,\MS}_{\secpar})$ and
\newline$\uanmna(\mnapf^{\adv,\MS}_{\secpar})$ respectively, for every $\adv$ in Game~\ref{game:multi_nonadapt_rational-fair} and $i \in [k]$.

We shall view \nom{Ls}{$L_i(\adv)$}{loss of the $i^{th}$ verifier due to $\adv$, in the context of multiverifier nonadaptive rational security against sabotage} as the loss of the $i^{th}$ verifier due to $\adv$, and \nom{Lm}{$L_{\text{multi-ver}}(\adv)$}{combined loss of the verifiers due to the adversary, \adv, in the context of multiverifier nonadaptive rational security against sabotage} as the combined loss of the verifiers due to the adversary. We shall view $L_{\text{multi-ver}}(\adv)$ as the multiverifier version of the all-or-nothing loss function $L(\napf^{\adv,\MS}_{\secpar})$ (also represented as $L(\adv)$) defined in \cref{eq:loss_def}. Just like in the single-verifier case, we use the all-or-nothing loss function to define rational security against private sabotage. We do not know if our construction is secure against private sabotage with respect to a flexible loss function in Game~\ref{game:multi_nonadapt_rational-fair}.

\begin{definition}[Multiverifier rational security against private sabotage]\label{definition:multi_rational_fair}
A money scheme $\MS$ is \narpf if for every QPA $\adv$ in Game~\ref{game:multi_nonadapt_rational-fair}, there exists a negligible function $\negl$ such that
\begin{equation}
    \mathbb{E}(L_{\text{multi-ver}}(\mnapf^{\adv,\MS}_{\secpar})) \leq \negl,
    \label{eq:multi-rational-fairness}
\end{equation} where $L_{\text{multi-ver}}(\mnapf^{\adv,\MS}_{\secpar})$ (same as $L_{\text{multi-ver}}(\adv)$) is as defined in \cref{eq:multi_rational_loss_def}.
\end{definition}

\begin{definition}\label{definition:multi_rational_secure}
A money scheme $\MS$ is \mnars
 if it is both \mnaruf and \mnarpf (see \cref{definition:multi-rational_unforge,definition:multi_rational_fair}).
\end{definition}
Ideally, one should have required multiverifier security against sabotage instead of multiverifier security against private sabotage in \cref{definition:multi_rational_secure}. Since we do not discuss public sabotage for multiverifier settings (see the opening discussion in \cref{appendix:multi-ver-fair} on \cpageref{appendix:multi-ver-fair}), we settle for just multiverifier security against private sabotage. However, public sabotage for multiverifier settings can be defined by extending the single-verifier definition analogous to the private setting, and our construction does achieve multiverifier security against public sabotage, see \cref{remark:single-multi-rational-fair-public}.

\subsection{Results in the multiverifier setting}\label{appendix:multi-ver-results}
Next, we prove that the scheme \pkqc (discussed in \cref{alg:ts}), is \mnars, if the underlying private scheme is \mnauf, see \cref{cor:multi-rational-secure}. The way we prove \cref{cor:multi-rational-secure}, is via the following steps. First, we prove that, the scheme $\pkqc$ is \mnarpf, see \cref{cor:multi_ver-rational-fair}. The way we prove it, is by first reducing multiverifier nonadaptive rational security against sabotage to the single-verifier variant (for any public money scheme), see \cref{prop:sing-multi-rational-fair}, and then using \cref{thm:rational priv fair} on top of it. Then, we prove that, if the underlying private scheme, \prqc, is \mnauf, then multiverifier nonadaptive rational unforgeability for the scheme $\pkqc$, can be reduced to multiverifier nonadaptive rational security against sabotage. Hence, using \cref{cor:multi_ver-rational-fair}, we conclude that, if the underlying private scheme, \prqc, is \mnauf, then, \pkqc is \mnaruf, see \cref{prop: multi_ver-unforge}. Finally, by combining \cref{prop: multi_ver-unforge,cor:multi_ver-rational-fair}, we conclude the proof of \cref{cor:multi-rational-secure}.

\begin{proposition}\label{prop:sing-multi-rational-fair}
If a comparison-based public quantum coin scheme with private verification $\MS$ is (unconditionally) \narpf, then it is also (respectively, unconditionally) \mnarpf.
\end{proposition}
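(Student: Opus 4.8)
The plan is a standard ``guess the target verifier'' reduction. Given an arbitrary adversary $\adv$ in the multiverifier game (Game~\ref{game:multi_nonadapt_rational-fair}) that attacks $k=k(\secpar)\in\poly$ verifiers, I would build a single-verifier adversary $\bdv$ for Game~\ref{game:nonadapt_rational-fair} that first samples an index $i\gets[k]$ uniformly at random and then runs $\adv$ internally: every $\bank(\sk)$ and $\verify_\sk()$ query of $\adv$ is forwarded to $\bdv$'s own oracles; when $\adv$ outputs the register tuple $\rho_j$ intended for the $j$-th verifier, if $j\neq i$ then $\bdv$ mints a fresh coin with $\bank(\sk)$, runs the public (comparison-based) count $\Count$ on $\rho_j$ using that coin as the wallet, and returns the outcome $m'_j$ to $\adv$, while if $j=i$ then $\bdv$ forwards $\rho_i$ to the real verifier of Game~\ref{game:nonadapt_rational-fair} and relays $m'_i$ back. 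Since $\Count$ never returns the post-verification state, $\bdv$ never has to hand anything back to $\adv$, so there is no register bookkeeping to worry about.

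Next I would argue that this simulation is perfectly faithful, which rests on two points. First, because $\MS$ is \emph{comparison-based}, its public verification needs no secret key --- a fresh coin obtained from $\bank(\sk)$ is a valid public key --- so $\bdv$ reproduces the counts $m'_j$ for $j\neq i$ with exactly the right distribution. Second, the refund values $refund'_j$ are never revealed to the adversary in Game~\ref{game:multi_nonadapt_rational-fair}, hence $\adv$'s entire view (oracle answers together with the public counts $m'_1,\dots,m'_k$) is reproduced by $\bdv$ with the correct joint distribution. Consequently, conditioned on $\bdv$ having sampled the index $i$, the joint law of $(m_i,m'_i,\omega'_i)$ --- where $\omega'_i$ is the post-verification wallet --- in $\bdv$'s interaction equals that in the multiverifier game, and since $L(\bdv)$ in Game~\ref{game:nonadapt_rational-fair} and $L_i(\adv)$ in Game~\ref{game:multi_nonadapt_rational-fair} are the \emph{same} function of precisely those quantities (both start from one fresh wallet coin and refund via $\Count_{refund}$), we get $\mathbb{E}[L(\bdv)\mid\text{index}=i]=\mathbb{E}[L_i(\adv)]$. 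Averaging over the uniform choice of $i$ and using $L_{\text{multi-ver}}(\adv)=\sum_{i=1}^k L_i(\adv)$ (see \cref{eq:multi_rational_loss_def}),
\begin{equation}
  \mathbb{E}\big(L(\bdv)\big)=\frac1k\sum_{i=1}^k\mathbb{E}\big(L_i(\adv)\big)=\frac1k\,\mathbb{E}\big(L_{\text{multi-ver}}(\adv)\big).
\end{equation}
Invoking the hypothesis that $\MS$ is \narpf gives a negligible function $\negl$ with $\mathbb{E}(L(\bdv))\le\negl$, whence $\mathbb{E}(L_{\text{multi-ver}}(\adv))\le k(\secpar)\cdot\negl$, which is again negligible because $k\in\poly$. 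As $\adv$ was arbitrary, $\MS$ is \mnarpf. The unconditional flavor is literally the same argument: if $\adv$ is computationally unbounded but issues polynomially many oracle queries, then so does $\bdv$ (it adds at most $k-1$ extra $\bank$ calls on top of those made by $\adv$), and one applies the unconditional single-verifier hypothesis instead.

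I expect the only delicate point to be the faithful-simulation claim of the second paragraph: one must be careful that $\bdv$ can impersonate \emph{all} the non-target verifiers using only $\bank(\sk)$ access --- this is exactly where ``comparison-based'' is essential --- and that nothing depending on the hidden refunds $refund'_j$ ever leaks into $\adv$'s view, so that restricting the real interaction to a single verifier does not perturb $\adv$'s strategy. Everything else is linearity of expectation and the observation that $k$ is polynomial.
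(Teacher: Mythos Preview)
Your proof is correct and follows essentially the same ``guess the target verifier'' reduction as the paper: pick $i\in[k]$ uniformly, simulate the other $k-1$ verifiers locally using fresh coins from $\bank(\sk)$ (which is precisely where comparison-based verification is needed), forward $\rho_i$ to the real verifier, and conclude via $\mathbb{E}(L(\bdv))=\tfrac{1}{k}\,\mathbb{E}(L_{\text{multi-ver}}(\adv))$ together with $k\in\poly$. Your write-up is in fact a bit more careful than the paper's in spelling out why the simulation is faithful (refunds are never revealed to $\adv$, and a fresh minted coin suffices to run $\Count$); the only superfluous detail is ``relaying $m'_i$ back,'' since in the single-verifier game $\bdv$ need not---and cannot---continue after submitting, but this is harmless because $L_i(\adv)$ depends only on round $i$.
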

The proof is given on \cpageref{pf:prop:sing-multi-rational-fair}.
We would like to emphasize that the same holds in the case of public sabotage attacks using a very similar proof, even though we have not defined multiverifier security against public sabotage. 
\begin{remark}\label{remark:single-multi-rational-fair-public}
If a comparison-based public quantum coin scheme with private verification $\MS$ is (unconditionally) \nrrsp, then it is also (respectively, unconditionally) \mnrrsp\footnote{Even though we do not consider multiverifier security against public sabotage in this work (see the opening discussion in \cref{appendix:multi-ver-fair} on \cpageref{appendix:multi-ver-fair}) and hence have not defined it formally, one can easily extend single-verifier security against public sabotage to the multiverifier version, similar to how it is done in the private sabotage case (see Games~\ref{game:multi_nonadapt_rational-fair} and~\ref{game:nonadapt_rational-fair}).}. Hence, by \cref{thm:rational public fair}, $\pkqc$ in \cref{alg:ts} is \mnrrsp.
\end{remark} 

Combining \cref{thm:rational priv fair,prop:sing-multi-rational-fair}, we get the following corollary.
\begin{corollary}\label{cor:multi_ver-rational-fair}
The scheme $\pkqc$ (see \cref{alg:ts}) is \mnarpf, unconditionally if $\prqc.\verify$, the verification of the underlying private scheme $\prqc$, is a rank-$1$ projective measurement.
\end{corollary}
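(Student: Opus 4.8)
The plan is to derive this corollary immediately by chaining the single-verifier security guarantee with the single-to-multiverifier reduction, so very little new work is needed. First I would recall that the construction $\pkqc$ of \cref{alg:ts} is a comparison-based public quantum coin scheme with private verification, with $\verify$ and $\verify_{bank}$ playing the roles of $\verify_{pk}$ and $\verify_{sk}$ respectively; this is exactly the class of schemes to which both \cref{thm:rational priv fair} and \cref{prop:sing-multi-rational-fair} apply. Since the hypothesis of the corollary is precisely that $\prqc.\verify(\sk,\cdot)$ is a rank-$1$ projective measurement, \cref{thm:rational priv fair} gives that $\pkqc$ is unconditionally \narpf. Then I would invoke \cref{prop:sing-multi-rational-fair} with $\MS=\pkqc$: any comparison-based public quantum coin scheme with private verification that is unconditionally \narpf is also unconditionally \mnarpf. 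Combining these two statements yields the claim.

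The substance of the argument therefore lives inside \cref{prop:sing-multi-rational-fair}, whose reduction I would summarize as follows. Given a multiverifier adversary $\adv$ in Game~\ref{game:multi_nonadapt_rational-fair} attacking $k\in\poly$ independent verifiers (while also having oracle access to $\bank(\sk)$ and to the private-verification oracle $\verify_\sk()$ whose post-verified states are withheld), one fixes an index $i\in[k]$ and builds a single-verifier adversary $\bdv$ that internally simulates $\adv$, forwards $\adv$'s $\bank$ and $\verify_\sk$ queries to its own oracles, and forwards only the $i$-th payment $\rho_i$ to the single verifier. Because every verifier begins from an independent fresh public coin $\bank(\sk)$ and never returns any post-verified register, and because the refund values $\Count_{refund}(\omega'_i)$ are not revealed to $\adv$, the marginal distribution of $\adv$'s interaction with verifier $i$ is faithfully reproduced by $\bdv$, so $\mathbb{E}(L_i(\adv)) = \mathbb{E}(L(\bdv))$, which is negligible by \narpf. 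Since the loss function is additive over verifiers by construction (see \cref{eq:multi_rational_loss_def}), linearity of expectation gives $\mathbb{E}(L_{\text{multi-ver}}(\adv)) = \sum_{i=1}^{k}\mathbb{E}(L_i(\adv)) \le k\cdot\negl$, which is still negligible because $k\in\poly$.

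The only mild subtlety to be careful about is that the single-verifier reduction must simulate $\adv$'s access to the \emph{shared} $\bank$ and $\verify_\sk$ oracles across all $k$ sub-attacks, not just the ones relevant to verifier $i$; since $\bdv$ is itself granted these oracles in the single-verifier game (Game~\ref{game:nonadapt_rational-fair}), this simulation is routine, and it also correctly tracks the single counter $n'$ of accepted private verifications. I do not expect a genuine obstacle here: the corollary is a direct combination of two previously proved results, and the real technical content is contained in \cref{thm:rational priv fair} (the symmetric-subspace refund analysis showing the bank's fractional refund never short-changes an honest wallet in expectation) and in the independence-plus-linearity reduction of \cref{prop:sing-multi-rational-fair}, both established elsewhere in the paper.
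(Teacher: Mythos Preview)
Your derivation of the corollary itself is correct and matches the paper exactly: the statement follows immediately by applying \cref{prop:sing-multi-rational-fair} to $\MS=\pkqc$ once \cref{thm:rational priv fair} has established that $\pkqc$ is unconditionally \narpf.

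Your sketch of the reduction inside \cref{prop:sing-multi-rational-fair}, however, omits a step the paper spells out and that your justification does not cover. In Game~\ref{game:multi_nonadapt_rational-fair} the adversary $\adv$ \emph{does} observe the outcome $m'_j$ of each verifier and may adapt its next submission accordingly, so the single-verifier simulator cannot just ``forward only the $i$-th payment'' and appeal to independence of the verifiers. The paper's $\widetilde{\adv}$ actively simulates the other $k-1$ public verifications by drawing a fresh coin from its own $\bank(\sk)$ oracle and running $\Count_{\ket{\cent}}$ on $\sigma_j$ itself; this is precisely where the comparison-based nature of the scheme is used, and it is what makes the marginal on verifier $i$ identical. (The paper also chooses $i$ uniformly at random rather than fixing it, getting $\mathbb{E}(L(\widetilde{\adv}))=\tfrac{1}{k}\mathbb{E}(L_{\text{multi-ver}}(\adv))$, though your per-$i$ version works equally well.) Finally, the counter $n'$ you mention belongs to the unforgeability game (Game~\ref{game:multi_nonadapt_unforge}); it does not appear in the sabotage game and plays no role here.
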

Our scheme $\pkqc$ (see \cref{alg:ts}) also satisfies multiverifier nonadaptive rational unforgeability in the following sense.
\begin{proposition}\label{prop: multi_ver-unforge}
The scheme $\pkqc$ (described in \cref{alg:ts}) is  \mnaruf (see \cref{definition:multi-rational_unforge} and \cref{definition:unconditional_unforgeability}), if the underlying private scheme $\prqc$ is (resp., unconditionally) \mnauf (see \cref{definition:byzantine-multi-unforge}), and $\prqc.\verify$ is a rank-$1$ projective measurement. 
\end{proposition}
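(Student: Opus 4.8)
The plan is to deduce the proposition from two facts already available for $\pkqc$: that it is \mnarpf (\cref{cor:multi_ver-rational-fair}), and that, by hypothesis, the underlying $\prqc$ is \mnauf (\cref{definition:byzantine-multi-unforge}). Fix an adversary $\adv$ (QPT, or computationally unbounded when $\prqc$ is unconditionally \mnauf) in Game~\ref{game:multi_nonadapt_unforge} against $\pkqc$; say it makes $n$ calls to $\bank$ (each returning a public coin $\ket{\cent}=\ket{\mill}^{\tensor\kappa}$), interacts with $k$ public verifiers, submitting the $m_i\kappa$-register state $\rho_i$ to the $i$-th one, and makes private-verification queries $\verify_{bank}$ on $\kappa$-register states $\sigma_1,\sigma_2,\ldots$ The goal is $\mathbb{E}(\uanmna)\le\negl$, where $\uanmna=\sum_{i}U_i+n'-n$ as in \cref{eq:multi_rational_utility_def}.

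The first step is an exact per-realization identity linking the unforgeability game to the sabotage game. Let $\bdv$ be the sabotage adversary of Game~\ref{game:multi_nonadapt_rational-fair} that simulates $\adv$, forwarding its $\bank$ and $\verify_{bank}$ queries and handing $\rho_i$ to the $i$-th sabotage verifier. Coupling the two executions through the same secret key, minted coins and verification outcomes (the verifiers reveal only the accept count $m'_i$, not the refund), the public verification of the $i$-th batch behaves identically in both games, so, comparing the all-or-nothing utility $U_i$ (which is $m_i$ if all $m_i$ coins pass and $0$ otherwise) with the all-or-nothing loss $L_i$ (which is $m_i+1-refund'_i$ if all pass and $1-refund'_i$ otherwise), one obtains $U_i=L_i+refund'_i-1$ in every realization, in both cases. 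Summing over $i$ gives
\[\uanmna \;=\; L_{\text{multi-ver}}(\bdv)\;+\;\sum_{i=1}^{k}refund'_i\;-\;k\;+\;n'\;-\;n .\]
Since $\pkqc$ is \mnarpf, $\mathbb{E}\big(L_{\text{multi-ver}}(\bdv)\big)\le\negl$, so it remains to control $\mathbb{E}\big(\sum_i refund'_i + n' - n - k\big)$.

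The second step expresses this residual in terms of $\prqc.\Count$. For $\pkqc$, $refund'_i=\Count_{refund}(\omega'_i)$ is the value of $\Count_{bank}$ (see \cref{alg:ts}) on the $m_i+1$ length-$\kappa$ chunks of the post-verification wallet $\omega'_i$, and $\verify_{bank}$ accepts each chunk with probability exactly $\prqc.\Count(\sk,\cdot)/\kappa$, whence $\mathbb{E}(refund'_i)=\tfrac1\kappa\,\mathbb{E}\big(\prqc.\Count(\sk,\omega'_i)\big)$. Now $\omega'_i$ arises from $\ket{\cent}\tensor\rho_i$ (the verifier's fresh wallet coin being independent of $\adv$) by a sequence of symmetric-subspace measurements, and the operator counting $\ket{\mill}$-registers is permutation invariant, hence commutes with the projection onto any symmetric subspace; therefore its expectation is unchanged by each such measurement once we average over the outcome, so $\mathbb{E}\big(\prqc.\Count(\sk,\omega'_i)\big)=\kappa+\mathbb{E}\big(\prqc.\Count(\sk,\rho_i)\big)$. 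Likewise $\mathbb{E}(n')=\sum_q\tfrac1\kappa\,\mathbb{E}\big(\prqc.\Count(\sk,\sigma_q)\big)$. Substituting, the residual telescopes to $\tfrac1\kappa\big(\sum_i\mathbb{E}(\prqc.\Count(\sk,\rho_i))+\sum_q\mathbb{E}(\prqc.\Count(\sk,\sigma_q))-n\kappa\big)$.

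The third step bounds this bracketed quantity by a further reduction to \mnauf of $\prqc$: let $\mathcal{C}$ be the $\prqc$-multiverifier adversary that internally runs $\adv$, answering each $\bank$ query with $\kappa$ freshly minted private coins, and for the $i$-th public-verifier action minting another $\kappa$ private coins to form $\ket{\cent}$, performing the symmetric-subspace verifications itself (feeding $\adv$ the outcomes and retaining $\omega'_i$) and submitting $\omega'_i$ to a fresh $\prqc$-branch, while each $\verify_{bank}$ query $\sigma_q$ is sent to a fresh branch from whose output $\mathcal{C}$ simulates the biased coin. Then $\mathcal{C}$ makes $n\kappa+k\kappa$ mint queries and $\mathbb{E}\big(\uflmna(\mathcal{C})\big)=\sum_i\mathbb{E}(\prqc.\Count(\sk,\omega'_i))+\sum_q\mathbb{E}(\prqc.\Count(\sk,\sigma_q))-(n\kappa+k\kappa)$, which, using $\mathbb{E}(\prqc.\Count(\sk,\omega'_i))=\kappa+\mathbb{E}(\prqc.\Count(\sk,\rho_i))$ again, is exactly $\kappa$ times the bracketed quantity; since $\prqc$ is \mnauf and $\uflmna$ is bounded above by a polynomial, $\mathbb{E}\big(\uflmna(\mathcal{C})\big)\le\negl$. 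Hence $\mathbb{E}(\uanmna)\le\mathbb{E}\big(L_{\text{multi-ver}}(\bdv)\big)+\tfrac1\kappa\,\mathbb{E}\big(\uflmna(\mathcal{C})\big)\le\negl$, and the unconditional case follows verbatim. The main obstacle is making the two simulations faithful: one must verify that the ability of $\adv$ to adapt across verifiers, the ``absorb on failure'' behaviour of $\pkqc.\verify$ (so that $\omega'_i$ always has $(m_i+1)\kappa$ registers regardless of outcomes), and the fact that $\verify_{bank}$ destroys its input and returns only a bit, are all respected, and that the mint-count bookkeeping matches exactly; the single quantitative input is the conservation identity for the expected $\ket{\mill}$-register count under symmetric-subspace measurement, which rests on the commutation property behind \cref{lemma:ASub-symmetric_invariant_properties}.
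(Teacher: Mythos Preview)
Your proof is correct and follows essentially the same two-step reduction as the paper: build a sabotage adversary $\bdv$ (the paper's $\widetilde{\adv}$) so that $\uanmna(\adv)$ and $L_{\text{multi-ver}}(\bdv)$ differ by a residual, then bound that residual via a private-scheme multiverifier forger $\mathcal{C}$ (the paper's $\bdv$) using \mnauf of $\prqc$, and finally invoke \cref{cor:multi_ver-rational-fair} to bound the loss. One cosmetic remark: your detour through the commutation identity $\mathbb{E}(\prqc.\Count(\omega'_i))=\kappa+\mathbb{E}(\prqc.\Count(\rho_i))$ is not actually needed here---the paper works directly with $\omega'_i$ throughout, since $\mathbb{E}(refund'_i)=\tfrac1\kappa\mathbb{E}(\prqc.\Count(\omega'_i))$ already matches the private forger's contribution from submitting $\omega'_i$; you apply commutation once to pass to $\rho_i$ and then again to come back, so the two uses cancel.
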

The proof is given on \cpageref{pf:prop: multi_ver-unforge}.

Combining \cref{cor:multi_ver-rational-fair,prop: multi_ver-unforge}, we get the following corollary. 
\begin{corollary}\label{cor:multi-rational-secure}
The scheme $\pkqc$ (described in \cref{alg:ts}) is (resp., unconditionally) \mnars (see \cref{definition:multi_rational_secure} and \cref{definition:unconditional_unforgeability}), provided the underlying private scheme $\prqc$ is (resp., unconditionally) \mnauf (see \cref{definition:byzantine-multi-unforge}), and $\prqc.\verify$ is a rank-$1$ projective measurement. 
\end{corollary}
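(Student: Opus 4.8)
The plan is to combine the two component results already established in the excerpt, namely \cref{prop: multi_ver-unforge} (multiverifier nonadaptive rational unforgeability of $\pkqc$, assuming $\prqc$ is \mnauf and $\prqc.\verify$ is rank-$1$) and \cref{cor:multi_ver-rational-fair} (multiverifier nonadaptive rational security against private sabotage of $\pkqc$, assuming $\prqc.\verify$ is rank-$1$). Since \cref{definition:multi_rational_secure} defines \mnars as the conjunction of \mnaruf and \mnarpf, the corollary follows immediately once both conjuncts are in hand. First I would invoke \cref{prop: multi_ver-unforge} with the hypothesis that $\prqc$ is (resp., unconditionally) \mnauf and $\prqc.\verify$ is a rank-$1$ projective measurement; this yields that $\pkqc$ is (resp., unconditionally) \mnaruf. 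Then I would invoke \cref{cor:multi_ver-rational-fair}, which only needs $\prqc.\verify$ to be rank-$1$ (a hypothesis already present), to conclude that $\pkqc$ is unconditionally \mnarpf, hence in particular \mnarpf in the conditional flavor as well.

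The one point that requires a sentence of care is the matching of the ``unconditional'' qualifiers. \cref{cor:multi_ver-rational-fair} gives \mnarpf \emph{unconditionally} regardless of any assumption on the unforgeability of $\prqc$ (it is purely a statement about the rank-$1$ structure of the private verification and the fractional-refund mechanism $\pkqc.\Count_{bank}$). So the security-against-sabotage conjunct is ``free'' in both flavors. The unforgeability conjunct, on the other hand, inherits its flavor from $\prqc$: if $\prqc$ is \mnauf only against QPT adversaries, then by \cref{prop: multi_ver-unforge} $\pkqc$ is \mnaruf only against QPT adversaries; if $\prqc$ is unconditionally \mnauf, then $\pkqc$ is unconditionally \mnaruf. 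Putting the two conjuncts together with the weaker of the two flavors (which is always the unforgeability flavor, since sabotage is always unconditional), we obtain exactly the ``(resp., unconditionally)'' parenthetical in the statement of \cref{cor:multi-rational-secure}.

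I do not anticipate a genuine obstacle here — the corollary is a bookkeeping combination of two already-proved propositions — but the subtlety to get right is simply that \cref{definition:multi_rational_secure} bundles \emph{private}-sabotage security (not full security against sabotage) into \mnars, which is consistent with the fact that the excerpt never defines a multiverifier public-sabotage notion; \cref{remark:single-multi-rational-fair-public} separately records that public-sabotage security also lifts to the multiverifier setting via the same reduction, but it is not needed for this corollary. A short proof would therefore read: ``By \cref{prop: multi_ver-unforge}, the scheme $\pkqc$ is (resp., unconditionally) \mnaruf. By \cref{cor:multi_ver-rational-fair}, it is unconditionally \mnarpf, and in particular \mnarpf. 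Hence, by \cref{definition:multi_rational_secure}, $\pkqc$ is (resp., unconditionally) \mnars.''
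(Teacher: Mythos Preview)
Your proposal is correct and matches the paper's own argument exactly: the paper states just before the corollary that it follows by combining \cref{prop: multi_ver-unforge} and \cref{cor:multi_ver-rational-fair}, and your handling of the conditional/unconditional flavors is precisely the intended reading.
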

Recall that, we use the private coins scheme given in~\cite{JLS18} \anote{(or the simplified version in~\cite{BS19})} and~\cite{MS10} to instantiate our construction, $\pkqc$ (see \cref{alg:ts}). In~\cite{JLS18}, the authors only discuss nonadapative unforgeability, but it can be shown that the scheme is indeed \mnauf. This follows from~\cite[Theorem 5]{JLS18}, that the authors prove. In~\cite[Theorem 5]{JLS18}, the authors show that every \emph{PRS} family satisfies the \emph{Cryptographic no-cloning Theorem with Oracle} (see~\cite[Theorem 5]{JLS18}). This means given polynomially many, suppose $n$ many copies, of a uniformly random state $\ket{\phi_k}$ chosen from a PRS family, any QPT adversary, with access to the reflection oracle $I - 2\ketbra{\phi_k}$, cannot clone it to $n+1$ copies, except with negligible fidelity. 
\begin{theorem}[{Restated from~\cite[Theorem 5]{JLS18}}]
For any PRS $\{\ket{\phi_k}\}_{k \in \mathcal{K}}$, $m\in \poly$, $m' > m$, and any QPT algorithm $\mathcal{C}$, there exists a negligible function, $\negl$ such that the $m$ to $m'$, cloning fidelity of $\mathcal{C}$,
\[\mathbb{E}_{k\in \mathcal{K}}\left\langle (\ketbra{\phi_k})^{\tensor m'},\mathcal{C}^{U_{\phi_k}}((\ketbra{\phi_k})^m) \right\rangle = \negl,\] where $U_{\phi_k} = I - 2\ketbra{\phi_k}$.
\label{thm:strong_noclon_prs}
\end{theorem}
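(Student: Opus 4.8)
The statement is restated verbatim from \cite[Theorem 5]{JLS18}, so one may simply cite it; I sketch the argument for completeness. The plan is to reduce the claim to two ingredients: the analogue of the statement for Haar-random states, and the pseudorandomness of the family $\{\ket{\phi_k}\}_{k\in\mathcal{K}}$.

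First I would invoke the complexity-theoretic no-cloning theorem of \cite{Aar09}: for a state $\ket{\psi}$ on $n$ qubits drawn from the Haar measure, any algorithm that receives $m$ copies of $\ket{\psi}$ and makes $\poly$ many queries to the reflection oracle $U_\psi = I - 2\ketbra{\psi}$ outputs a state whose overlap with $(\ketbra{\psi})^{\tensor m'}$ is $\negl$, whenever $m' > m$ and $n \in \omega(\log\secpar)$, because achieving inverse-polynomial cloning fidelity would require $2^{\Omega(n)}$ queries. In particular this applies to $\mathcal{C}$, whose running time --- hence query count --- is $\poly$.

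Next I would argue by contradiction. Suppose that for some QPT $\mathcal{C}$ the cloning fidelity on the PRS is at least some inverse polynomial $\delta$ for infinitely many $\secpar$. I would build a distinguisher $D$ contradicting pseudorandomness. On input $\poly$ many copies of an unknown state $\ket{\chi}$ --- either $\ket{\phi_k}$ for a uniformly random key $k$, or a Haar-random state of the same dimension --- $D$ runs $\mathcal{C}$ on $m$ of the copies, simulating each of $\mathcal{C}$'s queries to $U_\chi = I - 2\ketbra{\chi}$ using fresh copies of $\ket{\chi}$; concretely, density-matrix exponentiation implements $e^{-i\ketbra{\chi}\pi} = I - 2\ketbra{\chi}$ to error $\epsilon$ from $O(1/\epsilon)$ copies, so taking $\epsilon = \delta/(10 q)$ with $q \in \poly$ an upper bound on the number of queries costs $\poly$ copies in total and keeps the accumulated simulation error below $\delta/10$. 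Finally $D$ estimates the overlap of $\mathcal{C}$'s output with $\ket{\chi}^{\tensor m'}$ using further fresh copies of $\ket{\chi}$, and accepts iff the estimate exceeds $\delta/2$. If $\ket{\chi}$ comes from the PRS the estimate is, with high probability, at least $\delta - \delta/10 - \negl \geq \delta/2$; if $\ket{\chi}$ is Haar-random it is $\negl$ by the previous paragraph. Hence $D$ distinguishes with inverse-polynomial advantage, contradicting pseudorandomness of $\{\ket{\phi_k}\}$. Therefore the cloning fidelity of every QPT $\mathcal{C}$ on the PRS is negligible. An equivalent route is to invoke the equivalence of pseudorandomness and \emph{strong} pseudorandomness from \cite[Theorem 3]{JLS18}, which already packages the oracle-simulation step, and then apply the Haar bound directly.

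The main obstacle is precisely this faithful simulation of $\mathcal{C}$'s reflection-oracle queries from copies of $\ket{\chi}$: one must keep the total error over all $\poly$ queries below a fixed inverse polynomial while consuming only $\poly$ copies, which is where the approximate-reflection / density-matrix-exponentiation machinery is needed, together with care that the simulation be coherent, so that controlled and in-superposition queries are handled. The remaining steps --- the union bound over per-query errors, the Chernoff-style estimation of the output overlap, and converting a noticeable gap into a distinguisher --- are routine.
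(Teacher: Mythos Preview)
Your sketch is correct and follows the argument of \cite{JLS18}: reduce to the Haar-random case via the complexity-theoretic no-cloning theorem of \cite{Aar09}, and close the gap using (strong) pseudorandomness, with the oracle-simulation step handled either directly or via \cite[Theorem~3]{JLS18}. The present paper does not prove this statement at all --- it is simply restated from \cite[Theorem~5]{JLS18} and used as a black box --- so your proposal already goes beyond what the paper provides and matches the original source's approach.
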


Since the private coin is a uniformly random state from a PRS, \cref{thm:strong_noclon_prs} can be used to prove that the private coin scheme described in~\cite{JLS18}, is in fact \mnauf, using similar arguments as in the proof of~\cite[Theorem 6]{JLS18}. Hence, it is also \auf, since for private money schemes with classical private key, adaptive unforgeability is the same as multiverifier-nonadaptive-unforgeability. More precisely, we show that,
\begin{theorem}[{Adapted from~\cite{JLS18}}]
If \prs exist, then there exists a private quantum coin scheme that is \mnauf\footnote{Since the construction in~\cite{JLS18} has a fixed classical string as the private verification key, we conclude that it is also \auf (see \cref{definition:adapt_flex_unforge}), for more details see the discussion after \cref{definition:byzantine-multi-unforge} on \cpageref{definition:byzantine-multi-unforge}.} (see \cref{definition:byzantine-multi-unforge}) such that the verification algorithm is a rank-$1$ projective measurement.
\label{thm:sqaruf}
\end{theorem}
The proof is given in~\cpageref{pf:thm:sqaruf}.

In~\cite{MS10}, the authors prove \emph{black-box unforgeability} for their scheme, (see \cref{thm:qaruuf}), in which the adversary gets access to a reflection oracle around the coin state, as a black-box. As a result, the adversary has access to multiple verifications as well as the post verified state of the money. However, in the multiverifier nonadaptive model (see Game~\ref{game:multi_nonadapt_unforge}), the adversary only has access to multiple verification, and not the post verified state. Therefore, black-box unforgeability is a stronger\footnote{In the black-box unforgeability in~\cite{MS10}, the adversary is allowed to take money states only at the beginning and unlike multiverifier nonadaptive unforgeability, not given oracle access to $\bank$. However, an adversary $\adv$, which gets oracle access to $\bank$, such as in the multiverifier nonadaptive unforgeability game (Game~\ref{game:multi_nonadapt_unforge}), can be simulated by an adversary $\bdv$, which receives all the money states in the beginning. $\bdv$ simulates \adv, by taking the maximum number of coins $\adv$ takes in all possible runs. In the end if some coins remain unused, they can be submitted along with what $\adv$ submits. The unused coins should pass verification due to completeness of the scheme.} threat model than the multiverifier nonadaptive model. Hence, by \cref{thm:qaruuf}, the private scheme in~\cite{MS10} is also \mnauf.


\begin{theorem}\label{thm:sqaruuf}
There exists an inefficient private quantum coin scheme that is \mnauf unconditionally (see \cref{definition:byzantine-multi-unforge} and \cref{definition:unconditional_unforgeability}) such that the verification algorithm is a rank-$1$ projective measurement.
\end{theorem}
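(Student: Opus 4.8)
\textbf{Proof plan for \cref{thm:sqaruuf}.}

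The plan is to observe that \cref{thm:qaruuf} already gives us almost exactly what we need, and that the only gap to close is the relation between the \emph{black-box unforgeability} notion proved in~\cite[Theorem 4.3]{MS10} and the \emph{multiverifier nonadaptive unforgeability} notion of \cref{definition:byzantine-multi-unforge}. Concretely, I would first recall the Mosca--Stebila scheme: the coin is an $n$-qubit Haar-random state $\ket{\mill}$ sampled by $\keygen$ (which is the unbounded step, making the scheme inefficient), $\bank$ outputs fresh copies of $\ket{\mill}$, and $\verify$ is the rank-$1$ projective measurement $\{\ketbra{\mill}, I - \ketbra{\mill}\}$ — so the rank-$1$ projective measurement requirement in the statement is immediate from the construction. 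The security content of~\cite[Theorem 4.3]{MS10} is that any adversary which receives polynomially many copies of $\ket{\mill}$ and has black-box oracle access to the reflection $U_{\mill} = I - 2\ketbra{\mill}$ cannot produce more copies than it received, except with negligible fidelity; this is precisely the complexity-theoretic no-cloning theorem (\cite[Theorem 2]{Aar09}, restated in the excerpt as \cref{thm:complexity-theoretic no-cloning}) applied to a Haar-random state.

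The key step is then a reduction showing that an adversary winning the multiverifier nonadaptive game (Game~\ref{game:multi_nonadapt_unforge}) against this scheme yields an adversary violating black-box unforgeability. I would argue as follows. In Game~\ref{game:multi_nonadapt_unforge} the adversary $\adv$ makes $n$ calls to $\bank(\sk)$, obtaining $n\kappa$-free copies — wait, no: for the \emph{private} scheme directly, $\adv$ obtains $n$ copies of $\ket{\mill}$, submits $\rho_i$ to the $i$-th verifier for $i\in[k]$, and also makes queries to $\verify_\sk$. The crucial point is that each private verification $\verify_\sk(\rho)$ is the two-outcome measurement $\{\ketbra{\mill}, I-\ketbra{\mill}\}$, and the post-verified state is \emph{not} returned to $\adv$ (as stipulated in the game). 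A measurement $\{\ketbra{\mill}, I-\ketbra{\mill}\}$ whose post-state is discarded can be simulated, up to the single classical output bit, using one query to the reflection oracle $U_{\mill}$: attach a fresh ancilla, apply a controlled-$U_{\mill}$, and measure the ancilla — this yields the same accept/reject statistics. Hence every $\verify_\sk$ query (and every $\Count(\rho_i)$ call, which is just a sequence of such measurements) can be replaced by $\poly$ queries to $U_{\mill}$. Therefore a QPT (or even unbounded-but-$\poly$-query) adversary in Game~\ref{game:multi_nonadapt_unforge} can be turned into an algorithm $\mathcal C$ that, given $n$ copies of $\ket{\mill}$ and $\poly$ queries to $U_{\mill}$, outputs registers whose combined overlap with $\ket{\mill}^{\otimes (n'+\sum_i m_i)}$ is non-negligible whenever $\uflmna(\adv) > 0$ with non-negligible probability. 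By \cref{thm:complexity-theoretic no-cloning} this forces $n' + \sum_i m_i \le n$ except with negligible probability, i.e. $\Pr[\uflmna(\mnauf^{\adv,\MS}_\secpar) > 0]$ is negligible, which is standard (hence also rational) multiverifier nonadaptive unforgeability. Since the argument places no computational restriction on $\adv$ beyond a polynomial query (and copy) budget, the scheme is \emph{unconditionally} \mnauf in the sense of \cref{definition:unconditional_unforgeability}.

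The main obstacle — really the only subtle point — is making the "discarded post-measurement state costs one reflection query" simulation fully rigorous, in particular handling the fact that across multiple verifiers and multiple $\verify_\sk$ queries $\adv$'s own registers evolve between queries, so one must be careful that the reduction's bookkeeping of which registers are submitted where matches the copy count in the no-cloning bound; but this is exactly the kind of delayed-measurement / register-tracking argument already used implicitly in the excerpt (e.g.\ in the footnote justifying that oracle access to $\bank$ does not help, and in the proof sketch of \cref{remark:adapt-multiver_nonadapt}), so I expect it to go through without new ideas. One should also note that, as in the discussion preceding the theorem, black-box unforgeability in~\cite{MS10} is stated with all money handed to the adversary up front rather than via a $\bank$ oracle, but the simulation argument in that same footnote (take the maximum number of coins $\adv$ ever requests, feed unused ones back at the end, invoking completeness) bridges this gap. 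Assembling these pieces gives the theorem; no lengthy calculation is required beyond quoting \cref{thm:complexity-theoretic no-cloning}.
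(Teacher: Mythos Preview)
Your proposal is correct and follows essentially the same approach as the paper: the paper's justification (given in the paragraph immediately preceding \cref{thm:sqaruuf}) is simply that the black-box unforgeability of~\cite{MS10} is a strictly stronger threat model than multiverifier nonadaptive unforgeability, since the reflection oracle lets the adversary simulate verification queries \emph{and} keep the post-measurement state, whereas Game~\ref{game:multi_nonadapt_unforge} only reveals the accept/reject bit; the footnote there also handles the $\bank$-oracle-versus-upfront-coins discrepancy exactly as you do. Your write-up spells out the simulation of $\verify_\sk$ via $U_{\mill}$ more explicitly than the paper bothers to, but the underlying reduction is the same.
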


\subsection{Multiverifier attack model for the user manual}\label{appendix:user_manual-multi-ver-equivalence}
The scheme, $\pkqc$, that we construct in \cref{alg:ts} is a public comparison-based quantum coin scheme with private verification. Any forging or sabotage attack on $\pkqc$, despite the user manual discussed in \cref{subsec:user_manual}, can be viewed as a multiverifier nonadaptive attack. If the adversary tries to forge or sabotage by submitting multiple times adaptively to the same user, then the honest user will use separate receiving wallets according to the user manual discussed in \cref{subsec:user_manual}. This is the same as the adversary submitting multiple honest verifiers one after another. The strategy of the adversary can only depend on what was the outcome of the previous honest users, to whom the adversary submitted. Any such attack is either a multiverifier nonadaptive forging or private sabotage attack, described in Games~\ref{game:multi_nonadapt_unforge} and~\ref{game:multi_nonadapt_rational-fair}, respectively, or a multiverifier public sabotage attacks, which we did not discuss. However since our construction is \nrrsp, it is also secure against public sabotage in the multiple verifiers setting, see \cref{remark:single-multi-rational-fair-public}. According to the user manual the adversary can go to the bank for refund at any point of the time. In Games~\ref{game:multi_nonadapt_unforge} and~\ref{game:multi_nonadapt_rational-fair}, the adversary can simulate the refund of the bank, using $\verify_\sk$ and the $\bank(\sk)$ oracles \footnote{In order to simulate, the adversary should send the coin to be refund, to the $\verify_\sk$ oracle, and if and only if $\verify_\sk$ accepts, it takes a coin using the $\pkqc.\bank$ oracle.}. Hence, in order to show that the scheme $\pkqc$ (described in \cref{alg:ts}) is secure against forging and sabotage attacks, when implemented using the user manual described in \cref{subsec:user_manual}, it is enough to show that the scheme $\pkqc$ on instantiating with some candidate private money schemes, has multiverifier security against public sabotage (by \cref{remark:single-multi-rational-fair-public}) and has the \mnars property, which we show in \cref{thm:multi-unconditional_secure}.
In Games~\ref{game:multi_nonadapt_unforge} and~\ref{game:multi_nonadapt_rational-fair}, $k$ denotes the number of verifiers the adversary attacks. We use the convention that $k$ is at most polynomially large, even for a computationally unbounded adversary. This makes sense because even though the adversary is computationally unbounded, the number of verifiers or branches, it can attack, should be polynomially bounded.

\section{Completeness and appendix proofs}\label{appendix:appendix_proofs}

First we give proof for completeness \cref{prop:completeness}.

\begin{proof}[Proof of \cref{prop:completeness}]\label{pf:prop:completeness} First, we show that every procedure in our construction (see \cref{alg:ts}) is a QPT. We assume that the underlying $\prqc$ scheme is complete and hence $\prqc.\bank(\sk,\ldots)$ is a QPT. Therefore, since $\kappa \in \log^c(\secpar)$ ($c>1$), $\pkqc.\bank$ is a QPT. Similarly $\pkqc.\verify_{bank}(\sk,\ldots)$ is a QPT since $\prqc.\verify(\sk,\ldots)$ is a QPT. It is known that $\{\Pi_{\Sym{n}}, (I - \Pi_{\Sym{n}})\}$ can be implemented efficiently~\cite{BBD+97}. Therefore, $\pkqc.\verify()$ is also a QPT.

We show by induction that (a) the wallet state before the $k^{th}$ verification of valid money states is $\ket{\cent}^{\tensor k}$ and (b) in the $k^{th}$ repeated verification of valid money,  $\Pr[\pkqc. \verify(\ket{\cent}) = 1] = 1.$

Base case ($k=1$): By the initialization of the wallet (see Line~\ref{line:init}), (a) is satisfied. Hence, the total state is $\ket{\cent}^{\tensor 2} = \ket{\mill}^{\tensor 2\kappa}$. Therefore,
$$ \Pr[\pkqc. \verify(\ket{\cent}) = 1] = \tr(\Pi_{\Sym{2\kappa}} \ketbra{\mill}^{\tensor 2\kappa}) = 1.$$

Induction step (assume for $k$ and prove for $k+1$): (a) The wallet state before the $k^{th}$ verification is, by assumptions, $\ket{\cent}^{\tensor k}$. In the $k^{th}$ verification, we verify the state $\ket{\cent}$, so the new wallet state immediately prior to the measurement is $\ket{\cent}^{\tensor k+1}$, and since the projection passes with probability 1 (by the induction hypothesis), we know that the wallet state does not change due to the projection. 

(b) By assuming the result which we proved in (a), the new wallet state is $\ket{\cent}^{\tensor k+1}$, which is invariant under the permutation of its registers. As such, it lies in the symmetric subspace and will therefore pass verification. 
Note that, as per the user manual (see \cref{subsec:notations}), only one transaction can be done with a receiving wallet initialized with one fresh coin. The proof above shows that, if all the coins received in a single transaction are valid public coins ($\ket{\cent}$), then they all will be accepted, and the transaction will be approved with certainty. Since every transaction is verified independently, using a separate receiving wallet, this is enough to prove completeness.
\end{proof}

Next we turn our attention towards proving rational security against sabotage of our construction $\pkqc$. 
\begin{proof}[Proof of \cref{thm:rational priv fair}]\label{pf:thm:rational priv fair}
Let $\adv$ be any computationally unbounded adversary (against single-verifier and private sabotage) who submits $m$ public coins in the security against sabotage game (Game~\ref{game:nonadapt_rational-fair}). WLOG, let the combined state of the $m$ coins be a pure state $\ket{\beta}$ and let $\omega'$ be the post measurement state of the wallet ($(m+1)$ coins). For mixed states, the proposition easily follows since every mixed state is a probabilistic ensemble of pure states.
Let $X$ be a boolean random variable such that 
 \begin{equation}\label{eq:X-def}
     X = \begin{cases}
            1 &\text{if }  \pkqc.\Count_{\ket{\cent}}(\ketbra{\beta}) = m\\ 
            0 &\text{otherwise.}
         \end{cases}
 \end{equation}
 
 Let $Y$, $Y', Z$ be random variables such that 
 \begin{align}
 Y' := \frac{\prqc.\Count(\sk, \omega')}{\kappa},\\
 Y := \frac{\prqc.\Count(\sk, \ketbra{\cent} \tensor \ketbra{\beta})}{\kappa},\\
 Z := \pkqc.\Count_{bank}(\sk, \omega'),
 \end{align}where $\prqc$ is the private scheme we lift to $\pkqc$ in \cref{alg:ts}.

Note that, by the definition of $\pkqc.\Count_{bank}$ (given in \cref{alg:ts}),
\begin{equation}\label{eq:count-bank-relation}
    \mathbb{E}(Z) = \mathbb{E}(Y').
\end{equation}

Let \nom{H}{$\HTilde{m\kappa}$}{Subspace of $(m+1)\kappa$ register states such that the state of first $\kappa$ registers is $\ket{\cent}$ and the state of the last $m\kappa$ registers is some state in $\mathbb{H}^{\tensor m\kappa}$} be the subspace defined as $\HTilde{m\kappa} := \{\ket{\cent} \tensor \ket{\psi} ~|~ \ket{\psi} \in \mathbb{H}^{\tensor m\kappa}\}$ and \nom{P}{$\Pi_{\HTilde{m\kappa}}$}{projection on to $\HTilde{m\kappa}$} be the projection on to $\HTilde{m\kappa}$.

By the definition of loss, $L(\adv)$ (see \cref{eq:loss_def} after Game~\ref{game:nonadapt_rational-fair}), 
\[L(\adv)= mX + 1 - Z.\]
\begin{align} 
\implies &\mathbb{E}(L(\adv))\\ &= m\mathbb{E}(X) + 1 - \mathbb{E}(Z)\\
&= m\mathbb{E}(X) + 1 - \mathbb{E}(Y') &\text{by \cref{eq:count-bank-relation}}\\
&= m\Pr[\pkqc.\Count_{\ket{\cent}}(\ketbra{\cent}\tensor\ketbra{\beta}) = m] - \mathbb{E}(Y') + 1 &\text{see definition in \cref{eq:X-def}}. 
\label{eq:expected-loss}
\end{align}


 As it name suggests, $\prqc.\Count$ simply counts how many registers with quantum state, $\ket{\cent}$ are present. This is indeed invariant under the permutation of the registers. We now prove that the symmetric subspace measurement ($\pkqc.\verify$) commutes with $\prqc.\Count(\sk, \ldots)$. Note that for any mixed state $\rho$ of $(m+1)\kappa$ registers,
 \begin{equation}\label{eq:count_private_def}\mathbb{E}(\prqc.\Count(\sk,\ket{\psi})) =\tr\left(\Count_{(m+1)\kappa}\rho\right),\end{equation}  and \[\Pi_{\Sym{(m+1)\kappa}} = \frac{1}{(m+1)\kappa!}\sum_{\pi \in {S}_{(m+1)\kappa}} Perm_{(m+1)\kappa}(\pi),\] where for every $n \in \NN$, \nom{Co}{$\Count_{n}$}{hamiltonian implementing $\prqc.\Count$, $\sum_{j \in [n]}  \Count_{(j,n)}$} is defined as  \[\Count_{n} := \sum_{j \in [n]}  \Count_{(j,n)},\]
  and for  every $n \in \NN$ and $j \in [n]$, \nom{Co}{$\Count_{(j, n)}$}{Hamiltonian counting whether the $j^{th}$ register is $\ket{\mill} = \ket{\phi_0}$} is defined as \[\Count_{(j, n)} := I \tensor \underbrace{\ketbra{\mill}}_j \tensor I,\] and for every $n \in \NN$ and every permutation $\pi \in {S}_n$\nomenclature[S]{$S_n$}{Symmetric group over $n$ objects}, the projector \nom{P}{$Perm_{n}(\pi)$}{ for every permutation $\pi$ of $n$ registers, it is the permutation operator $\sum_{\vec{i}\in \ZZ_d^n}\ket{\phi_{\pi^{-1}(i_1)}\ldots \phi_{\pi^{-1}(i_{n)}}}\bra{\phi_{i_1},\ldots \phi_{i_{n}}}$, (same as $P_{d}(\pi)$ in the notation of~\cite{Har13})} is defined as\[Perm_{n}(\pi) := \sum_{\vec{i} \in \ZZ_d^n}\ket{\phi_{\pi^{-1}(i_1)}\ldots \phi_{\pi^{-1}(i_{n})}}\bra{\phi_{i_1},\ldots \phi_{i_{n}}},\]
 where $\{\ket{\phi_j}\}$ is the basis for $\mathbb{H}$ defined in \cref{item:product_basis_definition} in \cref{subsec:notations}.
 \cref{eq:count_private_def} follows from the observation that for any mixed state $\rho$ of $(m+1)\kappa$ registers, the probability that the $j^{th}$ register of $\rho$ pass $\prqc.\verify(sk, )$ is $\tr(\Count_{(j, n)}\rho)$.

\paragraph*{}
For $\pi \in {S}_{(m+1)\kappa}$ and $j \in [(m+1)\kappa]$, \[Perm_{(m+1)\kappa}(\pi) \Count_{(j,(m+1)\kappa)} =  \Count_{(\pi^{-1}(j),(m+1)\kappa)} (Perm_{(m+1)\kappa}(\pi)).\]
Hence, for $j \in [(m+1)\kappa]$, 
\[\Pi_{\Sym{(m+1)\kappa}}\Count_{(j,(m+1)\kappa)} =  \Count_{(\pi^{-1}(j),(m+1)\kappa)} \Pi_{\Sym{(m+1)\kappa}}.\]

Therefore, \begin{align}
& \Pi_{\Sym{(m+1)\kappa}}\left(\Count_{(m+1)\kappa}\right)\\
&= \sum_{j \in [(m+1)\kappa]}\Pi_{\Sym{(m+1)\kappa}} \Count_{(j,(m+1)\kappa)}\\
&=  \sum_{j \in [(m+1)\kappa]}\Count_{(\pi^{-1}(j),(m+1)\kappa)} \Pi_{\Sym{(m+1)\kappa}}\\ 
&= \left(\Count_{(m+1)\kappa}\right)\Pi_{\Sym{(m+1)\kappa}} .\\
\end{align}
Therefore, the operator commutes with the projection $\Pi_{\Sym{(m+1)\kappa}}$ and hence with the $I - \Pi_{\Sym{(m+1)\kappa}}$. Using this commutation property, it can be shown that in general, if $\widetilde{\omega}$ and $\omega'$ are the states of the wallet along with the $m$ received coins, before and after symmetric subspace measurement, respectively,
\begin{equation}
    \tr\left(\Count_{(m+1)\kappa}\omega'\right)=\tr\left(\Count_{(m+1)\kappa}\widetilde{\omega}\right).
\end{equation}
Hence, in our case,
\begin{equation}\label{eq:pre-commutation}
    \tr\left(\Count_{(m+1)\kappa}\omega'\right)==\tr\left(\Count_{(m+1)\kappa}(\ketbra{\cent}\tensor\ketbra{\beta})\right).
\end{equation}

Hence, in our case, 
\begin{align}
    \mathbb{E}(Y') &= \mathbb{E}(\prqc.\Count(\sk, \omega')/\kappa)\\ &= \tr\left(\Count_{(m+1)\kappa}\omega'\right)\\ &=\tr\left(\Count_{(m+1)\kappa}(\ketbra{\cent}\tensor\ketbra{\beta})\right) &\text{By \cref{eq:pre-commutation}}\\ 
    &= \mathbb{E}(\prqc.\Count(\sk, (\ketbra{\cent}\tensor\ketbra{\beta}))/\kappa) = \mathbb{E}(Y).\label{eq: commutation}
\end{align}

The proof crucially uses the commutation property in \cref{eq: commutation} which states that the private $\prqc.\Count()$ commutes with the public verification (symmetric subspace measurement). This is a property of our construction and does not follow from the definition itself. Hence, the above proposition might fail to hold in other constructions.


Therefore by \cref{eq:expected-loss}, 
\begin{align}
\mathbb{E}(L(\adv)) &= m\Pr[\pkqc.\Count_{\ket{\cent}}(\ketbra{\cent}\tensor\ketbra{\beta})=m] - \mathbb{E}(Y') + 1\\
&= m\Pr[\pkqc.\Count_{\ket{\cent}}(\ketbra{\cent}\tensor\ketbra{\beta})=m] - \mathbb{E}(Y) + 1\\
&= m (\bra{\cent}\tensor\bra{\beta}\Pi_{\Sym{(m+1)\kappa}}\ket{\cent}\tensor\ket{\beta}) \\
 - &\bra{\cent}\tensor\bra{\beta}\frac{1}{\kappa} \Count_{(m+1)\kappa} \ket{\cent}\tensor\ket{\beta} +1\\ 
&= \bra{\cent}\tensor\bra{\beta}(m \Pi_{\Sym{(m+1)\kappa}} - \frac{1}{\kappa} \Count_{(m+1)\kappa} + I_{(m+1)\kappa})\ket{\cent}\tensor\ket{\beta}\\
&= \bra{\cent}\tensor\bra{\beta}\Qoperator\ket{\cent}\tensor \ket{\beta}\\
&= \bra{\cent}\tensor\bra{\beta}\Pi_{\HTilde{m\kappa}} \Qoperator \Pi_{\HTilde{m\kappa}}\ket{\cent}\tensor \ket{\beta} &\text{since $\ket{\cent}\tensor \ket{\beta} \in \HTilde{m\kappa}$}\\
&\leq \lambda_{\text{max}}(\Pi_{\HTilde{m\kappa}} \Qoperator \Pi_{\HTilde{m\kappa}}),\\
\end{align}
where \nom{Q}{$\Qoperator$}{$ m \Pi_{\Sym{(m+1)\kappa}} - \frac{1}{\kappa} \Count_{(m+1)\kappa} + I_{(m+1)\kappa}$} is defined as $\Qoperator := m \Pi_{\Sym{(m+1)\kappa}} - \frac{1}{\kappa} \Count_{(m+1)\kappa} + I_{(m+1)\kappa}.$

Hence, it is enough to show that the largest eigenvalue of $\Pi_{\HTilde{m\kappa}} \Qoperator \Pi_{\HTilde{m\kappa}}$ is negligible. We now show that the largest eigenvalue of $\Pi_{\HTilde{m\kappa}} \Qoperator  \Pi_{\HTilde{m\kappa}}$ is indeed negligible.\\

Recall the orthogonal set $\BasisSymTilde{m\kappa}$ defined as 
$\BasisSymTilde{m\kappa} = \left\{\ket{\BasisSymTilde{m\kappa}_{\vec{j}}}\right\}_{\vec{j} \in \mathcal{I}_{d, m\kappa}}$ (see Notations  \cref{eq:basis_def} and \cref{eq:basis_def} in \cref{subsec:notations}). By a similar argument as in the proof of \cref{lemma:eigenbasis}, we can show that \[\left(Span\left(\BasisSymTilde{m\kappa}\right)\right)^{\perp} \subset \ker(\Pi_{\HTilde{m\kappa}} \Pi_{\Sym{(m+1)\kappa}} \Pi_{\HTilde{m\kappa}}).\]
Note that $\Pi_{\HTilde{m\kappa}} \Count_{(m+1)\kappa}\Pi_{\HTilde{m\kappa}}$ and $\Pi_{\HTilde{m\kappa}}$ have non-negative eigenvalues. Therefore $\Pi_{\HTilde{m\kappa}} \Qoperator \Pi_{\HTilde{m\kappa}}$, which can be written as \[m(\Pi_{\HTilde{m\kappa}} \Pi_{\Sym{(m+1)\kappa}} \Pi_{\HTilde{m\kappa}}) - (\Pi_{\HTilde{m\kappa}} + \Pi_{\HTilde{m\kappa}} \Count_{(m+1)\kappa}\Pi_{\HTilde{m\kappa}}),\] has all its positive eigenvalues contained in the span of $\BasisSymTilde{m\kappa}$. 
  
Moreover a simple calculation (similar to what wee did in \cref{eq:eigen_vec} in the proof of \cref{lemma:eigenbasis}) shows that $\BasisSymTilde{m\kappa}$ is a set of eigenvectors of $\Pi_{\HTilde{m\kappa}}\Pi_{\Sym{(m+1)\kappa}} \Pi_{\HTilde{m\kappa}}$. 
For every $\ket{\BasisSymTilde{m\kappa}_{\vec{j}}} \in \BasisSymTilde{m\kappa}$,
\[\Pi_{\HTilde{m\kappa}} \Pi_{\Sym{(m+1)\kappa}} \Pi_{\HTilde{m\kappa}} (\ket{\BasisSymTilde{m\kappa}_{\vec{j}}}) = \frac{\binom{m\kappa}{{(j_0,j_1,\ldots, j_{d-1})}}}{\binom{(m+1)\kappa}{(j_0 + \kappa,j_1,\ldots, j_{d-1})}} (\ket{\BasisSymTilde{m\kappa}_{\vec{j}}}).\]

Clearly, $\BasisSymTilde{m\kappa}$ forms a set of eigenvectors for $\Pi_{\HTilde{m\kappa}}$ as well as for $\Count_{(m+1)\kappa}$. 
 Hence, $\BasisSymTilde{m\kappa}$ is a set of eigenvectors of $\Pi_{\HTilde{m\kappa}} \Qoperator \Pi_{\HTilde{m\kappa}}$. Since $\BasisSymTilde{m\kappa}$ spans the positive eigenvalues of $\Pi_{\HTilde{m\kappa}} \Qoperator \Pi_{\HTilde{m\kappa}}$, its maximum eigenvalue is contained in $\BasisSymTilde{m\kappa}$ (We need not care about the negative eigenvalues). A further investigation shows that for every $\ket{\BasisSymTilde{m\kappa}_{\vec{j}}} \in \BasisSymTilde{m\kappa}$, the corresponding eigenvalue is
 \begin{align}
     & m\cdot \frac{\binom{m\kappa}{{(j_0,j_1,\ldots, j_{d-1})}}}{\binom{(m+1)\kappa}{(j_0 + \kappa,j_1,\ldots, j_{d-1})}} - \frac{j_0 + \kappa}{\kappa} + 1\\
     &= m\cdot \frac{\binom{m\kappa}{{(j_0,j_1,\ldots, j_{d-1})}}}{\binom{(m+1)\kappa}{(j_0 + \kappa,j_1,\ldots, j_{d_1})}} - \frac{j_0}{\kappa}\\
     &= m\cdot \frac{\binom{m\kappa}{j_0}}{\binom{(m+1)\kappa}{j_0 + \kappa}} - \frac{j_0}{\kappa},\\
     &\leq \frac{1}{(m+1)^{\kappa - 1}}&\text{by a similar argument, used in \cref{eq:rational_unforge_utility}}\\
     &\leq \frac{1}{2^{\kappa - 1}}\\
     &\leq \frac{2}{\secpar^{\log^{c-1}(\secpar)}} &\text{since $\kappa = (\log(\secpar))^c$, $c>1$.}\\
     &= \negl &\text{since, $c>1$.}
 \end{align}
Therefore, the largest eigenvalue of $\Pi_{\HTilde{m\kappa}} \Qoperator \Pi_{\HTilde{m\kappa}}$, which is attained by some eigenvector in $\BasisSymTilde{m\kappa}$, is also negligible.
 Observe that, the term for the eigenvalue of $\Pi_{\HTilde{m\kappa}} \Qoperator \Pi_{\HTilde{m\kappa}}$ (which is the same as the loss of the verifier) is very similar (up to a negligible factor) to the term for the expected utility in  \cref{eq:rational_unforge_utility} of the adversary in the proof of  \cref{prop: rational-unforge} on \cpageref{pf:prop: rational-unforge}. 
We did not assume or require any computational assumptions on $\adv$, and therefore, the scheme $\pkqc$, is unconditionally \narpf.\footnote{We do not even require any bound on $m$ and $n$.}
\end{proof}

\begin{proof}[Proof of \cref{thm:nonadaptive secure}]\label{pf:thm:nonadaptive secure}
Combining \cref{prop: rational-unforge,thm:rational priv fair,thm:rational public fair}, we conclude that $\pkqc$ is \arnauf (resp. unconditionally \arnauf) (see \cref{definition:nonadapt_all-or-nothing_unforge,definition:unconditional_unforgeability}) and \narf (resp. unconditionally) (see \cref{definition:rational_secure_against_sabotage}). Hence, it is \nars (see \cref{definition:rational_secure}) (resp. unconditionally \nars) if the underlying $\prqc$ scheme is \auf (resp. \nauuf) (see \cref{definition:adapt_flex_unforge}) such that $\prqc.\verify$ is a rank-$1$ projective measurement.
\end{proof} 

\begin{proof}[Proof of \cref{thm:rational public fair}]\label{pf:thm:rational public fair}
We will first show that for any $\adv$ and $\forall m,k\in \NN$ and $n_1,n_2,\ldots,n_k\in \NN$  such that $\sum_{i=1}^k n_i =m$,
\begin{align}\label{eq:kto1}
    &\EE(\lanna(\nrsp^{\adv,\pkqc}_{\secpar, k, m, (n_1,\ldots,n_k)}))\\ &\leq \EE(\lanna(\nrsp^{\adv,\pkqc}_{\secpar, 1, m, (m+1)}).
\end{align}
Then in the second part, we will show that for every $m$ and $\adv$ and $\kappa$ of our choice,
\begin{equation}
    \EE(\lanna(\nrsp^{\adv,\pkqc}_{\secpar, 1, m, (m+1)}) \leq \negl,
\end{equation} 
for some negligible function $\negl$, which concludes the proof of the theorem. 
The first part of the proof is as follows.
Let $\adv$ be an arbitrary computationally unbounded adversary and without loss of generality, assume she submits a pure state. Let the $m\kappa$ register state submitted by the $\adv$ in Game~\ref{game:nonadapt_respendability} be $\ket{\beta}$. Recall that we initialize the wallet with a fresh public coin, the state of the wallet before receiving the adversary's coins is $\ket{\cent}$. 
Let $\omega'$ be the post verified state of the wallet after receiving the $m$ coins from $\adv$.
Let $X,Y,Y'$ and $Z$ be random variables defined as,
\begin{equation}\label{eq:X-def-rsp}
     X = \begin{cases}
            1 &\text{if }  \pkqc.\Count_{\ket{\cent}}(\ketbra{\beta}) = m\\ 
            0 &\text{otherwise.}
         \end{cases}
 \end{equation}
\begin{align}\label{eq:boolean-variables}
 Y' := \frac{\prqc.\Count(\sk, \omega')}{\kappa},\\
 Y := \frac{\prqc.\Count(\sk, \ketbra{\cent} \tensor \ketbra{\beta})}{\kappa},\\
 Z := \pkqc.\Count_{bank}(\sk, \omega'),
 \end{align}where $\prqc$ is the private scheme we lift to $\pkqc$ in \cref{alg:ts}.
 By the definition of $\pkqc.\Count_{bank}$ (given in \cref{alg:ts}), $\forall b\in \{0,1\}$
\begin{equation}\label{eq:count-bank-relation-resp}
    \mathbb{E}(Z|X=b) = \mathbb{E}(Y'|X=b).
\end{equation}
Hence, $\EE(Z)=\EE(Y')$.

Let $X'$ be the random variable defined as 
\begin{equation}\label{eq:X'-def-rsp}
     X' = \begin{cases}
            1 &\text{if }  \pkqc.\Count_{\ket{\cent}}(\omega') = m+1\\ 
            0 &\text{otherwise.}
         \end{cases}
 \end{equation}
Consider $\nrsp^{\adv,\pkqc}_{\secpar, 1, m, (m+1)}$ in Game~\ref{game:nonadapt_respendability}. This is the case where all the received coins are spent in one transaction.
Therefore by \cref{eq:loss-respendable-def},
\[\lanna(\nrsp^{\adv,\pkqc}_{\secpar, 1, m, (m+1)}) = mX + 1 - X[X'|X=0](m+1) - [Z|X=0](1-X). \]
Since $X$ is a boolean random variable,
\begin{align}\label{eq:exp-loss_respendable}
    \EE&(\lanna(\nrsp^{\adv,\pkqc}_{\secpar, 1, m, (m)}))\\
    =m\EE&(X) + 1 - \EE(X)\EE(X'|X=1)(m+1) - \EE(1-X)\EE(Z|X=0) \\
    =m\Pr&[X=1] + 1 - \Pr[X=1]\cdot\Pr[X'=1|X=1](m+1) - \Pr[X=0]\EE(Z|X=0).\\
    =m\Pr&[X=1] + 1 - \Pr[X=1]\cdot\Pr[X'=1|X=1](m+1) - \Pr[X=0]\EE(Y'|X=0) &\text{By \cref{eq:count-bank-relation-resp}}.
\end{align}
Now consider the game $\nrsp^{\adv,\pkqc}_{\secpar, k, m (n_1,\ldots,n_k)}$ in Game~\ref{game:nonadapt_respendability}, where $\sum_{i=1}^k n_i = m+1$.
Let the post verified state $\omega'$ be grouped into $k$ chunks, $\omega'_1,\ldots,\omega'_k$.
Let $X'_1,\ldots,X'_k$ be random variables such that $\forall i\in[k]$,
\begin{equation}\label{eq:Y_i_def}
    X'_i = \begin{cases}
            1 &\text{if }  \pkqc.\Count_{\ket{\cent}}(\omega'_i) = n_i\\ 
            0 &\text{otherwise.}
         \end{cases}
\end{equation}
By \cref{eq:loss-respendable-def},
\[\lanna(\nrsp^{\adv,\pkqc}_{\secpar, k, m, (n_1,\ldots,n_k)})= 1 + mX - X\sum_{i=1}^k [X'_i|X=1] n_i - [Z|X=0](1-X).\]
Hence by similar arguments above,
\begin{align}\label{eq:exp-loss_respendable_k}
    &\EE(\lanna(\nrsp^{\adv,\pkqc}_{\secpar, k, m, (n_1,\ldots,n_k)}))\\
&=m\Pr[X=1] + 1 - \Pr[X=1]\cdot\sum_{i=1}^k\Pr[X'_i=1|X=1]n_i - \Pr[X=0]\EE(Y'|X=0).
\end{align}
Therefore by \cref{eq:exp-loss_respendable},
\begin{align}
    &\EE(\lanna(\nrsp^{\adv,\pkqc}_{\secpar, 1, m, (m)})) \\
    -& \EE(\lanna(\nrsp^{\adv,\pkqc}_{\secpar, k, m, (n_1,\ldots,n_k)})) \\
    &= \Pr[X=1]\left(\Pr[X'=1|X=1](m+1) - \sum_{i=1}^k\Pr[X'_i=1|X=1]n_i\right)\\
    &=\Pr[X=1]\sum_{i=1}^kn_i\left(\Pr[X'=1|X=1] - \Pr[X'_i=1|X=1]\right).
\end{align} 
Hence in order to prove \cref{eq:kto1}, we just need to show that for every $i\in [k]$,
\[\Pr[X'=1|X=1] \leq \Pr[X'_i=1|X=1].\]
This would follow because of the symmetric subspace of $j+1$ registers is a subspace of the symmetric subspace of $j$ registers.  
\begin{equation}\label{eq:symmetric-subs-prop}
    \Sym{i+1} \subset \mathcal{H}\tensor\Sym{i}, \forall i\in \NN.
\end{equation}
Fix $i\in[k]$. Let $\omega''$ be the state $\omega'$ conditioned on $X=1$, i.e., $(m+1)\kappa$-register state which is the post-verified state of the wallet after the symmetric subspace verification of $\ket{\beta}$ conditioned on the fact that $X=1$. Similarly let, $\omega''_i$ be the corresponding $n_i\kappa$ register state, conditioned on $X=1$. Due to the property of the symmetric subspace mentioned in \cref{eq:symmetric-subs-prop}, $\forall k\in\NN$ and $k'<k$, 
\begin{equation}\label{eq:sym_subspace_prop}
\Pi_{\Sym{k}}=\Pi_{\Sym{k}}(I\tensor\Pi_{\Sym{k'}})
\end{equation}
\begin{align}
    \Pr&[X'=1|X=1]\\ &=\|\Pi_{\Sym{(m+2)\kappa}}((\ketbra{\mill})^{\tensor \kappa}\tensor \omega'')\| \\
    &= \|\Pi_{\Sym{(m+2)\kappa}}\left(\left(I\tensor \Pi_{\Sym{(n_i + 1)\kappa}}\right) (\ketbra{\cent}\tensor \omega'')\right)\| &\text{By \cref{eq:sym_subspace_prop}}\\
    &\leq \|(I\tensor \Pi_{\Sym{(n_i+1)\kappa + 1}}) \left(\ketbra{\cent}\tensor \omega''\right)\| &\text{Since $\Pi_{\Sym{(m+2)\kappa}}$ is a projection.}\\
    &=  \Pr[X'_i = 1|X=1].
\end{align}
In the above equations, $I\tensor \Pi_{\Sym{(n_i+1)\kappa}}$ is the projection which acts as the symmetric subspace projector on the registers of the fresh coin and $\omega''_i$.

For the second part of the proof we rewrite the expected loss term in \cref{eq:exp-loss_respendable} as the following,
\begin{align}
    &\EE(\lanna(\nrsp^{\adv,\pkqc}_{\secpar, 1, m, (m)}))\\
    &= m\Pr[X=1] + 1 - (m+1)\Pr[X=1]\Pr[X'=1|X=1] - \Pr[X=0]\cdot\EE(Y'|X=0)\\
    &= m\Pr[X=1] + 1 - (m+1)\Pr[X=1]\Pr[X'=1|X=1] - \EE(Y') + \Pr[X=1]\cdot \EE[Y'|X=1]\\
    &= m\Pr[X=1] + 1 - (m+1)\Pr[X=1]\Pr[X'=1|X=1] - \EE(Y) + \Pr[X=1]\cdot \EE[Y'|X=1] \\
    &= (m\Pr[X=1] + 1) - (m+1)\Pr[X=1,X'=1] - \EE(Y) + \Pr[X=1]\cdot \EE[Y'|X=1]
    \label{eq:formula_loss_respendable}
\end{align} 
The second last equality holds due to the commutation property that we proved earlier (see \cref{eq: commutation}).
Next we analyze each of the four terms individually and then put the analysis together to conclude the proof.

By definition of $X$ in \cref{eq:X-def-rsp},
\[\Pr[X=1]=\tr(\Pi_{\Sym{(m+1)\kappa}}((\ketbra{\mill})^{\tensor \kappa}\tensor\ketbra{\beta})).\]
The post-verified state after the symmetric subspace verification of $\ket{\beta}$ given $X=1$, i.e., all the $m$ coins passed verification,   is
\[\frac{\Pi_{\Sym{(m+1)\kappa}}((\ketbra{\mill})^{\tensor \kappa}\tensor\ketbra{\beta})\Pi_{\Sym{(m+1)\kappa}}^\dagger}{\tr(\Pi_{\Sym{(m+1)\kappa}}((\ketbra{\mill})^{\tensor \kappa}\tensor\ketbra{\beta}))}=\frac{\Pi_{\Sym{(m+1)\kappa}}((\ketbra{\mill})^{\tensor \kappa}\tensor\ketbra{\beta})\Pi_{\Sym{(m+1)\kappa}}^\dagger}{\Pr[X=1]}.\]
Hence,
\begin{align}
    &\Pr[X'=1|X=1]\\
    &=\tr\left( \Pi_{\Sym{(m+2)\kappa}}\left(\frac{(\ketbra{\mill})^{\tensor \kappa}\tensor\Pi_{\Sym{(m+1)\kappa}}((\ketbra{\mill})^{\tensor \kappa}\tensor\ketbra{\beta})\Pi_{\Sym{(m+1)\kappa}}^\dagger}{\Pr[X=1]}\right)\Pi_{\Sym{(m+2)\kappa}}\right)\\
    &= \tr \left( \frac{\Pi_{\Sym{(m+2)\kappa}}\left((\ketbra{\mill})^{\tensor \kappa}\tensor(\ketbra{\mill})^{\tensor \kappa}\tensor\ketbra{\beta}\right)\Pi_{\Sym{(m+2)\kappa}}}{\Pr[X=1]}\right)\\
    &=\tr\left( \frac{\Pi_{\Sym{(m+2)\kappa}}\left((\ketbra{\mill})^{\tensor \kappa}\tensor(\ketbra{\mill})^{\tensor \kappa}\tensor\ketbra{\beta}\right)}{\Pr[X=1]}\right)\\
    \implies &\Pr[X=1,X'=1]=\tr(\Pi_{\Sym{(m+2)\kappa}}\left((\ketbra{\mill})^{\tensor 2\kappa}\tensor\ketbra{\beta}\right)). \label{eq:X'-X}
\end{align}
The second equality holds due to \cref{eq:symmetric-subs-prop}.
Expressing the state $\ket{\beta}$ in the basis extended from the symmetric subspace basis $\BasisSym{m\kappa}$ (see \cref{eq:basis_def} in \cref{subsec:notations}), we get,
\[\ket{\beta}= \sum_{\vec{j}\in \mathcal{I}_{d,m\kappa}} a_{\vec{j}}\ket{\BasisSym{m\kappa}_{\vec{j}}} + \ket{Sym^{\perp}},\]
where $\ket{Sym^{\perp}}\in (\Sym{m\kappa})^\perp,$ and $\sum_{\vec{j}\in  \mathcal{I}_{d,m\kappa}}|a_j|^2\leq 1$.
We would encourage the reader to go through the notations in \cref{subsec:notations}.
A simple calculation similar to the proof of \cref{lemma:eigenbasis} (see \cref{item:lambda_max_P} on \cpageref{pf:lemma:eigenbasis}) shows that for every $\vec{j}\in \mathcal{I}_{d,m\kappa}$,
\begin{align}
    &\Pi_{\Sym{(m+2)\kappa}}\left((\ket{\mill})^{\tensor 2\kappa}\tensor\ket{\BasisSym{m\kappa}_{\vec{j}}}\right)\\
    &=\sqrt{\frac{\binom{m\kappa}{j_0}}{\binom{(m+2)\kappa}{j_0+2\kappa}}}\ket{\BasisSym{(m+2)\kappa}_{(j_0+2\kappa,j_1,\ldots,j_{d-1})}} \label{eq:basis_probab}
\end{align}  
Let,
\begin{equation}\label{eq:basis_prob2_symbol}
    P_{\vec{j},m\kappa,2} := \frac{\binom{m\kappa}{j_0}}{\binom{(m+2)\kappa}{j_0+2\kappa}}.
\end{equation}
Moreover since $\ket{Sym^{\perp}}\in (\Sym{m\kappa})^\perp$, 
\[\Pi_{\Sym{(m+2)\kappa}}\left((\ket{\mill})^{\tensor 2\kappa}\tensor\ket{Sym^{\perp}}\right)=0.\]
Hence, plugging them back in \cref{eq:X'-X}, we get,
\begin{align}
    \Pr[X=1,X'=1]&=\tr\left(\Pi_{\Sym{(m+2)\kappa}}\left((\ketbra{\mill})^{\tensor 2\kappa}\tensor\ketbra{\beta}\right)\Pi_{\Sym{(m+2)\kappa}}^\dagger\right)\\
    &=\left\|\sum_{\vec{j}\in \mathcal{I}_{d,m\kappa}}a_{\vec{j}}\Pi_{\Sym{(m+2)\kappa}}(\ket{\mill}^{\tensor 2\kappa}\tensor\ket{\beta})\right\|^2\\
    &=\left\|\sum_{\vec{j}\in \mathcal{I}_{d,m\kappa}}a_{\vec{j}}\sqrt{\frac{\binom{m\kappa}{j_0}}{\binom{(m+2)\kappa}{j_0+2\kappa}}}\ket{\BasisSym{(m+2)\kappa}_{(j_0+2\kappa,j_1,\ldots,j_{d-1})}}\right\|^2&\text{ By \cref{eq:basis_probab}}\\
    &=\sum_{\vec{j}\in \mathcal{I}_{d,m\kappa}}|a_{\vec{j}}|^2 \frac{\binom{m\kappa}{j_0}}{\binom{(m+2)\kappa}{j_0+2\kappa}}\\
    &=\sum_{\vec{j}\in \mathcal{I}_{d,m\kappa}}|a_{\vec{j}}|^2 \frac{\binom{m\kappa}{j_0}}{\binom{(m+2)\kappa}{j_0+2\kappa}}\\
    &=\sum_{\vec{j}\in \mathcal{I}_{d,m\kappa}}|a_{\vec{j}}|^2 P_{\vec{j},m\kappa,2}
    \label{eq:prob_2}
\end{align}

By a very similar argument we get,
\begin{align}
    &\Pr[X=1]\\
    &=\|\Pi_{\Sym{(m+1)\kappa}}\left((\ketbra{\mill})^{\tensor \kappa}\tensor\ketbra{\beta}\right)\|^2\\
    &=\sum_{\vec{j}\in \mathcal{I}_{d,m\kappa}}|a_{\vec{j}}|^2 P_{\vec{j},m\kappa}
    \label{eq:prob_1}
\end{align}
where \[P_{\vec{j},m\kappa}:=\frac{\binom{m\kappa}{j_0}}{\binom{(m+1)\kappa}{j_0+\kappa}}\quad \forall \vec{j} \in \mathcal{I}_{d,m\kappa}.\] 
Next we analyze the quantity, $\Pr[X=1]\EE(Y'|X=1)$.

Let $\omega''$ be the post verified state of the wallet after the symmetric subspace verification of $\ket{\beta}$ (the state submitted by the adversary) conditioned on $X=1$, i.e. all coins passed verification.
Hence,
\begin{align}\label{eq:def-omega-double}
    \omega''&= \frac{\Pi_{\Sym{(m+1)\kappa}}((\ketbra{\mill})^{\tensor \kappa}\tensor\ketbra{\beta})\Pi_{\Sym{(m+1)\kappa}}^\dagger}{\tr(\Pi_{\Sym{(m+1)\kappa}}((\ketbra{\mill})^{\tensor \kappa}\tensor\ketbra{\beta}))}\\
    &=\frac{\Pi_{\Sym{(m+1)\kappa}}((\ketbra{\mill})^{\tensor \kappa}\tensor\ketbra{\beta})\Pi_{\Sym{(m+1)\kappa}}^\dagger}{\Pr[X=1]}. 
\end{align}

$\forall m\in \NN$, define the following operator
\begin{equation}\label{eq:counter-def}
    Counter_m=\sum_{i=1}^{m\kappa} I_{i-1}\tensor \ketbra{\mill} \tensor I_{m\kappa-i}.
\end{equation}
It is easy to see that $\forall m\in \NN, \vec{j} \in \mathcal{I}_{d,m\kappa}$,
\begin{equation}\label{eq:basis-counter}
    Counter_m\ket{\BasisSym{m\kappa}_{\vec{j}}}= \left(\prqc.\Count(\sk,\ketbra{\BasisSym{m\kappa}_{\vec{j}}})\right)\ket{\BasisSym{m\kappa}_{\vec{j}}} =j_0\ket{\BasisSym{m\kappa}_{\vec{j}}}.
\end{equation}

Clearly, this represents the hamiltonian for $\prqc.\Count$ in the private scheme $\prqc$.
Hence by \cref{eq:def-omega-double}
\begin{align}
    &\EE(Y'|X=1)\\
    =&\frac{\prqc.\Count(\sk,\omega'')}{\kappa}\\
    =&\tr\left(\frac{Counter_{m+1}}{\kappa} \omega'' \right)\\
    =&\tr\left(\frac{Counter_{m+1}}{\kappa}\cdot \frac{\Pi_{\Sym{(m+1)\kappa}}((\ketbra{\mill})^{\tensor \kappa}\tensor\ketbra{\beta})\Pi_{\Sym{(m+1)\kappa}}^\dagger}{\Pr[X=1]}\right) \\
    =&\frac{(\bra{\mill}^{\tensor\kappa}\tensor \bra{\beta})\Pi_{\Sym{(m+1)\kappa}}^\dagger \frac{Counter_{m+1}}{\kappa} \Pi_{\Sym{(m+1)\kappa}}(\ket{\mill}^{\tensor\kappa}\tensor\ket{\beta})}{\Pr[X=1]}\\
    =&\frac{\left(\sum_{\vec{j}\in \mathcal{I}_{d,m\kappa}}\overline{a_{\vec{j}}}\sqrt{\frac{\binom{m\kappa}{j_0}}{\binom{(m+1)\kappa}{j_0+\kappa}}}\bra{\BasisSym{(m+1)\kappa}_{(j_0+\kappa,j_1,\ldots,j_{d-1})}}\right)\frac{Counter_{m+1}}{\kappa}\left(\sum_{\vec{j}\in \mathcal{I}_{d,m\kappa}}a_{\vec{j}}\sqrt{\frac{\binom{m\kappa}{j_0}}{\binom{(m+1)\kappa}{j_0+\kappa}}}\ket{\BasisSym{(m+1)\kappa}_{(j_0+\kappa,j_1,\ldots,j_{d-1})}}\right)}{\Pr[X=1]}\\
    =&\frac{\left(\sum_{\vec{j}\in \mathcal{I}_{d,m\kappa}}\overline{a_{\vec{j}}}\sqrt{\frac{\binom{m\kappa}{j_0}}{\binom{(m+1)\kappa}{j_0+\kappa}}}\bra{\BasisSym{(m+1)\kappa}_{(j_0+\kappa,j_1,\ldots,j_{d-1})}}\right)\left(\sum_{\vec{j}\in \mathcal{I}_{d,m\kappa}}a_{\vec{j}}\sqrt{\frac{\binom{m\kappa}{j_0}}{\binom{(m+1)\kappa}{j_0+\kappa}}}\frac{Counter_{m+1}}{\kappa}\left(\ket{\BasisSym{(m+1)\kappa}_{(j_0+\kappa,j_1,\ldots,j_{d-1})}}\right)\right)}{\Pr[X=1]}\\
    =&\frac{\left(\sum_{\vec{j}\in \mathcal{I}_{d,m\kappa}}\overline{a_{\vec{j}}}\sqrt{\frac{\binom{m\kappa}{j_0}}{\binom{(m+1)\kappa}{j_0+\kappa}}}\bra{\BasisSym{(m+1)\kappa}_{(j_0+\kappa,j_1,\ldots,j_{d-1})}}\right)\left(\sum_{\vec{j}\in \mathcal{I}_{d,m\kappa}}a_{\vec{j}}\sqrt{\frac{\binom{m\kappa}{j_0}}{\binom{(m+1)\kappa}{j_0+\kappa}}}(\frac{j_0+\kappa}{\kappa})\left(\ket{\BasisSym{(m+1)\kappa}_{(j_0+\kappa,j_1,\ldots,j_{d-1})}}\right)\right)}{\Pr[X=1]}\\
    \implies & \EE(Y'|X=1)\Pr[X=1] =\sum_{\vec{j}\in \mathcal{I}_{d,m\kappa}}|a_{\vec{j}}|^2\frac{\binom{m\kappa}{j_0}}{\binom{(m+1)\kappa}{j_0+\kappa}}\frac{j_0 + \kappa}{\kappa}=\sum_{\vec{j}\in \mathcal{I}_{d,m\kappa}}|a_{\vec{j}}|^2P_{\vec{j},m\kappa}\frac{j_0+\kappa}{\kappa}.
    \label{eq:exp_ref_giv_succ} 
\end{align}
The last equality follows from the definition of $P_{\vec{j},m\kappa}$. In the above derivation, we have also used \cref{eq:basis-counter} in the second last equality, \cref{eq:def-omega-double} in third equality, and the expression for $\ket{\beta}$ in the symmetric subspace basis, in the fifth equality.

Lastly, we analyze the quantity $\EE(Y)$.
First, we express the submitted state $\ket{\beta}$ in the standard product basis extended from the state $\ket{\mill}$ (see \cref{item:product_basis_definition} in \cref{subsec:notations}).
\begin{equation}
    \ket{\beta}=\sum_{\vec{i}\in \ZZ_d^{m\kappa}}\tilde{a}_{\vec{i}} \ket{\phi_{i_1},\ldots \phi_{i_{m\kappa}}}
\end{equation}
Note that, $\forall \vec{i}\in \ZZ_d^{m\kappa}$, and $\forall \vec{j} \in \mathcal{I}_{d,m\kappa}$,
\begin{equation}\label{eq:bases-relation}
    \braket{\BasisSym{m\kappa}_{\vec{j}}}{\phi_{i_1},\ldots \phi_{i_{m\kappa}}} =\begin{cases}
    =\frac{1}{\sqrt{\binom{m\kappa}{\vec{j}}}} &\text{if } T(\vec{i})=\vec{j}\\
    =0 &\text{Otherwise.}    
    \end{cases}
\end{equation}
Hence, $\forall \vec{j} \in \mathcal{I}_{d,m\kappa}$,
\begin{align} 
|a_{\vec{j}}| = \left|\braket{\beta}{\BasisSym{m\kappa}_{\vec{j}}}\right|&= \left|\sum_{\vec{i}:T(\vec{i})=\vec{j}} \frac{\tilde{a}_i}{\sqrt{\binom{m\kappa}{\vec{j}}}}\right|\\
&=\left|\frac{\sum_{\vec{i}:T(\vec{i})=\vec{j}} \tilde{a}_i}{\sqrt{\binom{m\kappa}{\vec{j}}}}\right|\\
&\leq \left(\sum_{\vec{i}:T(\vec{i})=\vec{j}} |\tilde{a}_i|^2\right)^{1/2}\cdot\left(\sum_{\vec{i}:T(\vec{i})=\vec{j} }\frac{1}{\binom{m\kappa}{\vec{j}}}\right)^{1/2} &\text{By Cauchy-Schwartz}\\
&=\left(\sum_{\vec{i}:T(\vec{i})=\vec{j}} |\tilde{a}_i|^2\right)^{1/2}&\text{Since $\# \vec{i}$ with $T(\vec{i})=j$ is $\binom{m\kappa}{\vec{j}}$. } \\
\implies &|a_{\vec{j}}|^2\leq  \sum_{\vec{i}:T(\vec{i})=\vec{j}} |\tilde{a}_i|^2.
\label{eq:basis-change} 
\end{align}

By definition (see \cref{eq:boolean-variables})
\begin{align}
    \EE(Y)&=\frac{\EE(\prqc.\Count(\sk,\ketbra{\mill}^\kappa\tensor\ketbra{\beta}))}{\kappa}\\
    &=\frac{\sum_{\vec{i}\in \ZZ_d^{m\kappa}}|\tilde{a}_{\vec{i}}|^2(\kappa+(T(\vec{i})_0))}{\kappa}\\
    &=\frac{\sum_{\vec{j}\in \mathcal{I}_{d,m\kappa}} \sum_{\vec{i}:T(\vec{i})=\vec{j}} |\tilde{a}_{\vec{i}}|^2(j_0+\kappa)}{\kappa}\\
    &= \frac{\sum_{\vec{j}\in \mathcal{I}_{d,m\kappa}} (j_0+\kappa)\sum_{\vec{i}:T(\vec{i})=\vec{j}} |\tilde{a}_{\vec{i}}|^2}{\kappa}\\
    &\geq \frac{\sum_{\vec{j}\in \mathcal{I}_{d,m\kappa}} (j_0+\kappa) |a_{\vec{j}}|^2}{\kappa} &\text{By \cref{eq:basis-change}}
    \label{eq:exp_Y}
\end{align}
The last inequality uses 
Plugging \cref{eq:prob_1}, \cref{eq:prob_2}, \cref{eq:exp_ref_giv_succ} and \cref{eq:exp_Y} in the formula of \newline$\EE(\lanna(\nrsp^{\adv,\pkqc}_{\secpar, 1, m, (m)}))$ in \cref{eq:formula_loss_respendable}, we get
\begin{align}
    &\EE(\lanna(\nrsp^{\adv,\pkqc}_{\secpar, 1, m, (m)}))\\
    &\leq m\left(\sum_{\vec{j}\in \mathcal{I}_{d,m\kappa}}|a_{\vec{j}}|^2P_{\vec{j},m\kappa}\right) + 1 -(m+1)\left(\sum_{\vec{j}\in \mathcal{I}_{d,m\kappa}}|a_{\vec{j}}|^2P_{\vec{j},m\kappa}\right)\\
    &+ \left(\sum_{\vec{j}\in  \mathcal{I}_{d,m\kappa}}|a_{\vec{j}}|^2P_{\vec{j},m\kappa}\frac{j_0+\kappa}{\kappa}\right) - \left(\sum_{\vec{j}\in \mathcal{I}_{d,m\kappa}} |a_{\vec{j}}|^2\frac{(j_0+\kappa)}{\kappa}\right)\\
    &=\sum_{\vec{j}\in  \mathcal{I}_{d,m\kappa}}|a_{\vec{j}}|^2\left(mP_{\vec{j}, m\kappa}+1 -(m+1)P_{\vec{j}, m\kappa, 2} + P_{\vec{j}, m\kappa}\frac{(j_0 + \kappa)}{\kappa} -\frac{(j_0 + \kappa)}{\kappa}\right) \\
    &=\sum_{\vec{j}\in  \mathcal{I}_{d,m\kappa}}|a_{\vec{j}}|^2\left((m+1)(P_{\vec{j}, m\kappa}-P_{\vec{j}, m\kappa, 2}) - \frac{j_0}{\kappa}(1-P_{\vec{j}, m\kappa})\right)\\
    &=\sum_{\vec{j}\in  \mathcal{I}_{d,m\kappa}}|a_{\vec{j}}|^2\left((m+1)\left(\frac{\binom{m\kappa}{j_0}}{\binom{(m+1)\kappa}{j_0 + \kappa}} - \frac{\binom{m\kappa}{j_0}}{\binom{(m+2)\kappa}{j_0 + 2\kappa}} \right) - \left(1-\frac{\binom{m\kappa}{j_0}}{\binom{(m+1)\kappa}{j_0 + \kappa}}\right)\frac{j_0}{\kappa}\right)\\
    &\leq \max_{\vec{j}\in  \mathcal{I}_{d,m\kappa}}\left((m+1)\left(\frac{\binom{m\kappa}{j_0}}{\binom{(m+1)\kappa}{j_0 + \kappa}} - \frac{\binom{m\kappa}{j_0}}{\binom{(m+2)\kappa}{j_0 + 2\kappa}} \right) - \left(1-\frac{\binom{m\kappa}{j_0}}{\binom{(m+1)\kappa}{j_0 + \kappa}}\right)\frac{j_0}{\kappa}\right)&\text{since}\sum_{\vec{j}\in  \mathcal{I}_{d,m\kappa}} |a_{\vec{j}}|^2\leq1\\
    &=\max_{0\leq j_0 \leq m\kappa}\left((m+1)\left(\frac{\binom{m\kappa}{j_0}}{\binom{(m+1)\kappa}{j_0 + \kappa}} - \frac{\binom{m\kappa}{j_0}}{\binom{(m+2)\kappa}{j_0 + 2\kappa}} \right) - \left(1-\frac{\binom{m\kappa}{j_0}}{\binom{(m+1)\kappa}{j_0 + \kappa}}\right)\frac{j_0}{\kappa}\right)\\
\end{align}
For $0\leq j_0< \frac{\kappa}{2}$, the above expression is negligible since \[\frac{\binom{m\kappa}{j_0}}{\binom{(m+1)\kappa}{j_0 + \kappa}} =\frac{\prod_{r=1}^\kappa(j_0+r)}{\prod_{r=1}^\kappa(m\kappa+r)}<\frac{\prod_{r=1}^\kappa(\frac{\kappa}{2}+r)}{\prod_{r=1}^\kappa(m\kappa+r)}\leq  \left(\frac{3\kappa/2}{(m+1)\kappa}\right)^\kappa=\left(\frac{3}{2(m+1)}\right)^\kappa,\] which is negligible  (at most $\left(\frac{3}{4}\right)^\kappa$) for all $m\geq1$ and $\kappa$ of our choice. For $j_0 = m\kappa$ the expression evaluates to $0$. For the remaining case, i.e., $\frac{\kappa}{2} \leq j_0 \leq m\kappa-1$, we used mathematica and found out that for any $m$ and $\kappa\geq4$, the expression is negative. 
Hence the expression above, which is an upper bound for the expected loss, is negligible for every $0\leq j_0\leq m\kappa$, for large enough $\kappa$ and for any $m\in\NN$. Therefore, for any $m\in \NN$ and $\adv$, and $\kappa$ of our choice (poly-logarithmic) the expected loss, $\EE(\lanna(\nrsp^{\adv,\pkqc}_{\secpar, 1, m, (m)}))$ is negligible, which finishes the second part of the proof. Since the adversary $\adv$ was arbitrary and potentially computationally unbounded, we also conclude that $\pkqc$ is $\nrrsp$ unconditionally. 
\end{proof}
\begin{proof}[Proof of \cref{prop:sing-multi-rational-fair}]\label{pf:prop:sing-multi-rational-fair}
We show a reduction from the multiverifier rational security against private sabotage to single-verifier security against private sabotage for any comparison public money scheme $\MS$ with private verification. The main intuition is that for every private sabotage attack against multiple verifiers, suppose $k$ many, we can construct a single-verifier private sabotage attack, by choosing one of the multiple verifiers, uniformly at random, and simulating the other $k-1$ verifiers and \adv. We can show that the single-verifier attack would be QPT if the multiverifier attack is QPT. Moreover, the expected loss of such a single-verifier attack is same as the multiverifier attack, up to a fraction of $k$. Recall that, $k$ is polynomial even against computationally unbounded adversaries (see Game~\ref{game:multi_nonadapt_rational-fair}). If the scheme, $\MS$ is \narpf, then for any QPT multiverifier attack, the expected loss of the corresponding single-verifier attack, described above, must be negligible. Hence, the expected loss due to the multiverifier private sabotage attack must also be negligible, since $k$ is polynomial. Similarly if $\MS$ is unconditionally \narpf, then it is also \mnarpf unconditionally.

Let $\adv$ be any multiverifier private sabotage adversary that targets $k$ verifiers. Suppose, it submits the state $\sigma_j$ to the $j^{th}$ verifier. Let $L_j(\adv)$ (see \cref{eq:sing_loss_def}) be the loss incurred on the $j^{th}$ verifier due to \adv. By \cref{eq:multi_rational_loss_def}, the combined loss due to \adv,
\[L_{\text{multi-ver}}(\adv) = \sum_{j=1}^k L_j(\adv).\]

We construct a single-verifier adversary $\widetilde{\adv}$ as follows. $\widetilde{\adv}$ picks $i\in_R [k]$, uniformly at random. It simulates $\adv$ using the $\bank(\sk)$, and $\verify_\sk$ oracles. For all $j \in [k]\setminus\{i\}$, instead of submitting $\sigma_j$ to the honest verifiers, it simulates the verifier's action or the public verification $\pkqc.\Count_{\ket{\cent}}$ on $\sigma_j$. This can be done by getting a fresh coin $\cent$ from the $\pkqc.\bank$ oracle and then simulating  $\pkqc.\Count_{\ket{\cent}}(\sigma_j)$ (comparison based verification) using it. It is essential to know the outcomes of the public verifications, the verifiers' actions, since $\adv$'s strategy may depend on these outcomes. It submits $\sigma_i$ to the verifier. Clearly, the loss of the verifier incurred due to $\widetilde{\adv}$ (see \cref{eq:loss_def}), $L(\widetilde{\adv})$ should be the same as $L_i(\adv)$, the loss of the $i^{th}$ verifier, incurred due to \adv. Therefore,
\begin{align}
    \mathbb{E}(L(\widetilde{\adv}))&=\frac{1}{k}\sum{i=1}^k\mathbb{E}(L_i(\adv))\\
    &=\frac{1}{k}\mathbb{E}(\sum_{i=1}^k L_i(\adv))\\
    &=\frac{1}{k}\mathbb{E}(L_{\text{multi-ver}}(\adv)).\\
\end{align}
Note that if $\adv$ is QPT, then $\widetilde{\adv}$ is also QPT. 
 If the scheme $\MS$ is \narpf (resp. unconditionally \narpf), the it is also \mnarpf.

\end{proof}

\begin{proof}[Proof of \cref{prop: multi_ver-unforge}]\label{pf:prop: multi_ver-unforge}
Let $\adv$ be any nonadaptive rational forger (QPT or not depending on whether the underlying private scheme $\prqc$, is unconditionally secure or not), which attacks $k$ verifiers. Recall that, $k$ is polynomial even against computationally unbounded adversaries (see Game~\ref{game:multi_nonadapt_unforge}). We will construct a private sabotage adversary $\widetilde{\adv}$ which basically just simulates $\adv$ (hence, is a QPA if $\adv$ is a QPA), such that if the underlying private scheme \prqc, is \mnauf (resp. unconditionally \mnauf), then for all such QPA (resp. computationally unbounded) $\adv$, there exists a negligible function $\widetilde{\negl}$, such that the following holds,
\begin{equation}\label{eq:loss-utility_relation}
    \mathbb{E}(L_{\text{multi-ver}}(\widetilde{\adv})) \geq \mathbb{E}(\uanmna(\adv)) + \widetilde{\negl},
\end{equation}
where $L_{\text{multi-ver}}(\widetilde{\adv})$ and $\uanmna(\adv)$ is the corresponding multiverifier loss and multiverifier utility as defined in \cref{eq:multi_rational_loss_def,eq:multi_rational_utility_def} respectively.
Since, $\pkqc$ is unconditionally \mnarpf (see \cref{cor:multi_ver-rational-fair}), $\mathbb{E}(L_{\text{multi-ver}}(\widetilde{\adv}))$ must be negligible. Hence, by \cref{eq:loss-utility_relation}, we conclude that $\mathbb{E}(\uanmna(\adv))$ must be negligible too. Therefore, if the scheme $\prqc$ is \mnauf (respectively, unconditionally \mnauf), then the scheme $\pkqc$ (that we lift in \cref{alg:ts}) is \mnaruf.

Suppose, $\adv$ submits $m_j$ alleged public coins to the $j^{th}$ verifier, and the combined state of the coins submitted (over $m_j\kappa$ registers) is $\sigma_j$. Let $\widetilde{N}$ be the random variable, which denotes the number of times $\verify_\sk$ (same as $\verify_{bank}$) accepted the alleged coins submitted by \adv. Let $N$ denote the random variable which denotes the number of times $\pkqc.\bank$ was queried by \adv. Let the post verification state of the verifier's wallet, after the public verification of the $m_j$ alleged coins, submitted by $\adv$, be $\omega'_j$. Let $N_j$ be the random variable denoting $\pkqc.\Count_{bank}(\sk, \omega'_j)$, for all $j \in [k]$.
Note that by definition of $\pkqc.\Count_{\ket{\cent}}$ in \cref{alg:ts},
\begin{equation}\label{eq:expected_count}
    \mathbb{E}(N_j) = \mathbb{E}({\pkqc.\Count_{bank}(\sk, \omega'_j)}{\kappa}) = \mathbb{E}({\prqc.\Count(\sk, \omega'_j)}{\kappa}).
\end{equation}
For every $j \in [k]$, let $X_j$ be the random variable such that
\begin{equation}
X_j = \begin{cases}
        1 &\text{if $\pkqc.\Count_{\ket{\cent}}(\sigma_j) = m_j$}\\
        0 &\text{otherwise.}
      \end{cases}
\end{equation}
We construct a private sabotage adversary $\widetilde{\adv}$ which attacks $k$ verifiers, as follows. $\widetilde{\adv}$ simulates $\adv$ and prepares the states $\sigma_1,\ldots,\sigma_k$ to be submitted to the $k$ verifiers in the same order as $\adv$ did. Note that, $\widetilde{\adv}$ has access to all the oracles as $\adv$ has, and hence can simulate \adv.

According to \cref{eq:multi_rational_utility_def}, the utility of $\adv$ is
\begin{equation}\label{eq:multi-utility}
    \uanmna(\adv) = \sum_{j=1}^k (m_jX_j + \widetilde{N} - N).
\end{equation}

By definition of multiverifier loss (see \cref{eq:multi_rational_loss_def}) and since every receiving wallet is initialized to $\ket{\cent}$ (see user manual in \cref{subsec:user_manual}), the loss incurred by $\widetilde{\adv}$ is 
\begin{align}
    L_{\text{multi-ver}}(\widetilde{\adv}) &= \sum_{j=1}^k(m_j X_j - N_j + 1).
\end{align}

Next, using multiverifier nonadaptive unforgeability of the underlying private scheme $\prqc$, we derive a relation between $N_j$'s (refunds obtained form $\omega'_j$), $\widetilde{N}$ (the number of coins accepted by $\pkqc.\verify_{bank}$) and $N$ (the number of times $\pkqc.\bank$ was called).

For the derivation, we construct a multiverifier nonadaptive forger $\bdv$ against the private scheme $\prqc$, which simulates $\adv$, as follows. The $\bank$ oracle is simulated by taking $\kappa$ private coins using $\prqc.\bank$. The $k$ verifiers targeted by \adv, can be simulated by getting $k$ additional public coins or $\kappa k$ private coins using the $\prqc.\bank$ oracle. In order to simulate every call to $\verify_\sk$ (same as $\verify_{bank}$) \bdv, submits the money to be queried, to a bank branch. The branch outputs $\prqc.\Count$ value on the submitted alleged coins, using which the $\bdv$ simulates $\pkqc.\Count_{bank}$, according to the definition of $\pkqc.\Count_{bank}$ in \cref{alg:ts}. Finally, the combined state of $\omega'_1, \ldots,\omega'_k$ is submitted to the last branch. Suppose, $\verify_\sk$ was called $r$ many times by $\adv$, then $\bdv$ submits to $r+1$ bank branches. Note that if $\adv$ is QPA, then $\bdv$ is also a QPA.

Let the money states queried by \adv, same as the ones submitted to the bank branches by \bdv, be 
\[\widetilde{\sigma_1},\ldots,\widetilde{\sigma_{r}} .\]
For every $i\in [r]$, let $Y_i$'s be the random variable defined as
\begin{equation}
    Y_i := \begin{cases} 
                1 &\text{if $\pkqc.\Count_{bank}(\widetilde{\sigma_{i}}) = 1$}\\
                0 &\text{otherwise.}
           \end{cases}
\end{equation}
By definition, 
\begin{equation}\label{eq:sum_random_var_def} 
\widetilde{N} = \sum_{i=1}^r Y_i.
\end{equation}
The sum of the expected utilities $U_i(\bdv)$ (as defined in \cref{eq:sing_rational_utility_def}), due to the first $r$ submissions, is 
\begin{align}
\sum_{i=1}^{r}\mathbb{E}(U_i(\bdv)) &= \sum_{i=1}^{r} \prqc.\Count(\sk, \widetilde{\sigma_i})\\
&= \sum_{i=1}^{r} \kappa \cdot \pkqc.\Count_{bank}(\sk, \widetilde{\sigma_i}) &\text{(by definition of $\Count_{bank}$ in \cref{alg:ts})}\\
&= \sum_{i=1}^{r} \kappa \cdot \mathbb{E}(Y_i) = \mathbb{E}(\kappa\widetilde{N}).&\text{(by \cref{eq:sum_random_var_def})}
\end{align}

The expected utility due to the last submission is clearly, \[\mathbb{E}(\sum_{j=1}^k \prqc.\Count(\omega'_j)).\] The number of private coins that $\bdv$ takes from $\prqc.\bank$ is clearly, $\kappa (N + k) $.
Hence, the net expected utility of \bdv, against $\prqc$ \[ \mathbb{E}(\uanmna(\bdv)) = \mathbb{E}(\sum_{j=1}^k \prqc.\Count(\omega'_j) + \kappa \widetilde{N} -  \kappa N -\kappa k).\] Since $\prqc$ is \mnauf, there exists a negligible function $\negl$ such that
\begin{align}
    &\mathbb{E}(\uanmna(\bdv)) = \mathbb{E}(\sum_{j=1}^k \prqc.\Count(\omega'_j) + \kappa (\widetilde{N} - N - k)) \leq \negl.\\
    &\implies \mathbb{E}(\sum_{j=1}^k \prqc.\Count(\omega'_j)) \leq \mathbb{E}(\kappa(N+k-\widetilde{N})) + \negl\\
    &\implies \mathbb{E}(\sum_{j=1}^k N_j \kappa) \leq \kappa(\mathbb{E}(N-\widetilde{N}) + k) + \negl &\text{by \cref{eq:expected_count}}.\\
    &\implies \mathbb{E}(\sum_{j=1}^k N_j) \leq \mathbb{E}(N-\widetilde{N}) + k + \widetilde{\negl},
    \label{eq:multi-priv-bound}
\end{align} where $\widetilde{\negl} = \frac{\negl}{\kappa}$. 
Note that, if $\prqc$ is unconditionally \mnauf, the above equation holds even if $\adv$ is computationally unbounded. Next, we use this relation to prove \cref{eq:loss-utility_relation}, which concludes the proof. 
The expected loss due to $\widetilde{\adv}$ is 
\begin{align}
    &\mathbb{E}(L_{\text{multi-ver}}(\widetilde{\adv})) \\&= \mathbb{E}(\sum_{j=1}^k m_j X_j) - \mathbb{E}(\sum_{j=1}^k N_j) + k\\
    &\geq \mathbb{E}(\sum_{j=1}^k m_j X_j) - (\mathbb{E}(N -\widetilde{N}) + k) + \widetilde{\negl} + k &\text{by \cref{eq:multi-priv-bound}}\\
    &= \mathbb{E}(\sum_{j=1}^k m_j X_j) - \mathbb{E}(N -\widetilde{N}) +\widetilde{\negl}\\
    &= \mathbb{E}(\uanmna(\adv)) + \widetilde{\negl} &\text{by \cref{eq:multi-utility}}. &
\end{align}
\end{proof}

\begin{proof}[Proof of \cref{thm:sqaruf}]\label{pf:thm:sqaruf}
We will show that the private coin scheme described in~\cite{JLS18}, is \mnauf, using \cref{thm:strong_noclon_prs}. 
Recall that in~\cite{JLS18}, with respect to a PRS $\{\ket{\phi_k}\}_{k\in \mathcal{K}}$, $\keygen$ randomly choses $k \in \mathcal{K}$, and $k$ is fixed as the secret key. The $\bank$ oracle returns the state $\ket{\phi_k}$, i.e., the private coin is a uniformly random sample from the PRS. The $\verify$ algorithm performs a projective measurement $\{\ketbra{\phi_k}, I - \ketbra{\phi_k}\}$, and accepts iff the outcome is $\ketbra{\phi_k}$. 
Suppose there exists a multiverifier QPT forger \adv, in Game~\ref{game:multi_nonadapt_unforge}, against the scheme which attacks $k$ many branches. Let $n$ be the maximum number of coins it asks for in all possible runs. Since, $\adv$ is polynomial, $n$ must be polynomial. We will construct an oracle-based cloner \bdv, which receives $n$ copies of a uniformly random state $\ket{\phi_k}$ chosen from the PRS family $\{\ket{\phi_k}\}_{k \in \mathcal{K}}$, has oracle access to $U_{\phi_k} = I - 2\ketbra{\phi_k}$, and submits $n+1$ alleged copies. On receiving $n$ copies of $\ket{\phi_k}$, a (uniform) randomly chosen state from the PRS, $\bdv$ simulates $\adv$ in Game~\ref{game:multi_nonadapt_unforge}. Every query to $\bank$ oracle, is simulated by giving a copy of $\ket{\phi_k}$. Every time, $\adv$ returns alleged coins to be submitted to a bank branch, $\bdv$ simulates $\verify$, using the reflection oracle, $U_{\phi_k}$. $\bdv$ stores the coins which passed verification, and hence are in the state $\ket{\phi_k}$, and discards the coins that did not pass verification. Suppose, $\adv$ made $j$ many queries to $\bank$ in the multiverifier game, Game~\ref{game:multi_nonadapt_unforge}, and $j'$ be the number of coins that passed verification in total. Hence, the multiverifier utility of $\adv$ (see \cref{definition:byzantine-multi-unforge}), \[\uflmna(\adv) = j'-j.\] If $j' > j$, i.e., $\adv$ succeeds in forging ($\uflmna(\adv) > 0$), $\bdv$ submits the first $n+1$ registers out of the $n-j+j'$ registers in its possession ($n-j'$ unused registers and $j'$ registers that passed verification), each of which are in the state $\ket{\phi_k}$. Else, $\bdv$ aborts. Hence, $\bdv$ clones with fidelity $1$ if and only if $\adv$ succeeds in cloning. Therefore, the fidelity with which $\bdv$ clones,
\[\mathbb{E}_{k\in \mathcal{K}}\left\langle (\ketbra{\phi_k})^{\tensor (n+1)},\mathcal{C}^{U_{\phi_k}}((\ketbra{\phi_k})^n) \right\rangle = \Pr[\uflmna(\adv) > 0].\]
Hence, by \cref{thm:strong_noclon_prs}, there must exist a negligible function $\negl$, such that
\[\Pr[\uflmna(\adv) > 0] = \negl.\]
\end{proof}

\begin{proof}[Proof of \cref{thm:unconditional_secure}]\label{pf:thm:unconditional_secure}
Combining \cref{prop: rational-unforge,prop:completeness,thm:rational priv fair,thm:rational public fair} 
with \cref{thm:qaruf,thm:qaruuf}, we get the result as follows. 

By \cref{thm:qaruf,thm:qaruuf}, the private coin schemes constructed in~\cite{JLS18} \anote{(or the simplified version in~\cite{BS19})} and~\cite{MS10} have a rank-$1$ projective measurement as the private verification and are \nauf and \nauuf, respectively. Hence, our construction $\pkqc$ instantiated with any of these two schemes will be \arnauf or unconditionally \arnauf, respectively by \cref{prop: rational-unforge}, and would also be \narf, i.e., \narpf and \nrrsp  by  \cref{thm:rational priv fair} and \cref{thm:rational public fair} respectively.

The private coin scheme in~\cite{AMR20} is statistically indistinguishable from the quantum money scheme in~\cite{MS10}, i.e., given oracle access to mint and verification procedures, no adversary can distinguish between the two schemes, even if it is computationally unbounded. This holds because by~\cite[Theorem 7]{AMR20} giving oracle access to the mint and verification procedures of the stateful private quantum coin scheme, is statistically indistinguishable from doing the following: Initially, a Haar random state is sampled, minting produces copies of the sampled state and verification is done by doing a rank-$1$ projective measurement on to the sampled state. This is precisely the private coin scheme in~\cite{MS10}. 

Note that, any adversary (forger or sabotage adversary) has access only to the minting and verification procedures of $\pkqc$, $\pkqc.\bank$ and $\pkqc.\verify$, respectively (see Game~\ref{game:nonadapt_unforge_strongest}, and Game~\ref{game:nonadapt_rational-fair}). These procedures are built only using the private minting and verification procedures, $\prqc.\bank$ and $\prqc.\verify$ (see \cref{alg:ts}). Hence, the resulting stateful public coin scheme, obtained by instantiating our construction $\pkqc$ with the stateful private coin scheme in~\cite{AMR20}, should satisfy the same sabotage and unforgeability properties as that of the one obtained by instantiating with~\cite{MS10}. 
Hence, all the required security properties hold for our construction $\pkqc$ instantiated with the stateful private coin scheme in~\cite{AMR20}.
\end{proof}

\begin{proof}[Proof of \cref{thm:multi-unconditional_secure}]\label{pf:thm:multi-unconditional_secure}
    The proof follows by combining \cref{cor:multi-rational-secure} with \cref{thm:sqaruf,thm:sqaruuf} using the same template as in the proof of \cref{thm:unconditional_secure}, given on \cpageref{pf:thm:unconditional_secure}.
\end{proof}

\fi

\ifnum\masterthesis=1
    \bibliography{almost_public_quantum_coins_crypto}
\fi
\ifnum\masterthesis=1
\includepdf{Hebrew_Title_Page.pdf}
\fi
\ifnum\shownomenclature=1
    \renewcommand{\nomname}{Nomenclature}
    \printnomenclature[1in] 
\fi 
\end{document}

